\newcommand*\samethanks[1][\value{footnote}]{\footnotemark[#1]}
\newcommand{\E}{\mathbb{E}}
\newcommand{\aff}{\text{aff}}
\newcommand{\NAND}{\text{NAND}}
\newcommand\defeq{\mathrel{\stackrel{\makebox[0pt]{\mbox{\normalfont\tiny def}}}{=}}}
\newcommand{\R}{\mathbb{R}}
\newcommand{\row}{\mathsf{row}}
\renewcommand{\ker}{\mathsf{ker}}
\newcommand{\TV}{d_{\mathrm{TV}}}
\newtheorem{theorem}{Theorem}
\newtheorem{lemma}{Lemma}
\newtheorem{corollary}{Corollary}[]
\theoremstyle{definition}
\newtheorem{definition}{Definition}[section]
\newtheorem{remark}{Remark}[]
\newcommand{\neutralize}[1]{\expandafter\let\csname c@#1\endcsname\count@}
\newenvironment{lembis}[1]
  {%
  \neutralize{lemma}\phantomsection
  \begin{lemma}}
  {\end{lemma}}
\title{Linear Programs with Polynomial Coefficients \\ and Applications to 1D Cellular Automata
}
\author{Guy Bresler \thanks{ Massachusetts Institute of Technology, email: \{guy, chenghao, yp\}@mit.edu }, Chenghao Guo\samethanks, Yury Polyanskiy\samethanks}
\date{}
\begin{document}
\maketitle
\def\nbyp#1{{\color{blue}[\textbf{YP:} #1]}}

\begin{abstract}
        Given a matrix $A$ and vector $b$ with polynomial entries in $d$ real variables $\delta=(\delta_1,\ldots,\delta_d)$ we consider the following notion of feasibility: the pair $(A,b)$ is \emph{locally feasible} if there exists an open neighborhood $U$ of $0$ such that for every $\delta\in U$ there exists $x$ satisfying $A(\delta)x\ge b(\delta)$ entry-wise. 
        For $d=1$ we construct a polynomial time algorithm for deciding local feasibility. For $d \ge 2$ we show local feasibility is NP-hard. This also gives the first polynomial-time algorithm for the asymptotic linear program problem introduced by \cite{jeroslow1973asymptotic}.
        
	As an application (which was the primary motivation for this work) we give a
	computer-assisted proof of ergodicity of the following elementary 1D cellular automaton: given
	the current state $\eta_t \in  \{0,1\}^{\mathbb{Z}}$ the next state $\eta_{t+1}(n)$ at each vertex $n\in \mathbb{Z}$ is obtained by $\eta_{t+1}(n)= \NAND\big(\text{BSC}_\delta(\eta_t(n-1)),  \text{BSC}_\delta(\eta_t(n))\big)$. Here the binary symmetric channel $\text{BSC}_\delta$ takes a bit as input and flips it with probability $\delta$ (and leaves it unchanged with probability $1-\delta$). 
	It is shown that there exists
	$\delta_0>0$ such that for all $0<\delta<\delta_0$ the distribution of $\eta_t$ converges to
	a unique stationary measure irrespective of the initial condition $\eta_0$. 
        
    We also consider the problem of broadcasting information on the 2D-grid of noisy binary-symmetric channels $\text{BSC}_\delta$, where each node may apply an arbitrary processing function to its input bits. We prove that there exists
	$\delta_0'>0$ such that for all noise levels $0<\delta<\delta_0'$ it is impossible to broadcast information
    for any processing function, as conjectured in~\cite{makur2020broadcasting}.

\end{abstract}


\tableofcontents

\newpage

\section{Introduction}
\pagenumbering{arabic}

Linear Programming (LP) is one of the central paradigms of optimization, with broad significance in both theory and applications. In this paper we introduce and study a version of linear programming, \emph{polynomial linear programming (PLP)}, where the coefficients are given by polynomials. 


Our initial motivation in studying PLPs was in trying to generalize the exciting method of proving ergodicity of cellular automata introduced in~\cite{holroyd2019percolation}. That method relies on finding a certain potential (or a Lyapunov function) that decreases on average. The problem of finding such potential was shown in~\cite{makur2020broadcasting} to be a PLP. 

However, PLPs also arise naturally. As an example, consider the max-flow problem, which (as is known classically) can be formulated as an LP. A typical question is whether a given graph with fixed  edge capacities can support at least a flow $r$ between a chosen source-sink pair. If the edge capacities and desired flow depend on some external factors $\delta$, e.g. the amount of rainfall in the case of traffic modeling, then the problem may be approximated by a PLP. The question would then be whether the flow $r(\delta)$ is possible even if the amount of rainfall varies slightly. This robustness question is what we call local-feasibility of a PLP and is the subject of this work. 
This notion of robustness is distinct from the one considered in the field of robust optimization, since (as elaborated upon in Section~\ref{sec:related-literature}) robust optimization aims for a single solution $r$ (not depending on $\delta$) and the uncertainty sets are not algebraic manifolds as we consider.

We now formally define PLPs and the associated computational problems. 


\begin{definition}[$d$-PLP]
    A \emph{$d$-dimensional polynomial linear program ($d$-PLP)} is specified by an $m\times n$ matrix $A(\delta)$ and an $m\times 1$ vector $b(\delta)$ of degree-$D$ polynomials in $\delta\in \mathbb{R}^d$. The coefficients of all polynomial functions are rational numbers represented by integer numerators and denominators. The \emph{size} of a PLP is defined to be the number of bits needed to describe $A$ and $b$ (see Definition~\ref{def:size-of-input} for full detail).
\end{definition}


We are interested in two types of PLP feasibility: local and everywhere feasibility.  

\begin{definition}[Local feasibility]
    The PLP $\left(A(\delta),b(\delta)\right)$ is said to be \emph{locally feasible} \emph{(locally infeasible)} if there exists $\epsilon>0$ such that for every $\delta \in B_\epsilon (0)$ there exists (does not exist) an $x\in\mathbb{R}^n$ such that $A(\delta)x\ge b(\delta)$.
\end{definition}
\begin{definition}[Everywhere feasibility]
    The PLP $(A(\delta),b(\delta))$ is said to be \emph{everywhere feasible} \emph{(everywhere infeasible)} if for every $\delta \in \mathbb{R}^d$ there exists (does not exist) an $x\in\mathbb{R}^n$ such that $A(\delta)x\ge b(\delta)$.
\end{definition}

Intuitively, the local and everywhere feasibility of PLP describes whether a linear program is ``stable'' subject to changes in the parameters. Imagine a system having a property we care about that is determined by the feasibility of an LP $Ax\ge b$, but the system is not known exactly and we are only sure that coefficients of the LP lie on a polynomial manifold $(A(\delta),b(\delta))$, parameterized by $\delta$. The everywhere feasibility then tells us whether the property is satisfied up to arbitrary perturbation on the manifold, while the local feasibility tells us whether the property is robust locally around a fixed point.

\begin{remark}
For local feasibility, by a simple translation, taking $0$ as the center of the local neighborhood ball is computationally equivalent to any other constant point in $\mathbb{R}^d$.
\end{remark}

\begin{remark}\label{rmk:interval}
When $d=1$, everywhere feasibility on $\mathbb{R}$ is equivalent to feasibility on any interval $(c,d)$. To transform a problem on $\mathbb{R}$ to the interval $(c,d)$, we can substitute $c+\frac{\delta^2(d-c)}{\delta^2+1}$ for $\delta$ and multiply by the denominator on both sides. Similarly, to transform a problem on the interval $(c,d)$ to $\mathbb{R}$, we can replace $\delta$ with $1/(c-\delta)+1/(d-\delta)$ and multiply by the denominator on both sides. 
\end{remark}


\subsection{Results}

We close the tractability of both types of PLP feasibility in all dimensions, and apply the results to various problems.

\begin{theorem}[Local feasibility of 1-PLP is tractable]\label{thm:1-dim-local}
The local feasibility of 1-PLP can be solved in polynomial time. 
Furthermore, if the PLP is locally feasible, a solution $x(\delta)$ can be given as a rational number at 0 and two rational functions of $\delta$ on the positive and negative neighborhoods of 0, respectively. The interval $(\delta_1,\delta_2)$ on which $x(\delta)$ is a feasible solution is given by $\delta_1$ and $\delta_2$ each being the smallest positive root of an explicit polynomial. The running time is $O(n^3mD^3\cdot  \mathsf{LP}(O(n^2D),O(nmD),L))$, where $\mathsf{LP}(a,b,c)$ stands for the running time of ordinary linear programming with $a$ variables, $b$ constraints and $c$-bit numbers.
\end{theorem}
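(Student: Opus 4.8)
I would reduce local feasibility to three ordinary LP--feasibility tests together with one ``infinitesimal'' LP--feasibility test, and then solve the latter in polynomial time by expanding a candidate solution in powers of $\delta$.

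\textbf{Step 1 (splitting the neighborhood) and Step 2 (infinitesimal reformulation).} First, $(A,b)$ is locally feasible if and only if all of the following hold: $A(0)x\ge b(0)$ is feasible over $\mathbb{R}$; there is $\epsilon_+>0$ with $A(\delta)x\ge b(\delta)$ feasible for every $\delta\in(0,\epsilon_+)$; and the analogous statement on $(-\epsilon_-,0)$. One direction is immediate, and the converse follows by taking $\epsilon=\min(\epsilon_+,\epsilon_-)$. The first item is a single LP; the third reduces to the second via $\delta\mapsto-\delta$, which preserves degrees and, up to signs, the bit-size. For the second item I would prove it is equivalent to feasibility of $A(\delta)x\ge b(\delta)$ over the ordered field $K=\mathbb{Q}(\delta)$ in which $\delta$ is a positive infinitesimal (so $f\ge 0$ means the lowest-order nonzero Laurent coefficient of $f$ at $0$ is positive). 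If $x\in K^n$ solves the $K$-system, then each residual $A_i(\delta)x(\delta)-b_i(\delta)$ is a rational function that is identically zero or has positive leading Laurent coefficient, hence is nonnegative on a common punctured right-neighborhood of $0$ avoiding the finitely many poles of the $x_j$; this gives the second item. Conversely, if the $K$-system were infeasible then Farkas' lemma over the ordered field $K$ (valid because Fourier--Motzkin elimination uses only ordered-field axioms) produces $y\in K^m$ with $y\ge 0$, $A(\delta)^{T}y=0$, $b(\delta)^{T}y=1$; evaluating $y$ at small real $\delta>0$ gives a real Farkas certificate, contradicting feasibility there. This also shows the solution can be taken in $\mathbb{Q}(\delta)^n$, which is the ``rational function on the positive neighborhood'' of the statement.

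\textbf{Step 3a (reduction of the $K$-LP to a finite lexicographic LP).} Writing a candidate solution as a Laurent polynomial $x(\delta)=\sum_{k=-N}^{N}x^{(k)}\delta^{k}$, each constraint $A_i(\delta)x(\delta)\ge b_i(\delta)$ becomes the requirement that the vector of coefficients of $\delta^{-N},\delta^{-N+1},\dots$ in $A_i(\delta)x(\delta)-b_i(\delta)$ be lexicographically nonnegative; each such coefficient is an explicit linear form in the $x^{(k)}$ whose coefficients are among those of $A$ and $b$. Since any basic solution of the $K$-LP is $A_B(\delta)^{-1}b_B(\delta)$, a ratio of polynomials of degree $O(nD)$, I would argue that for a suitable $N=\poly(n,m,D)$ this truncation is lossless (a basic solution, truncated, still works once one also checks the finitely many trailing coefficients it pins down) and introduces no spurious solutions, with constraints that are forced to hold with equality detected separately by a rank computation. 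This turns the second item into a finite \emph{lexicographic LP} in $O(n^{2}D)$ unknowns and $O(nmD)$ linear forms.

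\textbf{Step 3b (solving the lexicographic LP) and the main obstacle.} I would solve the lexicographic LP in the style of the parametric simplex method: order by order, test by an ordinary LP which residuals can be made strictly positive at the current order while keeping the already-tight residuals tight, commit those, and recurse on the remaining tight residuals at the next order. Because every such test is on a subsystem of the fixed polynomial-size family of linear forms and no committed value is ever substituted back, bit-sizes stay polynomial; after $O(nD)$ orders the permanently tight constraints pin down a basis $B$, and $x_B(\delta)=A_B(\delta)^{-1}b_B(\delta)$ is the promised rational-function solution, valid on $(\delta_1,\delta_2)$ where $\delta_2$ is the smallest positive root of $\det A_B(\delta)$ times the numerators of the residuals $A_i(\delta)x_B(\delta)-b_i(\delta)$, and symmetrically on the negative side. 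The main obstacle is exactly this Step 3: proving that the lexicographic recursion terminates after polynomially many rounds and correctly decides the $K$-LP, fixing the truncation order $N$ so the reduction is lossless and spuriousness-free, and organizing the implicit-equality and rank computations so that only polynomially many polynomial-size LPs are invoked, which is what yields the running time in the statement. Steps 1 and 2 are essentially bookkeeping once Farkas over ordered fields is in hand.
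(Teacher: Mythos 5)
Your overall architecture matches the paper's: split $B_\epsilon(0)$ into $\{0\}$, $(0,\epsilon)$ and $(-\epsilon,0)$; show that a rational-function solution suffices on each side; convert the one-sided problem into finitely many lexicographic sign conditions on coefficients; and resolve these order by order with ordinary LPs while detecting implicit equalities. Your Step 2 (the ordered field $\mathbb{Q}(\delta)$ with $\delta$ a positive infinitesimal, plus Farkas via Fourier--Motzkin) is a legitimate alternative to the paper's route, which instead exhibits the solution directly as $A_J^+b_J$ for a fixed index set $J$ via the pseudoinverse limit and a finiteness-of-sign-changes argument; the paper itself footnotes that the ordered-field (Jeroslow) route also yields existence of rational solutions. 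Your Step 3b is essentially the paper's Algorithm~\ref{alg:subspace-elimination-for-PLP}, and the simultaneity of ``each residual can individually be made positive, hence all can be made positive at once'' is the standard relative-interior fact, so that part is sound in outline (termination in $O(mnD)$ rounds is immediate since each (constraint, order) pair is processed once).

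The genuine gap is in Step 3a. A Laurent-polynomial ansatz truncated at degree $N$ is \emph{not} a lossless reformulation of the $K$-LP, and no choice of $N$ repairs it. Concretely, take the two constraints $(1+\delta)x\ge 1$ and $-(1+\delta)x\ge -1$: this system is locally feasible (with $x=1/(1+\delta)$), but for every Laurent polynomial $\tilde x$ the residual $(1+\delta)\tilde x-1$ is a nonzero Laurent polynomial, so its coefficient vector cannot be simultaneously lex-nonnegative and lex-nonpositive; your finite lexicographic LP is therefore infeasible for every $N$. The underlying issue is that any constraint satisfied with \emph{equality} by the true rational solution acquires, after truncation, a nonzero tail of uncontrolled sign; detecting which constraints are implicit equalities of the $K$-LP is itself a nontrivial subproblem, and even after detecting them the solution set of the equality subsystem is parametrized by rational functions, not Laurent polynomials, so the ``rank computation'' does not rescue the reduction. (Conversely, if you only enforce lex-nonnegativity up to order $N$, you must separately exclude spurious solutions.) The paper's fix, in Section~\ref{sec:polynomial-function}, is to never truncate: write the candidate as $x(\delta)=p(\delta)/\bigl(\delta^{c}(1+\delta q_1(\delta))\bigr)$ with $p,q_1$ polynomials of degree $O(nD)$ (justified by the $A_J^+b_J$ form of the solution), multiply each constraint by the positive denominator to obtain $A(\delta)p(\delta)-\delta^{c+1}q_1(\delta)b(\delta)-\delta^{c}b(\delta)\ge 0$, which is exactly linear in the unknown coefficients of $p$ and $q_1$ with no error term, and run the lexicographic procedure once for each of the $O(nD)$ values of $c$. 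You need this, or an equivalent exact finite-dimensional reformulation, before your Step 3b can be applied; the rest of your plan then goes through.
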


\begin{theorem}[Hardness]\label{thm:hardness}
The everywhere feasibility of $d$-PLP is NP-hard for any $d$ and is co-NP-complete when $d=1$. The local feasibility of $d$-PLP is NP-hard for any $d\ge 2$. 
\end{theorem}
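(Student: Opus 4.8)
The plan is to establish the three claims separately, but all three will stem from reductions involving the connection between PLP feasibility and quantifier elimination over the reals, specialized cleverly so that the polynomial structure does the combinatorial work.

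For the co-NP-completeness of everywhere feasibility when $d=1$: co-NP-hardness should come by reducing from (the complement of) an NP-hard problem — the natural candidate is to encode a SUBSET-SUM or PARTITION instance into a single-variable polynomial, exploiting that a univariate polynomial of high degree can have exponentially many sign changes and so can ``count'' combinatorial witnesses. Alternatively one reduces from deciding whether a univariate polynomial inequality system (with the parameter $\delta$) is satisfiable for \emph{all} $\delta\in\R$, which is itself co-NP-hard via the standard encoding of SAT into whether a polynomial is nonnegative everywhere. For membership in co-NP: the complement (everywhere \emph{in}feasibility) should have a short certificate, namely a value $\delta_0\in\R$ at which $A(\delta_0)x\ge b(\delta_0)$ is an ordinary infeasible LP; by the polynomial-root bounds plus LP duality (Farkas), such a $\delta_0$ — or a symbolic description of one via an algebraic number of polynomially bounded degree and height — can be guessed and verified in polynomial time. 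The main work here is showing the witnessing $\delta_0$ has polynomial bit-complexity, which follows from effective bounds on roots of the resultant-type polynomials that arise when eliminating $x$.

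For NP-hardness of everywhere feasibility in all $d$: I would reduce from the complement used above, or directly from a problem like whether a system of polynomial inequalities in $\delta$ alone is \emph{infeasible} over $\R^d$ (equivalently, whether $p(\delta)\ge 0$ fails somewhere), choosing the PLP so that $x$ plays no essential role — e.g. $n=0$ or $x$ forced to a point — so that everywhere feasibility of $(A(\delta),b(\delta))$ becomes exactly ``$b(\delta)\le 0$ for all $\delta$,'' and then encoding an NP-hard polynomial (non)negativity question. Since even for $d=1$ checking whether a univariate polynomial is everywhere nonpositive is tractable, the NP-hardness must genuinely use $d\ge 2$ or use a more clever encoding where $x$ interacts; the cleanest route is to piggyback on the $d\ge 2$ local-feasibility hardness below, or to observe that everywhere feasibility lets us quantify $\forall\delta\,\exists x$, and alternation of quantifiers over polynomial constraints is the source of hardness.

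For NP-hardness of local feasibility when $d\ge 2$ — which I expect to be the \textbf{main obstacle} and the technical heart: the idea is to encode a discrete satisfiability problem (3-SAT, or better SUBSET-SUM) into the \emph{geometry of the feasible region near $0$} in the two-dimensional parameter space. Concretely, one wants a gadget polynomial system in $\delta=(\delta_1,\delta_2)$ such that along different ``directions'' or ``scales'' of approach to $0$ the induced LP in $x$ is feasible iff a corresponding partial assignment is consistent, and local feasibility (feasibility for \emph{all} $\delta$ in a full neighborhood of $0$) holds iff \emph{every} direction works iff the SAT instance is satisfiable — or its negation, depending on polarity. The delicate points are: (i) making the neighborhood of $0$ ``see'' exponentially many combinatorial cases despite being a fixed-dimensional analytic object, which should exploit that curves like $\delta_2 = \delta_1^k$ for varying $k$ all pass through $0$ and can be separated by polynomial constraints of degree polynomial in the instance size; (ii) ensuring the reduction is polynomial-size, so the degrees $D$ and the number of constraints $m$ stay polynomial — this is where one must be economical, likely building the counting gadget from a product/recursion structure rather than writing out all clauses; and (iii) arguing soundness and completeness, i.e. that no ``unintended'' $\delta$ near $0$ spuriously makes the LP infeasible (completeness) and that an unsatisfied clause genuinely produces an infeasible $\delta$ arbitrarily close to $0$ (soundness). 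I would first build the reduction for $d=2$ and then note that $d\ge 2$ follows by ignoring extra coordinates (setting $\delta_3=\cdots=\delta_d=0$ or treating them as dummies).
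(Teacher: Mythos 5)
Your co-NP membership argument is essentially the paper's: guess an algebraic $\delta_0$ of polynomially bounded bit-complexity at which the LP is infeasible, together with a Farkas certificate $y$ expressible as a rational function of $\delta_0$, and verify by sign determination for algebraic numbers; the paper does exactly this. The genuine gap is in the hardness reductions, where none of your three routes reaches a working gadget and two would fail as stated. For everywhere feasibility you propose making $x$ play no essential role, so that the question becomes ``$b(\delta)\le 0$ for all $\delta$''; for $d=1$ this is univariate polynomial nonpositivity, which is decidable in polynomial time, so it cannot yield NP-hardness --- and the theorem claims hardness already at $d=1$. Likewise ``encoding SAT into whether a polynomial is nonnegative everywhere'' has no one-variable instantiation. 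The missing idea is that the parameter must be \emph{coupled to the LP variables}: the paper first takes the Farkas negation to reach the form $\exists\,\delta,\,y\ge 0$ with $A^\top(\delta)y=0$ and $b^\top(\delta)y>0$, and then invokes the Pardalos--Vavasis construction \cite{pardalos1991quadratic}, in which a constraint $\delta=\sum_i 4^i x_i$ ties the scalar parameter to $n$ relaxed binary variables while an auxiliary variable $z$ linearizes $\delta^2$, so that $\max(\delta^2-z)=0$ exactly when $x\in\{0,1\}^n$ and is otherwise at most $-d_{\infty}^2(x,\{0,1\}^n)$. Combined with the quantitative bound $d_\infty\ge\left(2k^{n^3+n}n!\right)^{-1}$ of \cite{freund1985complexity}, one may add the independent-set constraints and a strict inequality $\delta^2-z+\epsilon>0$ with $\epsilon$ representable in polynomially many bits, completing the reduction from independent set. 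This is precisely the combinatorial heart that you flag as necessary (``a more clever encoding where $x$ interacts'') but do not supply.

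Your dependency structure for $d\ge 2$ is also inverted. You treat local feasibility in two parameters as the main technical obstacle, requiring a fresh SAT gadget built from curves through the origin, and suggest piggybacking the everywhere-feasibility hardness on it. The paper goes the other way: it derives the $d\ge 2$ local case in one step from the $d=1$ everywhere case via homogenization --- replace $\delta$ by $\delta_1/\delta_2$ and clear denominators by multiplying by $\delta_2^{D+1}$, so that every ray through the origin of $\R^2$ realizes a value of the original scalar parameter, while the locus $\delta_2=0$ degenerates to the trivially feasible system $0\ge 0$; hence the 2-PLP is locally feasible at $0$ if and only if the 1-PLP is everywhere feasible (Lemma~\ref{lem:localPLP2everywherePLP}). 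Your ``directions of approach to $0$'' intuition is exactly the right picture, but once the one-dimensional everywhere-feasibility hardness is established no new gadget is needed --- and without it, your plan has no base case to stand on.
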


As a consequence of Theorem~\ref{thm:1-dim-local}, the max-flow problem discussed in the beginning, where capacity and demand are polynomials in $\delta$ which is 1-dimensional and varies locally around $0$, can be solved in polynomial time by viewing it as a special case of PLP. 
However, it is not clear whether the max-flow problem is tractable when $\delta$ has higher dimension: The hardness construction of Theorem~\ref{thm:hardness} is an LP that cannot be easily reduced to a flow problem. We leave this question open.

Aside from direct application to problems such as max-flow, the algorithm in Theorem~\ref{thm:1-dim-local} will 
be used to give computer-assisted proofs solving open problems in probabilistic cellular automata
(PCA) and broadcasting on the 2-dimensional grid under the most interesting regime of vanishing (but non-zero) noise. See Section~\ref{sec:applications} for a summary of these results.
\begin{theorem}[Informal]\label{thm:informal-pca} There exists $\delta_0>0$ such that an 
    elementary PCA with NAND function and either vertex binary-symmetric-channel (BSC) noise
    $\delta$ or edge BSC noise $\delta$ is ergodic for all $\delta \in (0,\delta_0)$. 
\end{theorem}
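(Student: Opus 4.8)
The plan is to prove ergodicity by the coupling / potential-function (Lyapunov) method that underlies the percolation argument of~\cite{holroyd2019percolation}, organized so that the search for a valid potential becomes a $1$-PLP in the noise parameter $\delta$, which we then decide with the algorithm of Theorem~\ref{thm:1-dim-local}. First, run two copies $\eta_t,\eta_t'$ of the PCA from arbitrary initial laws, driven by the \emph{same} BSC$_\delta$ randomness, and track the discrepancy set $\Delta_t=\{n:\eta_t(n)\ne\eta_t'(n)\}$. Since a NAND gate with a $0$ input ignores its other input and shared noise preserves agreements, one checks directly that $n\notin\Delta_{t+1}$ whenever $n-1\notin\Delta_t$ and $n\notin\Delta_t$: discrepancies are never created spontaneously; they are only inherited from, shifted between, or killed at existing ones. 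By the standard finite-dimensional-marginal and compactness arguments for PCA, ergodicity (a unique stationary measure, and convergence to it in all finite-dimensional marginals from every $\eta_0$) follows once we show $\Pr[n\in\Delta_t]\to 0$ as $t\to\infty$, for every site $n$ and every pair of initial laws.

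Next, fix an interaction range $\ell$ and look for a translation-invariant potential $\Phi(\omega)=\sum_{i\in\mathbb{Z}}\phi\big(\omega|_{[i,i+\ell-1]}\big)$ assembled from a local weight function $\phi$ on length-$\ell$ windows of an \emph{enriched} configuration $\omega_t$ — namely $\Delta_t$ together with enough of the underlying bit values to make one step of the coupled dynamics a well-defined random map. (Enrichment is forced: a direct calculation shows the law of $\Delta_{t+1}$ depends not only on $\Delta_t$ but on the actual values of $\eta_t$ near $\Delta_t$, since e.g.\ a lone discrepancy survives a NAND step with probability $\delta$ or $1-\delta$ according to the value it meets.) Because each length-$\ell$ window of $\omega_{t+1}$ is determined by a window of $\omega_t$ of size $\ell+O(1)$, the conditional drift $\E[\Phi(\omega_{t+1})\mid\omega_t]-\Phi(\omega_t)$ is again a translation-invariant sum of local terms, each \emph{linear} in the weight vector $\phi$ with coefficients that are \emph{polynomials in $\delta$ with integer coefficients} (sums of products of the BSC$_\delta$ probabilities $\delta$ and $1-\delta$). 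We then impose: $\phi\ge 0$; $\phi=0$ on discrepancy-free windows; $\phi$ at least a positive constant on windows containing a discrepancy, so that $\Phi$ is comparable to $|\Delta_t|$; and the drift inequality $\E[\Phi(\omega_{t+1})\mid\omega_t]\le(1-c)\Phi(\omega_t)$ for some $c>0$. Ranging over the finitely many window-types, these constraints form a finite system of linear inequalities in $(\phi,c)$ with polynomial-in-$\delta$ coefficients — exactly a $1$-PLP.

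Finally, we solve this PLP. At $\delta=0$ the automaton is deterministic and \emph{not} ergodic (the all-$0$ and all-$1$ configurations form a period-two orbit under NAND), so no potential exists there and the PLP is infeasible at $0$; what we need is one-sided local feasibility at $0^+$, which is precisely what Theorem~\ref{thm:1-dim-local} delivers — it analyzes the positive and negative neighborhoods of $0$ separately and, when feasible, returns $\phi(\delta)$ as explicit rational functions of $\delta$ on an interval $(0,\delta_0)$ with $\delta_0$ the smallest positive root of an explicit polynomial. We run the algorithm for $\ell=1,2,\dots$ until a feasible $\phi(\delta)$ is found; then for every $\delta\in(0,\delta_0)$ the process $\big(1-c(\delta)\big)^{-t}\Phi(\omega_t)$, with $\Delta_0$ first truncated to the (finite, leftward) light cone of the fixed site $n$ so that $\Phi_0<\infty$, is a nonnegative supermartingale, yielding $\E[\Phi_t]\le \mathrm{const}\cdot t\,(1-c(\delta))^{t}$ and hence $\Pr[n\in\Delta_t]\le\E[\Phi_t]/c'\to 0$, which by the first paragraph gives ergodicity. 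The same scheme is run independently for the two noise models in the theorem — the vertex-noise rule $\eta_{t+1}(n)=\text{BSC}_\delta(\NAND(\eta_t(n-1),\eta_t(n)))$ and the edge-noise rule $\eta_{t+1}(n)=\NAND(\text{BSC}_\delta(\eta_t(n-1)),\text{BSC}_\delta(\eta_t(n)))$ — each producing its own PLP and threshold; the $\delta_0$ of the statement is the smaller of the two.

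The crux is the middle step: compressing the a priori infinite-dimensional search for a potential into a \emph{finite} PLP that is small enough to solve yet strong enough to imply ergodicity. One difficulty is that the discrepancy chain is not autonomous, so $\Phi$ must live on the enriched state, and one must carry just enough local information that the drift inequality is both meaningful and checkable without the number of window-types exploding past what the algorithm can handle. The other difficulty is the bulk-versus-boundary accounting: the expected change of $\Phi$ decomposes into local bulk terms plus interaction terms that arise when two discrepancy features sit within distance $\ell$, and one must ensure the interaction terms are absorbed by the negative bulk drift — this is what forces $\delta$ to be small, and it fails when $\ell$ is too short, so in practice $\ell$ must be pushed up until the PLP first becomes feasible. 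Once the PLP is set up correctly, deciding it and reading off $\delta_0$ is exactly the polynomial-time, computer-assisted computation supplied by Theorem~\ref{thm:1-dim-local}, and is what converts this previously-conjectural ergodicity statement into a theorem.
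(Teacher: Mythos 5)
Your overall architecture matches the paper's: couple the two chains via shared noise, pass to the auxiliary $\{0,1,?\}$ chain (your ``enriched'' configuration), search for a finite-window potential whose one-step evolution is linear in the weights with coefficients polynomial in $\delta$, and decide the resulting $1$-PLP at $0^+$ with the algorithm of Theorem~\ref{thm:1-dim-local}. The gap is in how you linearize the two key requirements on the potential, and as written both of your versions would make the PLP infeasible for every window length. First, you impose $\phi\ge 0$ entrywise on window types. What is actually needed is non-negativity of the potential evaluated on whole configurations, $w(y)\ge 0$ for all strings $y$, and the paper explicitly notes that demanding every coefficient $c_s\ge 0$ is too strong --- no potential satisfying the drift condition has this form. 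Instead it encodes $w(y)\ge 0$ as the absence of negative cycles in a de Bruijn-type pattern graph and uses the linear certificate $A_\ell z\le B_\ell w$ of Lemma~\ref{lem:key-makur20}, which introduces auxiliary variables $z$. The potentials actually found, \eqref{eq:pot_pca} and \eqref{eq:pot_pca-edge}, have many negative coefficients such as $(-4+4\delta)\{000\}$.

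Second, your drift requirement --- a pathwise multiplicative contraction $\E[\Phi(\omega_{t+1})\mid\omega_t]\le(1-c)\Phi(\omega_t)$ combined with $\Phi\asymp|\Delta_t|$ --- is false for this dynamics: an isolated discrepancy flanked by agreed $1$'s feeds two NAND gates and yields on average about $2(1-2\delta)$ discrepancies at the next step, so the discrepancy count does not contract configuration-by-configuration. The paper's condition (Lemma~\ref{lem:pca-potential-function-method}) is only the inequality $w(F_p\mu)\le w(\mu)-\mu(?s_0)$ for \emph{shift-invariant} measures $\mu$ and one designated pattern $?s_0$, i.e.\ an additive decrease at the level of pattern densities; at a stationary $\mu$ this forces $\mu(?s_0)=0$ and hence $\mu(?)=0$ via $\mu(?s_0)\ge p^{|s_0|}\mu(?)$, and ergodicity then follows from Lemma~\ref{lem:pca-aux-chain} (and its edge-noise analogue, which requires a separate monotonicity argument) rather than from a supermartingale bound. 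Replacing your entrywise non-negativity by the cycle certificate and your pathwise contraction by the stationary-measure inequality recovers the paper's PLP, which is feasible already at window length $\ell=3$.
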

\begin{theorem}[Informal]\label{thm:informal-broadcasting} There exists $\delta_0>0$ such that 
    when information is broadcast on the space-homogeneous 2D grid of relays interconnected by
    BSC$(\delta)$, it is impossible to
    recover any information about the origin from the boundary values for all $\delta \in
    (0,\delta_0)$.
\end{theorem}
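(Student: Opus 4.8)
The plan is to import the reduction of \cite{makur2020broadcasting} --- which shows that impossibility of broadcasting follows from the existence of a suitable decreasing potential, and that finding such a potential is a $1$-PLP in the noise level $\delta$ --- and then to run the decision algorithm of Theorem~\ref{thm:1-dim-local} on the resulting PLP. In more detail, write $L_k$ for the set of grid nodes at distance $k$ from the origin and let $J_k \defeq I(X_0;\,\eta|_{L_k})$ be the mutual information (or the $\chi^2$-type surrogate used in \cite{makur2020broadcasting}) between the source bit and the values on $L_k$; ``impossible to broadcast'' means $J_k\to 0$, and it suffices to establish this. The method assigns to each node $v$ a nonnegative charge that depends only on the conditional law of $v$'s output given $X_0$, and looks for a potential $\Phi$ drawn from a fixed finite-dimensional family (e.g.\ a piecewise-linear function with a prescribed finite set of breakpoints, equivalently a maximum of finitely many affine functions) such that, \emph{for every} Boolean function a node might apply: (i) transmission through $\mathrm{BSC}_\delta$ multiplies the charge by a factor $r(\delta)<1$, an explicit rational function of $\delta$ coming from a strong data-processing inequality; and (ii) the charge of a node is at most the sum of the charges arriving along its two incoming edges. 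Properties (i)--(ii) force $\sum_{v\in L_k}\Phi(v)$ to decay geometrically in $k$, and since it dominates $J_k$ up to a constant factor, broadcasting is impossible for every choice of processing functions.

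The point is that, once the ansatz for $\Phi$ is fixed, its coefficients become the unknown vector $x$, the processing functions range over a finite set (Boolean functions of a bounded number of arguments), and --- after using convexity/monotonicity of $\Phi$ to replace the continuum of admissible conditional laws by their finitely many extreme points --- conditions (i)--(ii) become a finite system $A(\delta)x\ge b(\delta)$ whose entries, after clearing the denominators of $r(\delta)$, are polynomials in $\delta$. What we need is feasibility of this $1$-PLP for all sufficiently small $\delta>0$. It is (degenerately) feasible at $\delta=0$ --- a constant potential obeys the non-strict inequalities, since noiseless broadcasting keeps $J_k$ constant rather than increasing it --- so the notion we actually want is feasibility on a one-sided neighborhood of $0$; substituting $\delta\leftarrow t^2$ (cf.\ Remark~\ref{rmk:interval}) converts ``feasible for all small $\delta\ge 0$'' into ordinary \emph{local} feasibility of a $1$-PLP in the variable $t$, which is exactly the problem decided by Theorem~\ref{thm:1-dim-local}.

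Running that algorithm then either reports infeasibility, which would contradict the conjecture of \cite{makur2020broadcasting}, or returns a certificate: an explicit rational-function solution $x(\delta)$ together with the interval $(0,\delta_0)$ on which it satisfies all the constraints, where $\delta_0$ is the smallest positive root of a polynomial produced by the algorithm. Substituting $x(\delta)$ back into the finitely many inequalities and checking them symbolically turns the argument into a genuine computer-assisted proof and yields the claimed $\delta_0>0$. (Theorem~\ref{thm:informal-pca} is proved in the same way, with the potential acting on finite binary words instead of single-bit conditional laws.)

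The main obstacle lies in the modeling, not in the call to Theorem~\ref{thm:1-dim-local}: one must pick a potential family $\Phi$ that is simultaneously finite-dimensional and rich enough that the PLP is actually feasible for small $\delta$ --- something far from obvious, and precisely why this had remained open --- and then prove rigorously that feasibility of the \emph{finite} PLP implies (i)--(ii) for the entire continuum of states and all processing functions, the delicate point being the convexity argument controlling the fusion step (ii). A secondary, purely computational hurdle is scale: the number of constraints, though polynomial, can be large, so one must exploit the symmetries of the grid and of the Boolean gates to bring the instance within practical reach of the algorithm and then re-verify its certificate independently.
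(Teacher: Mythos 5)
Your high-level architecture --- encode the existence of a decreasing potential as a $1$-PLP in the noise level and decide it with the algorithm of Theorem~\ref{thm:1-dim-local} --- is exactly the paper's strategy, and your remarks about one-sided feasibility and certificate re-verification are fine. But the specific potential framework you propose has a structural flaw that makes it incapable of working in the regime the theorem is about. You posit a nonnegative per-node charge satisfying (i) contraction by an SDPI factor $r(\delta)<1$ across each $\mathrm{BSC}_\delta$ edge and (ii) subadditivity at the fusion step. Summing over a level, each node of level $k-1$ feeds two nodes of level $k$, so (i)--(ii) only give
\begin{equation*}
\sum_{v\in L_k}\Phi(v)\;\le\; 2\,r(\delta)\sum_{u\in L_{k-1}}\Phi(u),
\end{equation*}
and geometric decay requires $2r(\delta)<1$. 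For the BSC the best contraction coefficient is $r(\delta)=(1-2\delta)^2$, so this forces $\delta>\tfrac{1}{2}(1-\tfrac{1}{\sqrt{2}})\approx 0.146$ --- this is precisely the Evans--Schulman argument, which the paper's Remark~\ref{rmk:small_noise} explains already covers the \emph{high}-noise regime and cannot reach small $\delta$. As $\delta\to 0$ the contraction vanishes while the branching factor $2$ persists, so no choice of finite-dimensional family $\Phi$ satisfying your conditions (i)--(ii) can exist; the PLP you would feed to the algorithm is locally infeasible, and the algorithm would (correctly) report infeasibility rather than produce the desired certificate. The obstacle is not, as you suggest, merely choosing a rich enough ansatz.

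The paper's actual proof (Section~\ref{sec:broadcasting-on-grid}) uses a different kind of potential. It couples the two chains $\eta_t^+,\eta_t^-$ started from source bits $1$ and $0$, tracks the disagreement process over the alphabet $\{0,1,?\}$, and takes the potential to be a linear functional of the counts of length-$\ell$ patterns in the coupled configuration (Definition~\ref{def:broadcasting-potential}). The decrease condition $w(\eta_{t+1})\le w(\eta_t)$ together with $w(y)\ge\{?\}(y)$ (Lemma~\ref{lem:broadcasting-condition-on-potential}) exploits the gate-specific combinatorics --- e.g.\ a $0$ input to NAND determines the output and annihilates a disagreement --- rather than a per-edge information contraction, which is why it can succeed at small $\delta$. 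The nonnegativity of $w$ on all strings is then certified by the no-negative-cycle LP of Lemma~\ref{lem:key-makur20}, producing the PLP \eqref{eq:broadcasting-condition} in the variables $(w,z)$ whose local feasibility is checked for the two remaining nontrivial gates (NAND and IMP), yielding the explicit potentials \eqref{eq:pot_nand} and \eqref{eq:pot_imp}. To repair your proposal you would need to replace conditions (i)--(ii) by a decrease condition of this coupling/pattern-counting type.
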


The $\text{BSC}_p$ channel has binary input and output. For a given noise level $p\in [0,1/2]$, it outputs the same bit as input with probability $1-2p$ and outputs a random bit with probability $2p$.

\begin{remark}
In Theorems~\ref{thm:informal-pca} and~\ref{thm:informal-broadcasting} we are interested in only the feasibility on the positive neighborhood of 0 rather than $B_\epsilon(0)$, which is different than the statement of Theorem~\ref{thm:1-dim-local}. However, as discussed in the beginning of Section~\ref{sec:algo-dimension-1}, our algorithm considers the positive side and negative side separately and combines the results. So the algorithm can also solve feasibility on one side.
\end{remark}

\begin{remark}\label{rmk:small_noise}
We note that impossibility of broadcasting via noisy NANDs (or any other symmetric function) was
conjectured in~\cite{makur2020broadcasting} for all $\delta\in(0,1/2]$. Our result above resolves
this conjecture only partially. Indeed, intuitively showing impossibility of broadcasting
with a small noise (arbitrarily close to $0$) as we do is the most
challenging setting. However, there are no known monotonicity (in the noise magnitude) results for
these problems, so it remains possible in principle that an intermediate amount noise could somehow help information propagate. 

In the high noise regime, Evans-Schulman~\cite{evans1999signal} show that for $\delta >
{2-\frac{\sqrt{2}}{4}}\approx 0.146$ broadcasting is impossible with any arrangement of two-input
gates. To improve this estimate for the graph corresponding to a 2D grid we can use a) the connection between information and percolation from~\cite[Theorem
5]{polyanskiy2017strong} and b) a rigorous lower bound $p_{lb} = 0.6231$ for the oriented bond-percolation
threshold from~\cite{gray1980lower}. These arguments show that broadcasting on a 2D grid of
arbitrary gates and $\text{BSC}_\delta$ wires is impossible for $\delta > {1-\frac{\sqrt{p_{lb}}}{2}}
\approx 0.105$.
It remains open to close the gap between our results for small $\delta<\delta_0$ and these ones for large $\delta>\delta_1$. Our method cannot do so, because it is based on feasibility of general PLP: 
Showing 
impossibility of broadcasting (or ergodicity of the PCA) in the 
interval $(\delta_0,\delta_1)$ for some given fixed $\delta_1$ requires (by
Remark~\ref{rmk:interval})
solving everywhere feasibility of a PLP, which by Theorem~\ref{thm:hardness} is NP-hard in general. 
Nevertheless, the following procedure gives a potential approach of checking feasibility on $(\delta_0,\delta_1)$ via algorithm stated in Theorem~\ref{thm:1-dim-local}. 
Start by running the algorithm at $\delta_0$, repeatedly run the algorithm at the right most endpoint that guarantees feasibility returned by the previous run. 
If the right endpoint finally exceeds $\delta_1$, then the PLP is feasible on the whole interval $(\delta_0,\delta_1)$.

\subsection{From Feasibility to Optimization}



In this section we will show that locally optimizing PLP can be reduced to checking local feasibility of a PLP. Indeed, this is simply a consequence of the fact that finding a solution of a linear program is (informally) equivalent to finding a feasible solution to the set of KKT conditions. To formalize this intuition, we first define the optimization counter part of local feasibility.

\begin{definition}[Local 1-PLP]\label{def:local-1-plp}
Let $c(\delta)$, $A(\delta)$ and $b(\delta)$ be polynomials of $\delta\in \mathbb{R}$. Let $P(\delta)$ be the following linear programming problem parameterized by $\delta$.
    \[
\text{maximize } c(\delta)^\top x(\delta), \text{ subject to } A(\delta)x\le b(\delta) \text{ and } x(\delta)\ge 0\,.
\]
A \emph{local 1-PLP} seeks to either (1) certify the existence of $\delta_0$ such that for any $0<\delta<\delta_0$, $P(\delta)$ is infeasible, (2) certify the existence of $\delta_0$ such that for any $0<\delta<\delta_0$, $P(\delta)$ is unbounded, or (3) find
a rational function $x(\delta)$ so that there exists $\delta_0>0$ such that for all $0<\delta< \delta_0$ the value $x(\delta)$ is the maximizer of the program.
\end{definition}
It is not immediate that the three cases cover all possibilities. This will be shown in the proof of the Theorem~\ref{thm:local-1-plp}, that demonstrates also how algorithms in Theorem~\ref{thm:1-dim-local} and Theorem~\ref{thm:1-dim-local-infea} can be turned into Local 1-PLP solvers.

This problem was first studied by Jeroslow \cite{jeroslow1973asymptotic,jeroslow1973linear} under the name \emph{asymptotic linear program}. 
In particular they studied the following optimization problem 
\[
\text{maximize } c(M)^\top x(M), \text{ subject to } A(M)x(M)\le b(M) \text{ and } x\ge0\,.
\]
The goal is to find a rational function $x(M)$ of a scalar $M\ge 0$ so that there exists $M_0>0$ such that for all $M \ge M_0$ the value $x(M)$ is the maximizer of the program (such $x(M)$ is proven to exists in \cite{jeroslow1973asymptotic}).
By taking $\delta=1/M$, this is equivalent to local 1-PLP defined in Definition~\ref{def:local-1-plp}, therefore, the results in this section also applies to asymptotic linear program. The key observation in \cite{jeroslow1973asymptotic} is that all rational functions form an ordered field $R(M)$, where the order $>0$ is defined by the asymptotic value as $M$ goes to infinity. 
It was shown that linear programming over any ordered field, including $R(M)$, can be solved using the simplex method, where all basic operations of the simplex are defined over the field.  However, the worst case running time of such algorithm is exponential in the size of input. Indeed, even when $A$ and $b$ do not vary with $M$, there are examples where the simplex method runs for exponential time. 
In contrast, Theorem~\ref{thm:local-1-plp} will give a polynomial-time algorithm for the equivalent problem. 

\begin{theorem}[Local 1-PLP is Tractable]\label{thm:local-1-plp}
    There exists a polynomial time algorithm that solves a local 1-PLP.
\end{theorem}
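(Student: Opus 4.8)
The plan is to reduce local optimization to a bounded number of local-feasibility queries and invoke the algorithms of Theorem~\ref{thm:1-dim-local} and Theorem~\ref{thm:1-dim-local-infea} as black boxes, using LP duality to express optimality of $P(\delta)$ as a single 1-PLP. Since those algorithms treat the two sides of $0$ separately, they also decide feasibility on a one-sided neighborhood $(0,\delta_0)$, which is what Definition~\ref{def:local-1-plp} calls for.

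First I would write the dual of $P(\delta)$: minimize $b(\delta)^\top y$ subject to $A(\delta)^\top y\ge c(\delta)$, $y\ge 0$. The primal region $\mathcal P(\delta)=\{x\ge 0:\ A(\delta)x\le b(\delta)\}$ and the dual region $\mathcal D(\delta)=\{y\ge 0:\ A(\delta)^\top y\ge c(\delta)\}$ are both 1-PLPs whose size is polynomial in that of the input. Run the algorithm of Theorem~\ref{thm:1-dim-local}/Theorem~\ref{thm:1-dim-local-infea} on $\mathcal P$. If $\mathcal P$ is locally infeasible, output case (1) together with the certificate it returns (an explicit $\delta_0$, a root of an explicit polynomial, and a Farkas-type witness valid on $(0,\delta_0)$). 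Otherwise run the same algorithm on $\mathcal D$; if $\mathcal D$ is locally infeasible, then on a common neighborhood $(0,\delta_0)$ the primal is feasible while the dual is infeasible, so $P(\delta)$ is unbounded there, and we output case (2). If an explicit recession direction is desired, one additional query to the 1-PLP $\{r\ge 0:\ A(\delta)r\le 0,\ c(\delta)^\top r\ge 1\}$ produces it as a rational function.

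In the remaining case both $\mathcal P$ and $\mathcal D$ are locally feasible, so on a small enough $(0,\delta_0)$ the primal is feasible and, since the dual is feasible, bounded; by strong LP duality it attains its optimum with $c(\delta)^\top x^\star = b(\delta)^\top y^\star$ at an optimal pair. I would then form the single 1-PLP in variables $(x,y)$ given by $A(\delta)x\le b(\delta)$, $x\ge 0$, $A(\delta)^\top y\ge c(\delta)$, $y\ge 0$, and $c(\delta)^\top x\ge b(\delta)^\top y$. Every solution of this system has $x\in\mathcal P(\delta)$, $y\in\mathcal D(\delta)$, and by weak duality $c(\delta)^\top x = b(\delta)^\top y$, so $x$ is a maximizer of $P(\delta)$; conversely an optimal primal-dual pair is a solution, so the system is locally feasible. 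Applying the algorithm of Theorem~\ref{thm:1-dim-local} returns $(x(\delta),y(\delta))$ with $x(\delta)$ a rational function on a positive neighborhood of $0$, which we output as case (3).

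It remains to check that the three cases are exhaustive, i.e.\ that the promise in Definition~\ref{def:local-1-plp} holds. For a one-dimensional PLP the set $\{\delta:\ \mathcal P(\delta)\neq\emptyset\}$ is semialgebraic, hence a finite union of points and open intervals, so on a sufficiently small $(0,\delta_0)$ it is either all of $(0,\delta_0)$ or empty; the same applies to $\mathcal D$, and ``$P(\delta)$ unbounded'' is the intersection of ``primal feasible'' with ``dual infeasible''. Thus on some small $(0,\delta_0)$ exactly one of ``primal infeasible'', ``primal feasible and dual infeasible'', ``both feasible'' holds, matching cases (1), (2), (3). The whole procedure makes at most a constant number of calls to the polynomial-time algorithm of Theorem~\ref{thm:1-dim-local} on instances of size polynomial in the input, so it runs in polynomial time. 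There is no hard obstacle here — the content is essentially the textbook equivalence between solving an LP and satisfying its optimality conditions, now carried out over rational functions — so the main things to get right are the bookkeeping of the one-sided certificates through the duality reduction and the exhaustiveness argument just sketched, which rests on the same semialgebraic structure underlying Theorem~\ref{thm:1-dim-local}.
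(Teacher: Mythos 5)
Your proposal is correct and follows essentially the same route as the paper: take the LP dual, characterize the three outcomes (infeasible, unbounded, bounded-and-optimal) via weak/strong duality as local feasibility questions about the primal system, the dual system, and the combined primal--dual system $A(\delta)x\le b(\delta)$, $x\ge 0$, $A(\delta)^\top y\ge c(\delta)$, $y\ge 0$, $c(\delta)^\top x\ge b(\delta)^\top y$, and solve each with the one-sided algorithms of Theorems~\ref{thm:1-dim-local} and~\ref{thm:1-dim-local-infea}. Your added remarks on exhaustiveness (via the eventually-constant feasibility of a 1-PLP on $(0,\delta_0)$, cf.\ Corollary~\ref{cor:feasible-infeasible-on-interval}) and on extracting a recession direction are consistent with, and slightly more explicit than, the paper's argument.
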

\begin{proof}
Let $P(\delta)$ be defined as in Definition~\ref{def:local-1-plp}. 
Let $D(\delta)$ be the dual of $P(\delta)$:
\[
\text{minimize } b(\delta)^\top y(\delta), \text{ subject to } A(\delta)^\top y(\delta)\ge c(\delta) \text{ and } y(\delta)\ge 0\,.
\]
From the weak and strong duality theorem, we have the following equivalence relations for any fixed $\delta$:
\[\text{$P(\delta)$ being infeasible} \Leftrightarrow\  A(\delta)x(\delta)\le b(\delta), x(\delta)\ge 0 \text{ being infeasible}\]
\[\text{$P(\delta)$ being unbounded} \Leftrightarrow\ \begin{cases}
A(\delta)x(\delta)\le b(\delta), x(\delta)\ge 0 \text{ being feasible }\\
A(\delta)^\top y(\delta)\ge c(\delta) , y(\delta)\ge 0 \text{ being infeasible}
\end{cases}
\]
\begin{align*}
&\text{$P(\delta)$ being bounded and optimized by $x(\delta)$}\\
&\Leftrightarrow\ \exists x(\delta),y(\delta) \quad \begin{cases}
A(\delta)x(\delta)\le b(\delta)\\
x(\delta)\ge 0\\
A(\delta)^\top y(\delta)\ge c(\delta) \\
y(\delta)\ge 0\\
c(\delta)^\top x(\delta)\ge b(\delta)^\top y(\delta)
\end{cases}
 \text{ being feasible with solution $x(\delta)$}
\end{align*}

By taking $\delta$ close to 0, we have the following similar relations by definition:
\[\text{$P(\delta)$ being locally infeasible at $0^+$} \Leftrightarrow\  A(\delta)x(\delta)\le b(\delta), x(\delta)\ge 0 \text{ being locally infeasible at $0^+$}\]
\[\text{$P(\delta)$ being locally unbounded at $0^+$} \Leftrightarrow\ \begin{cases}
A(\delta)x(\delta)\le b(\delta), x(\delta)\ge 0 \text{ being locally feasible at $0^+$}\\
A(\delta)^\top y(\delta)\ge c(\delta) , y(\delta)\ge 0 \text{ being locally infeasible at $0^+$}
\end{cases}
\]
\begin{align*}
&\text{$P(\delta)$ being locally bounded and optimized by $x(\delta)$ at $0^+$}\\
&\Leftrightarrow\ \begin{cases}
A(\delta)x(\delta)\le b(\delta)\\
x(\delta)\ge 0\\
A(\delta)^\top y(\delta)\ge c(\delta) \\
y(\delta)\ge 0\\
c(\delta)^\top x(\delta)\ge b(\delta)^\top y(\delta)
\end{cases}
 \text{ being locally feasible with solution $x(\delta)$ at $0^+$}
\end{align*}
Here ``at $0^+$'' means there exists $\delta_0>0$ such that for any $0<\delta<\delta_0$ the regarding statement holds.
Note that the left hand sides are exactly the definition of local 1-PLP and all the problems on the right hand sides are problems stated in Theorem~\ref{thm:1-dim-local} and Theorem~\ref{thm:1-dim-local-infea}, local 1-PLP can be solved in polynomial time.
\end{proof}

%
%
\end{remark}


\subsection{Toy Example of 1-PLP} 
To give intuition for how the feasibility of PLP can vary with $\delta$, we start with a toy example of a 1-PLP. This 1-PLP is feasible when $\delta\notin (0,\epsilon)$ and infeasible when $\delta\in (0,\epsilon)$, where $\epsilon$ is an arbitrary small constant.
This example demonstrates that local feasibility cannot be directly solved by simply taking a representative $\delta$ close enough to 0, as $\epsilon$ can be too small for any particular choice of $\delta$. (While here the dependence on $\epsilon$ is simple, in general it is not clear how to determine what constitutes a small enough value of $\delta$ based on the constraints.)

We take  $n=2$, writing $x$ and $y$ as the variables. There are 5 constraints $c_1$ to $c_5$, and only $c_4$ and $c_5$ are varying with $\delta$. The parameter $\epsilon$ is left free for now, to be thought of as a small constant. The 1-PLP is given by the constraints
\begin{align}\label{eq:example-PLP}
\begin{split}
    c_1(x,y)=y-x&\ge 0\\
    c_2(x,y)=y+x&\ge 0\\
    c_3(x,y)=y-1+\epsilon&\le 0\\
    c_4(x,y)=y-\delta^2+\epsilon\delta&\le 0\\
    c_5(x,y)=y+1-\delta&\ge 0\,.
\end{split}
\end{align}
Now consider varying $\delta$ from $-\infty$ to $+\infty$. We depict the evolution of feasibility in Figure~\ref{fig:evolution-example}. 

\begin{figure}[t]
     \centering
     \begin{subfigure}[b]{0.24\textwidth}
         \centering
         \includegraphics[width=\textwidth]{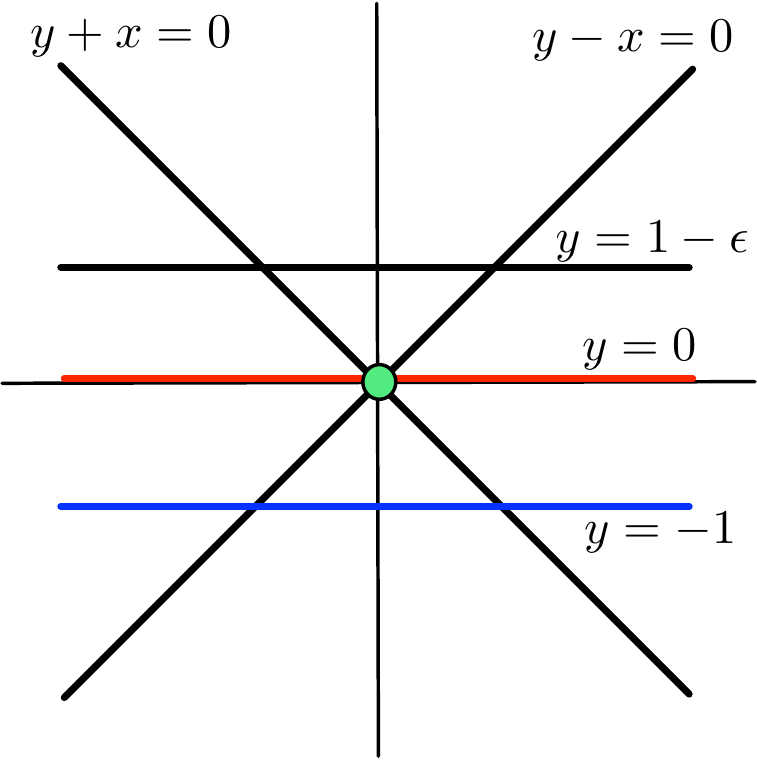}
         \caption{$\delta=0$}
     \end{subfigure}
     \hfill
     \begin{subfigure}[b]{0.24\textwidth}
         \centering
         \includegraphics[width=\textwidth]{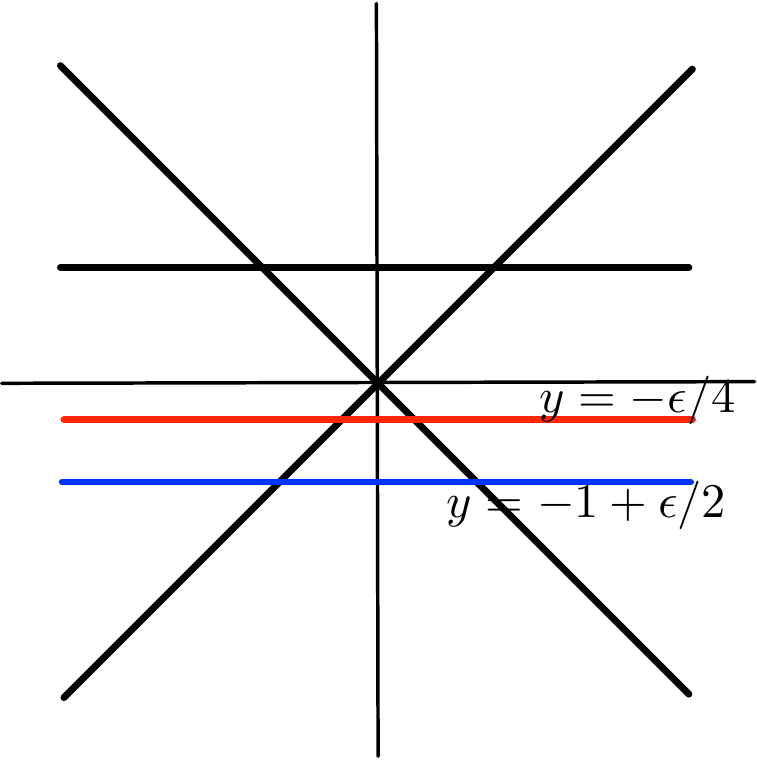}
         \caption{$\delta=\epsilon/2$, no solution}
     \end{subfigure}
     \hfill
     \begin{subfigure}[b]{0.24\textwidth}
         \centering
         \includegraphics[width=\textwidth]{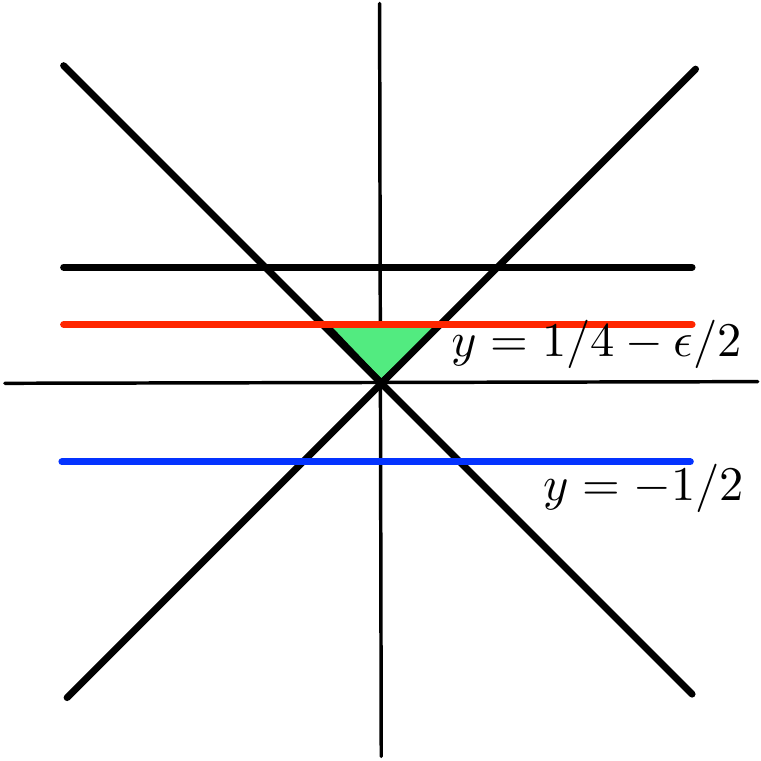}
         \caption{$\delta=1/2$}
     \end{subfigure}
     \hfill
     \begin{subfigure}[b]{0.24\textwidth}
         \centering
         \includegraphics[width=\textwidth]{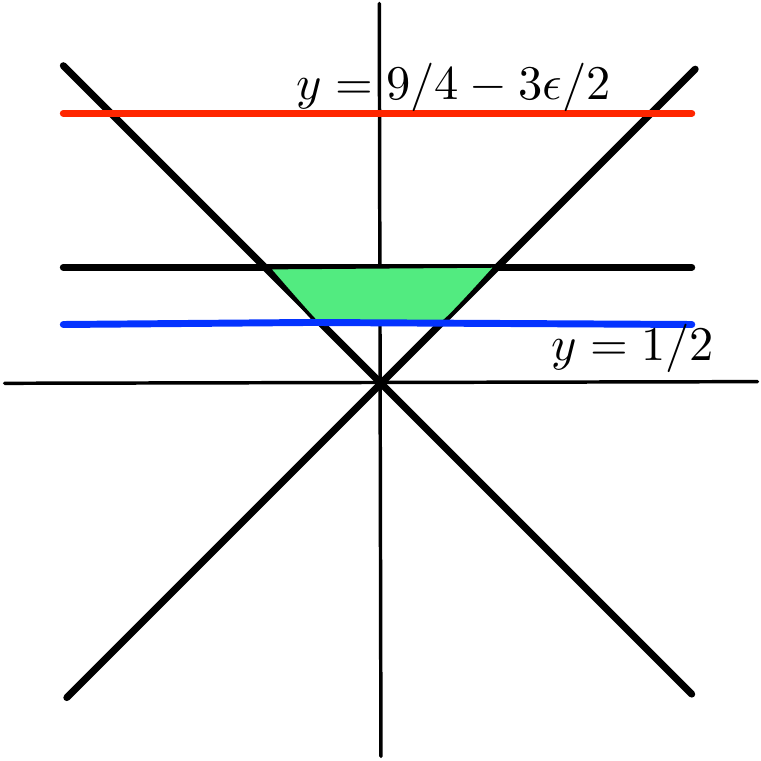}
         \caption{$\delta=3/2$}
     \end{subfigure}
        \caption{Feasible region of the PLP \eqref{eq:example-PLP} as $\delta$ changes. When $\delta\le 0$, it is feasible, and when $\delta=0$, there is a unique solution $x=y=0$. When $\delta\in (0,\epsilon)$, there is no solution because the combination of $c_1\ge 0, c_2\ge 0$, and $c_4\le 0$. When $\delta\in (\epsilon,1]$, the solution set is the region between $c_1=0,c_2=0$, and $c_4=0$, and taking any two of these constraints as equality constraints yields a feasible solution. When $\delta>1$, the solution set is the region between $c_1=0,c_2=0,c_3=0$ and $c_5=0$.}
        \label{fig:evolution-example}
\end{figure}

\subsection{A Polynomial-time Algorithm by Choosing a Small Enough $\delta$}


As described in Theorem~\ref{thm:1-dim-local}, if a PLP is locally feasible, there exists a rational-function solution and $\epsilon>0$ that  satisfies \eqref{equ:lp} for any $\delta\in (0,\epsilon)$. A natural idea is to choose a $\delta_0<\epsilon$ and verify the feasibility at a single point $\delta_0$ (which is a plain LP). If the number of bits in $\delta_0$ is polynomial, this method yields a polynomial-time algorithm for testing local feasibility. 

\begin{lemma}\label{lem:testing-small-delta}
    There exists $\epsilon>0$ such that
    PLP described by \eqref{equ:lp}
    is feasible for all $\delta$ in $(0,\epsilon)$, if and only if \eqref{equ:lp} is feasible when $\delta$ equals $\delta_0=2^{-8n(L+\log Dn)-1}$.
\end{lemma}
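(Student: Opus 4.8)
The plan is to reduce feasibility of the linear program $A(\delta)x\ge b(\delta)$ at a point $\delta$ to a \emph{fixed} Boolean combination of sign conditions on a finite family of univariate polynomials of controlled degree and bit-size, and then use a root-separation bound to show that this sign pattern, and hence feasibility, is constant on the whole interval $(0,\delta_0]$. Once that is established, both directions of the lemma follow immediately.

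First I would set up a basic-solution characterization. Writing $x=x^+-x^-$ with $x^\pm\ge 0$, the polyhedron $\{x:A(\delta)x\ge b(\delta)\}$ is nonempty iff $P(\delta):=\{(x^+,x^-)\in\R^{2n}_{\ge 0}:A(\delta)(x^+-x^-)\ge b(\delta)\}$ is nonempty; since $P(\delta)$ lies in an orthant it contains no line, so it is nonempty iff it has a vertex. Every candidate vertex is the unique solution $M_B(\delta)^{-1}r_B(\delta)$ of a square subsystem indexed by a subset $B$ of the constraints, where $M_B$ is a $2n\times 2n$ submatrix of the constraint matrix $\bigl(\begin{smallmatrix}A&-A\\ I&0\\ 0&I\end{smallmatrix}\bigr)$ (entries constant or of degree $\le D$) and $r_B$ the corresponding right-hand side. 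By Cramer's rule, clearing the resulting $\det M_B(\delta)$ denominators by suitable powers of $\det M_B(\delta)$ turns "the vertex for $B$ exists and satisfies all constraints" into $\det M_B(\delta)\ne 0\ \wedge\ \bigwedge_i q_{B,i}(\delta)\ge 0$ for explicit polynomials $q_{B,i}$; thus feasibility of the LP at $\delta$ equals $\bigvee_B\bigl(\det M_B(\delta)\ne 0\ \wedge\ \bigwedge_i q_{B,i}(\delta)\ge 0\bigr)$. The index set of $B$'s and the polynomials involved do not depend on $\delta$, so feasibility is a fixed Boolean function $F$ of the signs of the finite family $\mathcal{P}=\{\det M_B\}\cup\{q_{B,i}\}$.

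Next I would bound the complexity of $\mathcal{P}$: expanding determinants and adjugates, each member has degree $O(nD)$, and after clearing the rational denominators of $A,b$ its integer coefficients have bit-size $O(n(L+\log Dn))$, where $L$ is the size of the PLP. Then the standard univariate root bound applies: if $p(\delta)=\sum_k a_k\delta^k$ has integer coefficients of bit-size at most $\Lambda\ge 1$ and $a_j$ is its lowest-order nonzero coefficient, then for $0<\delta\le 2^{-\Lambda-1}$ the tail $\sum_{k>j}a_k\delta^k$ is strictly dominated by $|a_j|\delta^j\ge \delta^j$, so $\operatorname{sign}p(\delta)=\operatorname{sign}a_j$ is constant on $(0,2^{-\Lambda-1}]$ (and $p\equiv 0$ otherwise). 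Taking $\delta_0=2^{-8n(L+\log Dn)-1}$, which does not exceed $2^{-\Lambda-1}$ for any $p\in\mathcal{P}$, the entire sign vector $(\operatorname{sign}p(\delta))_{p\in\mathcal{P}}$ — and therefore $F$ of it, i.e.\ feasibility — is the same at every $\delta\in(0,\delta_0]$. This yields both directions: if $A(\delta)x\ge b(\delta)$ is feasible for all $\delta\in(0,\epsilon)$, choose $\delta'\in(0,\min(\epsilon,\delta_0))$, where it is feasible, hence it is feasible at $\delta_0$; conversely, if it is feasible at $\delta_0$ then it is feasible on all of $(0,\delta_0]$, so $\epsilon=\delta_0$ works.

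The main obstacle is the bookkeeping in the third step: one must verify carefully that the Cramer polynomials $q_{B,i}$, built from an $O(n)\times O(n)$ matrix of degree-$D$ entries with $L$-bit rational coefficients, really have coefficient bit-size at most $8n(L+\log Dn)$ once the denominators are cleared, so that $\delta_0$ falls below the root-separation threshold; the leading constant $8$ and the "$-1$" in the exponent of $\delta_0$ are chosen generously to absorb the slack in these estimates. A secondary point needing care is the reduction to vertices — it is the substitution $x=x^+-x^-$ that makes the polyhedron pointed, which is exactly what guarantees nonemptiness is witnessed by a basic feasible solution of the stated form and hence that feasibility is captured by finitely many polynomial sign conditions.
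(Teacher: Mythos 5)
Your proof is correct and rests on the same two-part mechanism as the paper's: feasibility at a point $\delta$ is a fixed Boolean function of the sign vector of an explicit finite family of univariate polynomials of degree $O(nD)$ and coefficient bit-size $O(n(L+\log Dn))$, and a Cauchy-type root bound freezes that sign vector on all of $(0,\delta_0]$. Where you differ is in how the polynomial family is produced. The paper reuses its earlier machinery: by Lemma~\ref{lem:equality_solution} and Lemma~\ref{lem:singleJ}, feasibility is witnessed by $x=A_J^+b_J$ for some row subset $J$, the quantities $g_{J,i}=a_iA_J^+b_J-b_i$ are expressed via $\lim_{y\to 0}(A_J^\top A_J+yI)^{-1}A_J^\top$ as rational functions of $y$ whose coefficients are polynomials in $\delta$, and it is the signs of those coefficient polynomials that are shown to be constant. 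You instead lift to the pointed polyhedron in $(x^+,x^-)$, invoke the basic-feasible-solution characterization, and use Cramer's rule; this is a self-contained argument from standard LP theory that avoids the pseudoinverse limit entirely. The two constructions give comparable degree and bit-size bounds, so both land on essentially the same $\delta_0$; yours is easier to verify in isolation, while the paper's is a short corollary of results it needs anyway for Theorem~\ref{thm:rational_function}. The only points to watch in your write-up are (i) clearing the $\det M_B(\delta)$ denominators by an even power so that inequality directions are preserved, which you indicate, and (ii) confirming that the coefficient bit-size after that clearing stays below $8n(L+\log Dn)$, which you correctly flag as the main bookkeeping step.
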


Proof for Lemma~\ref{lem:testing-small-delta} is put in the end of Section~\ref{sec:rational-function-solution}, as it uses ideas from the proof of Theorem~\ref{thm:1-dim-local}. 
Although this gives a polynomial time algorithm for testing local feasibility in theory, 
the Lemma demonstrates that the precision of the arithmetic required for this method is infeasiable for some worst case. Furthermore, in practice, we observed that this method does not result in a practical algorithm for the application tasks we care about.
For the applications in Theorem~\ref{thm:informal-pca} and \ref{thm:informal-broadcasting}, $\delta_0$ needs to be less than $2^{-20000}$. 
Also, this approach only gives a 1-bit answer of whether the PLP is locally feasible without a way to verify the answer, which is especially important in light of possible numerical errors.
In contrast, the algorithm we are using in Section~\ref{s:OutputFeas}
outputs a solution given as a function of $\delta$ with rational-numbered coefficients, allowing us to verify the validity of the output by examining derivatives of the constraints at 0.

\subsection{Sketch of Main Ideas}

Here we present some of the ideas underlying  the algorithm in Theorem~\ref{thm:1-dim-local}.

\paragraph{Reduction to Polynomial Solutions.}
Suppose a given PLP is locally feasible. It is not obvious that the solution $x$ can be expressed explicitly: $x(\delta)$ can potentially be any function defined on $B_\epsilon(0)$, and the space of such functions has infinite dimension. In Figure~\ref{fig:evolution-example}, we observe that $x(\delta)$ can be taken to be a piece-wise rational function of $\delta$. 
This turns out to be true in general, and by carefully using the structure of the problem we will prove that there must exist a rational function $x(\delta)$ locally solving the PLP if it is locally feasible. To this end, we demonstrate that each corner (or point meeting a subset of constraints with equality) of the feasible polyhedron is a rational function in $\delta$ on a small enough open interval, and then argue that feasibility of the point cannot change on the interval using a continuity argument. 

We then show that this same statement holds for $x(\delta)$ having polynomial entries, with only a polynomial increase in the magnitude of the coefficients and the degrees.
As a result, we can restrict our search space to $x(\delta) = (x_1(\delta),\dots, x_n(\delta))$ being a vector of polynomials, written as 
$$
x_i(\delta) = \sum_{j} h_{ij} \delta^j \,,
$$
and view the PLP as a set of constraints on the coefficients $(h_{ij})_{ij}$ of the polynomials. Crucially, the dimension of the search space has been changed from infinite to finite.

\paragraph{Solving Feasibility Problem over Coefficients of Polynomials.}

Given the reduction to polynomial solutions described just above, we now investigate the form of the constraints on the coefficients $(h_{ij})_{ij}$ of the polynomial entries of $x(\delta)$.
It turns out to be much easier to understand the constraints in a variant of local feasibility of the PLP where we restrict attention to the positive axis $\delta>0$. The latter problem is equivalent to each of the $m$ constraint polynomials $f_i(\delta)\defeq a_i(\delta)x(\delta)-b_i(\delta)$ being either $0$ or having its first non-zero derivative (in $\delta$) be positive at zero. Let us call this condition the ``derivative condition". 

We now express the derivative condition geometrically.
Observe that $f_i(\delta)$ is a linear function of the $(h_{ij})_{j}$, hence so too is any order of derivative of $f_i(\delta)$.
Each derivative being zero or positive is therefore a linear constraint on $(h_{ij})_{j}$ plus a halfspace constraint.
%
The derivative condition on a particular $f_i(\delta)$ thus corresponds geometrically to a set $S_i$ given by a union of half affine subspaces (halfspace intersected with an affine subspace), so the derivative condition is described by the intersection of the $S_i$.  

Local feasibility for $\delta$ boils down to whether the intersection of the $S_i$ sets is nonempty, in which case there exists a choice of coefficients $(h_{ij})_{ij}$ satisfying the derivative constraints. 
This problem will in turn be reduced to the \emph{subspace elimination problem}, which asks whether a union of half affine spaces consists of the entire ambient space $\mathbb{R}^N$. 

Subspace elimination is introduced in Section~\ref{sec:algo-dimension-1}, and we show how to exploit the underlying geometry of the problem in order to solve it with ordinary linear programming as a subroutine. At a high level, the algorithm iteratively processes and removes subspaces that have the potential to change the dimension of the current feasible set, which is updated after each step. Crucially, we can efficiently determine whether a subspace is relevant based on the dimension of its intersection with the current feasible set.



\paragraph{The Obstruction in Dimension $d\geq 2$.}
A similar approach does not work for local feasibility when $\delta$ has higher dimension than $1$ and indeed we show that the problem is NP-hard. We now give an example illustrating the root of the issue. Suppose $\delta=(\delta_1,\delta_2)$ is 2-dimensional. It is possible that a PLP is locally feasible but there is no single rational function solution $x(\delta)$ which satisfies the program for all $\delta\in B_{\epsilon}(0)$. Consider the following simple example consisting of a single constraint:
\begin{equation*}
    \delta_1\delta_2(\delta_1+\delta_2)x=\delta_1^2\delta_2^2.
\end{equation*}
For any choice of $\delta_1$ and $\delta_2$, the linear equation has a solution. But $x=1/\delta_1+1/\delta_2$ is only feasible for $\delta_1\not=0$ and $\delta_2\not=0$. This example can be generalized so that even locally around the origin one must consider a partition into many regions with each rational function only working on a subset.

\paragraph{Computational Hardness in Dimension $d\geq 2$.}
To prove the hardness result, Theorem~\ref{thm:hardness}, we first show that the everywhere feasibility of 1-PLP can be reduced to the local feasibility of a 2-PLP by a mapping from $\mathbb{R}$ to rays on $\mathbb{R}^2$. So it suffices to prove the hardness for everywhere feasibility of 1-PLP,
$$\forall \delta\in \mathbb{R}, \exists x\in\mathbb{R}^n, A(\delta)x\ge b(\delta).$$
To make the format more aligned with a typical optimization problem that asks whether a solution exists, we can take the negation and LP dual, which gives us
\begin{equation*}
    \exists \delta,y \text{ satisfying }y\ge 0 \text{ and } A^\top (\delta)y=0\text{ with }b^\top (\delta)y> 0\,.
\end{equation*}
We will show through transformation of variables that this problem can be used to approximate an arbitrary \emph{integer linear programming} instance
$$\exists x\in\{0,1\}^n,\ Ax\ge b$$
with arbitrary precision. 
The precision of the approximation is enough to guarantee equivalence between any instance of the maximum independent set problem and everywhere feasibility of a polynomial-size instance of 1-PLP.
Hardness of everywhere feasibility of 1-PLP then follows from NP-hardness of maximum independent set.

\subsection{Outline}
We review related literature in Section~\ref{sec:related-literature}, including cylindrical algebraic decomposition and  robust optimization. In Section~\ref{sec:additional-results}, we give several additional results that are closely related to local feasibility and everywhere feasibility. Results for applications to PCA and broadcasting on 2D grid are in Section~\ref{sec:applications}. Section~\ref{sec:preliminary} contains preliminaries and notation.

The proof and algorithm for Theorem~\ref{thm:1-dim-local} is in Section~\ref{sec:algo-dimension-1} and the proof of Theorem~\ref{thm:hardness} is in Section~\ref{sec:hardness}. Detailed proofs and discussion for applications to PCA and broadcasting on 2D grid are given in Section~\ref{sec:PCA} and Section~\ref{sec:broadcasting-on-grid}.

\subsection{Related Literature}\label{sec:related-literature}
\paragraph{Applications of Asymptotic Linear Program}

Asymptotic linear program has many applications. In \cite{hordijk1985sensitivity,altman1999asymptotic}, it is used to solve policy of perturbed Markov decision processes. In \cite{walley2004direct}, it can be used to check the consistency of conditional probability assessments. In \cite{avrachenkov2012algorithms}, it is used to solve uniformly optimal strategies of two-player zero-sum games. Therefore, our results can be applied to all these domains as well.

\paragraph{Cylindrical Algebraic Decomposition}
To give intuition for why it is non-trivial to solve local feasibility of a 1-PLP, we now place the problem in the broader setting of a general polynomial system. An algebraic decomposition is the following process. Given a set of polynomials with $n$ variables, decompose $\mathbb{R}^n$ into regions where all polynomials have constant sign.

The local feasibility and everywhere feasibility of PLP is a special case of the above problem. To see this, we can choose the set of polynomials to be $\{a_i(\delta)x-b_i(\delta)\}$ where $\delta$ and $x$ are both considered as variables. Then the feasible region of a PLP is the projection of $\{(\delta,x):a_i(\delta)x-b_i(\delta)\ge 0,\forall i\}$ onto the space of $\delta$.

The general algorithm for computing the decomposition is cylindrical algebraic decomposition (CAD), introduced by Arnon and Collins \cite{arnon1984cylindrical} for quantifier elimination. The algorithm is powerful and solves any polynomial system, which includes local feasibility and everywhere feasibility of a general $d$-PLP. However, it is known to have doubly exponential running time: If $d$ is the degree and $m$ is the number of input polynomials, then the worst case running time of a general CAD is $dm^{2^{\Omega(n)}}$ \cite{davenport1988real,brown2007complexity,england2016complexity}. On the contrary, our algorithm in Theorem~\ref{thm:1-dim-local} has a running time polynomial in $n,d$ and $m$.

\paragraph{Robust Optimization and Robust Linear Programming}
The formulation of our problem is closely related to robust optimization, a field which aims to study optimization when parameters are uncertain \cite{bertsimas2011theory}. In robust optimization, the parameters of the optimization problem are assumed to reside in an uncertainty set, and the result should hold for any possible parameter. This formulation has broad applications in portfolio optimization, learning, and control \cite{bertsimas2011theory}.

Robust linear programming is a basic problem in robust optimization. It is written as
\begin{align*}
    \text{minimize } &c^\top  x \\
    \text{subject to } &Ax\ge b, \forall (A,b)\in \mathcal{U}
    \,.
\end{align*}
The tractability of the problem depends on the choice of $\mathcal{U}$, and the typical choices are polyhedron and ellipsoid.
If $\mathcal{U}$ is a polyhedron, the robust counterpart can be reduced to a normal LP problem~\cite{ben1999robust}. If $\mathcal{U}$ is an ellipsoid, then the robust linear program is equivalent to a second-order cone program, which can again be solved efficiently~\cite{ben1999robust}. Other choices of uncertainty sets have also been studied \cite{bertsimas2004robust,bertsimas2004price}.

To compare with PLP, we can consider the feasibility of a robust linear program (the optimization of a robust linear program can be solved by binary search and an oracle for feasibility). The feasibility problem is then to answer
\begin{align*}
    \text{whether } &\exists x \\
    \text{satisfying } &Ax\ge b, \forall (A,b)\in \mathcal{U}
    \,.
\end{align*}
PLP can be written in a similar way, as
\begin{align*}
    \forall (A,b)\in \mathcal{A}\text{, whether } &\exists x \\
    \text{satisfying } &Ax\ge b
    \,,
\end{align*}
where $\mathcal{A}$ is a $d$-dimensional algebraic curve parameterized by $\delta\in \mathbb{R}^d$, or a small enough ball on the curve. 

There are two main differences between PLP and robust linear programming. First, in PLP the feasible solution $x$ itself is not required to be robust, and can vary with $\delta$. Second, the uncertainty set is assumed to be a low-dimensional set having structure encoded by $A(\delta),b(\delta)$. In previous works, the uncertainty set is typically a geometrically much simpler set.

\subsection{Future Directions}
In this section we present some open directions. First, although the algorithm for Theorem~\ref{thm:1-dim-local} has polynomial running time, we believe the running time is far from optimal. So one possible direction is to improve the running time for local feasibility of 1-PLP. 

Second, it is discussed in the beginning that the local feasibility of PLP could be used to solve the max-flow problem where edge capacity and demand are parameterized by $\delta\in \mathbb{R}$ if we care about local behavior of $\delta$ around 0. But it is not obvious whether it is a tractable problem with $\delta$ having higher dimensions. 

One natural generalization from feasibility to optimization that is not discussed in this paper is the following problem when $\delta\in B_\epsilon(0)$: 
\begin{align*}
    \max_{x(\delta)}\ & c^\top(\delta)x(\delta)\\
    \text{subject to } & A(\delta)x(\delta)\ge b(\delta)\,.
\end{align*}
Here we want to get an optimal $x(\delta)$ as a function of $\delta$.
We know that the problem is NP-hard for dimension higher than 2 as it is harder than deciding feasibility. But for $d=1$, can the answer be expressed efficiently? If so, is there a polynomial-time algorithm that solves it? 

A conjecture in the field of probabilistic cellular automata (PCA) is the ergodicity of soldier's rule under arbitrarily small noise, discussed in Section~\ref{sec:soldiers-rule}. Our approach fails to decide the ergodicity for soldier's rule because the algorithm takes too much time on the PLP instance produced by soldier's rule. It remains open whether we can improve the running time for this specific instance of PLP and prove that the soldier's rule is ergodic under BSC noise. 

PLP is a generalization of linear programming obtained by substituting coefficients with polynomials of $\delta$. Other types of optimization problems may also admit such generalizations and it is interesting to ask whether the generalizations can be solved efficiently.


\section{Additional Results}\label{sec:additional-results}

In this section, we discuss some additional results that are closely related to local feasibility and everywhere feasibility of PLP. 

\subsection{PLP Infeasibility}

Unlike ordinary linear programming, where a program is feasible or infeasible, a PLP can be locally (everywhere) feasible, locally (everywhere) infeasible, or neither. Therefore, it also makes sense to ask whether the problems of local and everywhere \emph{infeasibility} are tractable. It turns out that feasibility and infeasibility have similar computational complexities.

\begin{theorem}\label{thm:1-dim-local-infea}
The local infeasibility of 1-PLP can be decided in polynomial time.
\end{theorem}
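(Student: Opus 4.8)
The plan is to reduce local infeasibility of a 1-PLP to local \emph{feasibility} of an auxiliary 1-PLP via linear programming duality, and then invoke Theorem~\ref{thm:1-dim-local} as a black box. The starting point is the Farkas-type theorem of the alternative: for any fixed $\delta$, the system $A(\delta)x \ge b(\delta)$ with $x\in\mathbb{R}^n$ free has no solution if and only if there exists $y \ge 0$ with $A(\delta)^\top y = 0$ and $b(\delta)^\top y > 0$. Since the set $\{y \ge 0 : A(\delta)^\top y = 0\}$ is a cone, one may rescale any such witness $y$ so that $b(\delta)^\top y = 1$, and hence the strict inequality can be replaced by the closed constraint $b(\delta)^\top y \ge 1$ without changing feasibility. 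Thus infeasibility of $A(\delta)x \ge b(\delta)$ at a given $\delta$ is equivalent to feasibility at that same $\delta$ of the system
\begin{equation*}
    y \ge 0, \qquad A(\delta)^\top y \ge 0, \qquad -A(\delta)^\top y \ge 0, \qquad b(\delta)^\top y \ge 1 .
\end{equation*}
Written in standard form, this is an auxiliary 1-PLP $(A'(\delta),b'(\delta))$ with $m$ variables, $2n+m+1$ constraints, degree $D$, and rational coefficients of bit-size polynomially bounded in the input size (the only new constant introduced is $1$). Crucially, in $(A',b')$ the witness $y$ is allowed to depend on $\delta$, which is exactly the regime of Theorem~\ref{thm:1-dim-local}.

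Next I would lift the pointwise equivalence to neighborhoods. Because the equivalence above holds for every $\delta\in\mathbb{R}$ (in particular at $\delta=0$, where Farkas' lemma applies verbatim to the real matrix $A(0)$ and vector $b(0)$), the same $\epsilon$ works on both sides: the original PLP is locally infeasible --- i.e.\ there is $\epsilon>0$ with $A(\delta)x\ge b(\delta)$ infeasible for all $\delta\in B_\epsilon(0)$ --- if and only if the auxiliary PLP $(A',b')$ is locally feasible. The latter is decided in polynomial time by Theorem~\ref{thm:1-dim-local}, and since the size of $(A',b')$ is polynomial in the size of $(A,b)$, the whole procedure is polynomial time. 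As a bonus, the rational-function solution $y(\delta)$ returned by that algorithm on its explicit interval is precisely an explicit Farkas certificate of infeasibility of the original PLP, which can be checked by examining derivatives of the constraints at $0$.

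I expect the only genuine subtlety to be the replacement of the strict inequality $b(\delta)^\top y>0$ by $b(\delta)^\top y\ge 1$: this must be argued for each fixed $\delta$ from the conic homogeneity of the constraints $y\ge 0$ and $A(\delta)^\top y=0$, and one should remark that it is harmless that different $\delta$ may require differently scaled witnesses $y(\delta)$, since Theorem~\ref{thm:1-dim-local} already supplies a $\delta$-dependent solution. Everything else --- that the equality $A(\delta)^\top y=0$ splits into the two halfspace constraints above, that the reduction inflates the instance size only polynomially, and that ``locally infeasible'' is literally ``the auxiliary PLP is locally feasible'' --- is routine bookkeeping. (This Farkas reduction from infeasibility to feasibility is also morally what makes the infeasible and unbounded cases go through in the proof of Theorem~\ref{thm:local-1-plp}.)
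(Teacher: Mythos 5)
Your reduction is correct, but it is a genuinely different route from the one the paper takes. The paper does not reduce infeasibility to feasibility at all: it observes (via Lemma~\ref{lem:number_of_change} and Corollary~\ref{cor:feasible-infeasible-on-interval}) that for small enough $\epsilon$ the PLP is \emph{either always feasible or always infeasible} on each of the three pieces $(-\epsilon,0)$, $\{0\}$, $(0,\epsilon)$, so the one-sided algorithm of Section~\ref{sec:algo-dimension-1}, run on both sides together with a plain LP at the origin, already distinguishes ``locally feasible,'' ``locally infeasible,'' and ``neither'' in a single pass. Your argument instead applies the Gale/Farkas alternative pointwise in $\delta$ --- infeasibility of $A(\delta)x\ge b(\delta)$ is equivalent to feasibility of $y\ge 0$, $A(\delta)^\top y=0$, $b(\delta)^\top y\ge 1$, the normalization being legitimate by conic homogeneity --- and then invokes Theorem~\ref{thm:1-dim-local} as a black box on the polynomially-sized auxiliary 1-PLP. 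Since the pointwise equivalence holds at every $\delta$, the two sets of ``bad'' $\delta$ coincide exactly, so a neighborhood works for one problem iff it works for the other; this step is sound. What your approach buys is modularity (no need to open up the algorithm or prove the interval dichotomy) and an explicit rational-function Farkas certificate $y(\delta)$ of infeasibility on an explicit interval; it is also the same duality move the paper itself uses in Theorem~\ref{thm:local-1-plp} and in the hardness section, so it fits the paper's toolkit. What the paper's route buys is that it answers the three-way question (feasible / infeasible / neither) with no extra work and no blow-up from $n$ variables and $m$ constraints to $m$ variables and $2n+m+1$ constraints, which matters for the stated running-time bound.
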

This is a consequence of the proof of Theorem~\ref{thm:1-dim-local}. In Section~\ref{sec:algo-dimension-1}, we will show that $B_\epsilon(0)$ can be divided into 3 parts, the negative side, the origin, and the positive side. In each part, the PLP is either always feasible or always infeasible. The algorithm solves the three parts separately, which answers whether a PLP is locally feasible, locally infeasible, or neither. 

We also have a corresponding hardness result, proved in Section~\ref{sec:hardness-infea}. 

\begin{theorem}\label{thm:hardness-infea}
The everywhere infeasibility of $d$-PLP is NP-hard for any $d$. The local infeasibility of $d$-PLP is NP-hard for any $d\ge 2$. 
\end{theorem}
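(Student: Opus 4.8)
The plan is to run the argument of Theorem~\ref{thm:hardness} with ``infeasibility'' in place of ``feasibility'' throughout. There are three pieces: (i) everywhere infeasibility of $1$-PLP is NP-hard; (ii) this lifts to everywhere infeasibility of $d$-PLP for every $d$; and (iii) it also yields NP-hardness of local infeasibility of $d$-PLP for $d\ge 2$ (for $d=1$ local infeasibility is in $\mathsf{P}$ by Theorem~\ref{thm:1-dim-local-infea}, so $d\ge2$ is the right range there).

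Piece (ii) is immediate: view a $1$-PLP $(A(\delta_1),b(\delta_1))$ as a $d$-PLP that ignores $\delta_2,\dots,\delta_d$; since $A(\delta_1)x\ge b(\delta_1)$ is infeasible for every value of $\delta_1$ iff it is infeasible for every $(\delta_1,\dots,\delta_d)\in\mathbb{R}^d$, the two instances are equivalent and the reduction is trivially polynomial. For piece (iii), I would reuse the reduction in Theorem~\ref{thm:hardness} that turns everywhere feasibility of a $1$-PLP into local feasibility of a $2$-PLP (the map sending $\mathbb{R}$ to rays through the origin in $\mathbb{R}^2$). That map is oblivious to whether one tracks the feasible set or its emptiness: by construction the feasible set of the resulting $2$-PLP at a point $(\delta_1,\delta_2)$ of a punctured neighborhood of $0$ equals, after clearing a positive factor, the feasible set of the original $1$-PLP at the corresponding point of $\mathbb{R}$, and every point of $\mathbb{R}$ is hit (the vertical ray is handled separately, exactly as in Theorem~\ref{thm:hardness}, and being a single direction it does not interfere with the ``$\forall\delta$'' quantifier). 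Hence the $2$-PLP is locally infeasible iff the $1$-PLP is everywhere infeasible; combining with (i) and then padding with dummy parameters as in (ii) gives NP-hardness for all $d\ge 2$.

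The content is piece (i). Negating, a $1$-PLP $(A(\delta),b(\delta))$ fails to be everywhere infeasible exactly when
\[
\exists\, \delta\in\mathbb{R},\ x\in\mathbb{R}^n:\qquad A(\delta)\,x\ \ge\ b(\delta).
\]
This is the same species of object that drives the hardness proof of Theorem~\ref{thm:hardness}, where (after negating and dualizing) one works with $\exists\,\delta,\ y\ge 0:\ A^\top(\delta)y=0,\ b^\top(\delta)y>0$: in both cases one asks for a real solution of a system linear in a vector variable and polynomial in the single scalar $\delta$. I would therefore apply the same variable-transformation machinery: starting from a $0/1$ integer program $\exists z\in\{0,1\}^N:\ Mz\ge v$ (equivalently, a maximum independent set instance), construct polynomial coefficients $A(\delta),b(\delta)$ so that a real pair $(\delta,x)$ with $A(\delta)x\ge b(\delta)$ exists iff the integer program is solvable --- using constraints in $\delta$ and in auxiliary coordinates of $x$ that (a) confine the relevant coordinates to tiny intervals around $0$ and $1$ and (b) do so with precision polynomial in the input size, tight enough that rounding a real near-solution to $\{0,1\}^N$ recovers an exact solution of $Mz\ge v$, while conversely any exact integer solution extends to a real one. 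Since maximum independent set is NP-complete and a decider for everywhere infeasibility of $1$-PLP together with a single negation decides it, everywhere infeasibility of $1$-PLP is NP-hard (in the same polynomial-time Turing-reduction sense in which Theorem~\ref{thm:hardness} calls the co-NP-complete everywhere-feasibility problem NP-hard).

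I expect the main obstacle to be the precision analysis in piece (i): one must design the polynomial coefficients so that the ``soft'' real constraints on $(\delta,x)$ simulate genuine integrality without super-polynomial blow-up in bit-complexity, and then certify that the tolerance is small enough to make the equivalence with $0/1$ ILP exact rather than approximate. This is precisely where the work sits in the corresponding part of Theorem~\ref{thm:hardness}, and it should transfer with only cosmetic changes --- equalities $A^\top(\delta)y=0$ become pairs of inequalities on $x$, and the strict inequality $b^\top(\delta)y>0$ is absorbed into ordinary non-strict constraints --- while the lifting steps (ii) and (iii) are bookkeeping.
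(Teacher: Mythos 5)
Your proposal is correct in substance and reduces from independent set just as the paper does, but for the core piece (i) you plan considerably more work than the paper's proof actually needs. The paper observes that the negation of everywhere infeasibility is already the pure existential statement $\exists\,\delta,x:\ A(\delta)x\ge b(\delta)$ — no LP dual, no strict inequality — and therefore one can simply take the Pardalos--Vavasis program $P$ of Lemma~\ref{lem:PV91} together with the \emph{exact} equality $\delta^2-z=0$ and the independent-set constraints \ref{eq:indSetLin}. Since $\max_{y,z,\delta}(\delta^2-z)=0$ exactly when $x\in\{0,1\}^n$ and is strictly negative otherwise, the equality constraint forces integrality on the nose, and the entire precision analysis (Lemmas~\ref{lem:d-infty} and~\ref{lem:bound-epsilon}, the choice of $\epsilon$, the rounding argument) that you identify as ``the main obstacle'' is simply not needed here; it is only needed in the feasibility proof because the dualized form carries a strict inequality that must be relaxed to $\delta^2-z+\epsilon>0$. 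Your $\epsilon$-perturbed variant would also work, but it buys nothing. Your pieces (ii) and (iii) match the paper, with one caveat on (iii): the map to rays is \emph{not} symmetric between feasibility and infeasibility on the line $\delta_2=0$. After homogenization the system degenerates to $0\ge 0$ there, which is automatically feasible — convenient for local feasibility but fatal for local infeasibility — so the paper must add a fresh variable and the constraint $x'\delta_2=1$ (see \eqref{eq:2-PLP-infea}) to force infeasibility on that line. Your phrase ``handled separately'' gestures at this, but the handling is an actual modification of the instance, not a mere observation as in the feasibility case; you should make that explicit.
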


\subsection{Tractable Variations of Local Feasibility}
Although the local feasibility of general $d$-PLP is NP-hard for $d\geq 2$, there are important special cases or variations where the problem becomes tractable. 

\subsubsection{The Case of Equality}
Consider restricting every constraint of the PLP to be equality, yielding
\begin{equation*}
    A(\delta)x=b(\delta).
\end{equation*}
Although it is NP-hard to decide the local feasibility of 2-PLP, it is tractable in the equality case.

\begin{theorem}\label{thm:equality}
    Let $d$ be constant. Given as input an $m\times n$ matrix $A$ and an $m$-dimensional vector $b$ of polynomials in $\delta\in \mathbb{R}^d$, there is a polynomial time algorithm answering the question: is there an $\epsilon>0$ such that for all $\delta\in B_\epsilon(0)$, there exists an $x$ satisfying $A(\delta) x = b(\delta)$?
\end{theorem}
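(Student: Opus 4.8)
\emph{Reformulation via ranks.} For a fixed $\delta$ the linear system $A(\delta)x=b(\delta)$ is solvable if and only if $\mathrm{rank}\,A(\delta)=\mathrm{rank}\,[A(\delta)\,|\,b(\delta)]$ (Rouch\'e--Capelli / Fredholm alternative), i.e.\ $b(\delta)$ lies in the column span of $A(\delta)$. Hence the infeasible set $Z:=\{\delta\in\mathbb{R}^d:\ \mathrm{rank}\,[A(\delta)|b(\delta)]>\mathrm{rank}\,A(\delta)\}$ is a semialgebraic set (cut out by vanishing/non-vanishing of minors), and Theorem~\ref{thm:equality} asks precisely whether $0$ lies outside the Euclidean closure $\overline{Z}$. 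Since $\overline{\bigcup_i S_i}=\bigcup_i\overline{S_i}$ for finite unions, it suffices to solve the slightly more general problem: given a real algebraic variety $W\ni 0$, decide whether $0\in\overline{Z\cap W}$. The original question is the case $W=\mathbb{R}^d$.

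\emph{Recursion on dimension.} Assume $W$ irreducible (decompose first otherwise). Let $r^\ast=\mathrm{rank}\,A$ and $s^\ast=\mathrm{rank}\,[A|b]$ be the ranks over the function field $\mathbb{R}(W)$. If $s^\ast>r^\ast$, the system is infeasible on a Zariski-dense subset of $W$, hence $0\in\overline{Z\cap W}$ and we report ``infeasible''. Otherwise $s^\ast=r^\ast$, and fraction-free (Bareiss) Gaussian elimination of $[A|b]$ over $\mathbb{R}(W)$ produces, using only polynomially many field operations, a single $r^\ast\times r^\ast$ minor $g$ of $A$ that does not vanish identically on $W$ and for which $g(\delta)\neq 0$ forces $\mathrm{rank}\,A(\delta)=r^\ast=\mathrm{rank}\,[A(\delta)|b(\delta)]$, i.e.\ feasibility. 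Thus $Z\cap W\subseteq W\cap\{g=0\}$, a subvariety of strictly smaller dimension; if $0\notin\{g=0\}$ we report ``feasible on $W$'', and otherwise we factor $g$ and recurse on each irreducible component of $W\cap\{g=0\}$ that contains $0$. Overall the system is locally feasible at $0$ iff no branch ever reaches the case $s^\ast>r^\ast$. The recursion depth is at most $d$ (dimension strictly drops), the branching at each node is at most $\deg g=\mathrm{poly}$, and each node costs one fraction-free elimination plus one polynomial factorization; since $d$ is constant, function-field/ideal arithmetic, polynomial factorization, and real-variety decomposition are all polynomial time, so the whole procedure runs in polynomial time. The base case $\dim W=0$ (so $W$ near $0$ is just $\{0\}$) is the direct check $\mathrm{rank}\,A(0)=\mathrm{rank}\,[A(0)|b(0)]$.

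\emph{Main obstacle.} The subtle point is that $r^\ast,s^\ast$ and ``generic rank'' are Zariski (complex) notions, whereas local feasibility concerns \emph{real} points near $0$. The recursion as described is sound only when the real locus $W(\mathbb{R})$ near $0$ is Zariski-dense in $W$, equivalently when $\dim_{\mathbb{R}}W(\mathbb{R})$ near $0$ equals $\dim_{\mathbb{C}}W$ (so that the rank at a generic real point equals the rank over $\mathbb{R}(W)$, and infeasibility over $\mathbb{R}(W)$ genuinely produces real infeasible points accumulating at $0$). When $W$ drops real dimension near $0$ --- for instance $W=\{\delta_1^2+\delta_2^2=0\}$, whose only real point is $0$ --- one must instead pass to the real radical / real-irreducible decomposition of $W\cap\{g=0\}$ before recursing, or recognize that $0$ is then an isolated real point and reduce to checking feasibility at $0$ itself. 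Arranging this so that every recursion node works on a real variety for which ``generic real rank $=$ generic rank'' holds, while keeping all the real-algebraic-geometry computations (real dimension, real decomposition) polynomial time in the fixed dimension $d$, is the technical heart of the argument; the remaining ingredients --- the rank reformulation, fraction-free elimination, and polynomial factorization --- are standard and polynomial time for constant $d$.
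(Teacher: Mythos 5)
Your proposal takes a genuinely different route from the paper, but it contains a gap that you yourself flag and then do not close. The recursion's soundness for the ``infeasible'' verdict rests on the claim that $s^\ast>r^\ast$ over $\mathbb{R}(W)$ yields \emph{real} infeasible points of $W$ accumulating at $0$. As your own example $W=V(\delta_1^2+\delta_2^2)$ shows, this fails whenever the real locus of $W$ near $0$ is not Zariski-dense in $W$: the generic (complex) rank comparison then says nothing about the real points near $0$, and the algorithm as written can declare a locally feasible instance infeasible. The same issue infects the choice of which components of $W\cap\{g=0\}$ to recurse into (``containing $0$'' as a complex component is not the same as contributing real points near $0$). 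You correctly name the fix --- pass to real-irreducible components / the real radical, or detect that $0$ is an isolated real point --- but you explicitly defer it as ``the technical heart of the argument,'' so the proof is not complete. Until that step is carried out (with a polynomial-time bound for the real-algebraic computations involved), the argument does not establish the theorem.

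For comparison, the paper avoids any recursion on varieties. It uses the pseudoinverse criterion (Lemma~\ref{lem:pseudo_inverse}) together with the limit formula $A^+=\lim_{y\to 0}(A^\top A+yI)^{-1}A^\top$, which expresses $A^+(\delta)$ as one of at most $m+1$ explicit rational functions $p_i/q_i$ according to which of the polynomials $q_0,\dots,q_k$ vanish at $\delta$. Factoring the $q_i$ into irreducibles and invoking Lemma~\ref{lem:equality-intersection-of-varieties} (any $d$ of these irreducible hypersurfaces meet in dimension $0$), the paper partitions $B_\epsilon(0)$ into polynomially many regions indexed by subsets $S$ with $|S|\le d-1$, on each of which $AA^+b-b$ is a single rational function whose vanishing can be tested by divisibility (Lemma~\ref{lem:equality-division-of-space}). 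Your rank-based formulation is arguably more transparent and would generalize beyond the pseudoinverse framework, but it pushes all the difficulty into exactly the real-versus-Zariski question that the paper's flat, non-recursive partition sidesteps. If you want to salvage your approach, the cleanest repair is to replace ``irreducible component over $\mathbb{C}$'' by a decomposition into sets on which the real dimension equals the complex dimension, and to handle separately the branches where the real locus near $0$ collapses to lower dimension.
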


The algorithm is given in Algorithm~\ref{alg:equality-case} in Section~\ref{sec:equality}. It partitions the space of $\delta$ into a polynomial number of regions so that the pseudoinverse of $A$ can be expressed as a rational function on each region. 

The hardness result, Theorem~\ref{thm:hardness}, immediately tells us that a similar partition cannot be done for a general PLP. In fact, in Section~\ref{sec:algo-dimension-1}, we will see that to solve the general local feasibility of PLP with a similar approach, all subsets of rows of $(A,b)$ must be considered and this leads to an exponential number of possible regions.

\subsubsection{The Case of Strict Inequality}

Another natural way to modify the local feasibility problem is to require that
all constraints be strict inequalities, yielding
$$A(\delta)x>b(\delta)\,.$$
This modification again results in local feasibility being a tractable problem.


\begin{theorem} 
        Let $d$ be constant. Given as input an $m\times n$ matrix $A$ and an $m$-dimensional vector $b$ of polynomials in $\delta\in \mathbb{R}^d$, there is a polynomial time algorithm answering the question: is there an $\epsilon>0$ such that for all $\delta\in B_\epsilon(0)$, there exists an $x$ satisfying $A(\delta) x > b(\delta)$?
\end{theorem}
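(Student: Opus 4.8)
The plan is to show that requiring strict inequalities makes local feasibility collapse to a single ordinary LP feasibility test evaluated at $\delta = 0$, so that none of the machinery behind Theorem~\ref{thm:1-dim-local} (nor any finer partitioning, as in the equality case) is needed here. The one observation that drives everything is that, in contrast to the non-strict case, the set of parameters at which the strict system is feasible,
\[
O \;:=\; \{\, \delta \in \mathbb{R}^d : \exists\, x \in \mathbb{R}^n,\ A(\delta) x > b(\delta) \,\},
\]
is an \emph{open} subset of $\mathbb{R}^d$. To see this I would argue: if $\delta_0 \in O$ with witness $x_0$, set $\gamma := \min_i \big( a_i(\delta_0)^\top x_0 - b_i(\delta_0) \big) > 0$; for this fixed $x_0$ each map $\delta \mapsto a_i(\delta)^\top x_0 - b_i(\delta)$ is a polynomial in $\delta$, hence continuous, so there is a neighborhood of $\delta_0$ on which all $m$ of these quantities stay positive, and on that neighborhood the same $x_0$ is a witness.

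Granting openness of $O$, the first step is to rewrite the question asked by the theorem — does there exist $\epsilon > 0$ with $B_\epsilon(0) \subseteq O$? — as ``$0 \in \mathrm{int}(O)$'', which since $O$ is open is just ``$0 \in O$'', i.e.\ feasibility of the ordinary strict linear system $A(0) x > b(0)$; here $A(0), b(0)$ are the matrix and vector of constant coefficients and so have size polynomial in the input. The second step is to invoke the fact that deciding feasibility of a strict rational LP is polynomial-time: $\{x : A(0) x > b(0)\} \neq \emptyset$ if and only if the auxiliary LP $\max\{\, t : A(0) x - t \mathbf{1} \ge b(0) \,\}$ (which is always feasible) is unbounded or has a positive optimum, equivalently — by LP duality, a Motzkin/Gordan-type alternative — if and only if there is no $y \ge 0$ with $A(0)^\top y = 0$, $\mathbf{1}^\top y = 1$ and $b(0)^\top y \ge 0$; either formulation is a standard rational LP, solvable in polynomial time (e.g.\ by the ellipsoid method). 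The algorithm thus merely reads off $A(0), b(0)$ and runs one such test. Incidentally, this argument never uses that $d$ is constant; the hypothesis is presumably stated only to mirror the equality case above.

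I do not expect a genuine obstacle here; the only point requiring care is the bookkeeping around degenerate rows $i$ with $a_i(\delta) \equiv 0$, whose constraint reads $0 > b_i(\delta)$ and is thus satisfiable at $\delta$ exactly when $b_i(\delta) < 0$. One should verify that the clean criterion ``$\{x : A(0) x > b(0)\} \neq \emptyset$'' remains exactly right in such cases — for instance, if $a_i(\delta) \equiv 0$ and $b_i(0) = 0$, then $A(0) x > b(0)$ is infeasible, while correspondingly constraint $i$ is violated on the zero set of $b_i$, which meets every neighborhood of the origin, so the instance is correctly reported as not locally feasible. These cases need no separate treatment (the single strict-LP test at $\delta = 0$ handles them automatically), but making the implication ``$O$ open $\Rightarrow$ reduce to $\delta = 0$'' fully rigorous, together with citing the polynomial-time solvability of strict LP feasibility, are the two items I would write out carefully.
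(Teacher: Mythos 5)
Your proposal is correct and follows essentially the same route as the paper: the key step in both is that for a fixed witness $x_0$ each $\delta \mapsto a_i(\delta)^\top x_0 - b_i(\delta)$ is a polynomial, hence continuous, so local feasibility of the strict system is equivalent to feasibility of $A(0)x > b(0)$, which is then decided by a single polynomial-time test. The only (inessential) difference is in how that last test is carried out — you use the standard auxiliary LP $\max\{t : A(0)x - t\mathbf{1} \ge b(0)\}$, whereas the paper observes that strict LP feasibility is a special case of its subspace elimination subroutine.
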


The theorem is a corollary of the following observation.
\begin{lemma}
    $A(\delta)x>b(\delta)$ is locally feasible if and only if $A(0)x>b(0)$.
\end{lemma}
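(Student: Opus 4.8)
The plan is to prove both implications directly, exploiting the fact that a finite system of \emph{strict} inequalities cuts out an open condition in the data $(A(\delta),b(\delta))$, so that feasibility cannot be destroyed by a small perturbation of $\delta$.

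For the ``only if'' direction, suppose $A(\delta)x>b(\delta)$ is locally feasible. By definition there is $\epsilon>0$ such that every $\delta\in B_\epsilon(0)$ admits some $x$ with $A(\delta)x>b(\delta)$. Since $0\in B_\epsilon(0)$, instantiating this at $\delta=0$ immediately yields an $x$ with $A(0)x>b(0)$. This direction is just unwinding the definition and requires nothing further.

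For the ``if'' direction, suppose $x_0$ satisfies $A(0)x_0>b(0)$. For each row index $i\in\{1,\dots,m\}$ set $g_i(\delta)\defeq a_i(\delta)^\top x_0-b_i(\delta)$; this is a polynomial in $\delta$, hence continuous, and $g_i(0)>0$ by assumption. Continuity gives, for each $i$, some $\epsilon_i>0$ with $g_i(\delta)>0$ for all $\delta\in B_{\epsilon_i}(0)$. Taking $\epsilon=\min_{1\le i\le m}\epsilon_i$, which is positive because it is a minimum over finitely many positive reals, the constant assignment $x(\delta)\equiv x_0$ satisfies $A(\delta)x_0>b(\delta)$ for every $\delta\in B_\epsilon(0)$, i.e. $A(\delta)x>b(\delta)$ is locally feasible.

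I do not expect a genuine obstacle: the only point needing a moment of care is that the uniform radius $\epsilon$ exists precisely because there are finitely many constraints, so the intersection of the open neighborhoods $B_{\epsilon_i}(0)$ is still a neighborhood of $0$. It is worth remarking, for contrast, that this collapse is special to the strict case: in the non-strict setting ($\ge$) the toy $1$-PLP of \eqref{eq:example-PLP} is feasible at $\delta=0$ (with the unique solution $x=y=0$) yet infeasible throughout $(0,\epsilon)$, so $A(0)x\ge b(0)$ being feasible does not imply local feasibility. The openness of strict inequalities is exactly what removes this phenomenon.
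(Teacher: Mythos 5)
Your proof is correct and follows essentially the same route as the paper's: the forward direction is just specialization at $\delta=0$, and the reverse direction fixes a witness $x_0$, observes each constraint polynomial is continuous and positive at $0$, and intersects the finitely many resulting neighborhoods. The extra remark contrasting with the non-strict case is accurate but not needed for the argument.
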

\begin{proof}
One direction is immediate, since local feasibility implies feasibility for all $\delta$ in a neighborhood of $0$ (and this includes $0$). Now we prove that if $A(0)x>b(0)$, then $A(\delta)x>b(\delta)$ is locally feasible. 
Fix $x$ such that $A(0)x>b(0)$.
Let $f_i(\delta)=A_i(\delta)x-b_i(\delta)$, and note that this is a polynomial in $\delta$ and hence continuous. 
Since $f_i(0)>0$, there exists $\epsilon>0$ such that $f_i(\delta)>0$ for all $\delta\in B_\epsilon(0)$. This holds for all $i$, which proves the lemma.
\end{proof}
Now, checking whether $A(0)x>b(0)$ is feasible is a linear program with strict inequality constraints. This is a special case of the subspace elimination problem discussed in Section~\ref{sec:subspace-elimination}.

\section{Applications}\label{sec:applications}

Aside from being a natural way of modeling uncertainty in linear programming, PLP can serve as a tool for solving theoretical problems. We will discuss two such problems. First, the ergodicity of elementary PCA with NAND function and BSC noise. Second, whether broadcasting of information on a 2D grid with arbitrary function and BSC noise allows recovery at infinity. We will use our algorithm for the local feasibility of 1-PLP to perform computer assisted proofs answering both questions.

\subsection{Probabilistic Cellular automata}\label{sec:application-PCA}
A probabilistic cellular automaton (PCA) is specified by a random rule for updating cells on an infinite grid. The update rule is local and homogeneous in time and space. At each time-step each site updates its state based on the states of nearby sites using the local rule. 
The deterministic version of PCA, cellular automata, was first introduced by John von Neumann and S. Ulam as a biologically motivated model for computing in the 1940s \cite{neumann1966theory}. A PCA resembles the human brain in the sense that the computation is massively parallel and local. PCA has been viewed as an important model for fault-tolerant computation \cite{toom1995cellular,gacs1986reliable}, and it also has broad applications in modeling physical and biological processes \cite{chopardcellular,ermentrout1993cellular}.

One of the main questions about a PCA is whether it is ergodic \cite{mairesse2014around,louis2018probabilistic,toom1990discrete}. A PCA is ergodic if there exists a unique stationary distribution so that starting with any initial configuration, the PCA converges to this distribution. In this section we use the methodology of polynomial linear programming developed in this paper to prove a new ergodicity result for a certain class of PCA. 


\paragraph{One-dimensional PCA.}
In this paper we consider one-dimensional discrete-time PCA. 
The PCA has alphabet $A$, so at any time $t\in \mathbb{N}$, the configuration of the PCA is denoted by $\eta_t\in A^\mathbb{Z}$
with 
$\eta_t(n)$ the value on site $n\in \mathbb{Z}$ at time $t$. The distribution of $\eta_t$ is a Markov chain as $t$ changes. Given $\eta_t$, the next configuration $\eta_{t+1}$ is obtained by updating each site $n\in \mathbb{Z}$ independently. The update rule is local, random, and homogeneous. That is, $$\eta_{t+1}(n)=f\big(\eta_t(n-a),\eta_t(n-a+1),\cdots,\eta_t(n+a)\big),$$ for some $a$ and where $f$ is a random function with a fixed distribution independent of $t$ and $n$.

A special case is \emph{elementary PCA} (see Figure~\ref{fig:elementary-PCA}). In this case the alphabet is $\{0,1\}$ and at each time step, the state at each location is set to a random Boolean function of its left neighbor and itself, i.e.,
$$\eta_{t+1}(n)=f\big(\eta_t(n-1),\eta_t(n)\big)\,.$$


\begin{figure}
    \centering
\end{figure}

\begin{figure}
\centering
     \begin{subfigure}[c]{0.4\textwidth}
         \centering
         
    \begin{tikzpicture}
    \filldraw  (0,0) circle (2pt);
    \filldraw  (-1,0) circle (2pt);
    \filldraw  (1,0) circle (2pt);
    \filldraw  (0,-1) circle (2pt);
    \filldraw  (-1,-1) circle (2pt);
    \filldraw  (1,-1) circle (2pt);
    \filldraw  (0,-2) circle (2pt);
    \filldraw  (-1,-2) circle (2pt);
    \filldraw  (1,-2) circle (2pt);
    
    \draw (-2,0)--(2,0);
    \draw (-2,-1)--(2,-1);
    \draw (-2,-2)--(2,-2);
    
    \node at (-2.5,0) {$\cdots$};
    \node at (2.5,-1) {$\cdots$};
    \node at (-2.5,-2) {$\cdots$};
    \node at (2.5,0) {$\cdots$};
    \node at (-2.5,-1) {$\cdots$};
    \node at (2.5,-2) {$\cdots$};
    
    \node at (3.5,0) {$t=0$};
    \node at (3.5,-1) {$t=1$};
    \node at (3.5,-2) {$t=2$};
    
    \node at (-1,0.5) {-1};
    \node at (0,0.5) {0};
    \node at (1,0.5) {1};
    
    \draw[-{Latex[length=6pt]}] (-1,0) -- (0,-1);
    \draw[-{Latex[length=6pt]}] (0,0) -- (1,-1);
    \draw[-{Latex[length=6pt]}] (0,0) -- (0,-1);
    \draw[-{Latex[length=6pt]}] (1,0) -- (1,-1);
    \draw[-{Latex[length=6pt]}] (-1,-1) -- (0,-2);
    \draw[-{Latex[length=6pt]}] (0,-1) -- (1,-2);
    \draw[-{Latex[length=6pt]}] (0,-1) -- (0,-2);
    \draw[-{Latex[length=6pt]}] (1,-1) -- (1,-2);
    \draw[-{Latex[length=6pt]}] (-1,0) -- (-1,-1);
    \draw[-{Latex[length=6pt]}] (-1,-1) -- (-1,-2);
    \end{tikzpicture}
    \caption{Elementary PCA }
    \label{fig:elementary-PCA}
     \end{subfigure}
     \hfill
     \begin{subfigure}[c]{0.53\textwidth}
         \centering
         
    \begin{tikzpicture}
        \draw[thick,->] (-3.3,0)--(-2.7,0);
        \draw[thick,->] (-2.3,0)--(-1.7,0);
        \draw[thick,->] (-0.7,0)--(-1.3,0);
        \draw[thick,->] (-0.3,0)--(0.3,0);
        \draw[thick,->] (0.7,0)--(1.3,0);
        \draw[thick,->] (1.7,0)--(2.3,0);
        \draw[thick,->] (3.3,0)--(2.7,0);
        
        \draw (0,0) circle (0.4);
        \draw (1,0) circle (0.4);
        \draw (3,0) circle (0.4);
    \end{tikzpicture}
    \caption{Soldier's Rule. The soldier in the middle takes a majority vote between themselves, the first person in front of them and the third person in front of them. In this case, the soldier still faces right at the next time step.}
    \label{fig:sodiers-rule}
     \end{subfigure}
     \caption{Probabilistic Cellular Automata}
\end{figure}
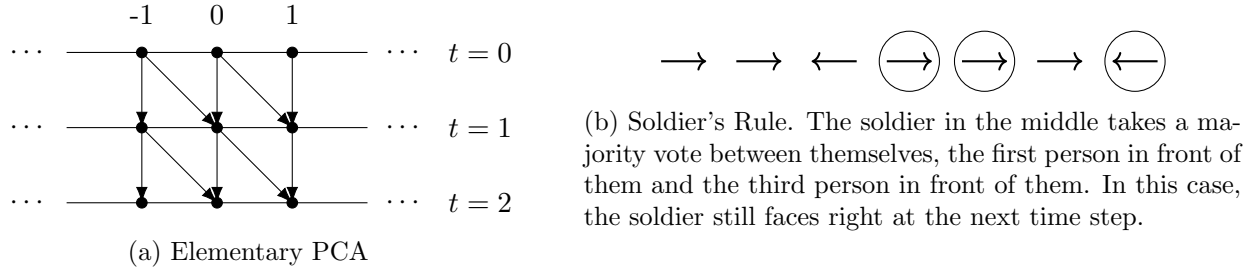

Another important example is \emph{soldier's rule}. In soldier's rule, the alphabet is $\{-1,1\}$, indicating the direction a soldier is facing. At each time step, each soldier sets its new direction based the majority vote of the direction of itself and the direction of the first neighbor and the third neighbor in its direction (see Figure~\ref{fig:sodiers-rule}). Thus,
$$\eta_{t+1}(n)=\text{Maj} \big(\eta_t(n),\eta_t(n+\eta_t(n)),\eta_t(n+3\eta_t(n))\big).$$
The probabilistic version of soldier's rule adds noise to the outcome, so each soldier has probability $p$ of being $-1$ and probability $q$ of being $1$ at each time step regardless of the configuration in the previous time step.

\paragraph{Ergodicity.}
Now we formally define ergodicity.
Let $F$ denote the transition of the Markov chain based on the update rule for $\eta_{t+1}$ in terms of $\eta_t$. 
\begin{definition}\label{def:ergodicity}
    A PCA $F$ is said to be \emph{ergodic} if there is a unique attractive invariant measure $\mu$, i.e., there exists $ \mu$ such that
    \begin{enumerate}
        \item  $\mu F=\mu$
        \item for any probability measure  $\nu$ over configurations, $\nu F^n$ converges weakly to $\mu$ as $n\to \infty$. 
    \end{enumerate}
\end{definition}

Asking whether a PCA is ergodic is similar to the question of reconstruction for broadcasting on trees (see, e.g., \cite{evans2000broadcasting}). Recall that in broadcasting on trees, the root node starts with a random state which is then propagated outward on the tree according to a given Markov transition on each edge. The reconstruction question asks whether information about the root propagates infinitely far. If we view the layers of a tree as time steps, then the reconstruction problem equivalently asks whether all initial configurations converge to the same distribution as time goes to infinity. 

Despite their similarity, unfortunately, techniques from broadcasting on trees do not directly apply to PCA. 
Given an observation at some level of the tree, the posterior at the root can be computed recursively via belief propagation, and in general the marginal distribution at any node can be expressed as a function of its children's marginals (see, e.g., \cite{bleher1995purity}). In contrast, in PCA each node is affected by multiple inputs.
Moreover, given a fixed noise model, the ergodicity of a PCA still depends on the specific form of the update rule, while broadcasting on trees depends only on the tree structure.

\paragraph{A central open problem.}
It has been an important open problem to decide the ergodicity of all symmetric elementary PCAs \cite[Chapter 7]{toom1990discrete}. It is conjectured that all non-deterministic PCAs are ergodic \cite{dobrushin1969markov}. There are three parameters for a symmetric elementary PCA: the probability to output 1 when the input is 00, 01 or 11. The conjecture was proven for most of the region of possible parameters \cite[Chapter 7]{toom1990discrete}. Specifically, regions around any deterministic functions except NAND function leads to ergodicity. Therefore, one important region of parameters where the question remained open was the corner (i.e., small noise) around the NAND function.

In \cite{holroyd2019percolation}, a potential function method was proposed for analyzing the ergodicity of PCA. The update rule they considered was a noisy NAND function with vertex noise: with probability $1-2p$ the output is the NAND of the input, and with probability $2p$ the output is a random bit. In other words,
\begin{equation}\label{eq:PCA-NAND}
    \eta_{t+1}(n)= \text{BSC}_p\big(\NAND\big(\eta_t(n-1),  \eta_t(n)\big)\big)\,,
\end{equation} 
where $\text{BSC}_p$ stands for the binary symmetric channel with crossover probability $p$.

The function in Equation~\eqref{eq:PCA-NAND} can be thought of as vertex noise, as the noise is added after the NAND function. An alternate version is that noise occurs as the information is passing from time $t$ to time $t+1$. In other words,
\begin{equation}\label{eq:PCA-NAND-edge-noise}
    \eta_{t+1}(n)= \NAND\big(\text{BSC}_p(\eta_t(n-1)),  \text{BSC}_p(\eta_t(n))\big).
\end{equation}
There is no obvious way to relate which noise (edge or vertex) leads to faster or more likely
mixing. However, it is known that mixing for PCA with vertex noise of level $p$ is implied by mixing 
of the PCA with edge noise of level $p/2$ (but a slightly different processing function),
see~Lemma 3.1 of \cite{dobrushin1977lower}. Thus, intuitively, showing ergodicity for arbitrarily
small edge-noise should be harder. 

\paragraph{Potential function method.}
In \cite{holroyd2019percolation}, the potential function essentially counts a weighted sum of the probability that certain patterns appear.
The idea is to use a coupling argument to control the distributional convergence as follows. Starting two chains with different initial conditions, one uses an auxiliary Markov chain to keep track of the locations where the two chains have coupled or have not coupled. If one can show that the density of uncoupled locations tends to zero then this implies convergence, and the potential is used to count the number of uncoupled locations by keeping track of certain patterns. 
The crucial property of a good potential function is that it is non-increasing and decreases in value over time when the chosen pattern appears.

While we know some sufficient conditions on a potential function that leads to ergodicity, the design of a good potential function can be challenging. Although the potential function method is not restricted to the specific choice of random function, the design of a potential function in \cite{holroyd2019percolation} was tailored specifically for NAND function with vertex noise. Therefore, a natural question is whether there exists a standard procedure to search for potential functions, regardless of function in the update rule of the PCA.

In this paper our approach is to encode the condition into a PLP which can then be solved on a computer to yield a good potential function. The idea of reducing the condition on such potential functions to a linear program is first discussed in \cite{makur2020broadcasting}. The key observation is that transition of the PCA is linear for a fixed noise level $p$, thus the evolution of the potential function is also linear. If the coefficients of the potential function are viewed as variables, these conditions turns out to be linear inequalities with parameter $p$, and the dependence on $p$ is polynomial.  Therefore, the feasibility of a PLP corresponds to the existence of such potential function. 
The PLP is then fed to our algorithm for the local feasibility, which certifies the existence of a potential that meets all conditions and outputs a potential that in principle can be checked by hand. 

\paragraph{Our results.}
We prove the ergodicity of PCA that uses NAND function with both vertex noise and edge noise for small enough error probability. 

\begin{theorem}
    For the PCAs described in \eqref{eq:PCA-NAND} and \eqref{eq:PCA-NAND-edge-noise}, there is an $\epsilon>0$ such that the PCA is ergodic for any $p$ on $(0,\epsilon)$.
\end{theorem}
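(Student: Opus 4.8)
The plan is to combine three ingredients: the coupling-and-potential framework for proving ergodicity of PCA from \cite{holroyd2019percolation}, the observation of \cite{makur2020broadcasting} that the existence of a suitable potential function is governed by a linear program whose coefficients are polynomials in the noise parameter $p$, and the algorithm of Theorem~\ref{thm:1-dim-local} for deciding local feasibility of a 1-PLP (in its one-sided form on the positive neighborhood of $0$, as noted in the remark following Theorem~\ref{thm:informal-broadcasting}). First I would recall the sufficient condition for ergodicity: run two coupled copies of the chain from arbitrary initial conditions; the set of sites at which they disagree evolves as a process that can be dominated in a way amenable to a Lyapunov argument, and it suffices to show the density of disagreements tends to $0$. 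Following \cite{holroyd2019percolation}, one seeks a potential $\Phi$ that is a finite weighted sum over occurrences of local patterns in the disagreement configuration, satisfying $\Phi\ge 0$, being non-increasing in expectation under one step of the dynamics, and strictly decreasing whenever a designated pattern is present; the standard argument then yields density $\to 0$ and hence ergodicity in the sense of Definition~\ref{def:ergodicity}.

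Next I would turn ``a good $\Phi$ exists'' into a 1-PLP. Fix a window size $k$ and let the unknown vector $x$ collect the weights assigned to the finitely many patterns of length at most $k$. Because the PCA update is, for fixed $p$, a linear map on distributions, and the induced action on the disagreement process inherits this linearity, the one-step expected change $\E[\Phi_{t+1}-\Phi_t]$ restricted to any bounded configuration is a linear functional of $x$ whose coefficients are polynomials in $p$ (built from products of $p$ and $1-p$ coming from the $\text{BSC}_p$'s). The non-negativity, monotonicity, and strict-decrease requirements then assemble into a finite system $A(p)x\ge b(p)$ with polynomial entries, and ``a good potential exists for all small $p>0$'' is exactly local feasibility of this 1-PLP at $0^+$. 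For the vertex-noise model \eqref{eq:PCA-NAND} this is direct; for the edge-noise model \eqref{eq:PCA-NAND-edge-noise} I would either build the analogous PLP directly or invoke Lemma~3.1 of \cite{dobrushin1977lower} to reduce it to the vertex-noise case (with a modified processing function) at noise level $p/2$.

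Then I would feed this concrete 1-PLP to the algorithm of Theorem~\ref{thm:1-dim-local}. On the positive side it either reports infeasibility or certifies local feasibility and returns an explicit rational function $x(p)$ together with an interval $(0,\delta_1)$, with $\delta_1$ the smallest positive root of an explicit polynomial, on which $x(p)$ satisfies all the constraints. Running it on our instance returns a feasible $x(p)$; setting $\epsilon=\delta_1$ gives a potential function valid for all $p\in(0,\epsilon)$. Because the output is an explicit function of $p$ with rational coefficients, one can independently re-verify it by hand, checking the sign of the first non-vanishing $p$-derivative at $0$ of each constraint polynomial; applying the ergodicity criterion of the first paragraph then proves the theorem.

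The main obstacle is the encoding step: one must choose the pattern set and window size $k$ so that (i) the truncated constraints genuinely imply monotonicity and strict decrease of $\Phi$ on \emph{all} configurations, which requires carefully handling patterns at the boundary of the window and the tails of long patterns, e.g.\ via the monotone-extension and domination devices of \cite{holroyd2019percolation}; and (ii) the resulting PLP is actually locally feasible at $0^+$, i.e.\ a good small-noise potential for the NAND corner exists at all. The latter is precisely the point that was open --- the hand-built potential of \cite{holroyd2019percolation} was tailored to a particular regime, and it is \emph{a priori} unclear that any finite-window potential survives as $p\to 0$; establishing this is exactly what the computation accomplishes. A secondary, practical obstacle is keeping $k$, and hence the size of the PLP, small enough that the polynomial-time algorithm of Theorem~\ref{thm:1-dim-local} actually terminates on a computer --- the same difficulty that, as noted in the Future Directions, defeats this approach for soldier's rule.
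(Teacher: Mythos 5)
Your proposal follows essentially the same route as the paper: couple two copies to obtain the $\{0,1,?\}$ auxiliary chain, reduce ergodicity to the existence of a pattern-counting potential as in \cite{holroyd2019percolation}, encode the potential's existence as a 1-PLP via the linear sufficient conditions of \cite{makur2020broadcasting} (the paper handles the truncation issue you flag with the no-negative-cycle certificate of Lemma~\ref{lem:key-makur20}), and run the one-sided algorithm of Theorem~\ref{thm:1-dim-local}, which returns explicit rational potentials at window length $3$. One caveat: your fallback for the edge-noise model via Lemma~3.1 of \cite{dobrushin1977lower} goes the wrong way --- that lemma derives vertex-noise mixing from edge-noise mixing, not conversely --- so the paper instead builds the edge-noise PLP directly and separately proves the auxiliary-chain reduction for edge noise (Lemma~\ref{lem:pca-aux-chain-edge}).
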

The proof of the theorem, based on analysis of the potential function method and its connection to
feasibility of PLP, is in Section~\ref{sec:PCA} and the ($\mathbb{Q}$-rational) potentials are given
on Appendix~\ref{sec:potential-pca}.


It is worth noting that our approach of finding potential functions is not limited to PCA with two inputs, but is also a possible method to try on any 1-dimensional PCAs. In Section \ref{sec:soldiers-rule}, we will also discuss an attempt to prove ergodicity for soldier's rule and the difficulties encountered.

\subsection{Broadcasting of Information on a 2D-Grid}

Broadcasting of information on a grid is a model inspired by broadcasting on trees. In the setting of broadcasting on trees, there is an information source that generates one bit of information. The information spreads through a noisy tree with bounded degree where edges are binary symmetric channels with crossover probability $\delta$. The central question is whether the information propagates out to infinity, i.e., whether the mutual information between the root and the frontier of broadcasting converges to 0. If the information is not completely lost, then we say that \emph{reconstruction} is possible. It is shown in a series of results \cite{bleher1995purity,evans2000broadcasting} that reconstruction is possible for small enough noise if and only if $(1-2\delta)^2\text{br}(T)>1$, where $\text{br}(T)$ is the branching factor of the tree.

\paragraph{Broadcasting on the grid.} Broadcasting of information was generalized from trees to general directed acyclic graphs (DAGs) in \cite{makur2019broadcasting}, with the case of the grid being an important specialization. However, because each node is affected by multiple inputs, like PCA, the recursive approach in \cite{bleher1995purity} also does not apply to broadcasting on 2D grid. 

Let us describe the problem formally. Figure~\ref{fig:broadcasting-on-grid} illustrates how information is broadcasted on 2D grid. Since there is no edge from one quadrant to another, we can restrict attention to the non-negative quadrant.  Suppose there is one bit of information at the origin and we want to broadcast the bit to the entire infinite 2-dimensional grid. Each nodes on the grid is indexed by a pair $(t,i)$ ($t\ge 0, 0\le i\le t$). At any time $t\ge 1$, nodes with index $(t,i)$, where $0<i<t$, receives the bit from $(t-1,i-1)$ and $(t-1,i)$ and computes its own bit with a fixed Boolean function $f$. Nodes with index $(t,0)$ and $(t,t)$ receive the bit from $(t-1,0)$ and $(t-1,t-1)$, respectively, and keep what it receives as its own bit.
When a bit is passed from one node to another, it passes through a BSC channel with parameter $p$. 

\begin{figure}
     \begin{subfigure}[b]{0.45\textwidth}
    \centering
        \includegraphics[width=\textwidth]{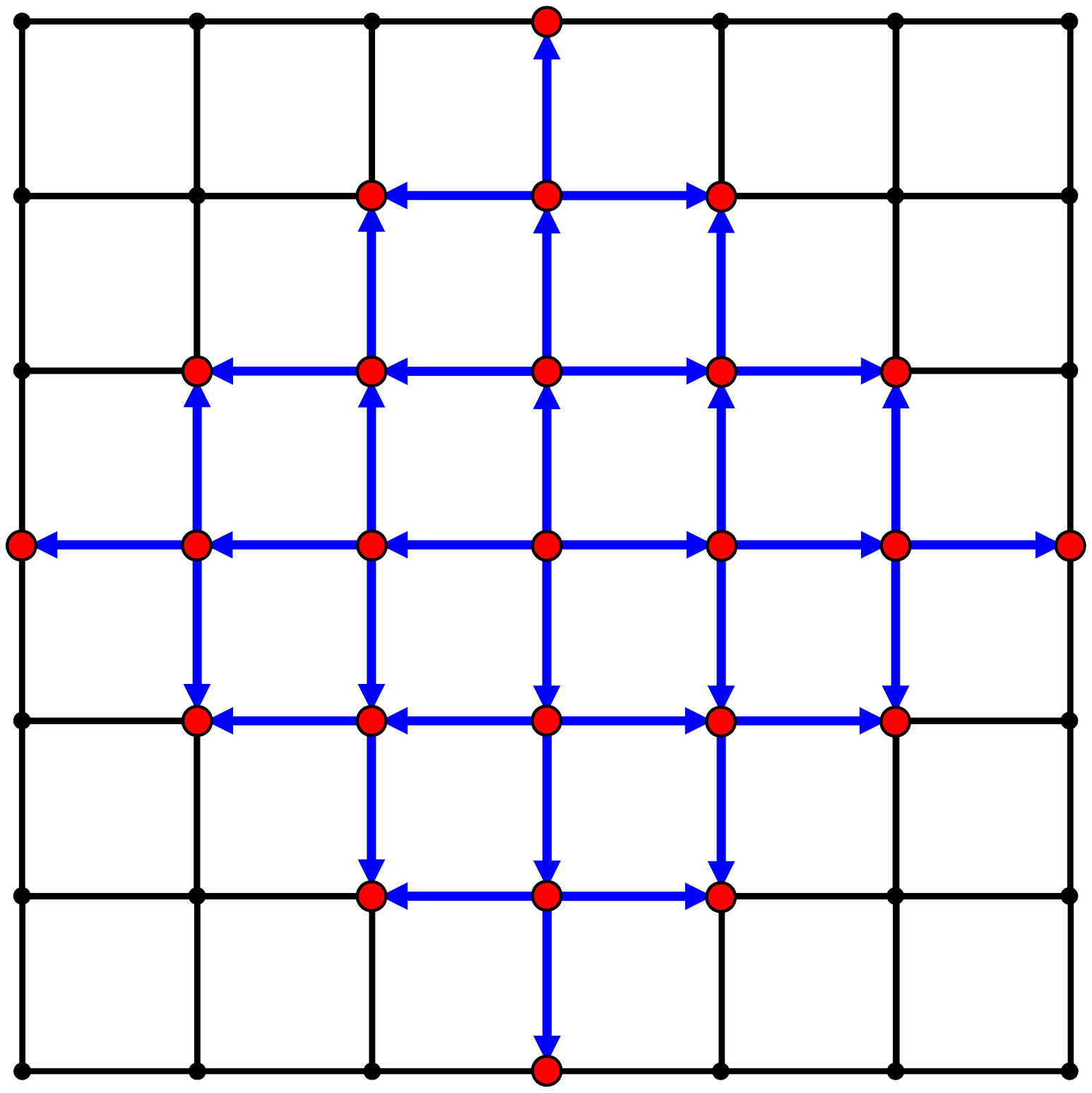}\vspace{-5mm}
    \caption{Broadcasting of Information on 2D Grid}
    \label{fig:broadcasting-on-grid}
     \end{subfigure}
     \hfill
     \begin{subfigure}[b]{0.45\textwidth}
    \centering
    \begin{tikzpicture}
    \filldraw  (0,1) circle (2pt);
    \filldraw  (-1,0) circle (2pt);
    \filldraw  (1,0) circle (2pt);
    \filldraw  (-2,-1) circle (2pt);
    \filldraw  (0,-1) circle (2pt);
    \filldraw  (2,-1) circle (2pt);

    
    \draw[-{Latex[length=6pt]}] (0,1) -- (-1,0);
    \draw[-{Latex[length=6pt]}] (0,1) -- (1,0);
    \draw[-{Latex[length=6pt]}] (-1,0) -- (-2,-1);
    \draw[-{Latex[length=6pt]}] (-1,0) -- (0,-1);
    \draw[-{Latex[length=6pt]}] (1,0) -- (0,-1);
    \draw[-{Latex[length=6pt]}] (1,0) -- (2,-1);

    \node at (0.7,1) {(0,0)};
    \node at (-0.3,0) {(1,0)};
    \node at (1.7,0) {(1,1)};
    \node at (-1.3,-1) {(2,0)};
    \node at (0.7,-1) {(2,1)};
    \node at (2.7,-1) {(2,2)};
  
    \end{tikzpicture}  \vspace{7mm}
    \caption{We can consider just one of the quadrants without loss of generality.}
    \label{fig:broadcasting-on-grid-quadrant}
     \end{subfigure}
        \caption{Broadcasting on 2D girid}
        \label{fig:broadcasting}
\end{figure}

Use $\eta_t^+$ and $\eta_t^-$ to denote the Markov chain containing bits $(t,i)$ for $0\le i\le t$ that starts with bit 1 and bit 0 respectively at the origin. We are interested in whether the information of the initial bit is lost at infinity, i.e., 
\begin{equation*}
    \lim_{t\rightarrow \infty} \TV(P_{\eta_t^+},P_{\eta_t^-})=0\,.
\end{equation*}

If we view the coordinate $t$ as time, the problem is similar to elementary PCA because each node at level $t$ receives information from two adjacent nodes from $t-1$. The only difference is that broadcasting on the grid has a bounded-length configuration, and the behavior is different at the boundary.

The potential method in \cite{holroyd2019percolation} can be extended to this problem, as discussed in \cite{makur2020broadcasting}. Instead of counting the probability of certain pattern appearing, the potential now counts the expected number of appearance of a certain pattern at time $t$. \cite{makur2020broadcasting} also showed the close relation between LP and finding the potential function. The condition on the potential is linear in the transition probabilities. Transition probabilities are polynomial in the noise parameter so the condition can be written as the feasibility of a PLP.
We will utilize our algorithm for local feasibility to find a potential function and prove the following theorem.
\begin{theorem}\label{thm:broadcasting}
    For any function $f:\{0,1\}^2\rightarrow \{0,1\}$,
    there exists $\epsilon>0$ such that for any $p\in (0,\epsilon)$, 
    $$\lim_{t\rightarrow \infty} \TV(P_{\eta_t^+},P_{\eta_t^-})=0.$$
\end{theorem}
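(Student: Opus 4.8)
The plan is to reduce to a few cases for $f$, dispose of the degenerate ones by elementary noise-accumulation arguments, and handle the one essential case --- $f$ equivalent to $\NAND$ --- by the potential-function method of \cite{holroyd2019percolation,makur2020broadcasting} encoded as a $1$-PLP, to which Theorem~\ref{thm:1-dim-local} applies. \textbf{Reduction to four cases.} Since $\text{BSC}_p$ commutes with bit-flips, negating an input or the output of $f$, or swapping the two inputs, leaves $\TV(P_{\eta_t^+},P_{\eta_t^-})$ unchanged; so it suffices to treat one representative of each of the four resulting classes of the $16$ Boolean functions on two bits: the constants; the dictators $f(x,y)\in\{x,\bar x,y,\bar y\}$; the parities $\{x\oplus y,\ \overline{x\oplus y}\}$; and the eight functions of $\NAND$-type $\{x\wedge y,\ x\vee y,\ \NAND,\dots\}$.

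\textbf{The easy classes.} For $f$ constant, every interior bit is independent of the origin, while each boundary ray $\big((t,0)\big)_t$ and $\big((t,t)\big)_t$ is a series composition of $t$ copies of $\text{BSC}_p$, whose effective crossover tends to $1/2$; hence $\TV\to 0$. For $f$ a dictator, following the unique dependence path shows $\eta_t(i)$ is the origin bit passed through exactly $t$ channels $\text{BSC}_p$ in series (up to deterministic flips), so again $\TV\to 0$. For $f=x\oplus y$, $\mathbb{F}_2$-linearity gives $\eta_t=b\,v_t\oplus N_t$ with $v_t=\big(\binom{t}{i}\bmod 2\big)_{0\le i\le t}$ and $N_t$ independent of $b$, so the two laws are $N_t$ and $N_t\oplus v_t$; since each entry of $N_t$ is an XOR of unboundedly many independent $\mathrm{Bernoulli}(p)$ bits, a short direct argument shows the law of $N_t$ becomes asymptotically invariant under translation by $v_t$, i.e.\ $\TV\to 0$.

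\textbf{The main case.} Take $f=\NAND$. Couple $\eta^+$ and $\eta^-$ under a common grand coupling and track the discrepancy configuration $\xi_t\in\{0,1\}^{\{0,\dots,t\}}$, $\xi_t(i)=\mathbf{1}[\eta^+_t(i)\ne\eta^-_t(i)]$; it suffices to show the expected density of $1$'s in $\xi_t$ tends to $0$. Introduce a potential $\Phi(\xi)=\sum_{w}\alpha_{\pi(w)}$, a linear functional of the counts of length-$k$ window patterns $\pi(w)$ of $\xi$ (plus analogous terms for patterns touching the two boundary rays), with unknown rational coefficients $\alpha=(\alpha_\pi)$. The requirements (i) $\Phi\ge 0$, (ii) $\Phi(\xi)=0$ iff $\xi\equiv 0$, and (iii) $\mathbb{E}[\Phi(\xi_{t+1})\mid \xi_t]\le \Phi(\xi_t)-c\cdot\#\{\text{bad windows of }\xi_t\}$ for some fixed $c>0$, are --- for each fixed $p$ --- a finite system of \emph{linear} inequalities in $\alpha$, because a single coupling step acts linearly on window-pattern counts; and every coefficient of this system is a \emph{polynomial} in $p$ (a product of $p$'s and $(1-p)$'s). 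Hence ``a valid $\Phi$ exists for all sufficiently small $p>0$'' is precisely local feasibility on the positive side of an explicit $1$-PLP. Feeding this PLP to the (one-sided) algorithm of Theorem~\ref{thm:1-dim-local} either reports infeasibility --- in which case one enlarges the window length $k$ and the pattern set and retries --- or outputs an explicit rational-function potential $\alpha(p)$ valid on some interval $(0,\epsilon)$; a standard Lyapunov/coupling argument then converts the one-step decrease into $\mathbb{E}[\text{density of }\xi_t]\to 0$, hence $\TV(P_{\eta_t^+},P_{\eta_t^-})\to 0$.

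\textbf{Main obstacle.} The crux is the last step, and within it three points: correctly incorporating the grid's \emph{boundary} (absent from the PCA of \eqref{eq:PCA-NAND-edge-noise}) into the potential, so that the $O(1)$ boundary discrepancies in each $O(t)$-sized layer neither obstruct density decay nor spoil inequality (iii); verifying carefully that (iii) is genuinely linear in $\alpha$ with coefficients polynomial in $p$, which is what legitimizes the reduction to a PLP; and choosing the window/pattern family large enough for the PLP to be feasible yet small enough for the algorithm of Theorem~\ref{thm:1-dim-local} to terminate in practice --- this is where the computer assistance enters, and where (as with soldier's rule) a poor choice can make the whole approach fail.
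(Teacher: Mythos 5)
Your overall strategy for the main case --- couple the two chains, track the discrepancy configuration over $\{0,1,?\}$, and search for a window-pattern potential whose existence for all small $p$ is the one-sided local feasibility of a $1$-PLP solved by Theorem~\ref{thm:1-dim-local} --- is exactly the paper's route. But there is a genuine gap at the very first step: negating a single input of $f$, or negating its output alone, is \emph{not} a symmetry of the broadcasting dynamics, because each node's output is fed forward as an input to the next layer (and to two different children), so the negation cannot be absorbed into a consistent relabeling of the process. The only symmetries available are swapping the two inputs (grid reflection) and the simultaneous dualization $f\mapsto\overline{f(\bar x,\bar y)}$ combined with flipping the source bit. Under these, the eight functions you call ``$\NAND$-type'' split into three inequivalent orbits: $\{\wedge,\vee\}$, $\{\NAND,\mathrm{NOR}\}$, and the four implication-type functions. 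AND-broadcasting and NAND-broadcasting are genuinely different processes --- indeed \cite{makur2020broadcasting} settled AND (and XOR) for all $p$ while NAND and IMP remained open --- so your argument as written proves the theorem only for $\NAND$ and its dual. The paper must, and does, run a separate PLP for IMP, producing the distinct potential \eqref{eq:pot_imp}, and disposes of AND and XOR by citing the prior work.

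Two further points. Your conditions (i) and (iii) are quantified over \emph{all} discrepancy configurations, hence form an infinite family of linear inequalities in $\alpha$; they are not ``a finite system'' as you assert. Making them finite is a key idea you omit: one interprets $w(y)\ge 0$ for all strings $y$ as the absence of negative cycles in a de Bruijn-type graph on length-$\ell$ patterns and imposes the finite sufficient condition $A_\ell z\le B_\ell w$ with auxiliary variables $z$ (Lemma~\ref{lem:key-makur20}); only after this reduction is the problem a $1$-PLP. Finally, the boundary issue you correctly flag as the ``main obstacle'' is left unresolved in your writeup; the paper handles it by restricting to $?$-only potentials and to $l$-boundary-coupled strings, on which the PCA transfer and cycle lemmas apply verbatim. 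With these repairs your argument would match the paper's proof.
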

The theorem is proved in Section~\ref{sec:broadcasting-on-grid} and the computer-discovered potentials (with entries in
$\mathbb{Q}$) are given in~\eqref{eq:pot_nand} and~\eqref{eq:pot_imp}.


One interesting open question is whether recovery is possible for higher dimensional grids. For grids with dimension 3 or more, each node has more outputs, so recovery might be easier analogously to the situation for broadcasting on trees of higher degree. Indeed, recovery is conjectured to be possible in this setting using the Majority function, in Section II.D of \cite{makur2020broadcasting}.



\section{Preliminaries}\label{sec:preliminary}
\subsection{Polynomial and Rational Functions}
We now provide several preliminary definitions and results that will be useful in the proofs.

\begin{definition}\label{def:poly}
    A degree-$D$ \emph{polynomial} in $\delta\in\mathbb{R}^{d}$ is a function with the following form
    $$p(\delta)=\sum_{\sum_{j=1}^di_j\le D,i_j\in \mathbb{Z}^+\cup\{0\} }C_{i_1,i_2\cdots,i_d}\prod_{j=1}^d\delta_j^{i_j}.$$
    Here $(C_{i_1,i_2\cdots,i_d})$ are the  coefficients of the polynomial.
    
    A degree-$D$ \emph{rational function} is a function with the following form
    $$h(\delta)=\frac{p(\delta)}{q(\delta)},$$ where $p$ and $q$ are degree-$D$ polynomials, defined on $\mathbb{R}^d\backslash\{\delta|q(\delta)=0\}$.
\end{definition}
Throughout the paper, all polynomials and rational functions are assumed to have rational coefficients. 

\begin{definition}\label{def:size-of-input}
    The \emph{size} of a polynomial or rational function is the number of bits used to describe it.
    If a polynomial or rational function is a function of $n$ variables, its degree is bounded by $D$, and the number of bits of any coefficient is bounded by $L$, then its size is $O(nDL)$.
\end{definition}

\subsection{Pseudo Inverse}
For a matrix $A\in \R^{n\times m}$, its pseudo inverse is denoted by $A^+$. 
\begin{definition}\label{def:pseudo-inverse}
    $A^+\in \R^{m\times n}$ is a pseudo inverse of $A$ if and only if 
\begin{enumerate}
    \item $AA^+A=A^+$
    \item $A^+AA^+=A$
    \item $AA^+$ and $A^+A$ are symmetric
\end{enumerate}
\end{definition}

The pseudo inverse of a matrix tells us whether a linear system is feasible.
\begin{lemma}\label{lem:pseudo_inverse}
    A linear system $Ax=b$ has a solution if and only if $AA^+b=b$, in which case $A^+b$ is a solution.
\end{lemma}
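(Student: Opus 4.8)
The plan is to prove both implications directly, the only nontrivial ingredient being the Moore--Penrose relation $AA^{+}A = A$ (the first defining property of the pseudoinverse in Definition~\ref{def:pseudo-inverse}); the symmetry conditions and the identity $A^{+}AA^{+}=A^{+}$ will not be needed.

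First I would dispatch the easy direction. Assume $AA^{+}b = b$ and take the candidate $x = A^{+}b$. Then $Ax = A(A^{+}b) = (AA^{+})b = AA^{+}b = b$, so $x = A^{+}b$ is a genuine solution of $Ax = b$ and in particular the system is feasible. This step uses nothing beyond the hypothesis, and it simultaneously exhibits the explicit solution promised in the lemma.

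For the converse, suppose $Ax = b$ admits some solution $x_{0}$, i.e.\ $Ax_{0} = b$. Substituting and applying $AA^{+}A = A$,
\[
AA^{+}b \;=\; AA^{+}(Ax_{0}) \;=\; (AA^{+}A)x_{0} \;=\; Ax_{0} \;=\; b,
\]
which is exactly the claimed identity. Combining the two directions yields the equivalence, and the ``in which case'' clause is the solution $A^{+}b$ already produced above.

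There is no real obstacle here: the statement is an immediate consequence of $AA^{+}A = A$. The only point worth a word of care is that the lemma only requires \emph{some} matrix $A^{+}$ satisfying the conditions of Definition~\ref{def:pseudo-inverse} (existence of such a matrix for every $A$ is classical), and of those conditions only $AA^{+}A = A$ enters the argument; in particular no appeal to uniqueness of the pseudoinverse is needed.
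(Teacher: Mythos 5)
Your proof is correct and is essentially identical to the paper's: both directions rest on the single identity $AA^{+}A=A$, with the solution exhibited as $A^{+}b$. (The paper's Definition~\ref{def:pseudo-inverse} has the right-hand sides of its first two conditions swapped — a typo — but its own proof, like yours, uses the standard relation $AA^{+}A=A$.)
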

\begin{proof}
If $AA^+b=b$ holds then $A^+b$ is a solution by definition. Conversely, assume $Ax=b$ has solution $x_0$. By definition of pseudo inverse, $b=Ax_0=AA^+Ax_0=AA^+b$.
\end{proof}
The pseudo inverse can be expressed as a limit, which will be useful in our proof (see, for example \cite{albert1972regression}).
\begin{lemma}\label{lem:pseudo-inverse-limit}
    For any $A\in \mathbb{R}^{n\times m}$, its pseudo inverse $A^+$ always exists and is unique. Also,
    $$A^+=\lim_{y\rightarrow 0}(A^\top A+y I)^{-1}A^\top.$$
\end{lemma}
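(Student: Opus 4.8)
The plan is to derive both assertions from the singular value decomposition (SVD). Write $A = U\Sigma V^\top$ with $U\in\mathbb{R}^{n\times n}$ and $V\in\mathbb{R}^{m\times m}$ orthogonal and $\Sigma\in\mathbb{R}^{n\times m}$ carrying the singular values $\sigma_1,\dots,\sigma_r>0$ (where $r=\mathrm{rank}(A)$) on its leading diagonal and zeros elsewhere. Define $\Sigma^+\in\mathbb{R}^{m\times n}$ to have $1/\sigma_1,\dots,1/\sigma_r$ on its leading diagonal and zeros elsewhere, and set $B := V\Sigma^+U^\top$. For existence I would simply verify that $B$ satisfies the conditions of Definition~\ref{def:pseudo-inverse}: since $\Sigma^+\Sigma$ and $\Sigma\Sigma^+$ are $0/1$ diagonal matrices (hence symmetric) and satisfy $\Sigma\Sigma^+\Sigma=\Sigma$, $\Sigma^+\Sigma\Sigma^+=\Sigma^+$, the orthogonality relations $U^\top U=I_n$ and $V^\top V=I_m$ let these identities pass through verbatim to $A$ and $B$; for instance $ABA = U\Sigma\Sigma^+\Sigma V^\top = U\Sigma V^\top = A$, and $AB = U(\Sigma\Sigma^+)U^\top$ is symmetric because $\Sigma\Sigma^+$ is.

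For uniqueness, suppose $B$ and $C$ both satisfy the conditions. I would first observe that $AB$ and $AC$ are symmetric idempotents (orthogonal projections) with $\mathrm{range}(AB)=\mathrm{range}(AC)=\mathrm{range}(A)$ — the inclusion $\mathrm{range}(A)\subseteq\mathrm{range}(AB)$ coming from $ABA=A$ — and an orthogonal projection is determined by its range, so $AB=AC$; symmetrically $BA=CA$. Then $B = BAB = B(AB) = B(AC) = (BA)C = (CA)C = CAC = C$. (One can instead run the purely algebraic chain $B = BAB = B(AB)^\top = \cdots = BAC$ and, symmetrically, $C = BAC$, forcing $B=C$, without invoking projections.) For the limit formula, substitute the SVD: $A^\top A + yI = V(\Sigma^\top\Sigma + yI)V^\top$, which for $y>0$ is positive definite and hence invertible, with inverse $V(\Sigma^\top\Sigma + yI)^{-1}V^\top$. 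Therefore
\[
(A^\top A + yI)^{-1}A^\top \;=\; V\,(\Sigma^\top\Sigma + yI)^{-1}V^\top V\Sigma^\top U^\top \;=\; V\,(\Sigma^\top\Sigma + yI)^{-1}\Sigma^\top\, U^\top .
\]
The middle factor $(\Sigma^\top\Sigma + yI)^{-1}\Sigma^\top\in\mathbb{R}^{m\times n}$ is diagonal with entries $\sigma_i/(\sigma_i^2+y)$ for $i\le r$ and $0$ otherwise, and as $y\to 0^+$ it converges entrywise to $\Sigma^+$; hence the whole expression converges to $V\Sigma^+U^\top = B = A^+$.

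I do not expect a deep obstacle — the statement is classical — but the step demanding the most care is the bookkeeping with the rectangular matrices $\Sigma$, $\Sigma^\top\Sigma$, and $\Sigma^+$: one must check that the various products genuinely collapse as claimed, that $\Sigma^\top\Sigma + yI$ is invertible precisely when $y\notin\{-\sigma_1^2,\dots,-\sigma_r^2\}$ (so the limit is legitimately taken along $y\to 0^+$, or along nonzero $y$), and that the zero blocks of $(\Sigma^\top\Sigma+yI)^{-1}\Sigma^\top$ match those of $\Sigma^+$ for every small $y$, not merely in the limit. An equivalent route avoids the SVD and uses the spectral decomposition $A^\top A = Q\Lambda Q^\top$ with $\Lambda\succeq 0$, but the bookkeeping with $Q^\top A^\top$ is messier, so I would prefer the SVD.
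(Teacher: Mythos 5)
Your proof is correct. The paper does not actually prove this lemma --- it is stated as a known fact with a citation to a standard reference on generalized inverses --- so there is no in-paper argument to compare against; your SVD derivation is the standard self-contained proof and all three parts (existence via explicit construction of $V\Sigma^+U^\top$, uniqueness, and the limit formula via the diagonal entries $\sigma_i/(\sigma_i^2+y)\to 1/\sigma_i$ with the zero rows identically zero for all $y\neq 0$) check out. Two small remarks: the paper's Definition~\ref{def:pseudo-inverse} contains a typo (conditions (1) and (2) should read $AA^+A=A$ and $A^+AA^+=A^+$; as printed they are dimensionally inconsistent), and you rightly work with the corrected Penrose conditions. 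In the uniqueness step, the claim ``symmetrically $BA=CA$'' needs the observation that $BA$ and $CA$ are the orthogonal projections onto $\mathrm{range}(A^\top)$ (equivalently, apply your $AB$ argument to $A^\top$ with pseudoinverse $B^\top$), not onto $\mathrm{range}(A)$; your parenthetical purely algebraic chain $B=BAB=\cdots=BAC=\cdots=C$ sidesteps this entirely and is the safer route.
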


\subsection{Useful Lemmas in Computational Algebra}
We will use the following standard result in computational algebra.
\begin{lemma}[\cite{bugeaud2004distance}]\label{lem:dis-between-roots}
    If $x_1$ and $x_2$ are two distinct roots of a polynomial $p$ with size $s$, then $|x_1-x_2|\ge 2^{-poly(s)}$.
\end{lemma}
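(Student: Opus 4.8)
The plan is to derive the root-separation bound from the classical fact that the discriminant of a squarefree integer polynomial is a nonzero integer, and hence has absolute value at least $1$.

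\emph{Normalization.} First I would clear denominators so that $p$ has integer coefficients. Since the size of $p$ counts its dense representation, this increases the size by at most a fixed polynomial factor, and afterwards the degree $D := \deg p$ and the height $H$ (the largest absolute value of a coefficient) satisfy $D \le \mathrm{poly}(s)$ and $\log H \le \mathrm{poly}(s)$. Because the statement concerns only the distinct roots of $p$, I would then replace $p$ by its squarefree part $q := p/\gcd(p,p')$, which has exactly the same set of complex roots. To keep the size of $q$ under control I would invoke Mignotte's bound: the coefficients of any divisor of $p$ are at most $2^{D}$ times the Mahler measure of $p$, hence at most $2^{D}\sqrt{D+1}\,H$; thus $q$ has integer coefficients of bit-size polynomial in $s$. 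So from now on assume $p$ itself is squarefree with leading coefficient $a_D$ and complex roots $x_1,\dots,x_D$, among which $x_1\ne x_2$ are the pair of interest.

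\emph{Discriminant bound.} The core identity is
\[
\mathrm{disc}(p) = a_D^{\,2D-2}\prod_{1\le i<j\le D}(x_i-x_j)^2 ,
\]
whose left-hand side is a nonzero integer, so $|\mathrm{disc}(p)|\ge 1$. By the Cauchy bound every root satisfies $|x_i|\le 1 + H/|a_D|\le 1+H$, and hence $|x_i-x_j|\le 2+2H$ for all $i,j$. Isolating the single factor $(x_1-x_2)^2$ in the identity and bounding the remaining $\binom{D}{2}-1$ factors from above yields
\[
|x_1-x_2|^2 = \frac{|\mathrm{disc}(p)|}{|a_D|^{2D-2}\prod_{(i,j)\ne(1,2)}|x_i-x_j|^2} \ge \frac{1}{H^{2D}\,(2+2H)^{D^2}} \ge (2+2H)^{-2D^2} ,
\]
using $|a_D|\le H$ and $2\binom{D}{2}-2\le D^2$. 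Taking logarithms, $-\log_2|x_1-x_2| \le D^2\log_2(2+2H)$, which is polynomial in $s$ by the bounds on $D$ and $H$; this is exactly the claimed inequality $|x_1-x_2|\ge 2^{-\mathrm{poly}(s)}$.

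\emph{Main obstacle.} The delicate step is the squarefree reduction: one must verify that replacing $p$ by $q=p/\gcd(p,p')$ does not blow up the bit-size, which is precisely what Mignotte's (equivalently, Gelfond's) factor-coefficient inequality supplies, together with the remark that $\gcd(p,p')$ likewise has polynomially bounded coefficients so the division can be carried out over $\mathbb{Z}$ after removing a bounded content. The remaining ingredients — clearing denominators, the Cauchy root bound, the discriminant identity, and the final arithmetic bookkeeping — are all standard, so I expect no difficulty there.
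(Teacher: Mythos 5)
Your argument is correct. Note that the paper does not prove this lemma at all --- it is imported as a black box from \cite{bugeaud2004distance} --- so there is no in-paper proof to compare against; what you have written is essentially Mahler's classical root-separation argument (nonvanishing integrality of the discriminant of the squarefree part, combined with the Cauchy root bound and Mignotte's factor-coefficient bound to control the bit-size of $p/\gcd(p,p')$), which is exactly the kind of proof underlying the cited reference. The one step worth stating explicitly rather than implicitly is that $\mathrm{disc}(q)$ is an integer (not merely $\mathrm{Res}(q,q')/a_D$ a priori rational), which is standard since the discriminant is an integer polynomial in the coefficients; with that noted, the bookkeeping $|x_1-x_2|\ge (2+2H)^{-D^2}$ with $D,\log H=\mathrm{poly}(s)$ gives the claim.
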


\begin{lemma}[\cite{coste1988thom}]\label{lem:poly-check-sign}
    Let $\delta_0$ be a real root of an irreducible  polynomial $p$ (irreducible in $\mathbb{Q}$). $\delta_0$ is uniquely determined by $p$ and signs of all derivatives of $p$ at $\delta_0$. The sign of $q(\delta_0)$ for another polynomial $q$ can be checked in time polynomial in size of $p$ and $q$.
\end{lemma}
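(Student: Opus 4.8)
The plan is to separate the statement into two assertions: (i) uniqueness of the \emph{Thom encoding} --- that for fixed $p$ the sign vector $\big(\operatorname{sign}p(\delta_0),\operatorname{sign}p'(\delta_0),\dots,\operatorname{sign}p^{(\deg p)}(\delta_0)\big)$ pins down $\delta_0$ among all reals --- and (ii) the algorithmic claim that, given this vector together with a second polynomial $q$, one can compute $\operatorname{sign}q(\delta_0)$ in time polynomial in the sizes of $p$ and $q$.

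For (i) the engine is Thom's Lemma: for any univariate $P$ of degree $D$ and any $\tau\in\{-1,0,+1\}^{D+1}$ the set $\{x\in\mathbb{R}:\operatorname{sign}P^{(i)}(x)=\tau_i\text{ for all }i\}$ is empty, a single point, or an open interval. I would prove this by induction on $D$: $P^{(D)}$ is a nonzero constant, so the condition at $i=D$ is either unsatisfiable or trivial; by the inductive hypothesis applied to $P'$, the set $I$ carved out by the conditions on $P',\dots,P^{(D)}$ is an interval, and on $I$ the sign of $P'$ is constant, so $P$ is monotone (or constant) on $I$ and hence $\{x\in I:\operatorname{sign}P(x)=\tau_0\}$ is once more an interval, point, or empty. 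Now specialize $P=p$: irreducibility over $\mathbb{Q}$ forces $p$ squarefree, so $\gcd(p,p')=1$ and $p'(\delta_0)\neq 0$; since the encoding has first coordinate $0$, its realizing set lies inside the finite zero set of $p$ and is also an interval, hence equals $\{\delta_0\}$ (the degree-one case being immediate). The same monotonicity argument yields, as a bonus, a purely combinatorial rule for ordering the real roots of $p$ from their encodings --- compare two encodings at the highest coordinate where they differ, using the sign of the next derivative --- which is what makes an encoding a usable exact name for $\delta_0$.

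For (ii) I would \emph{refine} the encoding. First divide $q$ by $p$; since $p$ is irreducible this detects whether $q(\delta_0)=0$, so assume not. Apply the analysis above to the differentiation-closed family $\{p,p',\dots,p^{(\deg p)}\}\cup\{q,q',\dots,q^{(\deg q)}\}$: every realizable sign condition of this family on $\mathbb{R}$ again cuts out a point or interval, and among points where $p=0$ the $p$-part of a sign condition is exactly a Thom encoding, so the given $\sigma$ has a \emph{unique} realizable extension --- the one attained at $\delta_0$ --- whose $q$-coordinate is the answer. What remains is to enumerate the realizable sign conditions and locate $\sigma$ in polynomial time; this is the univariate sign-determination problem, handled by Sturm--Tarski queries: for polynomials $P,R$ the quantity $\#\{x:P(x)=0,\,R(x)>0\}-\#\{x:P(x)=0,\,R(x)<0\}$ is read off from sign variations in the signed remainder (subresultant) sequence of $P$ and $P'R$, and a fixed linear combination of such queries over suitable products of the $q^{(j)}$ and $p^{(i)}$ recovers, for each sign condition, how many roots of $p$ realize it. Propagating degree and bitsize bounds through the subresultant computations keeps everything polynomial, and reading off the $q$-sign of the condition containing $\delta_0$ finishes the argument.

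The main obstacle is the complexity accounting in (ii): naively there are exponentially many sign conditions, so one must invoke the sign-determination machinery of Ben-Or--Kozen--Reif / Basu--Pollack--Roy to get away with only polynomially many Sturm queries and polynomially bounded integers, rather than the naive enumeration. Part (i) is the classical Thom's Lemma and is routine once the induction is organized as above.
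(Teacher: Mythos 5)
The paper does not prove this lemma; it is stated in the preliminaries as an imported result with a citation to Coste--Roy, so there is no in-paper argument to compare against. Your sketch is essentially the argument of the cited source and of the standard textbook treatment (Basu--Pollack--Roy): part (i) is Thom's Lemma proved by induction on the degree using monotonicity on the interval cut out by the derivative conditions, and the specialization to an irreducible $p$ correctly concludes that a sign vector with first coordinate $0$ realizes an interval inside the finite set $Z(p)$, hence a single point. Part (ii) is also on the right track: reducing $q$ modulo $p$ to detect $q(\delta_0)=0$ uses irreducibility correctly, and the Sturm--Tarski query computed from the signed subresultant sequence of $p$ and $p'R$ is the right primitive. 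You correctly identify the one genuine pitfall --- naive enumeration of sign conditions is exponential --- and the fix is exactly the Ben-Or--Kozen--Reif adaptive sign determination, which works because at most $\deg p$ sign conditions are realizable on $Z(p)$. One small simplification: for determining $\operatorname{sign} q(\delta_0)$ you only need sign determination for the family $\{q\}\cup\{p',\dots,p^{(\deg p)}\}$ on the zeros of $p$; the derivatives of $q$ are not needed. As a proof sketch this is correct, with the complexity bookkeeping appropriately delegated to standard machinery.
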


\subsection{Notation}
For an $m\times n$ matrix $A$ we denote by $A_{-i}$ the $(m-1)\times n$ matrix obtained by removing row $i$ of $A$. For a subset $J\subseteq [m]$, the $|J|\times n$ matrix $A_J$ consists of the rows of $A$ indexed by $J$. 

The \emph{kernel} of a matrix $C\in \mathbb{R}^{m\times n}$, denoted by $\ker(C)$, consists of those $w\in \mathbb{R}^n$ such that $Cw = 0$. The \emph{row space} of matrix $C$ is denoted by $\row(C)$, which is the space spanned by all row vectors of $C$. The subspaces $\row(C)$ and $\ker(C)$ are orthogonal complements to each other by definition.




\section{A Polynomial Time Algorithm for Local Feasibility of 1-PLP}\label{sec:algo-dimension-1}
In this section we prove Theorems~\ref{thm:1-dim-local} and~\ref{thm:1-dim-local-infea} by giving a polynomial-time algorithm for local feasibility of 1-PLP. We begin in Sections~\ref{sec:rational-function-solution} and~\ref{sec:polynomial-function} by showing that although the solution $x$ can in principle be any function of the parameter $\delta$, it suffices to focus on $x$ being polynomials. Then in Section~\ref{sec:subspace-elimination} we introduce a problem called subspace elimination, reduce determining local feasibility of 1-PLP to feasibility of subspace elimination, and give an efficient algorithm for the latter problem.
Section~\ref{s:runTime} improves the runtime of the algorithm from general subspace elimination by using the structure specific to local feasibility of PLP. Section~\ref{s:OutputFeas} shows how to find a locally feasible solution for the PLP as a rational function.

We now briefly recall the problem formulation and introduce a simplification. $A$ is an $m \times n$ polynomial matrix and $b$ is an $m\times 1$ polynomial vector.
Suppose the degree of polynomials in $A$ and $b$ are bounded by $D$ with a 1-dimensional parameter $\delta$. We want to decide whether there exists $ x  (\delta)$ that satisfies the program
\begin{equation}\label{equ:lp}
    A(\delta)\cdot  x  (\delta)-b(\delta)\ge 0
    \,,
\end{equation}
for all $\delta$ in an arbitrarily 
small neighborhood $(-\epsilon,\epsilon)$ of $0$, and output $x(\delta)$ when the answer is yes. 

It turns out to be simpler, and sufficient, to derive an algorithm that decides the PLP is feasible on the positive side of the origin, i.e., whether there exists an $\epsilon$ such that the PLP is feasible in $(0,\epsilon)$. The same algorithm can be applied to the negative side. The two results can then be combined with a normal LP at the origin to decide if the PLP is feasible or infeasible in $B_\epsilon(0)$ for a small enough $\epsilon$. We will prove the following lemma and Theorems~\ref{thm:1-dim-local} and~\ref{thm:1-dim-local-infea} follow as corollaries.  

\begin{lemma}\label{lem:1-dim-local-one-side}
The problem of whether a 1-PLP is feasible on $(0,\epsilon)$ for a small enough $\epsilon$ can be solved in polynomial time.
Furthermore, if the PLP is locally feasible, a solution $x(\delta)$ can be given as a rational function of $\delta$ on the positive neighborhoods of 0. The interval $(0,\epsilon)$ on which $x(\delta)$ is a feasible solution is given by $\epsilon$ being the smallest positive root of an explicit polynomial.
\end{lemma}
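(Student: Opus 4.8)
Let me think carefully about what the lemma is asking. We need a polynomial-time algorithm to decide whether a 1-PLP $A(\delta) x(\delta) \ge b(\delta)$ is feasible on $(0,\epsilon)$ for some small $\epsilon > 0$, and if so to produce a rational function solution $x(\delta)$ together with an explicit polynomial whose smallest positive root gives the interval of validity.

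The approach naturally splits into three big pieces, matching the section structure. First, reduce from arbitrary function solutions to polynomial solutions. The idea: if the PLP is feasible on $(0,\epsilon)$ then for each $\delta$ there's some $x(\delta)$; one should show that on a small enough interval one can take $x(\delta)$ to trace out a vertex (or more generally a point where a maximal independent set of constraints is tight) of the feasible polyhedron $P(\delta) = \{x : A(\delta) x \ge b(\delta)\}$, and that such a vertex, being the solution of a linear system $A_J(\delta) x = b_J(\delta)$ with $A_J(\delta)$ of full column rank, is given by Cramer's rule as a vector of rational functions of $\delta$ with numerator/denominator degrees bounded polynomially in $n, D$. One then invokes a continuity argument: the sign of each constraint polynomial $f_i(\delta) = A_i(\delta) x(\delta) - b_i(\delta)$ (after clearing denominators, a polynomial) is eventually constant near $0$, so if it is $\ge 0$ somewhere in every neighborhood it is $\ge 0$ on a one-sided neighborhood. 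Then clear the denominator: multiplying $x(\delta)$ by the common denominator $q(\delta)$ (which is positive for $\delta$ near $0^+$, after possibly flipping sign) turns the rational solution into a polynomial solution, at the cost of a polynomial blowup in degree and coefficient size.

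Second, with the search restricted to polynomial $x(\delta) = \sum_j h_{ij}\delta^j$ of bounded degree, the feasibility-on-$(0,\epsilon)$ condition becomes: for each $i$, the polynomial $f_i(\delta)$ is either identically zero or has positive leading low-order coefficient, i.e., writing $f_i(\delta) = \sum_{k} c_{ik}(h)\, \delta^k$ where each $c_{ik}$ is a linear form in the unknowns $h = (h_{ij})$, we need the first nonzero $c_{ik}$ (ordered by increasing $k$) to be positive. This "derivative condition" defines, for each $i$, a set $S_i \subseteq \mathbb{R}^N$ that is a finite union of half-affine-subspaces: $S_i = \bigcup_{k} \{ c_{i0} = \cdots = c_{i,k-1} = 0,\ c_{ik} > 0 \} \cup \{ c_{i0} = \cdots = c_{iK} = 0\}$. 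We must decide whether $\bigcap_i S_i \ne \emptyset$. The plan, following Section~\ref{sec:subspace-elimination} and~\ref{s:runTime}, is to reduce this to the subspace elimination problem and solve it via repeated calls to ordinary LP, maintaining the current affine feasible set and at each step detecting and eliminating a subspace that could drop its dimension; the runtime bookkeeping gives the claimed $O(n^3 m D^3 \cdot \mathsf{LP}(\cdot))$ bound.

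Third, for the output guarantee: once the subspace-elimination routine returns a feasible coefficient vector $h$, we recover $x(\delta)$ as an explicit polynomial (hence rational) function, and the interval of validity is determined by asking how far $x(\delta)$ remains feasible — each $f_i(\delta) \ge 0$ fails first at the smallest positive root of $f_i$, so $\epsilon$ is the smallest positive root among all the $f_i$ that are not identically zero, which is the smallest positive root of their product, an explicit polynomial. I expect the main obstacle to be the first piece: rigorously establishing that a rational-function (and then polynomial) solution of controlled size exists whenever any solution exists. The subtlety is that the "vertex" one tracks may have its defining constraint set change as $\delta$ varies, and one must carefully argue — via the bounded number of subsets $J$ and Lemma~\ref{lem:dis-between-roots} type gaps between roots — that on a single one-sided neighborhood of $0$ a fixed choice of $J$ works and yields the required degree/height bounds; the continuity argument closing the loop (feasibility "in every neighborhood" upgrades to feasibility "on a one-sided neighborhood") is the conceptual heart, and everything downstream is essentially linear-algebraic bookkeeping.
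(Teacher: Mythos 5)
Your proposal follows essentially the same route as the paper: reduce to a rational-function solution traced by a fixed tight constraint set $J$ on a one-sided neighborhood (with the continuity/finitely-many-sign-changes argument you identify as the heart), clear denominators to get a polynomial solution, reformulate the derivative condition as an intersection of unions of half affine subspaces, and solve by subspace elimination, finally reading off $\epsilon$ as the smallest positive root of the product of the constraint polynomials. The only notable technical difference is that the paper computes the tracked point as $A_J^+ b_J$ via the limit formula for the Moore--Penrose pseudoinverse (handling rank-deficient $A_J$ uniformly) rather than assuming full column rank and applying Cramer's rule.
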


\subsection{Rational Function Solution}\label{sec:rational-function-solution}
The following theorem states that feasibility implies existence of a rational function solution, and gives a constructive way of finding such a function.

\begin{theorem}\label{thm:rational_function} \footnote{This theorem can be implied by the existence of basic solution of the simplex algorithm over the ordered field of rational functions \cite{jeroslow1973asymptotic}. Here we include a direct proof for completeness.
}
    If there exists $\epsilon>0$ such that for any $\delta\in (0,\epsilon)$ there exists a $ x_1(\delta)$ such that (\ref{equ:lp}) is satisfied, then there exists a rational function $ x  (\delta)$ that satisfies (\ref{equ:lp}) for $\delta\in (0,\epsilon')$ for a positive $\epsilon'$.
    Further, we can take $ x  (\delta)$ equal to $A^+_Jb_J$ for a subset of constraints $J\subseteq[m]$, where $A_J^+$ denotes the pseudoinverse of the matrix $A_J$ consisting of the rows of $A$ indexed by $J$.
    
    The numerators and least common multiple (LCM) of denominators for all entries in $x(\delta)$ has degree at most $2nD$.
\end{theorem}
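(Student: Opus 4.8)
The plan is to argue that if the PLP is feasible on $(0,\epsilon)$ for some $\epsilon > 0$, then there is a single subset $J \subseteq [m]$ of constraints such that $x(\delta) = A_J^+(\delta)\,b_J(\delta)$ is feasible on a (possibly smaller) interval $(0,\epsilon')$. First I would set up the combinatorial selection of $J$. For each $\delta \in (0,\epsilon)$ the feasible polyhedron $\{x : A(\delta)x \ge b(\delta)\}$ is nonempty; if it is full-dimensional or has a vertex, take $J$ to be the (a) set of constraints tight at a vertex (so $A_J(\delta)$ has full column rank and $x = A_J^+ b_J$ is the unique solution of $A_J x = b_J$); if the feasible set is an affine subspace with no vertex, one can still pick a maximal independent tight set and the formula $A_J^+ b_J$ picks out the minimum-norm point of that affine solution set. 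The key point is that as $\delta$ ranges over $(0,\epsilon)$, there are only finitely many subsets $J \subseteq [m]$, so by pigeonhole there is one value $J^\star$ that works (yields a feasible point via $A_{J^\star}^+ b_{J^\star}$) for $\delta$ in a set with an accumulation point at $0$, hence on a sequence $\delta_k \to 0^+$.

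Next I would show that $x(\delta) = A_{J^\star}^+(\delta)\,b_{J^\star}(\delta)$ is a rational function of $\delta$ on a punctured interval $(0,\epsilon')$. This uses Lemma~\ref{lem:pseudo-inverse-limit}: on the open set where $\mathrm{rank}\,A_{J^\star}(\delta)$ is constant (a co-finite subset of any small interval, since rank can drop only where certain minors — themselves polynomials in $\delta$ — vanish, and a nonzero polynomial has finitely many roots), the pseudoinverse is given by a fixed rational formula in the entries of $A_{J^\star}(\delta)$, for instance via $A^+ = A^\top(AA^\top)^{-1}$ restricted to a maximal set of independent rows, or via Cramer's rule on $(A^\top A + yI)^{-1}A^\top$ and taking $y \to 0$. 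Composing polynomials with this rational formula yields that each entry of $x(\delta)$ is a rational function; standard degree bounds on determinants and adjugates of an $n\times n$ (or smaller) matrix with degree-$D$ polynomial entries give that numerators and the LCM of denominators have degree at most $2nD$ — this is the source of the stated degree bound (roughly $nD$ from a determinant of size-$n$ with degree-$D$ entries, doubled because both a numerator determinant and a denominator determinant appear).

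The remaining, and I expect main, obstacle is the \emph{continuity argument}: having found that $x(\delta) = A_{J^\star}^+ b_{J^\star}$ is feasible on a sequence $\delta_k \to 0^+$ and is a rational function on $(0,\epsilon')$, I must upgrade this to feasibility on an entire interval $(0,\epsilon'')$. For this I would consider each constraint function $g_i(\delta) := a_i(\delta)^\top x(\delta) - b_i(\delta)$, which is a rational function of $\delta$ with denominator not vanishing on $(0,\epsilon')$; it is therefore either identically zero or has finitely many zeros on $(0,\epsilon')$, so it has a constant sign on some $(0,\epsilon_i)$. Taking $\epsilon'' = \min_i \epsilon_i$, every $g_i$ has constant sign on $(0,\epsilon'')$; since $g_i \ge 0$ holds along the sequence $\delta_k$, which enters $(0,\epsilon'')$, that constant sign must be nonnegative, so $x(\delta)$ is feasible on all of $(0,\epsilon'')$. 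The delicate part is handling the case where the feasible set has no vertex (so $A_{J^\star}(\delta)$ does not have full column rank) and ensuring the pseudoinverse formula still selects a point in the relative interior of the correct face uniformly in $\delta$; I would address this by arguing that the combinatorial type of the optimal face (which tight constraints are forced) is itself eventually constant as $\delta \to 0^+$, again by the finiteness-of-subsets pigeonhole, and that on the region of fixed type the minimum-norm / pseudoinverse selection is the relevant rational certificate. Finally, $\epsilon$ (renamed $\epsilon'$ in the statement) is the smallest positive root of the product of the numerator polynomials of the $g_i$ together with the denominator polynomial, which is explicit and has polynomially bounded degree, giving the ``smallest positive root of an explicit polynomial'' claim referenced in Lemma~\ref{lem:1-dim-local-one-side}.
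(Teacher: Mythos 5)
Your proposal follows the same overall architecture as the paper's proof: select a subset $J$ of constraints, express $A_J^+(\delta)\,b_J(\delta)$ as a rational function via the limit formula of Lemma~\ref{lem:pseudo-inverse-limit} (this is the paper's Lemma~\ref{lem:algo-rational-solution}), and combine a pigeonhole over the finitely many subsets $J$ with the fact that a rational function changes sign only finitely often to fix a single $J$ and a whole interval $(0,\epsilon')$ (the paper's Lemmas~\ref{lem:singleJ} and~\ref{lem:number_of_change}). Your continuity step is sound, and the degree bound lands on the right answer, although the factor of $2$ in $2nD$ actually comes from the entries of $A_J^\top A_J$ having degree $2D$, not from the presence of both a numerator and a denominator determinant.

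The genuine gap is in how you select $J$. Taking $J$ to be the constraints tight at a feasible point (or ``a maximal independent tight set'') does not guarantee that $A_J^+ b_J$ is feasible: the affine set $\{x : A_J x = b_J\}$ can strictly contain the corresponding face of the polyhedron, and the minimum-norm point that the pseudoinverse selects can then lie outside the feasible region. Concretely, for the feasible set $\{(x,y) : y \ge 0,\ -y \ge 0,\ x \ge 1\}$ and the feasible point $(2,0)$, the tight constraints give the affine set $\{y=0\}$, whose minimum-norm point is the origin, which violates $x\ge 1$. Your proposed fix (that the combinatorial type of the face is eventually constant in $\delta$) does not address this, because the failure already occurs at a single fixed $\delta$. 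What is needed is precisely the paper's Lemma~\ref{lem:equality_solution}: there exists $J$ such that the \emph{entire} affine set $\{x : A_J x = b_J\}$ is contained in $\{x : Ax \ge b\}$, so that \emph{any} solution of $A_J x = b_J$, in particular $A_J^+ b_J$, is automatically feasible. The paper proves this by iteratively converting inequalities to equalities: either no remaining inequality is active on the current equality set (done), or some direction in the kernel of the current equality system is bounded above on the feasible set, and a constraint tight at the maximizer can be added as an equality without losing feasibility. This is the main nontrivial step of the theorem and is the piece missing from your proposal.
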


Note that $A^+_J$ is not always a rational function, but we will show in Lemma~\ref{lem:algo-rational-solution} that this is indeed the case if we choose a small enough $\epsilon$.

We begin with the observation that when a linear program is feasible, a solution can always be found by changing a subset of the inequality constraints to equalities and dropping the rest. The solution can then be expressed as a rational function of the coefficients via the pseudoinverse. 
\begin{lemma}\label{lem:equality_solution}
    For any feasible LP $L:Ax\ge b$, there exists a subset of constraints $J$ such that $\{x:A_J x=b_J\}$ is a subset of the solution set $\{x: Ax\geq b\}$. 
\end{lemma}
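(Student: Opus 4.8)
The plan is to pick, among all feasible points of $L$, one that makes as many constraints tight as possible, and show that the tight constraints $J$ at such a point already pin it down. Concretely, let $x^\*$ be a feasible point maximizing the number of indices $i$ with $A_i x^\* = b_i$, and let $J$ be that set of tight indices. I claim $\{x : A_J x = b_J\}$ is contained in the feasible set $\{x : Ax \ge b\}$; since $x^\*$ itself lies in $\{x : A_J x = b_J\}$, this gives the desired containment (the set is in particular nonempty). The key step is the claim, and its proof is the main obstacle — everything else is bookkeeping.

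To prove the claim, suppose for contradiction that some $x'$ satisfies $A_J x' = b_J$ but violates $L$, i.e. $A_k x' < b_k$ for some $k \notin J$. Consider the segment $x_t = (1-t)x^\* + t x'$ for $t \in [0,1]$. For every $i \in J$ we have $A_i x_t = (1-t) b_i + t b_i = b_i$, so all constraints in $J$ stay tight along the whole segment. Since $x^\*$ is feasible, $A_k x^\* \ge b_k$; since $A_k x' < b_k$; and $t \mapsto A_k x_t$ is affine, there is a largest $t^\* \in [0,1)$ with $A_k x_{t^\*} \ge b_k$, and in fact by the intermediate value theorem we may take $t^\*$ to be the point where $A_k x_{t^\*} = b_k$ if $A_k x^\* > b_k$, or argue more carefully. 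The cleaner route: let $t^\*$ be the supremum of $t \in [0,1]$ such that $x_t$ is feasible for $L$. Then $t^\* > 0$ is impossible to rule out directly, so instead I would take $t^\*$ maximal such that $x_s$ is feasible for all $s \le t^\*$; such a $t^\*$ exists and is $<1$ because $x_1 = x'$ is infeasible, and by continuity $x_{t^\*}$ is feasible. At $x_{t^\*}$, some constraint not in $J$ must be tight (otherwise all non-$J$ constraints are strictly satisfied at $x_{t^\*}$ and, being affine and continuous, remain satisfied slightly past $t^\*$, contradicting maximality — here one also uses that the $J$-constraints are exactly equalities along the segment so they never obstruct). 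Hence $x_{t^\*}$ is feasible and has strictly more tight constraints than $x^\*$ (all of $J$, plus at least one new one), contradicting the choice of $x^\*$.

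One subtlety to handle carefully: it is conceivable that $x' = x^\*$ is forced, or that no non-$J$ constraint can ever become tight on the segment because the segment exits feasibility "at infinity" — but since we restrict to the compact segment $[0,1]$ with endpoint $x_1 = x'$ infeasible, the supremum argument is valid and $t^\* \in [0,1)$ is attained. Another degenerate case is $J = [m]$, where the claim is trivial, and $J = \emptyset$, which would force $L$ to be feasible for every $x$ — but then any single point works and we can instead just take $J$ to be, say, one tight constraint after perturbing; more simply, if $x^\*$ has no tight constraints then every $x$ near $x^\*$ is feasible, and by scaling toward a vertex of the (then full-dimensional, and we may assume it has a vertex after standard reduction, or argue directly) feasible polyhedron we reach a point with a tight constraint, so the maximum-tightness point $x^\*$ has $J \neq \emptyset$ unless the feasible set is all of $\mathbb{R}^n$, in which case $J = \emptyset$ and $\{x : A_J x = b_J\} = \mathbb{R}^n$ is still contained in the solution set. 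So the statement holds in all cases. The main work, as noted, is the contradiction argument showing $\{x : A_J x = b_J\} \subseteq \{x : Ax \ge b\}$.
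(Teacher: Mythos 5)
Your proof is correct, but it is organized differently from the paper's. You take an extremal point of view: pick a feasible $x^*$ maximizing the number of tight constraints, let $J$ be its tight set, and show by walking along the segment from $x^*$ to a hypothetical point $x'$ with $A_Jx'=b_J$ but $A_kx'<b_k$ that the last feasible point on that segment picks up a new tight constraint outside $J$ while keeping all of $J$ tight --- contradicting maximality (or, when the last feasible point is $x^*$ itself, contradicting the definition of $J$). The paper instead runs a greedy induction: starting from no equalities, it repeatedly shows that either the current set $S=\{Ax\ge b,\ Cx=d\}$ already equals $\{x:Cx=d\}$, or some inequality can be converted to an equality while preserving nonemptiness; the first alternative is established via a dichotomy on whether every direction in $\ker(C)$ is unbounded over $S$, using the decomposition of each $a_i$ into its $\ker(C)$ and $\row(C)$ components. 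Your route avoids that kernel/row-space analysis entirely and needs only convexity of the feasible set and continuity of affine functions along a segment, which makes it somewhat more elementary; the paper's version isolates a reusable one-step claim (``one more inequality can always be made tight'') that mirrors how a simplex-type pivot would proceed. Two small editorial points: your worry about the $J=\emptyset$ case is unnecessary, since your main contradiction argument already covers it (if the feasible set is not all of $\mathbb{R}^n$, the segment argument manufactures a feasible point with at least one tight constraint); and the first, abandoned attempt at defining $t^*$ via the intermediate value theorem should simply be deleted in favor of the ``supremum of the feasible prefix'' definition you settle on, which is the one that works.
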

\begin{proof}
The proof is obtained by inductively applying the following claim. Let $S$ be a set defined by linear inequalities $Ax\ge b$ and equations $Cx=d$. We claim that either $S = \{x:Cx=d\}$, i.e. none of the inequalities are active, or there exists an inequality $a_i x\ge b_i$ such that 
$$S'= S\cap \{x:a_ix = b_i\}= \{x:A_{-i}x\ge b_{-i},Cx=d, a_ix=b_i\}
\quad \text{is nonempty}\,.$$
In words, for any feasible set of constraints we can find an inequality and change it to an equality while preserving feasibility. 

Now we prove the claim. We first suppose that for all $ w\in \ker(C)\setminus\{0\}$, $\max_{x\in S} w^\top x=+\infty$ and show that in this case $\{x:Cx=d\}=S$. Note that $\max_{x\in S}-a_i x\leq -b_i<+\infty$, so for any $i$, $a_i\not\in \ker(C)$. Because the kernel and row space are orthogonal complements, $a_i = a^0_i + a^1_i$ where $a^0_i\in \ker(C)$ and $a^1_i\in \row(C)$.
Suppose $a_i^1=\sum_{j}\alpha_jC_j$. Taking any $x$ satisfying $Cx=d$, $a^1_i x= \sum_{j}\alpha_jC_j x=
\sum_{j}\alpha_jd_j$. Hence $\max_{x\in S} -a^0_i x = \max_{x\in S} (-a_i+a^1_i)x \leq -b_i + \sum_{j}\alpha_jd_j< +\infty $ and we conclude that $a_i^0 = 0$. 
Now, since $S$ is non-empty, we may take a $y\in S$ to conclude that $b_i \leq a_iy=a_i^1 y=\sum_{j}\alpha_jd_j$ for all $i$. 
So for any $x$ in $\{x:Cx=d\}$ and any $i$, we have $a_ix=a_i^1x=\sum_{j}\alpha_jd_j\ge b_i$, which implies that $\{x:Cx=d\}=S$.

We henceforth assume that there exists $w_0\in \ker(C)$ such that $\max_{x\in S} w_0^\top x<+\infty$. Let $x^*\in S$ achieve this maximum. Note that there must exist an $i$ such that $a_i x^*= b_i$, otherwise we can increase $w_0^\top x$ by choosing $x=x^*+\epsilon w_0$ for a small enough $\epsilon>0$. So $x^*\in \{x:A_{-i}x\ge b_{-i},Cx=d, a_ix=b_i\}=:S'$, which means that $S'$ is nonempty.
\end{proof}

\begin{corollary}\label{cor:vertex_sol}
Under the condition of Theorem~\ref{thm:rational_function}, there exists $\epsilon>0$ that
for any $\delta\in (0,\epsilon)$ there exists a subset of constraints $J$ (possibly depending on $\delta$) such that $A^+_Jb_J$ is a solution.
\end{corollary}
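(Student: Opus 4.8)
The plan is to combine Lemma~\ref{lem:equality_solution} with a finiteness/pigeonhole argument. By hypothesis, for every $\delta \in (0,\epsilon_0)$ the LP $A(\delta)x \ge b(\delta)$ is feasible, so Lemma~\ref{lem:equality_solution} applies pointwise: for each such $\delta$ there is a subset $J = J(\delta) \subseteq [m]$ with $\{x : A_J(\delta)x = b_J(\delta)\} \subseteq \{x : A(\delta)x \ge b(\delta)\}$, and this set is nonempty. The first thing I would observe is that a nonempty solution set of a linear system always contains the point $A_J^+(\delta) b_J(\delta)$, by Lemma~\ref{lem:pseudo_inverse}; hence for each $\delta \in (0,\epsilon_0)$ there is some $J$ for which $A_J^+(\delta)b_J(\delta)$ is a feasible solution of the PLP at $\delta$.

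The remaining issue is that the index set $J(\delta)$ could a priori depend on $\delta$ in a wild way, and the Corollary as stated still allows $J$ to depend on $\delta$ — so strictly speaking, what I have already suffices. But to get the cleaner statement (a single $\epsilon$ on which the pointwise construction works, with the subsets drawn from the finite family of all $J \subseteq [m]$), I would argue as follows. There are only $2^m$ possible subsets $J$. For each fixed $J$, the set $G_J := \{\delta \in (0,\epsilon_0) : A_J^+(\delta)b_J(\delta) \text{ is feasible for the PLP at } \delta\}$ is some subset of $(0,\epsilon_0)$, and by the previous paragraph $\bigcup_{J \subseteq [m]} G_J = (0,\epsilon_0)$. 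Taking $\epsilon := \epsilon_0$ (or any smaller positive number) and using that the union is over the finite family of all subsets, we conclude: for every $\delta \in (0,\epsilon)$ there exists $J$ (depending on $\delta$) with $A_J^+ b_J$ a solution at $\delta$. This is exactly the claim.

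I expect the only genuine subtlety — and the reason the Corollary is phrased with "possibly depending on $\delta$" and with a freshly named $\epsilon$ rather than reusing the hypothesis — is that one must resist the temptation to extract a \emph{single} $J$ working on a whole subinterval; that stronger statement is what Theorem~\ref{thm:rational_function} is really after, and it requires the continuity/interval argument sketched in the introduction (showing that over a small enough interval one index set suffices and $A_J^+$ is a rational function there). For the Corollary itself the main "obstacle" is merely bookkeeping: being careful that Lemma~\ref{lem:equality_solution} gives \emph{nonemptiness} of $\{x : A_J x = b_J\}$ as a subset of the feasible region, so that Lemma~\ref{lem:pseudo_inverse} can be invoked to name the explicit solution $A_J^+ b_J$. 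No estimates or root-separation bounds are needed at this stage.
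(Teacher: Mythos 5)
Your proposal is correct and matches the paper's proof, which is exactly the one-line combination of Lemma~\ref{lem:equality_solution} (applied pointwise at each $\delta$) with Lemma~\ref{lem:pseudo_inverse} to name $A_J^+b_J$ as an explicit solution. The extra finite-union bookkeeping you add is harmless but unnecessary, since the corollary already permits $J$ to depend on $\delta$.
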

\begin{proof}
The proof follows by combining Lemmas~\ref{lem:equality_solution} and~\ref{lem:pseudo_inverse}.
\end{proof}

Informally, Corollary~\ref{cor:vertex_sol} tells us that for fixed $\delta$, $x$ can indeed be chosen as a rational function. The conclusion can be extended to a small enough open interval, as shown in the following lemma.
    

\begin{lemma}\label{lem:algo-rational-solution}
    For a small enough $\epsilon$, it holds for every $J\subset [m]$ that $A^+_J(\delta)b_J(\delta)$ is a rational function of $\delta$ on $(0,\epsilon)$. The numerators and LCM of denominators for all entries in $A^+_J(\delta)b_J(\delta)$ has degree at most $2nD$.
\end{lemma}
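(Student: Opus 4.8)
The plan is to first understand the structure of the pseudoinverse $A_J^+(\delta)$ as a function of $\delta$ near $0$, and then bound the degrees of the numerators and denominators that arise. Fix a subset $J\subseteq [m]$ and write $M(\delta) = A_J(\delta)$, an $|J|\times n$ matrix whose entries are degree-$D$ polynomials in the single variable $\delta$. The key point is that the rank of $M(\delta)$ is constant on a punctured neighborhood $(0,\epsilon)$ of $0$: the rank drops exactly when all minors of a given size vanish, and each such minor is a polynomial in $\delta$, so it is either identically zero or has only finitely many roots. Taking $\epsilon$ smaller than the smallest positive root of every non-vanishing minor (over all $J$ and all sizes — still only polynomially many polynomials, each of polynomially bounded degree), we get that on $(0,\epsilon)$ the rank of $M(\delta)$ equals its ``generic'' rank $r$, and moreover some fixed $r\times r$ submatrix is invertible throughout $(0,\epsilon)$.

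Next I would use Lemma~\ref{lem:pseudo-inverse-limit}, or more directly a Cramer-type formula, to express $M^+(\delta)$ as a rational function on this interval. Once the rank is locally constant, $M^+(\delta)$ can be written via the full-rank factorization $M = PQ$ with $P$ of size $|J|\times r$ and $Q$ of size $r\times n$ both of rank $r$, giving $M^+ = Q^\top(QQ^\top)^{-1}(P^\top P)^{-1}P^\top$; choosing $P,Q$ to consist of actual columns/rows (or row-reduced combinations) of $M$ keeps their entries polynomial of degree $\le D$, and then each inverse of an $r\times r$ polynomial matrix is, by the adjugate formula, a matrix of polynomials of degree $\le (r-1)D$ divided by a determinant of degree $\le rD$. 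Multiplying the pieces together and then applying $M^+(\delta)$ to the polynomial vector $b_J(\delta)$ (degree $\le D$), one collects a rational function. The remaining work is the degree bookkeeping: I would track that the common denominator can be taken to be (essentially) $\det(QQ^\top)\det(P^\top P)$ or a single $r\times r$ minor determinant, of degree at most something like $2rD \le 2nD$, and that numerators after clearing this denominator also have degree at most $2nD$. A cleaner route to the exact bound $2nD$ is to write $M^+ b_J$ as the minimum-norm solution of $M x = (\text{projection of }b_J)$ and use Cramer's rule on the $(n)\times(n)$ (or $r\times r$) system obtained by selecting an invertible submatrix: numerator and denominator are then each determinants of matrices with degree-$\le D$ entries and at most... here care is needed to land at $2nD$ rather than $nD$; the factor of $2$ presumably comes from the extra multiplication by entries of $b_J$ and from the normal-equations $M^\top M$ doubling degrees, so I would set up the argument through $(M^\top M + yI)^{-1}M^\top$, take $y\to 0$, and observe that $\det(M^\top M)$ and the adjugate entries are polynomials in $\delta$ of degree $\le 2nD$.

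The main obstacle I anticipate is handling the limit $y \to 0$ in Lemma~\ref{lem:pseudo-inverse-limit} rigorously while extracting a clean rational form: $(M^\top M + yI)^{-1}M^\top$ is a rational function of \emph{both} $y$ and $\delta$, and one must argue that, on $(0,\epsilon)$ where the rank is constant, the $y\to 0$ limit exists entrywise and is itself a rational function of $\delta$ with the claimed degree bounds — the cancellations that make the $1/y$-type singularities disappear rely precisely on the locally-constant-rank property. An equivalent but perhaps more transparent way to package this is: restrict to a maximal set of linearly independent rows of $M(\delta)$ (independence holds throughout $(0,\epsilon)$ by the minor argument), so that $M_J$ effectively has full row rank $r$ there; then $M^+ = M^\top(MM^\top)^{-1}$, $MM^\top$ is $r\times r$ with entries of degree $\le 2D$, its determinant has degree $\le 2rD \le 2nD$, and $M^\top \cdot \mathrm{adj}(MM^\top)\cdot b_J$ has entries of degree $\le D + 2(r-1)D + D \le 2nD$, while the LCM of denominators divides $\det(MM^\top)$, of degree $\le 2nD$. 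I would present this full-row-rank reduction as the technical heart, with the genericity/minor argument from the first paragraph ensuring all the constructions are valid simultaneously for one $\epsilon$ across all $J$.
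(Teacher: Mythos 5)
Your main route --- constant generic rank via minors, the limit formula $A_J^+=\lim_{y\to0}(A_J^\top A_J+yI)^{-1}A_J^\top$, and adjugate/determinant degree counting landing at $2nD$ --- is exactly the paper's. But the step you yourself flag as the obstacle, namely extracting a rational function of $\delta$ from the $y\to0$ limit when $\det(A_J^\top A_J)$ may vanish identically, is left unexecuted, and it is the heart of the proof. The paper's resolution: write each entry as $M_{j,i}/\det(A_J^\top A_J+yI)$, expand numerator and denominator as $y^{a_{ij}}(p_0^{(ij)}(\delta)+p_1^{(ij)}(\delta)y+\cdots)$ and $y^{a}(q_0(\delta)+q_1(\delta)y+\cdots)$ with polynomial-in-$\delta$ coefficients, observe that $a_{ij}\ge a$ because the pointwise limit (the pseudoinverse) exists wherever $q_0(\delta)\neq0$, and conclude the limit equals $\mathbb{I}\{a=a_{ij}\}\,p_0^{(ij)}(\delta)/q_0(\delta)$ on the punctured interval where $q_0\neq0$. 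This is what turns your ``one must argue that the cancellations occur'' into a proof, and it also hands you the single common denominator $q_0$ of degree $\le 2nD$.

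The concrete workaround you do spell out --- pass to a maximal independent subset $J'$ of rows so that $M^+=M^\top(MM^\top)^{-1}$ applies --- is not equivalent to the statement being proved: $A_J^+b_J\neq A_{J'}^+b_{J'}$ in general. For example, $A_J=(1,1)^\top$, $b_J=(1,0)^\top$ gives $A_J^+b_J=1/2$, while dropping the dependent (here, repeated) row gives $1$. Since the lemma quantifies over \emph{every} $J\subset[m]$, including rank-deficient ones, and since downstream (Lemma~\ref{lem:singleJ}) the algorithm genuinely uses $A_J^+b_J$ for arbitrary $J$, you cannot substitute the reduced system. Similarly, the full-rank factorization $M=PQ$ with both factors having polynomial entries of degree $\le D$ is not generally available (one factor inherits an inverted $r\times r$ minor and hence rational entries), so that variant does not rescue the degree bookkeeping either. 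The fix is simply to carry out the $y$-expansion argument above rather than reducing the matrix.
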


\begin{proof}

By Lemma~\ref{lem:pseudo-inverse-limit}, $A_J^+=A^+_J(\delta)$ has the following limit form: 
\begin{equation*}
    A^+_J = \lim_{y\rightarrow 0}(A_J^\top A_J+y I)^{-1}A^\top_J\,.
\end{equation*}
Now, $A_J$ is a $|J|\times n$ matrix, so $A_J^\top A_J+y I$ is an $n\times n$ matrix. Each entry of $A_J^\top A_J+y I$ is a polynomial of $\delta$ and $y$ with degree at most $2D$. From the analytic formula for the inverse of a matrix, it follows that entries of $(A_J^\top A_J+y I)^{-1}$ are rational functions with denominator of degree at most $2nD$ and numerator of degree at most $2(n-1)D$. Thus, $(A_J^\top A_J+y I)^{-1}A^\top _J$ is an $n\times |J|$ matrix and the entries are rational functions of $\delta$ and $y$ with denominator and numerator of degree at most $2nD$. 

It remains to take the limit $y\to 0$.
Taking entry $i,j$ of $(A_J^\top A_J+y I)^{-1}A^\top _J$, write it as
\begin{equation*}
    f(\delta,y)=\frac{(-1)^{i+j}M_{j,i}}{\det(A_J^\top A_J+y I) }
\end{equation*}
where $M_{j,i}$ stands for the minor of matrix $A_J^\top A_J+y I$ in the $j$th row and $i$th column.
Suppose  $$(-1)^{i+j}M_{j,i}=y^{a_{ij}}(p_0^{(ij)}(\delta)+p_1^{(ij)}(\delta)y+\cdots),\qquad \text{and}$$ $$\det(A_J^\top A_J+y I) = y^a(q_0(\delta)+q_1(\delta)y+\cdots)\,.$$ We have $a_{ij}\ge a$ since the limit always exists (by virtue of the pseudoinverse existing). Therefore,
\begin{equation}\label{eq:entry-of-pseudo-inverse}
    \lim_{y\rightarrow 0}f(\delta,y)=\frac{\mathbb{I}\{a=a_{ij}\}p_0^{(ij)}(\delta)}{q_0(\delta)}
\end{equation}
for $\delta$ such that $q_0(\delta)\not=0$. $q_0(\delta)$ is not the zero function, so there exists an interval $(0,\epsilon)$ such that for $\delta\in (0,\epsilon)$, $q_0(\delta)\not=0$. We get that on $(0,\epsilon)$, $A^+_J(\delta) b_J(\delta)$ is a vector of rational functions with degree at most $2nD$, given by $$\frac{\mathbb{I}\{a=a_{ij}\}p_0^{(ij)}(\delta)}{q_0(\delta)}\,.$$
$q_0$ is a common multiple for all denominators in the vector, so the LCM for denominators of all entries has degree bounded by $2nD$.
\end{proof}

\begin{lemma}\label{lem:singleJ}
        For a small enough $\epsilon$, there exists a fixed set $J\subset [m]$ such that that $x(\delta)=A^+_J(\delta) b_J(\delta)$ is a solution of \eqref{equ:lp} for all $\delta \in (0,\epsilon)$.
\end{lemma}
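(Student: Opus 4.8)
The plan is to combine Corollary~\ref{cor:vertex_sol}, Lemma~\ref{lem:algo-rational-solution}, and a root-separation argument to upgrade the ``$J$ may depend on $\delta$'' statement into a ``single $J$ works on a whole interval'' statement. First I would invoke Corollary~\ref{cor:vertex_sol}: there is an $\epsilon_0>0$ such that for every $\delta\in(0,\epsilon_0)$ some subset $J=J(\delta)\subseteq[m]$ has $A^+_{J(\delta)}(\delta)b_{J(\delta)}(\delta)$ feasible, i.e.\ it satisfies \eqref{equ:lp} at that $\delta$. Since there are only finitely many ($\le 2^m$) subsets $J$ of $[m]$, by pigeonhole at least one fixed subset $J$ is used on a set of $\delta$-values that has $0$ as an accumulation point — in particular, for every $\epsilon>0$ there is some $\delta\in(0,\epsilon)$ for which this particular $J$ yields a feasible point. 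Fix such a $J$.

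Next I would argue that this fixed $J$ in fact works on an entire interval $(0,\epsilon)$. By Lemma~\ref{lem:algo-rational-solution}, after shrinking $\epsilon_0$ we may assume $x_J(\delta)\defeq A^+_J(\delta)b_J(\delta)$ is a genuine vector of rational functions of $\delta$ on $(0,\epsilon_0)$, with numerators and the LCM of denominators of degree at most $2nD$. Consider the $m$ scalar functions $g_i(\delta)\defeq a_i(\delta)^\top x_J(\delta)-b_i(\delta)$; after clearing the common denominator $q_0(\delta)$ (which Lemma~\ref{lem:algo-rational-solution} shows is nonzero and of bounded degree on $(0,\epsilon_0)$, and which we may take to have a fixed sign there, say positive, by further shrinking $\epsilon_0$), each $q_0(\delta)g_i(\delta)$ is a polynomial in $\delta$ of degree at most $O(nD)$. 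A nonzero polynomial of degree $\le O(nD)$ has at most $O(nD)$ roots; let $\epsilon>0$ be smaller than $\epsilon_0$ and smaller than the least positive root of every $g_i$ that is not identically zero. On $(0,\epsilon)$ each $q_0 g_i$ has constant sign, hence each $g_i$ has constant sign. But by the previous paragraph there is at least one point in $(0,\epsilon)$ at which all $g_i(\delta)\ge 0$; by constancy of sign this forces $g_i(\delta)\ge 0$ on all of $(0,\epsilon)$. Therefore $x(\delta)=A^+_J(\delta)b_J(\delta)$ satisfies \eqref{equ:lp} for all $\delta\in(0,\epsilon)$, as claimed.

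The only subtlety — and the main thing to be careful about — is the sign-constancy step: I must make sure that clearing denominators is legitimate (no denominator vanishes on the interval, which Lemma~\ref{lem:algo-rational-solution} guarantees after shrinking $\epsilon$) and that the feasible-point-exists-in-every-left-neighborhood conclusion from pigeonhole is genuinely strong enough to pin down the sign. It is: constancy of $\mathrm{sign}(g_i)$ on $(0,\epsilon)$ together with nonnegativity at even a single interior point yields nonnegativity throughout. No quantitative root-separation bound (Lemma~\ref{lem:dis-between-roots}) is actually needed here — we only need that each nonzero $g_i$ has finitely many positive roots, so that a small enough $\epsilon$ avoids all of them; the quantitative bounds on $\epsilon$ in terms of $n,D,L$ are deferred to the algorithmic sections. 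I would close by noting that the degree bound ``$2nD$'' on numerators and on the LCM of denominators of $x(\delta)$ is inherited verbatim from Lemma~\ref{lem:algo-rational-solution}, which completes the proof of both the existence claim and the promised quantitative form.
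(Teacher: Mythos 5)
Your proof is correct and follows essentially the same route as the paper's: both rest on Corollary~\ref{cor:vertex_sol} plus the fact that, for each fixed $J$, the feasibility of $A^+_J(\delta)b_J(\delta)$ is governed by signs of rational functions of $\delta$ and hence changes only finitely often near $0^+$ (the paper packages this as Lemma~\ref{lem:number_of_change} and concludes that the set $S(\delta)$ of working subsets is eventually constant and non-empty). Your pigeonhole-then-sign-constancy reorganization is a valid, slightly more explicit version of the same argument.
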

\begin{proof}
Define a set of functions
\begin{equation*}
    g_J(\delta)\triangleq \begin{cases}
    1,&  \text{if } A^+_J(\delta)b_J(\delta)\text{ is a solution}\\
    0,& \text{otherwise}.\\
    \end{cases}
\end{equation*}
Now let $S(\delta)=\{J\subseteq [m]:g_J(\delta)=1\}$. From Corollary~\ref{cor:vertex_sol}, $S(\delta)$ is always non-empty for $\delta\in(0,\epsilon)$.  From Lemma~\ref{lem:number_of_change} below, $S(\delta)$ can change only finite number of times on $\mathbb{R}^+$. So we can pick $\epsilon$ small enough such that $S(\delta)$ is non-empty and remains constant on $(0,\epsilon)$. Pick one $J$ from $S(\delta)$, and $A^+_J(\delta)b_J(\delta)$ is the desired solution. 
\end{proof}

\begin{proof}[Proof of Theorem~\ref{thm:rational_function}.]
By Lemma~\ref{lem:singleJ}, we can choose a small enough $\epsilon_1$ so that there is a set $J\subset [m]$ for which $x(\delta)=A^+_J(\delta) b_J(\delta)$ is a solution for all $\delta \in (0,\epsilon_1)$. By Lemma~\ref{lem:algo-rational-solution}, there exists an $\epsilon_2$ such that $A^+_J(\delta) b_J(\delta)$ is a rational function for $\delta\in (0,\epsilon_2)$.
Now let $\epsilon'$ be the smaller of $\epsilon_1,\epsilon_2$.
\end{proof}

The following lemma was used in the proof of Lemma~\ref{lem:singleJ} above.
\begin{lemma}\label{lem:number_of_change}
    For any $J\subset [m]$, $g_J(\delta)$ has a finite number of value changes as $\delta$ varies over $\mathbb{R}$.
\end{lemma}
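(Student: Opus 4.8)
The plan is to show that the set $G_J \defeq \{\delta\in\mathbb{R}: g_J(\delta)=1\}$ has a finite boundary in $\mathbb{R}$, which immediately bounds the number of value changes of $g_J$. The crucial ingredient is a \emph{global} version of Lemma~\ref{lem:algo-rational-solution}: there is a finite exceptional set $E\subset\mathbb{R}$ such that on $\mathbb{R}\setminus E$ the vector $A^+_J(\delta)b_J(\delta)$ coincides with a single fixed rational function $R_J(\delta)$. To get this, I would rerun the determinant computation from the proof of Lemma~\ref{lem:algo-rational-solution} without localizing near $0$: write $\det\!\big(A_J^\top(\delta)A_J(\delta)+yI\big)=\sum_k c_k(\delta)\,y^k$ and expand each cofactor $(-1)^{i+j}M_{j,i}$ of $A_J^\top A_J+yI$ as $\sum_k p_k^{(ij)}(\delta)\,y^k$, where all $c_k$ and $p_k^{(ij)}$ are polynomials in $\delta$. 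Let $a_{\min}$ be the least $k$ with $c_k\not\equiv 0$, and for each entry let $a^{(ij)}_{\min}$ be the least $k$ with $p_k^{(ij)}\not\equiv 0$. Take $E$ to be the union of the (finite) real zero sets of $c_{a_{\min}}$ and of $p^{(ij)}_{a_{\min}}$ for those $(i,j)$ with $a^{(ij)}_{\min}=a_{\min}$. For $\delta\notin E$ the $y$-order of the determinant equals $a_{\min}$, the indicator $\mathbb I\{a=a_{ij}\}$ in \eqref{eq:entry-of-pseudo-inverse} is constant in $\delta$, and hence by \eqref{eq:entry-of-pseudo-inverse} every entry of $A^+_J(\delta)b_J(\delta)$ equals a fixed rational function with denominator a power of $c_{a_{\min}}(\delta)$; assembling these gives $R_J(\delta)$.

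Granting this, the remaining steps are elementary. By definition $g_J(\delta)=1$ iff $a_i(\delta)^\top\big(A^+_J(\delta)b_J(\delta)\big)-b_i(\delta)\ge 0$ for every row $i\in[m]$ of $A(\delta)$. On $\mathbb{R}\setminus E$ this reads $\rho_i(\delta)\ge 0$ for all $i$, where $\rho_i(\delta)\defeq a_i(\delta)^\top R_J(\delta)-b_i(\delta)$ is a fixed rational function of $\delta$. A rational function changes sign only at the finitely many roots of its numerator or denominator, so each $\mathbb I\{\rho_i(\delta)\ge 0\}$ changes value only finitely often on $\mathbb{R}\setminus E$ (it is identically $1$ there if $\rho_i\equiv 0$). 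Since $g_J(\delta)=\prod_{i=1}^m \mathbb I\{\rho_i(\delta)\ge 0\}$ on $\mathbb{R}\setminus E$, the number of changes of $g_J$ on $\mathbb{R}\setminus E$ is at most $\sum_{i} (\text{number of real roots of the numerator and denominator of }\rho_i)<\infty$; adjoining the finite set $E$ contributes at most $2|E|$ further changes. Hence $g_J$ changes value only finitely often on $\mathbb{R}$.

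I expect the one genuinely delicate step to be the globalization in the first paragraph, namely arguing that the piecewise-rational structure of the Moore–Penrose inverse $\delta\mapsto A^+_J(\delta)$ has only finitely many ``pieces'': the orders of vanishing $a(\delta)$ and $a_{ij}(\delta)$ can jump, but only when certain fixed nonzero polynomials ($c_{a_{\min}}$, $p^{(ij)}_{a_{\min}}$) vanish, so the jump locus is finite. Everything downstream (sign changes of rational functions, finite intersections) is routine. As an alternative to this hands-on argument one could instead observe that $G_J$ is a semialgebraic subset of $\mathbb{R}$: ``$g_J(\delta)=1$'' is equivalent to the first-order formula $\exists W:\ [W\text{ satisfies the four polynomial identities of Definition~\ref{def:pseudo-inverse} with }A=A_J]\ \wedge\ A(\delta)\,W\,b_J(\delta)\ge b(\delta)$ (the witness $W=A^+_J(\delta)$ exists and is unique by Lemma~\ref{lem:pseudo-inverse-limit}), so by Tarski–Seidenberg / semialgebraic cell decomposition $G_J$ is a finite union of points and open intervals. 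I would present the elementary route as the main proof since it is self-contained and reuses the machinery already developed for Lemma~\ref{lem:algo-rational-solution}.
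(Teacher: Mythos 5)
Your proposal is correct and follows essentially the same route as the paper: both reduce to the observation that, via the limit formula \eqref{eq:entry-of-pseudo-inverse}, each constraint function $a_iA_J^+b_J-b_i$ is a rational function of $\delta$ away from a finite exceptional set, hence changes sign only finitely often, and $g_J$ can only change value when one of these signs changes. The paper states the ``rational except at finitely many points'' step without elaboration, whereas you correctly identify and fill in the one delicate point (that the $y$-order of the determinant and cofactors can jump only on the finite zero set of fixed nonzero polynomials); your semialgebraic alternative is also valid but unnecessary.
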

\begin{proof}
Note that $g_J(\delta)=1$ is equivalent to satisfying for each $i\in [m]$ the inequality 
$$a_iA^+_Jb_J-b_i\ge 0\,.$$
By Equation~\ref{eq:entry-of-pseudo-inverse}, $a_iA^+_Jb_J-b_i$ is a rational function of $\delta$ except at a finite number of points, so its sign can only change a finite number of times. When $g_J$ changes value, the sign of $a_iA^+_Jb_J-b_i$ changes for at least one $i$, which means $g_J$ has a finite number of value changes.
\end{proof}

One of the consequences of this lemma is that a PLP must be either feasible or infeasible on the positive neighborhood of 0.
\begin{corollary}\label{cor:feasible-infeasible-on-interval}
    Given a 1-PLP, we can divide $\mathbb{R}$ into a finite number of intervals such that the PLP is always feasible or infeasible when $\delta$ is within an interval.
Therefore, there exists an $\epsilon>0$ such that 
    the PLP is either feasible or infeasible on $(0,\epsilon)$. 
\end{corollary}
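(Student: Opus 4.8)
The plan is to combine Lemma~\ref{lem:number_of_change} with the characterization of feasibility via pseudoinverses of subsets of constraints that already underlies Corollary~\ref{cor:vertex_sol}. First I would record the pointwise statement: for any fixed value of $\delta$, the ordinary LP $A(\delta)x \ge b(\delta)$ is feasible if and only if there exists some $J \subseteq [m]$ with $g_J(\delta) = 1$. The ``if'' direction is immediate, since $g_J(\delta)=1$ means $A^+_J(\delta)b_J(\delta)$ is already a solution. The ``only if'' direction follows from Lemma~\ref{lem:equality_solution} applied at this fixed $\delta$ (producing a subset $J$ with $\{x : A_J(\delta) x = b_J(\delta)\}$ contained in the nonempty solution set, hence itself nonempty) together with Lemma~\ref{lem:pseudo_inverse} (which then guarantees $A^+_J(\delta) b_J(\delta)$ lies in that set). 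Consequently the feasibility indicator of the PLP at $\delta$ equals $\max_{J \subseteq [m]} g_J(\delta)$.

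Next I would observe that $J$ ranges over the finite collection of subsets of $[m]$, of which there are $2^m$, and that by Lemma~\ref{lem:number_of_change} each $g_J$ changes value only finitely many times as $\delta$ varies over $\mathbb{R}$. A pointwise maximum (equivalently, logical OR) of finitely many $\{0,1\}$-valued functions, each with finitely many value changes, again has only finitely many value changes: the union over all $J$ of the finite breakpoint sets of the $g_J$ is finite, and between consecutive breakpoints every $g_J$ — and therefore the feasibility indicator $\max_J g_J$ — is constant. This yields the claimed partition of $\mathbb{R}$ into finitely many intervals on each of which the PLP is uniformly feasible or uniformly infeasible.

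For the ``therefore'' clause I would simply take $\epsilon$ to be the smallest positive breakpoint among the $g_J$ (or $\epsilon = +\infty$, say $\epsilon=1$, if there is none to the right of $0$); on $(0,\epsilon)$ the feasibility indicator is constant, so the PLP is either feasible throughout $(0,\epsilon)$ or infeasible throughout $(0,\epsilon)$.

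I do not expect a genuine obstacle here — the argument is essentially bookkeeping on top of lemmas already established. The only point deserving care is the reduction, at a single fixed $\delta$, of LP feasibility to the existence of a ``basic'' subset $J$ with $g_J(\delta)=1$; but this is exactly the content of Lemmas~\ref{lem:equality_solution} and~\ref{lem:pseudo_inverse}, which are statements about ordinary LPs and so apply verbatim at each fixed parameter value. One should also keep in mind that, although each $g_J$ has finitely many breakpoints, their number need not be bounded uniformly in the input size; finiteness, however, is all the statement requires.
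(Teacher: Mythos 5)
Your proposal is correct and follows essentially the same route as the paper: both rest on Lemma~\ref{lem:number_of_change} to get finitely many breakpoints for each $g_J$, intersect over the finitely many subsets $J$, and use Lemma~\ref{lem:equality_solution} (with Lemma~\ref{lem:pseudo_inverse}) to identify feasibility with $\max_J g_J$. Your write-up merely spells out the pointwise ``feasible iff some $g_J(\delta)=1$'' equivalence that the paper leaves implicit.
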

\begin{proof}
By Lemma~\ref{lem:number_of_change}, for each $J\in [m]$, $g_J(\delta)$ has a finite number of value changes on $\mathbb{R}$. So we can divide $\mathbb{R}$ into a finite number of intervals such that on each interval $g_J(\delta)$ is constant for all $J$. Then by Lemma~\ref{lem:equality_solution}, whether the PLP is feasible is also fixed. 
\end{proof}

\begin{proof}[Proof of Lemma~\ref{lem:testing-small-delta}]

As discussed in Lemma~\ref{lem:singleJ} and Lemma~\ref{lem:number_of_change}, if the sign of 
\[g_{J,i}(\delta) = a_iA_J^+b_J-b_i\]
does not change on $(0,2\delta_0)$ for any $J\subset [m]$, $|J|\le n$ and $i\in [m]$, then the feasibility of the PLP remains the same on the interval. 

We can rewrite the function as 
\[\lim_{y\rightarrow 0}a_i(A_J^\top A_J+yI)^{-1}A_J^\top b_J-b_i.\]
In the limit we have a rational function of $y$ where coefficients are polynomials of $\delta$. If the sign of all such polynomials remains constant, the sign of $g_{J,i}$ remains constant. Each polynomial has rational coefficients that has number bounded by $n! \cdot (2D 2^L)^{2n}\le 2^{4n(L+\log Dn)}$.
By Cauchy's bound, the minimum root of such polynomial is at least $2^{-8n(L+\log Dn)}$. We choose this value as $2\delta_0$. So the feasibility of the PLP remains constant for $(0,2\delta_0)$
\end{proof}

\begin{remark}
The number of bits required for $\delta_0$ in Lemma~\ref{lem:testing-small-delta} is tight up to logarithmic factors. Consider the following 1-PLP,
\begin{align*}
    & 0\le x_1\le 2^{-L}\\
    & 0\le x_i\le 2^{-L}x_{i-1},\forall 2\le i\le n\\
    & x_n\ge \delta\,.
\end{align*}
The PLP is feasible if and only if $0\le \delta\le 2^{-nL}$. So $\delta_0$ chosen in Lemma~\ref{lem:testing-small-delta} is at most $2^{-nL}$.
\end{remark}


\subsection{Polynomial Solution and Reformulation of Feasibility Problem}\label{sec:polynomial-function}
In this subsection we show that when the 1-PLP is locally feasible, it has a polynomial solution $x(\delta)$. We then use this to reformulate the feasibility problem in terms of a condition on the coefficients of the polynomials. 

\paragraph{Polynomial Solution}
Starting from the conclusion of Theorem~\ref{thm:rational_function}, let $ x  (\delta)$ be the rational solution of (\ref{equ:lp}). Suppose the LCM of all denominators of $ x  _i(\delta)$ is $q(\delta)$ and $p(\delta)= q(\delta)\cdot x(\delta) $. The degree of $q$ is bounded by $2nD$. 


Assume $q(\delta)=\delta^c(1+\delta q_1(\delta))$ where $0\le c\le 2nD$. We can rewrite (\ref{equ:lp}) in the form
\begin{equation}\label{equ:poly_LP}
    A(\delta)\cdot p(\delta)-\delta^{c+1}\cdot q_1(\delta)\cdot b(\delta)-\delta^c\cdot b(\delta)\ge 0
    \,,
\end{equation}
 By Theorem~\ref{thm:rational_function}, it is a vector of polynomials of degree bounded by $2nD$. Now we have a new instance of the problem which is guaranteed to have a polynomial solution (for at least one $c$). So we can solve the problem in the space of the coefficients of $p(\delta)$ and $q_1(\delta)$. The total number of variables then becomes $O(n^2D)$.

\paragraph{Reformulation of Local Feasibility}
We will solve the local feasibility of Equation~\ref{equ:poly_LP} for all possible $c$ in $0\le c\le 2nD$, the original problem would be feasible if and only if one of the instances is feasible. By letting 
$$x'(\delta)=\big(p(\delta),q_1(\delta)\big)^\top$$
and
$$A'(\delta)=
\begin{pmatrix}
A(\delta) & 0\\
0& -\delta^{c+1}b(\delta)\\
\end{pmatrix},$$
Equation~\ref{equ:poly_LP} can be simplified to 
\begin{equation}\label{eq:poly_LP-simplified}
    A'(\delta)x'(\delta)- \delta^c\cdot b(\delta)\ge 0
\end{equation}
Let 
$$x'_i(\delta)=\sum_{j=0}^{2nD}h_{ij}\delta^j,$$
the goal is to decide feasibility in the possible space of $(h_{ij})_{ij}$.

Let $f_i(\delta)=A'_i(\delta)\cdot x'(\delta)-\delta^cb_i(\delta)$, and $[f_i]_j$ be the coefficient of $\delta^j$ in $f_i$. $[f_i]_j$ is a linear function of $(h_{kl})_{kl}$. Then $f_i\ge 0$ around $\delta=0$ is equivalent to that the first non-zero coefficient of $f_i$ is larger than 0 at 0 or $f_i=0$. This translates to the OR of some linear constraints over $(h_{kl})_{kl}$. Thus, considering all the constraints, we get that the PLP is locally feasible if and only if 
\begin{align}\label{eq:and-of-or}
\begin{split}
   \text{For every } i\in [m]:\qquad &[f_i]_0>0\\
    &\text{ OR }[f_i]_0=0, [f_i]_1>0\\
    &\text{ OR }[f_i]_0=0, [f_i]_1=0, [f_i]_2>0 \\
    &\;\;\;\;\vdots\\
    & \text{ OR }[f_i]_0=0, [f_i]_1=0,\cdots, [f_i]_{d-1}=0, [f_i]_d'\ge 0,
\end{split}
\end{align}
where $D'$ is the total degree of $f_i(\delta)$, $D'\le 2nD$.

The algorithmic task is reduced to the following: For all $c$ between $0$ and $2nD$, check whether there exists $(h_{kl})_{kl}$ satisfying \eqref{eq:and-of-or}.





\subsection{Subspace Elimination}\label{sec:subspace-elimination}
In this section, we will reduce the problem in Section~\ref{sec:polynomial-function} to a new problem called subspace elimination and solve the problem in polynomial time.

\subsubsection{Definition of Subspace Elimination}

\begin{definition}\label{def:subspace-elimination}
    In the \emph{subspace elimination} problem, the input is a collection $B_1,\dots, B_m$ of non-empty open half-spaces of subspaces in $\mathbb{R}^n$, where each $B_i$ is defined by the following equalities and inequality:
    \begin{align*}
        a_{i,1} x &= b_{i,1} \\
        a_{i,2} x &= b_{i,2} \\
        &\;\;\vdots\\
        a_{i,k_i}x&=b_{i,k_i}\\
        c_ix&<d_i
        \,.
    \end{align*}
    The goal is to decide whether or not
    $$\mathbb{R}^n\backslash \left(\bigcup_{i=1}^m B_i\right)$$
    is empty. 
\end{definition}

Note that this definition is a generalization of LP feasibility. It is also a generalization of the problem we have in Section~\ref{sec:polynomial-function}. 
\begin{lemma}
    The feasibility of \eqref{eq:and-of-or} is a special case of the subspace elimination problem.
\end{lemma}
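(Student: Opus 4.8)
The plan is to show that the feasibility condition \eqref{eq:and-of-or} can be rewritten, after an explicit change of variables and grouping, as a statement that a certain collection of half-affine-subspaces covers all of $\mathbb{R}^N$, which is precisely the complement of the subspace elimination question. First I would recall that in \eqref{eq:and-of-or} the unknowns are the coefficients $(h_{kl})_{kl}$, which I regard as a single vector $h \in \mathbb{R}^N$ with $N = O(n^2 D)$, and that each $[f_i]_j$ is an explicit \emph{linear} functional of $h$ (this was established in Section~\ref{sec:polynomial-function}). Thus for a fixed index $i$, the $i$-th clause of \eqref{eq:and-of-or} is a finite disjunction over $j \in \{0,1,\dots,D'\}$ of conditions of the form ``$[f_i]_0 = \dots = [f_i]_{j-1} = 0$ and $[f_i]_j > 0$'' (and one final closed clause ``$[f_i]_0 = \dots = [f_i]_{D'-1} = 0$ and $[f_i]_{D'} \ge 0$''). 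Each such conjunct is an intersection of several hyperplanes (the equalities) with one open half-space (the strict inequality), i.e.\ exactly a set $B$ of the type appearing in Definition~\ref{def:subspace-elimination}.

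The key maneuver is negation. The PLP is locally feasible iff there \emph{exists} $h$ satisfying the conjunction over all $i$ of the disjunctions. Equivalently, there exists $h$ such that for every $i$, $h$ lies in the union $\bigcup_j B_{i,j}$ of the half-affine-subspaces coming from clause $i$. To phrase this as a single union covering $\mathbb{R}^N$ being false, I would consider the negation: the PLP is locally \emph{infeasible} on $(0,\epsilon)$ iff for every $h$ there is some $i$ with $h \notin \bigcup_j B_{i,j}$. This is not yet literally a ``cover $\mathbb{R}^N$'' statement because of the outer conjunction over $i$. The cleanest route is instead to observe that the subspace elimination problem, as stated, \emph{decides emptiness of $\mathbb{R}^n \setminus \bigcup B_i$}, and \eqref{eq:and-of-or} asks whether $\bigcap_i \big(\bigcup_j B_{i,j}\big)$ is nonempty; these are genuinely different combinatorially. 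So the reduction I actually want is: run subspace elimination not on the $B_{i,j}$ directly but recognize that asking ``is $\bigcap_i U_i \neq \emptyset$'' where each $U_i$ is a union of half-affine-subspaces is itself the format handled in Section~\ref{sec:subspace-elimination}'s algorithm (the algorithm there iteratively intersects with the current feasible set). I would therefore present the lemma as: \eqref{eq:and-of-or} is exactly an instance where we must decide nonemptiness of an intersection of unions of half-affine-subspaces, and a single such union-of-half-affine-subspaces with its complement is the subspace elimination primitive; the general instance reduces to repeated calls. Concretely, I would identify the $B_i$ in the subspace-elimination instance with the individual half-affine pieces $B_{i,j}$ and note the problem \eqref{eq:and-of-or} is solved by the subspace elimination machinery applied clause-by-clause.

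The main obstacle, and the point I expect to need care on, is the mismatch between ``intersection of unions'' (what \eqref{eq:and-of-or} is) and ``is the complement of a single union empty'' (what Definition~\ref{def:subspace-elimination} literally states). The honest statement is that \eqref{eq:and-of-or} is solved by the subspace elimination \emph{algorithm}, which processes a list of half-affine-subspaces and maintains a current affine feasible set; to handle the outer conjunction one feeds in, for each $i$, the requirement that the current set intersect $U_i = \bigcup_j B_{i,j}$ nontrivially. I would make this precise by stating that one reduces to subspace elimination by taking the $B_i$'s of the definition to be exactly the half-affine pieces of a \emph{single} clause (so emptiness of the complement says that clause $i$ is trivially satisfied), and then noting the full problem is the conjunction over $i$ of such instances relative to a shrinking ambient space — which is exactly how Section~\ref{sec:algo-dimension-1} organizes the algorithm. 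In the written proof I would keep the claim modest: each constraint of \eqref{eq:and-of-or}, i.e.\ each disjunction indexed by a fixed $i$, is literally a union of sets $B_{i,j}$ of the form in Definition~\ref{def:subspace-elimination}, and deciding whether the overall system is satisfiable is therefore an instance of (repeated) subspace elimination, with the number of sets $B_{i,j}$ bounded by $\sum_i (D'+1) = O(mnD)$ and the ambient dimension $N = O(n^2D)$, both polynomial in the input size.
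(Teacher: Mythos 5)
There is a genuine gap here, and it is precisely at the point you flag as the ``main obstacle'' and then work around rather than resolve. You are right that \eqref{eq:and-of-or} asks for nonemptiness of $\bigcap_i \bigl(\bigcup_j B_{i,j}\bigr)$ while Definition~\ref{def:subspace-elimination} asks whether the complement of a \emph{single} union is empty, and that these are different in general. But the paper's proof closes this gap with one observation you miss: because of the chain structure of each clause (the equalities in the $j$-th disjunct are exactly the equality versions of the strict inequalities in disjuncts $0,\dots,j-1$), the \emph{negation} of the $i$-th clause is again a union of half-affine-subspaces of the required form, namely
\[
\{[f_i]_0<0\}\ \cup\ \{[f_i]_0=0,\ [f_i]_1<0\}\ \cup\ \cdots\ \cup\ \{[f_i]_0=\cdots=[f_i]_{D'-1}=0,\ [f_i]_{D'}<0\},
\]
i.e.\ ``the first nonzero coefficient is negative.'' This is exactly \eqref{eq:or-of-or}. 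Hence $\exists h\,\bigwedge_i C_i(h)$ is equivalent to $\exists h\, \neg\bigvee_i \neg C_i(h)$, which is the statement that $\mathbb{R}^N\setminus\bigcup_{i,j} B'_{i,j}$ is nonempty with $B'_{i,j}$ the pieces of the negated clauses --- a single instance of subspace elimination, with no intersection of unions left over.

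Your proposed repair --- feeding the pieces $B_{i,j}$ of the clauses themselves (not their negations) into the algorithm and invoking ``repeated subspace elimination relative to a shrinking ambient space'' --- does not prove the lemma as stated, which asserts that \eqref{eq:and-of-or} \emph{is a special case} of the problem in Definition~\ref{def:subspace-elimination}, not merely that it can be attacked with related machinery. It also misdescribes the algorithm: Algorithm~\ref{alg:subspace_elimination} processes one flat list of half-affine-subspaces and decides emptiness of the complement of their union; it is not organized clause-by-clause. The fix is short: state the negation of each clause explicitly, check (as the paper does) that it has the same equalities-plus-one-strict-inequality form, and conclude that feasibility of \eqref{eq:and-of-or} is nonemptiness of $\mathbb{R}^N$ minus one union of $O(mnD)$ such sets.
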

\begin{proof}
To see this, let us take the negation of \eqref{eq:and-of-or}, which becomes
\begin{align}\label{eq:or-of-or}
\begin{split}
    \exists i,\ &[f_i]_0<0\\
    &\text{ OR }[f_i]_0=0, [f_i]_1<0\\
    &\text{ OR }[f_i]_0=0, [f_i]_1=0, [f_i]_2<0 \\
    &\;\;\;\;\vdots\\
    & \text{ OR }[f_i]_0=0, [f_i]_1=0,\cdots, [f_i]_{d-1}=0, [f_i]_d'< 0
    \,.
\end{split}
\end{align}
One can check that this is indeed the negation of \eqref{eq:and-of-or} by taking the union of corresponding terms. Then the negation of the AND of \eqref{eq:and-of-or} for all $i$ becomes the OR of \eqref{eq:or-of-or} for all $i$, which coincides with the definition of subspace elimination.
\end{proof}

Therefore, it is sufficient to find a polynomial-time algorithm for subspace elimination.

\subsubsection{Algorithm for Subspace Elimination}
\begin{theorem}\label{thm:sub-elimination}
The subspace elimination problem defined in Definition~\ref{def:subspace-elimination} can be solved by Algorithm~\ref{alg:subspace_elimination} with running time $O(m^2K\cdot \mathsf{LP}(n,mK,L))$, where $K=\max_ik_i$ is the maximum number of equalities among all subspaces, $\mathsf{LP}(a,b,c)$ stands for the running time of ordinary linear programming with $n$ variables, $m$ constraints and $L$-bit numbers
\end{theorem}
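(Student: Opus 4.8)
The plan is to solve subspace elimination by a greedy peeling argument on dimension. The core geometric fact is that if $\mathbb{R}^n$ is not covered by $\bigcup_i B_i$, then the uncovered set $\mathbb{R}^n \setminus \bigcup_i B_i$ is a nonempty closed set (each $B_i$ is relatively open in an affine subspace, but we should be careful: the complement of a single $B_i$ is not convex). The key observation I would exploit is that whether a point is covered is governed entirely by the affine subspaces $H_i := \{x : a_{i,j}x = b_{i,j}, j \le k_i\}$ together with the open halfspace $c_i x < d_i$, and only those $B_i$ whose supporting subspace $H_i$ has full intersection with the current candidate affine region can ``cut down'' its dimension.

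First I would reformulate: maintain an affine subspace $V$ (initialized to $\mathbb{R}^n$) that is intended to be a subset of the uncovered region except possibly for a lower-dimensional ``bad'' part, and iteratively refine it. At each step, look for an index $i$ not yet processed such that $H_i \supseteq V$ (equivalently $V \cap H_i = V$, checkable by linear algebra). If such $i$ exists, then every point of $V$ satisfying $c_i x < d_i$ is covered by $B_i$, so the uncovered part of $V$ lies in the closed halfspace $c_i x \ge d_i$ within $V$; if $c_i x < d_i$ holds on an open subset of $V$ (it does unless $c_i$ is constant on $V$ with value $\ge d_i$, i.e. $B_i$ contributes nothing), the uncovered set within $V$ is confined to the hyperplane slice $\{x \in V : c_i x = d_i\}$ plus the region $c_i x > d_i$ — so we cannot just drop dimension. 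The right move is instead: the full subspace $V$ is covered by $B_i$ together with the halfspace $\{x\in V: c_i x \ge d_i\}$; so $\mathbb{R}^n$ is covered iff this halfspace is covered by the remaining $B_j$'s. I would then recurse on the halfspace, but a halfspace is handled by ordinary LP feasibility together with the lower-dimensional boundary hyperplane. Concretely, I expect the algorithm to branch into checking coverage of (a) the open halfspace $c_i x > d_i$ within $V$, reducible to an LP against the remaining non-$V$-containing subspaces, and (b) the hyperplane $\{x \in V : c_i x = d_i\}$, which has dimension one less, so recursion terminates in at most $n$ rounds; accounting carefully gives the stated $O(m^2 K)$ calls to $\mathsf{LP}(n, mK, L)$.

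The key subroutine to make precise is: given a current affine subspace $V$, either (i) find $i$ with $H_i \supseteq V$ and $B_i$ nontrivially covering part of $V$, then branch as above, or (ii) conclude that no unprocessed $B_i$ contains $V$ in its supporting subspace, in which case each remaining $B_i$ meets $V$ in a proper (lower-dimensional) affine subspace, so $\bigcup_i B_i$ covers only a measure-zero subset of $V$ and hence $V \setminus \bigcup_i B_i \neq \emptyset$ — certifying the answer ``not empty''. Finding $i$ with $H_i \supseteq V$ is pure linear algebra (check each $a_{i,j}$ lies in the row space defining $V$ and $b_{i,j}$ is consistent); testing whether $c_i$ is nonconstant on $V$ or whether the slice is nonempty is again linear algebra or a trivial LP. The LP calls enter only when we reduce ``is this halfspace-with-equality-constraints region covered'' to a feasibility question — here I would show that the uncovered region, being an intersection of $V$ with finitely many closed halfspaces and the complements, can be probed by asking whether a specific polytope (the region forced to be uncovered) is nonempty.

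The main obstacle I anticipate is the non-convexity of $\mathbb{R}^n \setminus \bigcup B_i$ and making the recursion/branching bookkeeping tight enough to get a polynomial bound rather than an exponential blowup from branching into two sub-problems at each of up to $m$ steps. The resolution should be that one of the two branches (the open halfspace side) does not reduce dimension but \emph{does} remove the subspace $B_i$ from further consideration and can only be entered via subspaces containing the current $V$, while the other branch (the hyperplane slice) strictly reduces dimension; a careful potential argument — tracking the pair (dimension of $V$, number of unprocessed constraints whose supporting subspace contains $V$) — should bound the total number of nodes in the recursion tree by $O(m^2 K)$, each node costing one linear-algebra computation plus one call to $\mathsf{LP}(n, mK, L)$. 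I would double-check the edge cases where some $B_i$ is empty as stated (the definition says non-empty, so this is ruled out) and where $d_i$ or the $b_{i,j}$ force immediate contradictions, and verify that the bit-complexity $L$ is preserved through the pseudoinverse/row-reduction steps up to polynomial factors, which is standard.
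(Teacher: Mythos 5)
Your core geometric lemma is the same as the paper's: if no remaining $B_i$ satisfies $\aff(B_i)\supseteq \aff(P)$ for the current region $P$, then each remaining $B_i$ meets $P$ in a set of strictly smaller dimension, so finitely many of them cannot cover $P$ and the answer is ``not empty.'' The gap is in how you process an index $i$ with $H_i\supseteq V$. You split the uncovered part of $V$ into two sub-problems --- the open halfspace $\{x\in V: c_ix>d_i\}$ and the hyperplane slice $\{x\in V: c_ix=d_i\}$ --- and this is where the argument breaks. First, the open-halfspace branch is \emph{not} ``reducible to an LP against the remaining non-$V$-containing subspaces'': its affine hull is still $V$, so every other $B_j$ with $H_j\supseteq V$ can cover a full-dimensional piece of it (take $B_1=\{x<0\}$ and $B_2=\{x>-1\}$ in $\mathbb{R}^1$), so it must be recursed on just like the slice. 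Second, once both branches genuinely recurse, each processed index spawns a two-way branch; a root-to-leaf path can contain up to $n$ dimension-dropping edges interleaved arbitrarily with dimension-preserving ones, so the recursion tree can have on the order of $\sum_{t\le n}\binom{m}{t}$ leaves, which is superpolynomial. The ``careful potential argument'' you invoke to recover the $O(m^2K)$ bound is not supplied, and for the branching scheme as described it does not exist.

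The paper avoids the branching entirely by maintaining the uncovered region $P$ as a single convex polyhedron (a list of linear constraints) rather than an affine subspace: when $P\subseteq \aff(B_i)$, removing $B_i$ from $P$ is exactly adding the one closed constraint $c_ix\ge d_i$, i.e.\ your two branches are kept together as their union $P\cap\{x: c_ix\ge d_i\}$, which is still a polyhedron, so the process stays a single chain of at most $m$ updates. The containment test $P\subseteq \aff(B_i)$ is performed by maximizing and minimizing each $a_{i,j}x$ over $P$ ($O(K)$ LPs with at most $mK$ constraints each), there are $O(m^2)$ such tests overall, and at termination the dimension-counting lemma you already have certifies the answer. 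If you replace your affine subspace $V$ by this polyhedron $P$ and drop the case split, your argument becomes the paper's proof.
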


Let $\aff(A)$ denote the affine hull of a set $A$. The algorithm for subspace elimination is Algorithm~\ref{alg:subspace_elimination}.
\begin{algorithm}
\caption{Subspace Elimination}\label{alg:subspace_elimination}
\begin{algorithmic}[1]
\State \textbf{input:} $\mathcal{B}=\{B_i:i\in [m]\}$
\State $P\leftarrow \mathbb{R}^n$, $\mathcal{A}\leftarrow \mathcal{B}$
\While{$P\not=\emptyset$ and $\exists B\in \mathcal{A},P\subseteq \aff(B)$}
    \State Find $B\in \mathcal{A}$ such that $P\subseteq \aff(B)$
    \State $P\leftarrow P\backslash B$
    \State $\mathcal{A}\leftarrow \mathcal{A}\backslash \{B\}$
\EndWhile
\If{$P=\emptyset$}
\State \textbf{return} False
\Else 
\State \textbf{return} True
\EndIf
\end{algorithmic}
\end{algorithm}

Let us explain some details of the algorithm.  During the process, $P$ is always represented by a linear program, i.e., a set of linear constraints.  First, we start with  $P$ having no constraints. Whenever we change $P$ in line 5, the chosen $B_i$ in line 3 satisfies $P\subseteq \aff(B_i)$. So $P\backslash B_i$ is simply $P$ with a new constraint $c_ix\ge d_i$.  
To check whether $P\subset \aff(B_i)$ in line 3, notice that 
$\aff(B_i)$ is the set of all $x$ satisfying 
\begin{align*}
    a_{i,1} x &= b_{i,1} \\
    a_{i,2} x &= b_{i,2} \\
    &\;\;\vdots\\
    a_{i,k_i}x&=b_{i,k_i}.
\end{align*} 
We can check whether $P\subset \aff(B_i)$ by solving whether $$\max_{x\in P}a_{i,j} x=\min_{x\in P} a_{i,j} x = b_{i,j}$$ for all $1\le j\le k_i$.
The operation of deciding whether $P$ is empty is given by solving a normal LP.

\begin{proof}[Proof of Theroem~\ref{thm:sub-elimination}]
To prove the correctness of the algorithm, We need to show that $P\backslash \left(\bigcup_{D\in \mathcal{A}}D\right)\not=\emptyset$ if $P\not\subseteq \aff(D)$ for any $D\in \mathcal{A}$. This means that $\aff(P)\not\subseteq \aff(D)$. Observe that 
$$\text{dim}(P\cap D)\le \text{dim}(\aff(P)\cap \aff(D))<\text{dim}(\aff(P))=\text{dim}(P)$$ if $\aff(P)\not\subseteq \aff(D).$ So
$$P\backslash \left(\bigcup_{D\in \mathcal{A}}D\right) = P\backslash \left(\bigcup_{D\in \mathcal{A}}(P\cap D)\right)$$
cannot be empty.

Let us analyze the running time for subspace elimination. 
In the worst case, Algorithm~\ref{alg:subspace_elimination} checks whether $P\subset B_i$ for some $i$ for $O(m^2)$ times. During each check, it uses $O(K)$ times of LP with $n$ variables and at most $mK$ constraints. Therefore, the total running time is $O(m^2K\cdot \mathsf{LP}(n,mK,L))$ where $\mathsf{LP}(a,b,L)$ stands for the running time of linear programming with $a$ variables, $b$ constraints and input numbers bounded by $L$ bits.\end{proof}

\subsection{Improvement in Running Time}\label{s:runTime}

Let us derive the running time for local feasibility of PLP using the time bound from the last section. Suppose the input has $n$ variables, $m$ constraints with degree bounded by $D$, and the input numbers are bounded by $L$ bits. In this section, the parameters are for the input PLP, not the input of subspace elimination, which is different from Section~\ref{sec:subspace-elimination}. The input dimension for subspace elimination is $O(n^2D)$, the number of subspaces is $O(mnD)$, and $K$ is $O(nD)$. We need to call subspace elimination at most $2nD$ times for different choices of $c$ in Section~\ref{sec:polynomial-function}. So the final running time is $O\big(n^4m^2D^4\cdot \mathsf{LP}(O(n^2D),O(n^2mD^2),L)\big)$. (Recall that $\mathsf{LP}(a,b,L)$ stands for the running time of linear programming with $a$ variables, $b$ constraints and input numbers bounded by $L$ bits.)

However, local feasibility of PLP satisfies special structure that allows us to further reduce the running time. The key observation is that for the subspaces that belong to the same constraint, i.e., subspaces in \eqref{eq:or-of-or} forms a chain of embeded affine hull.
Specifically, let us denote the set on the space of
\begin{equation*}
    [f_i]_0=0, [f_i]_1=0, [f_i]_2=0,\cdots [f_i]_j=0, \cdots[f_i]_{j+1}<0
\end{equation*}
by $B_{i,j}$, where $0\le j\le d'-1$.
Then we have the following lemma.
\begin{lemma}\label{lem:embeded-affine-space}
We have the sequence of inclusions
\begin{equation*}
\begin{aligned}
    \mathbb{R}^n=\aff(B_{i,0})\supset \aff(B_{i,1}) \supset \cdots \supset \aff(B_{i,D'-1})\,.
\end{aligned}
\end{equation*} 
\end{lemma}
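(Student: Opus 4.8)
The plan is to unwind the definitions of the sets $B_{i,j}$ and observe that passing from $B_{i,j}$ to $B_{i,j+1}$ only \emph{adds} one linear equality to the defining system, which can only shrink the affine hull. Concretely, recall that $B_{i,j}$ is the set of coefficient vectors $(h_{kl})_{kl}$ satisfying
\begin{equation*}
[f_i]_0=0,\ [f_i]_1=0,\ \ldots,\ [f_i]_j=0,\ [f_i]_{j+1}<0,
\end{equation*}
and each $[f_i]_\ell$ is a fixed linear functional of $(h_{kl})_{kl}$ (as noted in Section~\ref{sec:polynomial-function}). First I would show that $\aff(B_{i,j})$ equals the affine subspace $L_{i,j}$ cut out by the equalities $[f_i]_0=\cdots=[f_i]_j=0$ alone (dropping the strict inequality). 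This is the one genuine point requiring an argument: in general the affine hull of a half-space of a subspace is the whole subspace, but one must make sure $B_{i,j}$ is nonempty so that its affine hull is not the empty set. Nonemptiness is exactly the hypothesis that the corresponding term in \eqref{eq:or-of-or} is among the $B_i$'s fed into subspace elimination — by Definition~\ref{def:subspace-elimination} the half-spaces of subspaces are required to be non-empty, so we may assume each $B_{i,j}$ we consider is nonempty (and if some $B_{i,j}$ is empty it simply does not appear, and the inclusion chain still holds for those that do).

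Granting $\aff(B_{i,j}) = L_{i,j} = \{x : [f_i]_0(x)=\cdots=[f_i]_j(x)=0\}$, the desired chain
\begin{equation*}
\mathbb{R}^n = L_{i,0}' \supseteq L_{i,0} \supseteq L_{i,1} \supseteq \cdots \supseteq L_{i,D'-1}
\end{equation*}
is immediate: $L_{i,j+1}$ is obtained from $L_{i,j}$ by imposing the additional constraint $[f_i]_{j+1}(x)=0$, so $L_{i,j+1}\subseteq L_{i,j}$; and $L_{i,0}\subseteq\mathbb{R}^n$ trivially (indeed $\aff(B_{i,0})=\{x:[f_i]_0(x)=0\}$, which I should double-check against the indexing in the statement — the statement writes $\mathbb{R}^n=\aff(B_{i,0})$, which matches only if the convention is that $B_{i,0}$ carries no equality, i.e. the first line of \eqref{eq:or-of-or} is just $[f_i]_0<0$; with that reading $\aff(B_{i,0})=\mathbb{R}^n$ exactly). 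To argue $\aff(B_{i,j})=L_{i,j}$ rigorously: clearly $B_{i,j}\subseteq L_{i,j}$, and $L_{i,j}$ is affine, so $\aff(B_{i,j})\subseteq L_{i,j}$. For the reverse, pick any $x_0\in B_{i,j}$ (nonempty) and any $x_1\in L_{i,j}$; since $[f_i]_{j+1}$ is linear and $[f_i]_{j+1}(x_0)<0$, for all $t$ in a neighborhood of $0$ the point $x_0 + t(x_1-x_0)$ still satisfies $[f_i]_{j+1}<0$ while satisfying all the equalities (both endpoints do, and the equalities are affine), hence lies in $B_{i,j}$; taking two such points on the line through $x_0$ and $x_1$ and using that $\aff(B_{i,j})$ contains the line they span shows $x_1\in\aff(B_{i,j})$.

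I expect the main (and essentially only) obstacle to be bookkeeping around the edge cases: making sure the indexing convention for $B_{i,j}$ matches so that $\aff(B_{i,0})$ is genuinely all of $\mathbb{R}^n$, and handling the possibility that some intermediate $B_{i,j}$ is empty (in which case one restricts the chain to the indices with $B_{i,j}\neq\emptyset$, which is harmless since empty sets do not participate in the subspace elimination instance anyway). Everything else is a one-line consequence of "adding a linear equality shrinks the solution set." I would state the lemma's proof in two short paragraphs: one establishing $\aff(B_{i,j})=\{x:[f_i]_0=\cdots=[f_i]_j=0\}$ via the segment argument above, and one reading off the nested inclusions.
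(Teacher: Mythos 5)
Your proposal is correct; the paper actually states this lemma without any proof, treating it as immediate from the definition of the $B_{i,j}$ via \eqref{eq:or-of-or}, and your argument (the affine hull of a nonempty open half of an affine subspace is that whole subspace, and each successive set adds one more equality) is exactly the routine justification being left implicit. Your two caveats — the indexing convention that $B_{i,0}$ carries no equality, and restricting to nonempty $B_{i,j}$ so the affine hulls are nonempty — are the right ones to flag and do not change the substance.
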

Recall that $D'=O(nD)$ is the bound on the degree of $x(\delta)$.

Thus, if $P\not\subset \aff(B_{i,j})$, then $P\not\subset \aff(B_{i,j'})$ for any $j'>j$. This means that when we are looking for $D_{i,j}$ such that $P\subset\aff(B_{i,j})$, we should keep track of 
$j^*_i\defeq \max_j\{j:P\subset\aff(B_{i,j})\}$ and look at $B_{i,j^*_i+1}$. If $P\not\subset \aff(B_{i,j})$, there is no need to look at $B_{i,j}$ with bigger index $j$.

We can further simplify Algorithm~\ref{alg:subspace_elimination} to Algorithm~\ref{alg:subspace-elimination-for-PLP}. 
\begin{algorithm}
\caption{Subspace Elimination for Local Feasibility of PLP}\label{alg:subspace-elimination-for-PLP}
\begin{algorithmic}[1]
\State \textbf{input:} $\mathcal{B}=\{B_{i,j}:i\in [m],0\le j\le D'-1\}$ with form
\State $P\leftarrow \mathbb{R}^n$, $\mathcal{A}\leftarrow \mathcal{B}$
\State $\forall i\in [m]$, $j^*_i\leftarrow -1$
\While{$HasUpdate$}
    \State $i\leftarrow 1$, $HasUpdate\leftarrow False$
    \While{$i\le m$}
        \If{$P\subseteq\aff(B_{i,j^*_i+1})$}
            \State $P\leftarrow P\backslash B_{i,j^*_i+1}$
            \State $\mathcal{A}\leftarrow \mathcal{A}\backslash\{B_{i,j^*_i+1}\}$
            \State $j^*_i\leftarrow j^*_i+1$
            $HasUpdate\leftarrow True$
        \EndIf
        \State $i\leftarrow i+1$
    \EndWhile
\EndWhile
\If{$P=\emptyset$}
\State \textbf{return} False
\Else 
\State \textbf{return} True with a point in the relative interior of $P$ as feasible solution
\EndIf
\end{algorithmic}
\end{algorithm}

When the algorithm stops, we have either $P=\emptyset$ or for any $i$, $P\not\subset \aff(D_{i,j^*_i+1})$. Thus $P\not\subset \aff(D_{i,j})$ for any $j>j^*_i$. In other words, for any $D\in \mathcal{A}$, we have $P\not\subset \aff(D)$. So Algorithm~\ref{alg:subspace-elimination-for-PLP} is correct. 

Compared to directly using Algorithm~\ref{alg:subspace_elimination}, we only test whether $P\subset\aff(D_{i,j})$ at most once for each $i,j$ rather than potentially $O(nmD)$ times. For each $i,j$ pair, we solve $O(D')=O(nD)$  linear programming instances. There are $O(mD')=O(nmD)$ $i,j$ pairs, so the number of calls to the original LP is $O(nD\cdot mD')=O(mn^2D^2)$. Also, the number of constraints for each original LP is at most $O(mD')$. This means the running time for Algorithm~\ref{alg:subspace-elimination-for-PLP} is $O(n^2mD^2\cdot \mathsf{LP}(O(n^2D),O(nmD),L))$. Recall in Section~\ref{sec:polynomial-function}, to solve local feasibility of 1-PLP, Algorithm~\ref{alg:subspace-elimination-for-PLP} needs to run $O(nD)$ times for different $c$, so the final running time is $O(n^3mD^3\cdot  \mathsf{LP}(O(n^2d),O(nmD),L))$.

\subsection{Outputting a Feasible Solution and its Range}
\label{s:OutputFeas}
We promised in Theorem~\ref{thm:1-dim-local} and Lemma~\ref{lem:1-dim-local-one-side} to give a feasible solution. Note that in Section~\ref{sec:subspace-elimination}, the algorithm only returns whether the instance is feasible, but does not give a feasible solution when the answer is yes.
The special structure of subspace elimination \eqref{eq:or-of-or} also implies the following property which allows us to output a feasible solution with Algorithm~\ref{alg:subspace-elimination-for-PLP}. 
\begin{lemma}
    When Algorithm~\ref{alg:subspace-elimination-for-PLP} returns true, 
    \begin{align*}
        \mathrm{Relint} (P) &= \{(h_{ij})_{ij}|\forall i,\ [f_i]_0=0, [f_i]_1=0, \cdots, [f_i]_{j^*_i-1}=0, [f_i]_{j^*_i}>0\} \\
        &\subset \mathbb{R}^n\backslash \left( \cup_{i\in [m],0\le j\le D'-1} B_{i,j} \right)
    \end{align*}
    where $\mathrm{Relint}$ stands for the relative interior of a set. 
\end{lemma}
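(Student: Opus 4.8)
The plan is to characterize the set $P$ maintained by Algorithm~\ref{alg:subspace-elimination-for-PLP} at termination, and show it coincides with the claimed equality-constrained set, whose relative interior is exactly the strict-inequality version. First I would unwind the loop invariant: at every point in the execution, $P$ is the polyhedron obtained from $\mathbb{R}^n$ by intersecting with $\aff(B_{i,j})$ for every $i$ and every $j \le j^*_i$, and additionally removing $B_{i,j^*_i}$ itself (the half-affine-space $[f_i]_0=\cdots=[f_i]_{j^*_i-1}=0,\ [f_i]_{j^*_i}<0$) for each $i$ with $j^*_i \ge 0$. The key point, using Lemma~\ref{lem:embeded-affine-space}, is that $\aff(B_{i,0}) \supset \aff(B_{i,1}) \supset \cdots$, so intersecting with $\aff(B_{i,j^*_i})$ subsumes all the earlier equality constraints $[f_i]_0 = 0, \dots, [f_i]_{j^*_i - 1} = 0$; and removing $B_{i,j^*_i}$ from a set already lying in $\aff(B_{i,j^*_i})$ just appends the constraint $[f_i]_{j^*_i} \ge 0$. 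Since $\aff(B_{i,j^*_i})$ is cut out by $[f_i]_0 = \cdots = [f_i]_{j^*_i - 1} = 0$, we get that at termination
$$
P = \{(h_{ij})_{ij} : \forall i,\ [f_i]_0 = 0,\ \dots,\ [f_i]_{j^*_i - 1} = 0,\ [f_i]_{j^*_i} \ge 0\}\,.
$$
(For $i$ with $j^*_i = -1$ there are no constraints from block $i$; one reads the interval $[f_i]_0 = \cdots = [f_i]_{-1} = 0$ as vacuous and $[f_i]_{-1} \ge 0$ as vacuous, consistent with the statement when indices are interpreted appropriately.)

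Next I would compute the relative interior. The affine hull of $P$ is cut out by the equalities $[f_i]_0 = \cdots = [f_i]_{j^*_i - 1} = 0$ — these hold on all of $P$, and no further equality is forced because the termination condition guarantees $P \not\subseteq \aff(B_{i,j^*_i + 1})$ for every $i$, i.e.\ $[f_i]_{j^*_i}$ is not identically zero on $P$, so $[f_i]_{j^*_i} > 0$ somewhere in $P$. Within this affine hull, $P$ is defined by the single inequality per block $[f_i]_{j^*_i} \ge 0$ (for $j^*_i \ge 0$), and each such inequality is non-degenerate on $\aff(P)$ by the termination condition just noted. Hence the relative interior is obtained by making all these inequalities strict:
$$
\mathrm{Relint}(P) = \{(h_{ij})_{ij} : \forall i,\ [f_i]_0 = 0,\ \dots,\ [f_i]_{j^*_i - 1} = 0,\ [f_i]_{j^*_i} > 0\}\,,
$$
which is the first displayed equality in the lemma.

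Finally, for the inclusion $\mathrm{Relint}(P) \subseteq \mathbb{R}^n \setminus \bigcup_{i,j} B_{i,j}$: fix a point $x$ in this set and a block $i$. If $j^*_i \ge 0$, then $x$ satisfies $[f_i]_{j^*_i} > 0$, so for $j < j^*_i$ we have $x \notin B_{i,j}$ (since $B_{i,j}$ demands $[f_i]_j < 0$ while $[f_i]_j = 0$ fails to satisfy it — actually $x \notin B_{i,j}$ because $B_{i,j}$ requires $[f_i]_{j} < 0$ but... let me be careful: $B_{i,j}$ is the set where $[f_i]_0 = \cdots = [f_i]_{j-1} = 0$ and $[f_i]_j < 0$; since $x$ has $[f_i]_j = 0$ for $j < j^*_i$, $x \notin B_{i,j}$), for $j = j^*_i$ we have $x \notin B_{i,j^*_i}$ because $[f_i]_{j^*_i} > 0 \not< 0$, and for $j > j^*_i$ membership in $B_{i,j}$ would require $[f_i]_{j^*_i} = 0$, again contradicting $[f_i]_{j^*_i} > 0$. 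If $j^*_i = -1$, the termination condition gives $P \not\subseteq \aff(B_{i,0}) = \mathbb{R}^n$, which is impossible — so in fact $j^*_i \ge 0$ always holds at termination when the returned set is nonempty, except this needs the observation that $\aff(B_{i,0}) = \mathbb{R}^n$ means $P \subseteq \aff(B_{i,0})$ is always true, forcing the inner loop to process block $i$ at least once; so $j^*_i \ge 0$ for all $i$ at termination. Thus every block is handled and $x \notin \bigcup_{i,j} B_{i,j}$, completing the proof.

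The main obstacle I anticipate is bookkeeping the loop invariant cleanly — in particular verifying that the monotone chain of affine hulls (Lemma~\ref{lem:embeded-affine-space}) really does let us replace the accumulated equality constraints from block $i$ by the single constraint $\aff(B_{i,j^*_i})$, and handling the boundary index conventions ($j^*_i = -1$, empty ranges) without off-by-one errors. The relative-interior computation itself is routine once the description of $P$ is pinned down, relying only on the fact that an inequality non-degenerate on the affine hull contributes its strict version to the relative interior.
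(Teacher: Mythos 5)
Your proposal is correct and follows essentially the same route as the paper's proof: describe $P$ at termination as the set cut out by $[f_i]_0=\cdots=[f_i]_{j^*_i-1}=0$ and $[f_i]_{j^*_i}\ge 0$, use the termination condition $P\not\subseteq\aff(B_{i,j^*_i+1})$ to see that $[f_i]_{j^*_i}$ is not identically zero on $P$ (so the affine hull is given by the equalities alone and the relative interior is the strict version), and then check disjointness from each $B_{i,j}$ by splitting into $j\le j^*_i$ and $j>j^*_i$. Your additional bookkeeping (the explicit loop invariant and the observation that $\aff(B_{i,0})=\mathbb{R}^n$ forces $j^*_i\ge 0$ at termination) only makes explicit what the paper leaves implicit.
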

\begin{proof}
We will prove this by showing $\text{Relint } (P)\cap B_{i,j}=\emptyset$ for any $i,j$. 
First, when the algorithm ends,
\begin{equation*}
   P=\mathbb{R}^n\backslash \left( \cup_{i\in [m],0\le j\le j^*_i} B_{i,j} \right) .
\end{equation*}
So $\text{Relint } (P)\cap B_{i,j}=\emptyset$ for any $i\in [m]$ and $0\le j\le j^*_i$.

Also, $P\subset \aff(B_{i,j^*_i})$ for any $i$ by the algorithm.
So based on \eqref{eq:or-of-or}, $P$ can be explicitly expressed as the following set.
\begin{equation*}
    \forall i,\ [f_i]_0=0, [f_i]_1=0, \cdots, [f_i]_{j^*_i-1}=0, [f_i]_{j^*_i}\ge  0
\end{equation*}

Note that when the algorithm terminates, $P\not\subset \{[f_i]_{j^*_i}=0\}$. So the set $\{[f_i]_{j^*_i}=0\}\cap P$ is on the relative boundary of $P$.
Therefore, $\aff(P)$ is given by
\begin{equation*}
    \forall i,\ [f_i]_0=0, [f_i]_1=0, \cdots, [f_i]_{j^*_i-1}=0.
\end{equation*}
and $\text{Relint } (P)$ is the following set.
\begin{equation*}
    \forall i,\ [f_i]_0=0, [f_i]_1=0, \cdots, [f_i]_{j^*_i-1}=0, [f_i]_{j^*_i}>  0.
\end{equation*}
By comparing the set with \eqref{eq:or-of-or}, we know that for any $i\in [m]$ and $j>j^*_i$, $B_{i,j}\cap \text{Relint } (P)=\emptyset$.
\end{proof}

Finding a point in the relative interior of a polyhedron is a standard initialization step for interior point methods. For a concrete algorithm see for example \cite{cartis2006finding}. The feasible point for \eqref{eq:and-of-or} can be used to generate a feasible solution for the original local feasibility of PLP problem using $x(\delta)=\frac{p(\delta)}{\delta^c(1+q_1(\delta))}$.

So far we have generated a rational function solution $x(\delta)=\frac{p(\delta)}{q(\delta)}$ that satisfies the PLP on range $(0,\epsilon)$ for a small enough $\epsilon$ (without loss of generality, $q(\delta)>0$ when $\delta$ is a small enough positive number). $x(\delta)$ is feasible if and only if it satisfies $A_i(\delta)p(\delta)-b_i(\delta)q(\delta)\ge 0$ and $q(\delta)>0$ for any $i$. The largest possible $\epsilon$ such that $x$ is feasible on $(0,\epsilon)$ is given by the smallest positive root among polynomials $A_i(\delta)\cdot p(\delta)-b_i(\delta)q(\delta)$ and $q(\delta)$.


\section{Hardness of Polynomial Linear Programming}\label{sec:hardness}
In this section we prove Theorem~\ref{thm:hardness} and Theorem~\ref{thm:hardness-infea}.

\subsection{Local Feasibility to Everywhere Feasibility}
In this section, we will show that everywhere feasibility (infeasibility) of 1-PLP can be reduced to the local feasibility (infeasibility) of 2-PLP. Note that feasibility and infeasibility are not complements of one another, so we have separate but similar arguments for feasibility and infeasibility. Furthermore, show that everywhere feasibility of 1-PLP does is not easier even if we restrict the polynomials in $A(\delta)$ and $b(\delta)$ to be linear functions.

\begin{lemma}\label{lem:localPLP2everywherePLP}
    There is a polynomial-time reduction from the everywhere feasibility (infeasibility) of 1-PLP to the local feasibility (infeasibility) of 2-PLP.
\end{lemma}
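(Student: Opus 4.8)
The plan is to reduce everywhere feasibility of a 1-PLP $(A(\delta),b(\delta))$ in the scalar variable $\delta$ to local feasibility of a 2-PLP in variables $(\delta_1,\delta_2)$, by arranging that the "local neighborhood of $0$" in the 2D problem sweeps out, along a family of rays, a copy of the entire real line for the 1D problem. Concretely, I would introduce the substitution that sends a point $(\delta_1,\delta_2)$ near the origin to a scalar $\delta = \delta_1/\delta_2$ (or a rationalized, polynomial-cleared version of it): as $(\delta_1,\delta_2)$ ranges over a punctured neighborhood of $0$, the ratio $\delta_1/\delta_2$ takes every value in $\mathbb{R}$, so any ray through the origin picks out one value of $\delta$. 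Using Remark~\ref{rmk:interval}-style tricks, one can further compactify $\mathbb{R}$ if needed, but the ratio map already gives all of $\mathbb{R}$ on the deleted neighborhood.

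The construction: given $A(\delta),b(\delta)$ of degree $D$, define $\tilde A(\delta_1,\delta_2)$ and $\tilde b(\delta_1,\delta_2)$ by substituting $\delta \mapsto \delta_1/\delta_2$ into every entry and then multiplying each constraint through by $\delta_2^{D}$ (or by $\delta_2^{2D}$ if one also wants to keep everything polynomial after an interval transformation) to clear denominators; this is the standard homogenization and keeps all entries polynomial with polynomially bounded degree and coefficient size. For $\delta_2 \neq 0$ the multiplier $\delta_2^{D}$ is a positive or negative scalar depending on parity and sign; to avoid sign issues I would multiply by $\delta_2^{2D}$ so the multiplier is always nonnegative, and handle the single problematic direction $\delta_2 = 0$ separately. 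The key claims to verify are: (i) if the 1-PLP is everywhere feasible, then for each $(\delta_1,\delta_2)$ in a neighborhood of $0$ with $\delta_2\neq 0$ the scaled system is feasible (take $x$ feasible for $\delta = \delta_1/\delta_2$), and the measure-zero locus $\delta_2 = 0$ can be covered by adding a few extra constraints or by a symmetric second substitution $\delta \mapsto \delta_2/\delta_1$ combined via a disjunction — but since a single PLP is a conjunction, the cleaner route is to pad the construction with auxiliary variables so that on $\{\delta_2 = 0\}$ all constraints become trivially satisfiable (e.g. the multiplier $\delta_2^{2D}$ kills $b$, and a free slack variable absorbs $\tilde A x$); (ii) conversely, if the 2-PLP is locally feasible, pick any small $(\delta_1,\delta_2)$ with $\delta_2\neq 0$ on a chosen ray realizing a given target value $\delta^\star = \delta_1/\delta_2$; a feasible $x$ there, after undoing the positive scaling $\delta_2^{2D}$, is feasible for $(A(\delta^\star),b(\delta^\star))$, and since $\delta^\star$ was arbitrary the 1-PLP is everywhere feasible.

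I expect the main obstacle to be the handling of the boundary direction $\delta_2 = 0$ (equivalently $\delta = \infty$) and the sign of the clearing factor: a naive ratio substitution leaves the 2D system undefined or degenerate precisely where the neighborhood of the origin is "thinnest," and one must ensure that local feasibility of the 2-PLP is not accidentally made easier (spuriously feasible) or harder (spuriously infeasible) there. Using $\delta_2^{2D}$ to force a nonnegative multiplier and introducing slack/auxiliary coordinates so that the constraints degenerate gracefully at $\delta_2=0$ should resolve this; for the infeasibility direction of the lemma one argues the contrapositive along exactly the same rays. Finally, for the "even linear suffices" strengthening I would note that one can first Veronese-lift: replace each monomial $\delta^k$ by a fresh variable $\delta^{(k)}$ tied together by linear interpolation-type constraints enforced over a neighborhood, reducing degree $D$ to degree $1$ at the cost of polynomially many extra variables and constraints — this is a routine gadget and I would only sketch it.
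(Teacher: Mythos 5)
Your construction for the feasibility half is essentially the paper's: substitute $\delta=\delta_1/\delta_2$, clear denominators by a power of $\delta_2$ chosen large enough that every entry acquires a factor of $\delta_2$ (the paper uses $\delta_2^{D+1}$, you use $\delta_2^{2D}$), so that on the line $\delta_2=0$ the system degenerates to $0\cdot x\ge 0$ and is trivially feasible. Your insistence on an even exponent so the clearing factor is nonnegative is a legitimate refinement of a point the paper glosses over, and no slack variables are actually needed at $\delta_2=0$ since both sides vanish identically there.

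The genuine gap is in the infeasibility half. You write that for the infeasibility direction ``one argues the contrapositive along exactly the same rays,'' but local infeasibility is not the negation of local feasibility (a 2-PLP can be neither locally feasible nor locally infeasible), so the infeasibility reduction is not a contrapositive of the feasibility one, and the same construction cannot work: your homogenized system is feasible at every point of the line $\delta_2=0$, hence it is \emph{never} locally infeasible, no matter what the original 1-PLP is. The paper handles this with a separate gadget: it introduces a fresh variable $x'$ and adds the constraint $x'\delta_2=1$, which is violated identically on $\{\delta_2=0\}$. With that extra constraint one checks both directions: if the 1-PLP is feasible at some $a$, the 2-PLP is feasible at points with $\delta_1/\delta_2=a$ arbitrarily close to the origin (set $x'=1/\delta_2$), so it is not locally infeasible; and if the 1-PLP is everywhere infeasible, the 2-PLP is infeasible off the line by the scaling argument and infeasible on the line because of $x'\delta_2=1$. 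Without some such device your reduction establishes only the feasibility half of the lemma.
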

\begin{proof}
Let us first show the reduction on feasibility. For any instance of 1-PLP 
\begin{equation}\label{eq:1-PLP}
    (A(\delta), b(\delta)),
\end{equation}
let $D$ be the maximum degree of $A(\delta)$ and $b(\delta)$. For the reduction of feasibility, consider an instance of 2-PLP,
\begin{equation}\label{eq:2-PLP}
    \left( \delta_2^{D+1}A \left(\frac{\delta_1}{\delta_2}\right), \delta_2^{D+1} b\left(\frac{\delta_1}{\delta_2}\right)\right).
\end{equation}
We claim that \eqref{eq:1-PLP} is everywhere feasible if and only if \eqref{eq:2-PLP} is locally feasible. 

If \eqref{eq:1-PLP} is infeasible at $\delta=a$, then \eqref{eq:2-PLP} is infeasible when $\delta_1/\delta_2=a$ since they only differ by a multiplication of constant on both sides. So \eqref{eq:2-PLP} cannot be locally feasible.

Conversely, if \eqref{eq:1-PLP} is feasible for all $\delta\in \mathbb{R}$, then similarly \eqref{eq:2-PLP} is feasible for any $\delta_1/\delta_2\in \mathbb{R}$. When $\delta_2=0$, \eqref{eq:2-PLP} becomes $(0,0)$, which is feasible. So \eqref{eq:2-PLP} is feasible for any $(\delta_1,\delta_2)\in\mathbb{R}^2$.

To show the reduction for infeasibility, let us consider the following instance of 2-PLP with $n+1$ variables and $m+1$ constraints
\begin{equation}\label{eq:2-PLP-infea}
    \left\{
    \begin{aligned}
    \delta_2^{D}A \left(\frac{\delta_1}{\delta_2}\right)x\ge  \delta_2^{D} b\left(\frac{\delta_1}{\delta_2}\right)\\
    x'\delta_2=1
    \end{aligned}
    \right.
\end{equation}
where $x'$ is a variable different from $x$. We claim that \eqref{eq:1-PLP} is everywhere infeasible if and only if  \eqref{eq:2-PLP-infea} is locally infeasible. The argument is similar as feasibility. If \eqref{eq:1-PLP} is feasible at $\delta=a$, then \eqref{eq:2-PLP} is feasible when $\delta_1/\delta_2=a$. 

If \eqref{eq:1-PLP} is infeasible for any $\delta\in \mathbb{R}$, then \eqref{eq:2-PLP-infea} is infeasible for any $\delta_1/\delta_2\in \mathbb{R}$. And \eqref{eq:2-PLP-infea} is always infeasible when $\delta_2=0$ since the last equality cannot be satisfied. So \eqref{eq:2-PLP-infea} is locally infeasible.
\end{proof}

Now we show that everywhere feasibility is equally as hard as the degree-1 case.
\begin{definition}[Degree-1 PLP]
    A \emph{degree-1 PLP} problem is a PLP problem where $A(\delta)$ and $b(\delta)$ are linear functions of $\delta$.
\end{definition}

\begin{lemma}
There is a polynomial-time reduction from the everywhere feasibility of 1-PLP to the everywhere feasibility of degree-1 1-PLP.
\end{lemma}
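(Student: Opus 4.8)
The plan is to reduce an arbitrary everywhere-feasibility instance of 1-PLP to a degree-1 instance by introducing fresh variables that "stand in" for the powers of $\delta$. Concretely, suppose the input PLP $(A(\delta),b(\delta))$ has degree $D$. I would introduce new scalar variables $t_2,\dots,t_D$, intended to satisfy $t_j = \delta^j$, and replace every monomial $\delta^j$ appearing in $A(\delta)$ or $b(\delta)$ by the linear term $\delta \cdot 1$ when $j=1$, by $t_j$ when $j\ge 2$. After this substitution each entry of the constraint matrix/vector is a linear (degree-1) function of $\delta$ and the $t_j$'s. The difficulty is that one cannot enforce $t_j=\delta^j$ using only degree-1 polynomial constraints in $\delta$; the key trick is that we do not need exact equality, because the original problem quantifies over \emph{all} $\delta\in\mathbb R$, so we have the freedom to rename the parameter.

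The cleaner route I would actually carry out: treat $\delta$ and an auxiliary parameter $\delta'$ together, or better, observe that everywhere feasibility over $\delta\in\mathbb R$ of a polynomial system is unchanged under the substitution that expresses the system along the moment curve. Specifically, introduce the single new parameter and additional variables $u_2,\dots,u_D$ and add the degree-1 (in the parameter) constraints $u_j \ge \delta\, u_{j-1}$ and $u_j \le \delta\, u_{j-1}$ with $u_1 := \delta$ — but these are products of a parameter and a variable, hence still degree-1 in the parameter $\delta$, which is exactly what "degree-1 PLP" allows (the \emph{polynomials in $\delta$} are linear; coefficients may multiply variables). So each constraint $u_j = \delta u_{j-1}$ is a legitimate pair of degree-1 PLP constraints. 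Chaining these forces $u_j = \delta^j$ for every $j$, and then every original constraint becomes linear in $\delta$ once we substitute $u_j$ for $\delta^j$. Thus the new system is a degree-1 PLP, has polynomially many extra variables and constraints, and is everywhere feasible (over the parameter $\delta$) if and only if the original one is: given a feasible $x(\delta)$ for the original, set $u_j(\delta)=\delta^j$; conversely any feasible point of the new system has $u_j=\delta^j$ forced, so its $x$-part feasibly solves the original at that $\delta$.

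The steps in order: (1) define the reduction map, introducing $u_1,\dots,u_D$ and the equality constraints $u_1=\delta$, $u_j = \delta u_{j-1}$ for $2\le j\le D$; (2) perform the monomial substitution on $A(\delta),b(\delta)$, turning each polynomial entry of degree $\le D$ into a linear combination of $1,\delta$ and the variables $u_2,\dots,u_D$ with rational coefficients, so the resulting coefficient matrix is affine (degree $\le 1$) in $\delta$; (3) verify the size bound — we add $D-1$ variables and $2(D-1)$ constraints, and coefficients do not grow, so the reduction is polynomial time; (4) prove the two directions of the equivalence, which in both cases amounts to the identity $u_j=\delta^j$ being the unique solution of the chain of equalities for each fixed $\delta$.

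The main obstacle is purely in the bookkeeping of what "degree-1 PLP" permits: one must be careful that constraints like $u_j = \delta u_{j-1}$ are legal — the polynomial \emph{in $\delta$} multiplying each variable is linear ($\delta$ itself), and the constant vector is linear too, so this is within the definition, even though as a function of the joint variable $(\delta,u)$ it is quadratic. If for some reason the intended notion of degree-1 forbids a variable to be multiplied by a nonconstant polynomial of $\delta$, then this approach fails and one instead needs the substitution trick along the moment curve $\delta\mapsto(\delta,\delta^2,\dots,\delta^D)$ combined with a reparametrization argument showing everywhere feasibility is preserved when we only range over the curve rather than all of $\mathbb R^{D}$; handling that subtlety (the image of the moment curve is not all of $\mathbb R^D$, so one must argue the extra constraints cut the parameter space down to exactly the curve) is where the real care is needed.
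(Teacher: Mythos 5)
There is a genuine gap in your reduction, and it sits exactly where you locate the ``real care'': the monomial substitution $\delta^j \mapsto u_j$ works for the right-hand side $b(\delta)$ but fails for the matrix $A(\delta)$. An original constraint reads $\sum_j \bigl(\sum_k a_{ijk}\delta^k\bigr)x_j \ge \sum_k b_{ik}\delta^k$. Replacing $\delta^k$ by the LP variable $u_k$ on the left turns the coefficient of $x_j$ into $\sum_k a_{ijk}u_k$, so the constraint now contains the bilinear terms $u_k x_j$ --- a product of two LP variables. That is not a linear constraint in any reading of the definition: in a PLP the entries of $A$ must be polynomials in the parameter $\delta$ alone, not expressions involving other variables. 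Your bookkeeping worry about whether $u_j = \delta\,u_{j-1}$ is legal is resolved favorably (it is: the coefficient $\delta$ multiplying the variable $u_{j-1}$ is a degree-1 polynomial in the parameter), but that is not where the difficulty lies. The moment-curve fallback you sketch does not repair this either, since it still leaves products of variables in the substituted constraints.

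The missing idea, which is the paper's key move, is to introduce variables not for the powers $\delta^j$ but for the \emph{products} $\delta^j x_i$. Concretely, one adds variables $x_i^{(j)}$ for $i\in[n]$, $0\le j\le D$, chained by the legal degree-1 equalities $x_i^{(0)} = x_i$ and $x_i^{(j)} = \delta\, x_i^{(j-1)}$, which force $x_i^{(j)} = \delta^j x_i$. Each original constraint then becomes $\sum_j \sum_k a_{ijk}\, x_j^{(k)} \ge \sum_k b_{ik}\, y_k$, where the $y_k$ (your $u_k$) handle the right-hand side as you proposed; now every coefficient in the main constraints is a rational constant and the only $\delta$-dependence sits in the chaining rows, so the instance is a genuine degree-1 PLP of polynomial size. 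Your equivalence argument then goes through essentially verbatim once applied to both families of chained variables.
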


\begin{proof}
For a PLP instance $A(\delta)$ and $b(\delta)$, suppose the maximum degree is $d$ and write
\begin{equation*}
    A_{ij}=\sum_{k=0}^d a_{ijk}\delta^k,
\qquad
    b_i = \sum_{k=0}^d b_{ik}\delta^k.
\end{equation*}
We design the following degree-1 PLP instance so that it is equivalent to the original one. There are $n(d+1)+d$ variables, $x_i^{(j)}$ for $i\in [n]$, $j=[d]\cup\{0\}$ and $y^i$ for $i\in [d]$. First, let $$x_i^{(j)}=\delta x_i^{(j-1)}$$ for any $i\in [n]$ and $j\in [d]$, so $x_i^{(j)}$ corresponds to $x_i\delta^j$ in the original PLP instance. Next, let  $$y_1=\delta,\ y_i=\delta y_{i-1}$$ for $i\ge 2$, so $y_i=\delta^i$. Finally, we add $$\sum_{j=1}^n\sum_{k=0}^d a_{ijk}x_i^{(k)}\ge \sum_{k=0}^d b_{ik}y_i$$ for any $i\in [m]$. It is straightforward to check that by construction, the feasibility of the degree-1 PLP instance is the same as the original PLP instance, i.e., each feasible point in one problem corresponds to a feasible point in the other.
\end{proof}

Next we show that the everywhere feasibility (infeasibility) of 1-PLP problem is NP-hard, which implies that both the everywhere feasibility of degree-1 1-PLP problem and the local feasibility (infeasibility) of 2-PLP problem are also NP-hard.

\subsection{Hardness of Everywhere Feasibility of PLP}
In this section we only focus on 1-PLP, so any PLP instance is 1-PLP by default. 

\begin{theorem}
    The everywhere feasibility of 1-PLP problem is NP-hard.
\end{theorem}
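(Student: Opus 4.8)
The plan is to reduce from maximum independent set, via an intermediate ``0/1 integer linear programming'' problem, exactly as foreshadowed in the Sketch of Main Ideas. By the equivalences already set up in the excerpt, proving everywhere feasibility of 1-PLP is NP-hard amounts to exhibiting, for an arbitrary instance of a known NP-hard problem (maximum independent set on a graph $G$ with a target size $k$), a polynomial-size 1-PLP that is everywhere feasible if and only if $G$ has an independent set of size $k$. It is more convenient to work with the negated/dualized form: the PLP $(A(\delta),b(\delta))$ is \emph{not} everywhere feasible iff there exist $\delta\in\mathbb{R}$ and $y\ge 0$ with $A^\top(\delta)y=0$ and $b^\top(\delta)y>0$. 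So I want to build $(A,b)$ so that such a ``Farkas witness'' $(\delta,y)$ exists precisely when $G$ has no independent set of size $k$.

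First I would set up the variable-encoding gadget. The key trick is to use the scalar $\delta$ to simulate a Boolean vector: I want $n$ quantities that, at a witnessing $\delta$, are forced to be (approximately) $0$ or $1$. Following the ``transformation of variables'' hint, I would introduce variables $z_1,\dots,z_n$ together with constraints, linear in $\delta$ and in the $z_i$, that pin each $z_i$ near $\{0,1\}$ — for instance forcing $z_i(1-z_i)$ small is not directly linear, so instead I expect the construction to use $\delta$ itself as the small parameter: constraints like $z_i \ge 0$, $z_i \le 1$, and a coupling constraint that, combined with $b^\top y > 0$, forces the $z_i$ to sit at the extreme points up to an error that shrinks with $\delta$. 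The degree-$1$ reduction already in the excerpt lets me assume WLOG the polynomials are linear in $\delta$, which keeps the bookkeeping manageable. Then I would add the independent-set constraints: for each edge $(i,j)$ of $G$, $z_i + z_j \le 1$, and the cardinality constraint $\sum_i z_i \ge k$. Over genuine $0/1$ vectors these are feasible iff $G$ has an independent set of size $\ge k$; the content of the proof is that the $\delta$-perturbed, real-relaxed version has a feasible witness iff the exact integer system does.

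The main obstacle — and the step I would spend the most care on — is the \textbf{precision/rounding argument}: showing that the real, $\delta$-parameterized relaxation cannot be feasible ``by accident'' at a fractional point when the integer program is infeasible. This is the ``arbitrary precision'' claim in the sketch. The idea is that any violation of the integer program is by an integer amount $\ge 1$, whereas the slack introduced by taking $\delta\neq 0$ in the gadget is bounded by something like a polynomial in $\delta$ times a bounded quantity; choosing the gadget so that this slack is forced below $1$ (which can be arranged since we get to pick $\delta$ as small as we like, and the witness form $b^\top y>0$ only requires \emph{some} $\delta$) yields the equivalence. Concretely I expect to (i) show that if $G$ has an independent set of size $k$ then a witness $(\delta,y)$ exists for every sufficiently small $\delta>0$ (easy direction, plug in the integral solution and perturb), and (ii) show conversely that a witness forces, after rounding each $z_i$ to the nearer of $0,1$, a genuine independent set of size $k$, using that the rounding errors cannot collectively bridge an integer gap. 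Managing the interaction between the auxiliary multiplier chain that realizes ``$z_i$ near $\{0,1\}$'' and the Farkas inequality $b^\top(\delta)y>0$ — making sure the witness on the ``bad'' side is actually forced and not merely permitted — is the delicate part; everything else (polynomial size, the reduction from general PLP to degree-$1$ PLP, and wrapping up with NP-hardness of maximum independent set, plus the co-NP-completeness remark for $d=1$ following from decidability of the complementary problem via cylindrical algebraic decomposition or the $d=1$ machinery) is routine.
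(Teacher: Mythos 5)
There is a genuine gap. You have the right outer frame (negate and dualize to get an existential ``Farkas witness'' form $\exists\,\delta, y\ge 0$ with $A^\top(\delta)y=0$, $b^\top(\delta)y>0$; reduce from independent set via a $0/1$ integer program; worry about precision), but you are missing the one idea that makes the reduction actually work: a mechanism that forces the encoding variables to take values in $\{0,1\}$. For any \emph{fixed} $\delta$ the system is an ordinary linear system, so its feasible set is a polytope; no collection of linear constraints can carve out ``only near-integral points,'' and the fractional relaxation of independent set is trivially feasible (take $x_i\equiv 1/2$, which satisfies $x_i+x_j\le 1$ on every edge and $\sum_i x_i\ge k$ for $k\le n/2$). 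Hence your claim that ``any violation of the integer program is by an integer amount $\ge 1$'' is false for the relaxation, and the proposed rounding argument has nothing to round against. The paper's gadget (from Pardalos--Vavasis) is to \emph{couple the PLP parameter to the Boolean variables} via the linear constraint $\delta=\sum_i 4^i x_i$ and to exploit the fact that $\delta^2$ is a legal coefficient in a PLP: with $z=\sum_i 4^{2i}x_i+\sum_{i<j}2\cdot 4^{i+j}y_{ij}$ and $y_{ij}\ge \max(0,x_i+x_j-1)$, one gets $\max_y(\delta^2-z)=\sum_i 4^{2i}(x_i^2-x_i)+\dots\le 0$ with equality exactly on $\{0,1\}^n$, so the single constraint $\delta^2-z+\epsilon>0$ certifies integrality. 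Note that the witnessing $\delta$ is a \emph{large} number encoding the assignment, not a small perturbation parameter as in your sketch.

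The quantitative step is also different from what you describe: one must show $\epsilon$ can be written with polynomially many bits, and the relevant estimate is not an ``integer gap of $1$'' but (i) a lower bound $d_\infty(P',\{0,1\}^n)\ge (2k^{n^3+n}n!)^{-1}$ when the relaxation is nonempty yet contains no integral point (the paper cites a lemma of Freund et al.), combined with (ii) the bound $\max_y(\delta^2-z)\le -d_\infty^2(x,\{0,1\}^n)$. One further technical point you gloss over: after dualizing, the equivalence between the inhomogeneous system $\{c^\top x>d,\ Ax\ge b\}$ and the homogeneous Farkas form requires a non-degeneracy condition (the paper's Lemma~\ref{lem:LPtransform}), which must be verified for the constructed instance. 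Without the $\delta=\sum 4^i x_i$ coupling and the quadratic-in-$\delta$ integrality test, the reduction as you have sketched it does not go through.
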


The proof has 3 steps. In the first step, we take the dual and negation of the original PLP to get a form of the problem that is easy to analyze. In the second step, we reduce the independent set problem to this form of PLP by considering a quadratic instance of the form. In the third step, we show that the instance has size bounded by polynomial.

\paragraph{Step 1: Dual and negation.}
We will first take the dual and negation of the program, then use a construction from \cite{pardalos1991quadratic}.
From linear programming duality, 
\begin{equation*}
    \forall\delta, \text{ the program } A(\delta)x\ge b(\delta) \text{ is feasible } 
\end{equation*}
is equivalent to 
$$\forall \delta, y,\text{ s.t. } y\ge 0 \text{ and } A^\top (\delta)y=0, \text{ it holds that } b^\top (\delta)y\le 0\,.$$
Negation of this is 
\begin{equation}\label{equ:exists}
    \exists \delta,y\text{ satisfying }y\ge 0 \text{ and } A^\top (\delta)y=0\text{ with }b^\top (\delta)y> 0\,.
\end{equation}
\eqref{equ:exists} is easier to analyze than the original form of problem because it asks the existence of a solution, while the original problem asks to verify a property for arbitrary $\delta$.

\begin{lemma}\label{lem:LPtransform}
    Fix $A\in \mathbb{R}^{m\times n}$ and $c\in \mathbb{R}^n$ such that there is no $x^*\in \mathbb{R}^n$ simultaneously satisfying
    \begin{equation}\label{equ:degenerate-form}
    \left\{
    \begin{aligned}
        c^\top x^*>0\\
        Ax^*\ge 0\,.
    \end{aligned}
    \right. 
    \end{equation}
    Then for any $b\in \mathbb{R}^m$ and  $d\in \mathbb{R}$ there exists a solution $x\in \mathbb{R}^n$ of  
    \begin{equation}\label{equ:first-form}
    \left\{
    \begin{aligned}
        c^\top x>d\\
        Ax\ge b
    \end{aligned}
    \right. 
    \end{equation}
    if and only if there exists a solution $x_1,x_2\in \mathbb{R}^n$, $w\in \mathbb{R}$, $s\in \mathbb{R}^m$ of
    \begin{equation}\label{equ:second-form}
    \left\{
    \begin{aligned}
        c^\top (x_1-x_2)-dw>0\\
        A(x_1-x_2)=bw+s\\
        x_1,x_2,w,s\ge 0\,.
    \end{aligned}
    \right. 
    \end{equation}
\end{lemma}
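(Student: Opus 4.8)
The plan is to prove the two implications separately, exploiting the homogenization trick that underlies the passage from \eqref{equ:first-form} to \eqref{equ:second-form}: the variable $w$ plays the role of a ``scaling'' that converts the affine system $Ax\ge b$, $c^\top x>d$ into a homogeneous one, and the pair $(x_1,x_2)$ with $x_1,x_2\ge 0$ encodes an arbitrary (sign-unconstrained) vector $x=x_1-x_2$, while $s\ge 0$ is the slack turning the inequality $A(x_1-x_2)\ge bw$ into an equality.

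First I would do the easy direction: suppose \eqref{equ:first-form} has a solution $x$. Set $w=1$, write $x=x_1-x_2$ with $x_1,x_2\ge 0$ (e.g.\ $x_1=x^+$, $x_2=x^-$ coordinatewise), and set $s=Ax-b\ge 0$. Then $A(x_1-x_2)=Ax=b+s=bw+s$ and $c^\top(x_1-x_2)-dw=c^\top x-d>0$, so \eqref{equ:second-form} is satisfied. This direction does not use the hypothesis \eqref{equ:degenerate-form}.

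Next, the reverse direction: suppose $(x_1,x_2,w,s)$ solves \eqref{equ:second-form}. Let $x=x_1-x_2$, so $Ax=bw+s\ge bw$ and $c^\top x > dw$, with $w\ge 0$. If $w>0$, then dividing by $w$ and setting $x'=x/w$ gives $Ax'\ge b$ and $c^\top x'>d$, a solution of \eqref{equ:first-form}. The remaining case is $w=0$: then $Ax = s\ge 0$ and $c^\top x>0$, i.e.\ $x$ is exactly a vector violating the hypothesis \eqref{equ:degenerate-form} --- contradiction. Hence $w>0$ necessarily, and we are done. So the role of the hypothesis is precisely to rule out the degenerate ``solution at infinity'' ($w=0$), which would otherwise make \eqref{equ:second-form} feasible without \eqref{equ:first-form} being feasible.

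The main (and only real) subtlety is handling the $w=0$ case cleanly: one must observe that $s\ge 0$ forces $Ax\ge 0$, and that $c^\top x > 0$ is strict, so that $x$ lands in the set excluded by \eqref{equ:degenerate-form}; there is no issue of $x$ possibly being zero since $c^\top x>0$ already precludes that. I would also note in passing that one can always normalize $w$ to $1$ in \eqref{equ:second-form} once it is known to be positive, which is how this lemma will be chained with the reduction to the form \eqref{equ:exists} in the subsequent steps of the hardness proof. No step requires more than elementary linear algebra.
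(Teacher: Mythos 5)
Your proof is correct and follows essentially the same route as the paper's: the forward direction via $x_1=x^+$, $x_2=x^-$, $w=1$, $s=Ax-b$, and the reverse direction by ruling out $w=0$ using the hypothesis \eqref{equ:degenerate-form} and then rescaling by $w>0$. No gaps.
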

\begin{proof}
Suppose program (\ref{equ:first-form}) has a feasible solution $x$. Then letting $x_1=\max\{x,0\}$, $x_2=\max\{-x,0\}$, $w=1$, $s=Ax-b$, we get a feasible point for program (\ref{equ:second-form}).

Conversely, suppose program (\ref{equ:second-form}) has a feasible solution $x_1,x_2,w,s$. We claim that $w\not=0$. Otherwise, by taking $x^*=x_1-x_2$, we get $c^\top x^*>0$ and $Ax^*=s\ge 0$, i.e., a feasible solution for (\ref{equ:degenerate-form}) -- a contradiction. So we can assume $w\neq 0$ and let $x=(x_1-x_2)/w$ so that $c^\top x>d$ and $Ax-b=s/w\ge 0$. This means program (\ref{equ:first-form}) is feasible.
\end{proof}

Note that \eqref{equ:second-form} is a special case of \eqref{equ:exists}, we have the following corollary.
\begin{corollary}
   Suppose there is a polynomial time algorithm for testing everywhere feasibility of any PLP. Then there is a poly-time algorithm for solving the following problem. Given $m\times n$ matrix polynomial $A(\delta)$, $m$-dimensional vector polynomial $b(\delta)$, $n$-dimensional vector polynomial $c(\delta)$ and polynomial $d(\delta)$, decide whether 
   \begin{equation}\label{eq:lp2}
          \exists x, \delta \quad \left\{
       \begin{aligned}
       &c^\top (\delta)x>d(\delta)\\
       &A(\delta)x\ge b(\delta)
       \end{aligned}
       \right. 
   \end{equation}
   where $A(\delta),c(\delta)$ satisfies for any $\delta\in \mathbb{R}$, the following program has no solution:
    \begin{equation} \label{eq:nondeg_cond}
    \left\{
    \begin{aligned}
        c^\top (\delta)x^*&>0\\
        A(\delta)x^*&\ge 0\,. 
    \end{aligned}
    \right.
    \end{equation} 
\end{corollary}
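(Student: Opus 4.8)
The plan is to chain together linear programming duality with the pointwise reduction of Lemma~\ref{lem:LPtransform}. First I would recall that, by the duality computation preceding \eqref{equ:exists}, for any polynomial matrix $\hat A(\delta)$ and polynomial vector $\hat b(\delta)$ the statement that $(\hat A(\delta),\hat b(\delta))$ is everywhere feasible is exactly the negation of \eqref{equ:exists} instantiated with $(\hat A,\hat b)$. Hence a polynomial-time decision procedure for everywhere feasibility immediately yields a polynomial-time procedure that decides, for any given PLP data, whether there exist $\delta$ and $y\ge 0$ with $\hat A^\top(\delta)y=0$ and $\hat b^\top(\delta)y>0$.

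Next I would apply Lemma~\ref{lem:LPtransform} with the substitution $A\mapsto A(\delta)$, $b\mapsto b(\delta)$, $c\mapsto c(\delta)$, $d\mapsto d(\delta)$, separately for each fixed $\delta\in\mathbb{R}$. The hypothesis \eqref{eq:nondeg_cond} says precisely that for every $\delta$ the system \eqref{equ:degenerate-form} with data $(A(\delta),c(\delta))$ is infeasible, which is exactly what the lemma requires. Therefore, for each fixed $\delta$, the $\delta$-slice of \eqref{eq:lp2} is solvable in $x$ if and only if the $\delta$-slice of \eqref{equ:second-form} is solvable in $(x_1,x_2,w,s)$. Taking the disjunction over all $\delta$, I conclude that \eqref{eq:lp2} has a solution $(x,\delta)$ if and only if there exist $\delta$ and $(x_1,x_2,w,s)\ge 0$ satisfying the parametric version of \eqref{equ:second-form}, namely $c^\top(\delta)(x_1-x_2)-d(\delta)w>0$ together with $A(\delta)(x_1-x_2)=b(\delta)w+s$.

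Then I would observe that this last existential statement is literally an instance of \eqref{equ:exists}. Collecting the nonnegative variables into $y=(x_1,x_2,w,s)$, the equalities $A(\delta)(x_1-x_2)-b(\delta)w-s=0$ form a homogeneous system $\hat A^\top(\delta)y=0$ whose coefficient block is $[\,A(\delta)\;\;-A(\delta)\;\;-b(\delta)\;\;-I\,]$, and the strict inequality is $\hat b^\top(\delta)y>0$ with $\hat b(\delta)=(c(\delta),-c(\delta),-d(\delta),0)$. The degrees of these polynomials equal those of the input, the number of variables and constraints grows only by a constant factor, and the bit-sizes of coefficients are unchanged, so the construction of $(\hat A(\delta),\hat b(\delta))$ runs in polynomial time; feeding it to the assumed everywhere-feasibility algorithm and negating the answer decides \eqref{eq:lp2}.

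The main obstacle I anticipate is the bookkeeping in the second step: Lemma~\ref{lem:LPtransform} is stated for constant $A,c$, so one must verify that its equivalence is genuinely pointwise in $\delta$ and hence survives taking the union over $\delta$ — this is fine because the existential quantifiers over $\delta$ and over the auxiliary variables $(x_1,x_2,w,s)$ simply commute, with no uniformity in $\delta$ needed. A secondary point to get right is that the strict form ``$>0$'' in \eqref{equ:exists} matches exactly what the duality negation produces (the dual inequality $b^\top y\le 0$ negates to $b^\top y>0$), and that the homogenization variable $w$ is what absorbs the inhomogeneous right-hand sides $b$ and $d$ while keeping everything a single PLP instance.
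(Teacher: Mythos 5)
Your proposal is correct and follows essentially the same route as the paper: the paper's (implicit) proof is exactly that the duality/negation step identifies everywhere feasibility with the negation of \eqref{equ:exists}, and Lemma~\ref{lem:LPtransform}, applied pointwise in $\delta$ under hypothesis \eqref{eq:nondeg_cond}, turns \eqref{eq:lp2} into the instance of \eqref{equ:exists} with data $\hat A^\top(\delta)=[\,A(\delta)\;\;-A(\delta)\;\;-b(\delta)\;\;-I\,]$ and $\hat b(\delta)=(c(\delta),-c(\delta),-d(\delta),0)$. Your explicit bookkeeping of the homogenization variable $w$ and the pointwise-in-$\delta$ use of the lemma fills in details the paper leaves to the reader.
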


\paragraph{Step 2: Independent set to PLP.}
Our next step is to take any graph $G$ and a produce an instance $A(\delta),b(\delta), c(\delta), d(\delta)$ satisfying~\eqref{eq:nondeg_cond}, such that~\eqref{eq:lp2} is feasible if and only if there exists an independent set of size $k$ in~$G$. 


\begin{lemma}[\cite{pardalos1991quadratic}]\label{lem:PV91}
    There exists a polynomial-size linear program $P$,
    \begin{equation}\label{equ:LP-PV91}\tag{$P$}
    \begin{aligned}
        &0\le x_i\le 1,1\le i\le n\\
        &y_{ij}\ge 0,y_{ij}\ge x_i+x_j-1,1\le i<j\le n\\
        &\delta = \sum_{i=1}^n4^ix_i\\
        &z=\sum_{i=1}^n4^{2i}x_i+\sum_{1\le i<j\le n}2\cdot 4^{i+j}y_{ij},
    \end{aligned}
    \end{equation}
    such that for any fixed $x$,
    \begin{equation*}
        \text{$\max_{y,z,\delta: (\delta,x,y,z)\in P} \delta^2-z=0$ if $x\in \{0,1\}^n$,}
    \end{equation*}
    and 
    \begin{equation*}
        \text{$\max_{y,z,\delta: (\delta,x,y,z)\in P} \delta^2-z<0$ \text{ otherwise}.}
    \end{equation*}
\end{lemma}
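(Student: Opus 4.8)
The plan is to verify directly that this particular linear program $P$ (essentially the construction of Pardalos and Vavasis) behaves as claimed; the only genuine content is one weighted inequality that exploits the geometric growth of the coefficients $4^i$. First I would observe that for a fixed $x\in[0,1]^n$ the auxiliary variables $\delta$ and $z$ are \emph{determined} by $(x,y)$, namely $\delta=\sum_i 4^ix_i$ and $z=\sum_i 4^{2i}x_i+\sum_{i<j}2\cdot 4^{i+j}y_{ij}$. Hence maximizing $\delta^2-z$ over $(\delta,x,y,z)\in P$ with $x$ fixed is the same as minimizing $z$ over the feasible $y$, and since each coefficient $2\cdot 4^{i+j}$ is positive and the only lower bounds on $y_{ij}$ are $y_{ij}\ge 0$ and $y_{ij}\ge x_i+x_j-1$, the minimizer is $y_{ij}^\star=\max(0,x_i+x_j-1)$. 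Thus
\[
\max_{(\delta,x,y,z)\in P}\bigl(\delta^2-z\bigr)=\Bigl(\sum_i 4^ix_i\Bigr)^2-\sum_i 4^{2i}x_i-\sum_{i<j}2\cdot 4^{i+j}\max(0,x_i+x_j-1).
\]

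Next I would expand the square, $\bigl(\sum_i 4^ix_i\bigr)^2=\sum_i 4^{2i}x_i^2+\sum_{i<j}2\cdot 4^{i+j}x_ix_j$, so the right-hand side becomes
\[
-\sum_i 4^{2i}\,x_i(1-x_i)\;+\;\sum_{i<j}2\cdot 4^{i+j}\bigl(x_ix_j-\max(0,x_i+x_j-1)\bigr).
\]
A one-line case check (according to whether $x_i+x_j\le 1$) gives $x_ix_j-\max(0,x_i+x_j-1)=\min\bigl(x_ix_j,(1-x_i)(1-x_j)\bigr)=:\mu_{ij}\ge 0$, and in particular $\mu_{ij}=0$ whenever $x_i,x_j\in\{0,1\}$. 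This immediately settles the Boolean case: if $x\in\{0,1\}^n$ then every $x_i(1-x_i)=0$ and every $\mu_{ij}=0$, so the maximum equals $0$.

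For $x\notin\{0,1\}^n$ the key step is the bound $\mu_{ij}\le x_j(1-x_j)$ for $i<j$ — valid since $x_i+x_j\le 1$ forces $x_ix_j\le x_j(1-x_j)$ and $x_i+x_j\ge 1$ forces $(1-x_i)(1-x_j)\le x_j(1-x_j)$ — where it is essential to attach each cross term to the \emph{larger} index. Substituting and regrouping by $j$, and using $\sum_{i=1}^{j-1}4^i=(4^j-4)/3$,
\[
\max_{(\delta,x,y,z)\in P}\bigl(\delta^2-z\bigr)\le\sum_j x_j(1-x_j)\Bigl(-4^{2j}+\sum_{i<j}2\cdot 4^{i+j}\Bigr)=-\sum_j x_j(1-x_j)\,\frac{4^{2j}+8\cdot 4^j}{3}.
\]
Every bracket is strictly positive, so the right-hand side is $\le 0$ and is $<0$ as soon as some $x_j\in(0,1)$, i.e.\ exactly when $x\notin\{0,1\}^n$, as required. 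Finally, $P$ has $n+\binom{n}{2}+2$ variables, $O(n^2)$ linear constraints, and all coefficients are $4^i,4^{2i},4^{i+j}$ with $O(n)$ bits, so its size is polynomial in $n$. The main obstacle is the last step: the crude symmetric estimate $\mu_{ij}\le\sqrt{x_i(1-x_i)x_j(1-x_j)}$, or bounding by the \emph{smaller} index, loses too much and the inequality fails; pinning each cross term to the larger index is precisely what makes $\sum_{i<j}4^{i+j}$ geometrically dominated by $4^{2j}$, which is the whole point of the $4^i$ weights.
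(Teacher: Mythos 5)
Your proposal is correct and follows essentially the same route as the paper: the paper proves this statement via its Lemma~\ref{lem:bound-epsilon} (see the remark that Lemma~\ref{lem:bound-epsilon} implies Lemma~\ref{lem:PV91}), whose proof uses exactly your steps — setting $y_{ij}=(x_i+x_j-1)_+$, the case-check identity $x_ix_j-(x_i+x_j-1)_+\le x_j(1-x_j)$ attached to the larger index, and the geometric-series domination of $\sum_{i<j}2\cdot 4^{i+j}$ by $4^{2j}$. Your write-up additionally makes explicit that the value is exactly $0$ at Boolean points, which is consistent with the paper's statement.
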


We can add linear constraints for an arbitrary instance of $k$-independent set problem $P'$, 
    \begin{equation}  \label{eq:indSetLin}     \tag{$P'$}
    \begin{aligned}
        & 0\le x_i\le 1,\forall i\in [n]\\
        & x_i+x_j\le 1,\forall (i,j)\in E\\
        & \sum_i x_i\ge k\,.
        \end{aligned}
    \end{equation} 
So for a small enough constant $\epsilon$, there exists a feasible point in $P\cap P'$ satisfying $\delta^2-z+\epsilon >0$ if and only if the answer to the $k$-independent set problem is yes.
The final instance $P^*$ is $P\cap P'$ with a constraint $\delta^2-z+\epsilon >0$. Note that this decision problem satisfies the form in (\ref{eq:lp2}). 

We can check that (\ref{eq:nondeg_cond}) is infeasible for $P^*$. Substituting all constant terms with 0, we get 
\begin{equation*}
    \begin{aligned}
    &x_i=0,1\le i\le n\\
    &y_{ij}\ge 0,1\le i<j\le n\\
    &z=\sum_{i=1}^n4^{2i}x_i+\sum_{1\le i<j\le n}2\cdot 4^{i+j}y_{ij}\\
    &-z>0\,,
    \end{aligned}
\end{equation*}
which is clearly infeasible.

\begin{remark}
Here we chose the independent set problem for concreteness, but in fact we can use any NP-hard problem that can be written as an integer linear program on $\{0,1\}^n$ with polynomially bounded parameters.
\end{remark}

\paragraph{Step 3: Bounding the size of reduction instance.}
Next we only need to show that $\epsilon$ can be chosen to be represented by polynomial number of bits. From Lemma~\ref{lem:d-infty} and Lemma~\ref{lem:bound-epsilon}, we can set $\epsilon$ to be any constant between $0$ and $\left(2k^{n^3+n}n!\right)^{-2}$ so that $P'$ contains a feasible integer solution if and only if $\max_{x,y,z,\delta\in P} \delta^2-z+\epsilon>0$. Indeed, if $P'$ contains an integer solution $x_0$, then fix $x_0$, we have $\max_{y,z,\delta\in P} \delta^2-z+\epsilon=\epsilon>0$ by Lemma~\ref{lem:PV91}. If $P'$ does not contain an integer solution, by Lemma~\ref{lem:d-infty} and Lemma~\ref{lem:bound-epsilon}, fixing any $x\in P'$, we have $$\max_{y,z,\delta\in P} \delta^2-z+\epsilon\le \left(2k^{n^3+n}n!\right)^{-2}+\epsilon<0.$$ The denominator for $\epsilon$ is $O(\exp(poly(n)))$ so it can be represented by polynomially many of bits.

\begin{lemma}[\cite{freund1985complexity}, Lemma 1]\label{lem:d-infty}
    Recall that \ref{eq:indSetLin} is the linear relaxation of the $k$-independent set problem for graph $G=(V,E)$ as defined above.
    If $P'\neq\emptyset$ but there is no integer solution, i.e., $P'\cap \{0,1\}^n=\emptyset$, then $d_{\infty}(P',\{0,1\}^n)\ge \left(2k^{n^3+n}n!\right)^{-1}$.
\end{lemma}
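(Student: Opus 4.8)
The plan is to reduce the statement to a bound on the bit-complexity of a vertex of an auxiliary linear program. For each $v \in \{0,1\}^n$ consider the linear program computing the $\ell_\infty$-distance from $v$ to the polytope $P'$:
\[
t^*(v) \;=\; \min\ \bigl\{\, t \;:\; x \in P',\ -t \le x_i - v_i \le t \ \text{ for } i \in [n] \,\bigr\},
\]
in the $n+1$ variables $(x,t)$. Since $P' \neq \emptyset$ this program is feasible, and since $P' \subseteq [0,1]^n$ its feasible region is a bounded polytope, so the minimum is attained at a vertex, i.e.\ at a point where $n+1$ linearly independent constraints hold with equality. Because $P'$ is closed and bounded, $t^*(v) = 0$ if and only if $v \in P'$, which by hypothesis never occurs; hence $t^*(v) > 0$ for every $v \in \{0,1\}^n$. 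Since $d_\infty(P',\{0,1\}^n) = \min_{v \in \{0,1\}^n} t^*(v)$, it suffices to lower bound each $t^*(v)$ uniformly.

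First I would record the coefficient pattern of this LP. Writing $P'$ through its defining inequalities $0 \le x_i \le 1$, $x_i + x_j \le 1$ for $(i,j) \in E$, and $-\sum_i x_i \le -k$, together with the $2n$ inequalities $\pm(x_i - v_i) - t \le 0$, every coefficient of the left-hand sides lies in $\{-1,0,1\}$ and every right-hand side is an integer of absolute value at most $k$. A vertex of the feasible region is the unique solution of the $(n+1)\times(n+1)$ subsystem of its tight constraints, so by Cramer's rule $t^*(v)$ is a ratio of two integers whose denominator is, in absolute value, a nonzero $(n+1)\times(n+1)$ subdeterminant of a $\{-1,0,1\}$-matrix. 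A crude expansion bounds this denominator by $(n+1)!$, which is at most $2k^{n^3+n}n!$ (one may assume $k \ge 2$, since for $k \le 1$ the relaxation $P'$ already contains an integer point and the hypothesis is vacuous). Consequently $t^*(v) > 0$ forces $t^*(v) \ge \bigl(2k^{n^3+n}n!\bigr)^{-1}$, and taking the minimum over $v$ yields the claim.

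The substantive steps here are entirely routine: the determinant estimate and the verification of the coefficient pattern. The only point requiring a moment's care is confirming that the optimum of the distance LP is a genuine basic feasible solution (so that Cramer's rule applies, rather than merely an approximate minimizer) and that $v \notin P'$ produces a strictly positive optimal value rather than an infimum equal to zero; both follow at once from compactness of $P' \subseteq [0,1]^n$. I do not anticipate a real obstruction: the bound in the statement is deliberately loose, so any argument controlling the denominator of a vertex of this LP — for instance the standard estimates on the sizes of basic solutions of rational linear programs — would suffice, and the factor $k^{n^3+n}$ is far larger than what the $\{-1,0,1\}$-structure actually demands.
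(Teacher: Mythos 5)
The paper does not prove this lemma at all: it is imported verbatim as Lemma~1 of \cite{freund1985complexity}, so there is no internal argument to compare against. Your self-contained proof is correct and is essentially the standard vertex-complexity argument that underlies results of this type: encode $d_\infty(v,P')$ as an LP, observe that the optimum is attained at a basic feasible solution, and bound the denominator of that basic solution via Cramer's rule and the $\{-1,0,1\}$ coefficient structure, so that a positive optimal value is at least $1/(n+1)!$, which is dominated by $\left(2k^{n^3+n}n!\right)^{-1}$ once you dispose of $k\le 1$ (where the hypothesis is vacuous because $0$ or any $e_i$ is an integer point of $P'$). The one inaccuracy is your claim that the feasible region of the distance LP is a \emph{bounded} polytope: the variable $t$ is unbounded above, so the region is an unbounded polyhedron. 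This does not damage the argument --- the region is contained in $[0,1]^n\times[0,\infty)$, hence pointed, and the objective $t$ is bounded below, so the minimum is still attained at a vertex --- but you should either say this or add the redundant constraint $t\le 1$ (valid since $P'\subseteq[0,1]^n$ and $v\in\{0,1\}^n$) to make the compactness claim literally true. With that repair the proof is complete and, if anything, sharper than the quoted bound, since the $\{-1,0,1\}$ structure gives $1/(n+1)!$ rather than the much looser $\left(2k^{n^3+n}n!\right)^{-1}$.
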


\begin{lemma}\label{lem:bound-epsilon}
    Under the linear program \ref{equ:LP-PV91} given in Lemma~\ref{lem:PV91}, for a fixed $x\in [0,1]^n$,
    \begin{equation*}
        \max_{y,z,\delta:(\delta,x,y,z)\in P} \delta^2-z\le  - d_{\infty}^2(x,\{0,1\}^n)\,.
    \end{equation*}
\end{lemma}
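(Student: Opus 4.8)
The plan is to fix $x\in[0,1]^n$ and compute the left-hand side in closed form before bounding it. Since the constraints of $P$ pin down $\delta=\sum_i 4^ix_i$ and set $z=\sum_i 4^{2i}x_i+2\sum_{i<j}4^{i+j}y_{ij}$, the only freedom is in the $y_{ij}$, and maximizing $\delta^2-z$ just means minimizing each $y_{ij}$ subject to $y_{ij}\ge 0$ and $y_{ij}\ge x_i+x_j-1$, i.e. taking $y_{ij}=\max(0,x_i+x_j-1)$. Expanding $\delta^2=\big(\sum_i 4^ix_i\big)^2=\sum_i 4^{2i}x_i^2+2\sum_{i<j}4^{i+j}x_ix_j$ then gives
$$\max_{y,z,\delta:(\delta,x,y,z)\in P}\delta^2-z \;=\; -\sum_i 4^{2i}x_i(1-x_i)\;+\;2\sum_{i<j}4^{i+j}\,\phi(x_i,x_j),$$
where $\phi(a,b):=ab-\max(0,a+b-1)$ equals $ab$ if $a+b\le 1$ and $(1-a)(1-b)$ if $a+b\ge 1$; in particular $\phi(a,b)\ge 0$. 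So the diagonal terms push the value down and the only thing to control is the nonnegative off-diagonal contribution.

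The key input is the pointwise bound $\phi(a,b)\le\min\big(a(1-a),\,b(1-b)\big)$ for $a,b\in[0,1]$, which I will check by two cases: if $a+b\le1$ then $a(1-a)-\phi=a(1-a-b)\ge0$ and symmetrically for $b$; if $a+b\ge1$ then $a(1-a)-\phi=(1-a)(a+b-1)\ge0$ and symmetrically for $b$. Writing $u_j:=x_j(1-x_j)$ and using $\phi(x_i,x_j)\le u_j$ whenever $i<j$, together with $\sum_{i=1}^{j-1}4^i=(4^j-4)/3\le 4^j/3$, I can bound the off-diagonal sum by
$$2\sum_{i<j}4^{i+j}\phi(x_i,x_j)\;\le\;2\sum_{j}u_j\,4^j\!\!\sum_{i=1}^{j-1}4^i\;\le\;\tfrac23\sum_j 4^{2j}u_j,$$
and hence $\max_{y,z,\delta}\delta^2-z\le-\tfrac13\sum_j 4^{2j}x_j(1-x_j)$.

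Finally I relate this to $d_\infty$. Pick an index $k$ attaining $M:=d_\infty(x,\{0,1\}^n)=\max_i\min(x_i,1-x_i)$; note $M\le\tfrac12$ since $\min(x_i,1-x_i)\le\tfrac12$ always. Then $x_k(1-x_k)=M(1-M)\ge M\cdot M=M^2$ (using $1-M\ge M$) and $4^{2k}\ge16$, so $\sum_j 4^{2j}x_j(1-x_j)\ge 4^{2k}x_k(1-x_k)\ge 16M^2$, giving $\max_{y,z,\delta}\delta^2-z\le-\tfrac{16}{3}M^2\le-M^2=-d_\infty^2(x,\{0,1\}^n)$, which is the claim (and consistent with Lemma~\ref{lem:PV91}, since $M=0$ exactly when $x$ is integral). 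I expect the middle step to be the only real obstacle: naive estimates such as $\phi(a,b)\le\sqrt{a(1-a)}\sqrt{b(1-b)}$ combined with Cauchy--Schwarz are far too lossy (the ratio of $(\sum_i s_i)^2$ to $\sum_i s_i^2$ can be of order $n$), and the resolution is to exploit the \emph{asymmetric} bound $\phi(x_i,x_j)\le x_j(1-x_j)$ for $i<j$ in tandem with the geometric decay of the weights $4^i$, so that the off-diagonal defect is at most a fixed fraction of the diagonal term regardless of $n$.
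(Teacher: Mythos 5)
Your proof is correct and follows essentially the same route as the paper's: minimize the $y_{ij}$, expand $\delta^2$, bound the cross term $x_ix_j-(x_i+x_j-1)_+$ by $x_j(1-x_j)$, and absorb the off-diagonal sum into the diagonal via the geometric series $\sum_{i<j}4^{i+j}\le \tfrac13 4^{2j}$. Your final step is in fact slightly more careful than the paper's, which writes the bound as $-\min_i t_i$ and equates that with $d(1-d)$; the correct move, as you do, is to retain the weighted sum and evaluate it at the coordinate \emph{attaining} $d_\infty(x,\{0,1\}^n)=\max_i\min(x_i,1-x_i)$.
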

\begin{proof}
To reach the maximum of $\delta^2-z$, $y_{ij}$ are at their minimum, $\max\{0,x_i+x_j-1\}=(x_i+x_j-1)_+$, so
\begin{equation*}
    \max_{y} \delta^2-z = \sum_{i=1}^n 4^{2i}(x_i^2-x_i)+ \sum_{1\le i<j\le n}2\cdot 4^{i+j} \left(x_ix_j-(x_i+x_j-1)_+\right)
\end{equation*}
Note that $x_ix_j-(x_i+x_j-1)_+\le x_j(1-x_j)$, because when $x_i\le 1-x_j$, $x_ix_j-(x_i+x_j-1)_+=x_ix_j\le x_j(1-x_j)$, otherwise $x_i> 1-x_j$, $x_ix_j-(x_i+x_j-1)_+=(1-x_i)(1-x_j)\le x_j(1-x_j)$. 
Let $t_i=x_i(1-x_i)$, then 
\begin{eqnarray*}
    \max_y \delta^2-z &\le& -\sum_{i=1}^n 4^{2i}t_i + \sum_{1\le i<j\le n}2\cdot 4^{i+j} t_j\\
    &=& \sum_{i=1}^n \Big(-4^{2i} + 2\cdot 4^i\sum_{j=1}^{i-1}4^j\Big)t_i\\
    &\le & \sum_{i=1}^n -\frac{4^{2i}}{3}t_i\\
    &\le & -\min_it_i.
\end{eqnarray*}
Let $d=d_{\infty}(x,\{0,1\}^n)$.
Because $d\le 1/2$, we have
\begin{equation*}
    \min_it_i= d (1-d)\ge d^2.
\end{equation*}
\end{proof}

\begin{remark}
Note that Lemma~\ref{lem:bound-epsilon} implies Lemma~\ref{lem:PV91}, so we give an alternate shorter and more intuitive proof of the main theorem in \cite{pardalos1991quadratic}.
\end{remark}

\subsection{Hardness of Everywhere Infeasibility of PLP}\label{sec:hardness-infea}
In this section we prove hardness of everywhere infeasibility of 1-PLP. The proof is similar to feasibility and we again reduce from the independent set problem. 
\begin{theorem}
    The everywhere infeasibility of 1-PLP is NP-hard.
\end{theorem}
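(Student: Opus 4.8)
## Proof proposal for Theorem (Hardness of everywhere infeasibility of 1-PLP)

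The plan is to mirror the structure of the everywhere-feasibility hardness proof from the previous subsection, reducing again from the $k$-independent set problem, but this time arranging the reduction so that the constructed 1-PLP is everywhere infeasible if and only if the independent set instance is a \emph{no} instance. Recall that everywhere feasibility of a 1-PLP $(A(\delta),b(\delta))$ negates to
\begin{equation*}
    \exists \delta, y:\ y\ge 0,\ A^\top(\delta)y=0,\ b^\top(\delta)y>0\,,
\end{equation*}
and likewise everywhere infeasibility asks that for \emph{every} $\delta$ the system $A(\delta)x\ge b(\delta)$ has no solution, which by Farkas/LP duality is equivalent to: for every $\delta$ there exists $y\ge 0$ with $A^\top(\delta)y=0$ and $b^\top(\delta)y>0$. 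So the target decision problem is a $\forall\delta\,\exists y$ statement, structurally the complement of what we analyzed before. The key point is that the Pardalos--Vavasis-style gadget $P$ from Lemma~\ref{lem:PV91}, together with the independent-set linear constraints $P'$ from~\eqref{eq:indSetLin}, already encodes the arithmetic identity $\delta^2 = z$ on exactly the integral points; I will reuse this gadget essentially verbatim.

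First I would take the same linear system $P\cap P'$ and the same choice of the rational constant $\epsilon$ bounded by $(2k^{n^3+n}n!)^{-2}$, whose bit-size is polynomial by Lemma~\ref{lem:d-infty} and Lemma~\ref{lem:bound-epsilon}. The crucial difference from the feasibility reduction: instead of appending the single strict constraint $\delta^2 - z + \epsilon > 0$ and asking whether the resulting system is feasible for \emph{some} $\delta$, I want to build a 1-PLP whose feasibility (in $x$, for a given $\delta$) fails precisely on the $\delta$'s that correspond to integral $x$ solving $P'$. Concretely, the reduction should output $(A(\delta),b(\delta))$ such that $A(\delta)x\ge b(\delta)$ is infeasible for a particular $\delta$ exactly when $\delta$ is realizable as $\sum_i 4^i x_i$ for an integral $x$ feasible in $P'$; then everywhere infeasibility holds iff there is \emph{no} such $\delta$, i.e., iff the $k$-independent set instance is a \emph{no} instance. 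To achieve this I would again pass through Lemma~\ref{lem:LPtransform} (and its Corollary) to convert the mixed strict/nonstrict system into the homogeneous ``$\exists\delta,y$'' normal form, and then read the $\forall\delta$ version of that normal form as everywhere infeasibility of an explicit PLP. The nondegeneracy condition~\eqref{eq:nondeg_cond} must be checked for the new instance exactly as before (setting constant terms to zero collapses $x_i\to 0$, $z$ to a nonnegative combination of the $y_{ij}$'s, and $-z>0$ is infeasible).

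The main obstacle I expect is the direction-of-quantifier bookkeeping: everywhere infeasibility is a universally quantified statement over $\delta$, so I cannot simply negate the feasibility reduction — I need the gap property of the gadget ($\delta^2 - z \le -d_\infty^2(x,\{0,1\}^n)$, Lemma~\ref{lem:bound-epsilon}) to force, for \emph{every} $\delta$, either that the corresponding fiber of $P\cap P'$ is empty (no $x$ near integral feasible points) or that it certifies a solution $y$ of the dual system. Making sure that the ``for every $\delta$'' ranges over all of $\mathbb{R}$ — including $\delta$ not of the form $\sum_i 4^i x_i$ at all, and including $\delta$ where $P'$ is entirely empty — without accidentally creating spurious feasible or infeasible fibers is the delicate part; it should follow because for such $\delta$ the gadget constraints $\delta = \sum_i 4^i x_i$ pin down $x$ uniquely (base-4 with digits in $[0,1]$), and off-integer or infeasible choices give the strict gap, while on the genuine \emph{yes} side one exhibits an explicit dual certificate $y$. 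Once the fibers are controlled, polynomial size of the instance follows from the same estimate on $\epsilon$ and the fact that all coefficients ($4^i$, $4^{2i}$, $2\cdot 4^{i+j}$) are polynomially many bits, and NP-hardness transfers from $k$-independent set.
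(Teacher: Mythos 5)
Your high-level plan (reduce from $k$-independent set via the Pardalos--Vavasis gadget $P$ and the constraints $P'$) is the right one, but the proposal contains a polarity error at its central step and then routes through machinery that is both unnecessary and left incomplete. You state that the reduction should make $A(\delta)x\ge b(\delta)$ \emph{infeasible} precisely at those $\delta$ realizable as $\sum_i 4^i x_i$ for an integral $x$ feasible in $P'$, and feasible elsewhere. Under that fiber property the instance is \emph{never} everywhere infeasible (almost every $\delta\in\mathbb{R}$ is not of this form, so some fiber is always feasible), so your claimed conclusion ``everywhere infeasibility holds iff there is no such $\delta$'' does not follow. The correct property is the opposite: the system should be \emph{feasible} exactly at the $\delta$'s realized by integral feasible $x$, and infeasible at every other $\delta$; then everywhere infeasibility holds iff the independent set instance is a no-instance. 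This is not a cosmetic slip: your subsequent worry about ``accidentally creating spurious feasible or infeasible fibers'' for $\delta$ off the realizable set, and your appeal to the $\epsilon$-gap of Lemma~\ref{lem:bound-epsilon}, are artifacts of the inverted formulation and of trying to control \emph{all} fibers, which the correct formulation does not require.

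The second, related gap is that you dualize when there is no need to, and this is what forces you into Lemma~\ref{lem:LPtransform}, strict inequalities, and the $\epsilon$-gap --- none of which you then carry to completion (the ``delicate part'' is exactly the unproved core of your argument). The negation of everywhere infeasibility is already the purely existential, strict-inequality-free statement $\exists\,\delta,x:\ A(\delta)x\ge b(\delta)$, i.e.\ joint feasibility in $(\delta,x)$. The paper exploits this directly: the system consisting of $P$, the \emph{equality} $\delta^2-z=0$, and $P'$ is itself a 1-PLP of the required form, and by Lemma~\ref{lem:PV91} the constraint $\delta^2-z=0$ is satisfiable within $P$ precisely when $x\in\{0,1\}^n$ (for non-integral $x$ one has $\delta^2-z<0$ strictly, so the equality fails). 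Hence the joint system is feasible for some $(\delta,x)$ iff $G$ has an independent set of size $k$, and the instance is everywhere infeasible iff it does not. No dual certificate $y$, no Lemma~\ref{lem:LPtransform}, no choice of $\epsilon$, and no analysis of nonrealizable $\delta$ is needed, because a single witnessing $\delta$ suffices for the negation. To repair your write-up you would need to (i) flip the fiber property, and (ii) either drop the dualization entirely or actually verify the nondegeneracy condition and complete the quantifier bookkeeping you flag as the obstacle.
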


\begin{proof}
The everywhere infeasibility of 1-PLP can be written as follows:
\begin{equation*}
    \forall \delta\in \mathbb{R}, \forall x, A(\delta)x\ge b(\delta) \text{ is not satisfied}.
\end{equation*}
First, we take the negation, which is 
\begin{equation*}
    \exists \delta, x\text{ such that } A(\delta)x\ge b(\delta)
\end{equation*}
Note that \eqref{equ:LP-PV91} is in such form. Therefore, we can take the following program.
\begin{equation*}
    \left\{
    \begin{aligned}
        &P\\
        &\delta^2-z= 0\\
        &P'\\
    \end{aligned}
    \right.
\end{equation*}
By Lemma~\ref{lem:PV91}, the first two constraints restrict $x$ on $\{0,1\}^n$. This would solve the independent set problem combined with \eqref{eq:indSetLin}.
\end{proof}
\subsection{Everywhere Feasibility is in co-NP}
In this section we prove that everywhere feasibility of 1-PLP is in co-NP. This implies that the problem is co-NP complete.

\begin{theorem}\label{thm:evefeasibility-in-conp}
    Everywhere feasibility problem of 1-PLP is in co-NP, i.e., the following problem is in NP. Given an instance of 1-PLP $(A(\delta),b(\delta))$, decide whether exists $\delta\in \mathbb{R}$ such that there is no $x$ satisfying $A(\delta)x\ge b(\delta)$.
\end{theorem}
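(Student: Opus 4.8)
The plan is to exhibit a polynomial-size certificate for the statement ``there exists $\delta\in\mathbb{R}$ such that $A(\delta)x\ge b(\delta)$ has no solution.'' By LP duality, for a fixed $\delta$ the system $A(\delta)x\ge b(\delta)$ is infeasible if and only if there exists $y\ge 0$ with $A^\top(\delta)y=0$ and $b^\top(\delta)y>0$ (Farkas' lemma). So a natural certificate would be a pair $(\delta,y)$ witnessing this. The difficulty is that the infeasible $\delta$ might be irrational (a root of a high-degree polynomial), and the corresponding $y$ a rational function of $\delta$, so we cannot simply write down $(\delta,y)$ as rational numbers of polynomial bit-length; we need a more structured certificate together with a polynomial-time verification procedure.

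The approach I would take is: first argue that if some $\delta^*$ makes the PLP infeasible, then there is a whole (relatively open) interval of such $\delta$. This follows from Lemma~\ref{lem:number_of_change} / Corollary~\ref{cor:feasible-infeasible-on-interval}: the feasibility status of $A(\delta)x\ge b(\delta)$ changes only finitely many times as $\delta$ ranges over $\mathbb{R}$, so the infeasible set is a finite union of intervals, at least one of which has nonempty interior. Pick a rational point $\delta_0$ in that interior (one exists, though it need not have small bit-length a priori — this is exactly the kind of issue already faced in Lemma~\ref{lem:testing-small-delta}). On this interval, by Farkas applied uniformly (or by the same pseudoinverse/rational-function machinery used in Theorem~\ref{thm:rational_function} applied to the dual system $A^\top(\delta)y=0,\ y\ge 0,\ b^\top(\delta)y>0$), there is a single Farkas witness $y(\delta)$ that is a vector of rational functions of $\delta$ of polynomially bounded degree valid throughout the interval. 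The certificate is then: the subset $J$ of dual constraints held with equality (from which $y(\delta)$ is reconstructed as a pseudoinverse expression, as in Lemma~\ref{lem:algo-rational-solution}), together with a rational number $\delta_0$ or, if $\delta_0$ must be irrational/large, a description of it as a root of an explicit polynomial with a sign vector of derivatives (invoking Lemma~\ref{lem:poly-check-sign}). The verifier then checks in polynomial time that $y(\delta_0)\ge 0$, $A^\top(\delta_0)y(\delta_0)=0$, and $b^\top(\delta_0)y(\delta_0)>0$ — all of which reduce to evaluating signs of polynomials at $\delta_0$, doable in polynomial time by Lemma~\ref{lem:poly-check-sign}, and to checking that the pseudoinverse formula is valid at $\delta_0$ (i.e.\ that the relevant leading coefficient $q_0(\delta_0)\neq 0$).

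The main obstacle I expect is the bit-length of the witness $\delta$: a priori the interval of infeasible $\delta$ could be very short and located at a very large or very precise rational, so one must show a polynomial-size representative always exists. The cleanest fix is to \emph{not} commit to a numerical $\delta_0$ at all: instead, let the certificate specify the index $J$ and let the verifier check that the rational-function Farkas witness $y_J(\delta) = $ (pseudoinverse expression in $A^\top_J, b_J$) satisfies the three required conditions \emph{as identities/sign conditions of rational functions near} some endpoint — i.e.\ check that each coordinate of $y_J(\delta)$ has nonnegative leading behavior, that $A^\top(\delta)y_J(\delta)\equiv 0$ as a rational function, and that $b^\top(\delta)y_J(\delta)$ is a not-identically-zero rational function with positive sign on a one-sided neighborhood of the relevant point. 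Each of these is a polynomial-time check (comparing polynomial coefficients, or using Lemma~\ref{lem:poly-check-sign} with the polynomial whose root is the endpoint). Completeness of this certificate follows from Theorem~\ref{thm:rational_function} applied to the dual system, and soundness follows because a valid rational-function Farkas witness on an open interval certifies genuine infeasibility there. Hence everywhere-feasibility of $1$-PLP is in co-NP, and combined with the co-NP-hardness established in the previous subsection (the hardness of everywhere infeasibility, equivalently NP-hardness of the complementary problem), it is co-NP-complete.
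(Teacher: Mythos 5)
Your overall strategy---dualize via Farkas, certify with a pair $(\delta_0,\,y(\delta_0))$ where $y$ is a rational-function witness built from the pseudoinverse machinery, and represent an algebraic $\delta_0$ by its defining polynomial plus derivative signs so that Lemma~\ref{lem:poly-check-sign} can verify everything in polynomial time---is the same as the paper's. But there is a genuine gap at your first step: the claim that if some $\delta^*$ is infeasible then the infeasible set contains a (relatively) open interval is false. Corollary~\ref{cor:feasible-infeasible-on-interval} only gives a partition of $\mathbb{R}$ into finitely many intervals on each of which feasibility is constant; the infeasible ones may all be degenerate single points. For example, $\delta x\ge 1$ is infeasible exactly at $\delta=0$, and $(\delta^2-2)x\ge 1$ is infeasible exactly at $\delta=\pm\sqrt{2}$; in the latter case there is no rational witness at all, and no one-sided neighborhood on which any Farkas certificate is valid. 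This also breaks your proposed ``cleanest fix'': checking that $b^\top(\delta)y_J(\delta)>0$ and $y_J(\delta)\ge 0$ hold as sign conditions on a one-sided neighborhood of some endpoint is the wrong condition when the only infeasible $\delta$ is an isolated algebraic point---the dual system is feasible only \emph{at} that point, not near it.

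The paper's Lemma~\ref{lem:conp-poly-delta} is precisely the missing step. It shows one can always take $\delta_0$ to be either (i) a rational of polynomially many bits lying strictly inside a nondegenerate infeasible interval, using that both endpoints are roots of $\mathrm{poly}(n,m,d,L)$-size polynomials (vanishing denominators or active-constraint expressions $a_i^\top A_J^+b_J-b_i$), so that the root-separation bound of Lemma~\ref{lem:dis-between-roots} guarantees the interval has width at least $2^{-\mathrm{poly}}$; or (ii) the boundary point $\delta_1$ itself, which is a root of such a polynomial and is certified via Lemma~\ref{lem:poly-check-sign}---and case (ii) is exactly what covers isolated infeasible points. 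Your proposal gestures at the right tools but never establishes the polynomial bound on the defining polynomial of $\delta_0$, and its main structural claim (nonempty interior of the infeasible set) is the part that must be repaired. The remainder of your argument---the rational-function dual witness $y_J(\delta_0)$ from Lemma~\ref{lem:equality_solution} and Lemma~\ref{lem:algo-rational-solution}, and its polynomial-time verification---matches Lemma~\ref{lem:conp-poly-y} and the paper's concluding step.
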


By linear programming duality, the above decision problem is exactly \eqref{equ:exists}, which is equivalent to the following:
\begin{equation}\label{eq:conp-condition-delta-y}
    \exists \delta,y\text{ satisfying }
    \begin{cases}
    y\ge 0\\
    A^\top (\delta)y=0\\
    b^\top (\delta)y=1
    \end{cases}
\end{equation}
Intuitively, the value of $\delta$ where $A(\delta)x\ge b(\delta)$ is infeasible should be used as the certificate. The difficulty is that $\delta$ may not even be a rational number.  
Our proof has two steps. The first step shows that if answer to the above problem is yes, we can find a $\delta_0$ that is a root of a polynomial. In the second step, we show that $y$ can be chosen to be a rational function of $\delta_0$ and the conditions in \eqref{eq:conp-condition-delta-y} can be checked within polynomial time.

\begin{lemma}\label{lem:conp-poly-delta}
    Suppose the answer to the problem in Theorem~\ref{thm:evefeasibility-in-conp} is yes. There exists $\delta_0$ satisfying the condition and $\delta_0$ is a root of a polynomial with size $poly(n,m,d,L)$.
\end{lemma}
\begin{proof}
By Corollary~\ref{cor:feasible-infeasible-on-interval}, we can divide $\mathbb{R}$ into a finite number of intervals such that $A(\delta)x\ge b(\delta)$ is always feasible or infeasible when $\delta$ is within an interval.

If $A(\delta)x\ge b(\delta)$ is infeasible except at a finite number of points on $\mathbb{R}$, then we can easily choose a rational $\delta_0$. Otherwise, there exists two consecutive intervals such that $A(\delta)x\ge b(\delta)$ is feasible on the first interval and infeasible on the second interval. Suppose the two intervals have common endpoint $\delta_1$, i.e., $A(\delta)x\ge b(\delta)$ is feasible on an open interval $(\delta_1-\epsilon,\delta_1)$ but infeasible at $\delta_1$ or feasible on $(\delta_1-\epsilon,\delta_1]$ but infeasible on an open interval $(\delta_1,\delta_2)$. 

Now we prove that $\delta_1$ is root of a polynomial with size $poly(n,m,d,L)$. Based on Lemma~\ref{lem:algo-rational-solution}, we can pick $\epsilon$ small enough so that there exists $J\subset [n]$ that $A_J^+ b_J(\delta)$ is a solution for $x$ on $(\delta_1-\epsilon,\epsilon_1)$. If $A_J^+ b_J(\delta)$ is not continuous at $\delta_1$, denominator of one of the coordinate becomes zero at $\delta_1$, which means $\delta_1$ is a root of the denominator. Now suppose $A_J^+ b_J(\delta)$ is continuous around $\delta_1$, but is not a solution in the right neighborhood of $\delta_1$. Suppose it violates a constraint $a_i^\top x\ge b_i$ in the right neighborhood of $\delta_1$. So $a_i^\top A_J^+ b_J(\delta)\ge b_i$ before $\delta_1$ but $a_i^\top A_J^+ b_J(\delta)< b_i$ after $\delta_1$. Since the function is continuous, we know $a_i^\top A_J^+ b_J(\delta_1)-b_i=0$. After multiplying any denominator on both sides, we get that $\delta_1$ is root to a polynomial with size $poly(n,m,d,L)$.

So the proof is completed if $A(\delta)x\ge b(\delta)$ is infeasible at $\delta_1$. Now assume $A(\delta)x\ge b(\delta)$ is feasible at $\delta_1$ but infeasible at $(\delta_1,\delta_2)$. Then by the above analysis, $\delta_1$ and $\delta_2$ are roots of two polynomially sized polynomial $p_1$ and $p_2$. They are both roots of $p_1p_2$. So by Lemma~\ref{lem:dis-between-roots}, $\delta_2-\delta_1\ge 2^{-poly(n,m,d,L)}$. Therefore, we can pick a rational number between $\delta_1$ and $\delta_2$ with polynomial number of bits.
\end{proof}

\begin{lemma}\label{lem:conp-poly-y}
    Suppose the answer to the problem in Theorem~\ref{thm:evefeasibility-in-conp} is yes. Let $\delta_0$ be the number chosen in Lemma~\ref{lem:conp-poly-delta}. There exists $y_0=f(\delta_0)$ satisfying \eqref{eq:conp-condition-delta-y} where $f$ is a rational function of $\delta_0$ with size $poly(n,m,d,L)$.
\end{lemma}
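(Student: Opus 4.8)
The plan is to realize a feasible $y_0$ at $\delta_0$ as a pseudoinverse solution of a parametrized linear system built from $A(\delta)$ and $b(\delta)$, and then to read off from that pseudoinverse a polynomial-size rational function of $\delta$ whose value at $\delta_0$ equals $y_0$.

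First I would set up the system. Since $\delta_0$ was chosen in Lemma~\ref{lem:conp-poly-delta} at a point where $A(\delta_0)x\ge b(\delta_0)$ is infeasible, by LP duality \eqref{eq:conp-condition-delta-y} is feasible at $\delta_0$: the program in $y$ with equality constraints $A^\top(\delta_0)y=0$, $b^\top(\delta_0)y=1$ and inequality constraints $y\ge 0$ has a solution. I would then apply the argument of Lemma~\ref{lem:equality_solution} (whose proof already treats programs presented with both equalities and inequalities) to obtain a subset $T\subseteq[m]$ of the nonnegativity constraints such that the augmented equality system $A^\top(\delta_0)y=0$, $b^\top(\delta_0)y=1$, $y_i=0\ (i\in T)$ is consistent and every solution of it is feasible for \eqref{eq:conp-condition-delta-y}. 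Writing this system as $N(\delta_0)y=w$ — where $N(\delta)$ stacks the rows of $A^\top(\delta)$, the single row $b^\top(\delta)$, and the basis vectors $e_i^\top$ for $i\in T$, and $w$ is the indicator vector of the $b^\top y=1$ row — the matrix $N(\delta)$ has polynomial entries of degree $O(D)$ with $O(L)$-bit coefficients. Lemma~\ref{lem:pseudo_inverse} then gives that $y_0:=N(\delta_0)^+w$ is a solution of the system, hence a feasible point for \eqref{eq:conp-condition-delta-y}.

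Next I would express $N(\delta_0)^+$ via the limit formula of Lemma~\ref{lem:pseudo-inverse-limit} and run the computation from the proof of Lemma~\ref{lem:algo-rational-solution}: each entry of $(N(\delta)^\top N(\delta)+tI)^{-1}N(\delta)^\top$ is a ratio of a $t$-polynomial $\sum_k p_k(\delta)t^k$ (assembled from minors of the $m\times m$ matrix $N(\delta)^\top N(\delta)+tI$ and entries of $N(\delta)^\top$) over $\det(N(\delta)^\top N(\delta)+tI)=\sum_k q_k(\delta)t^k$, and the limit $t\to 0$ equals $p_a(\delta_0)/q_a(\delta_0)$, where $a$ is the smallest index with $q_a(\delta_0)\neq 0$ (and $p_k(\delta_0)=0$ for $k<a$ since the pseudoinverse exists). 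Hence each coordinate of $y_0$ equals $f_i(\delta_0)$ for the rational function $f_i=p_a^{(i)}/q_a$ formed from these $t$-expansion coefficients. Since the $\delta$-polynomials $p_k^{(i)},q_k$ are coefficients in determinants and minors of an $m\times m$ matrix whose entries have $\delta$-degree $O(D)$ and $t$-degree $1$, they have $\delta$-degree $O(mD)$ and $\poly(n,m,D,L)$-bit coefficients, so $f=(f_i)_i$ has size $\poly(n,m,D,L)$, as required.

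The main obstacle — and the reason this does not follow verbatim from Lemma~\ref{lem:algo-rational-solution} — is that the generic rational-function formula for $N(\delta)^+$ there uses the lowest-order coefficient $q_0$ and is valid only on an open interval avoiding rank drops, whereas $\delta_0$ is an endpoint of a primal feasibility interval and can be precisely a point where $\mathrm{rank}\,N(\delta)$ drops, so $q_0(\delta_0)$ may vanish. One must therefore use the expansion adapted to $\delta_0$, namely the coefficient $q_a$ with $a=m-\mathrm{rank}\,N(\delta_0)$. Because $a$ lies in the finite set $\{0,1,\dots,m\}$ and for each value the corresponding rational function has polynomial size, existence of a polynomial-size $f$ follows even without identifying $a$ in advance; and in the sub-case of Lemma~\ref{lem:conp-poly-delta} where $\delta_0$ can be taken rational, the whole argument degenerates to solving a rational linear system and the size bound is trivial.
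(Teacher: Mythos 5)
Your proposal is correct and follows essentially the same route as the paper, whose proof is a two-line corollary of Lemma~\ref{lem:equality_solution} (convert a subset of the inequalities $y\ge 0$ to equalities) and Lemma~\ref{lem:algo-rational-solution} (express the pseudoinverse solution as a rational function via the limit formula of Lemma~\ref{lem:pseudo-inverse-limit}). The one place you go beyond the paper is the observation that the generic expansion $p_0/q_0$ from Lemma~\ref{lem:algo-rational-solution} may be invalid at $\delta_0$ because $q_0(\delta_0)$ can vanish at a rank-drop point, so one must instead take the first index $a$ with $q_a(\delta_0)\neq 0$; this is a genuine subtlety that the paper's terse citation glosses over, and your fix (finitely many candidate indices, each yielding a polynomial-size rational function) is the right one.
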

\begin{proof}
Note that \eqref{eq:conp-condition-delta-y} itself can be viewed as a PLP with variable $y$.
The lemma follows as a corollary of Lemma~\ref{lem:equality_solution} and  Lemma~\ref{lem:algo-rational-solution}.
\end{proof}

All the ingredients are in place to prove Theorem~\ref{thm:evefeasibility-in-conp}. 
\begin{proof}[Proof of Theorem~\ref{thm:evefeasibility-in-conp}]
The certificate of the problem are $\delta_0$ in Lemma~\ref{lem:conp-poly-delta} and $y_0$ in Lemma~\ref{lem:conp-poly-y}. By Lemma~\ref{lem:poly-check-sign}, we can represent them by irreducible polynomials and signs of their derivatives. Also, \eqref{eq:conp-condition-delta-y} can be verified in polynomial time by Lemma~\ref{lem:poly-check-sign}.
\end{proof}

\section{Application to Probabilistic Cellular Automata}\label{sec:PCA}
In this section we formally introduce the potential method developed in \cite{holroyd2019percolation} and discuss how to use the algorithm for the local feasibility of PLP to find a potential.

A one-dimensional discrete-time probabilistic cellular automaton (PCA) is a function on $\mathbb{Z}$ that evolves with time. The PCA has alphabet $A$ so at any time $t$ the configuration of the PCA is denoted by $\eta_t\in A^\mathbb{Z}$ where $\eta_t(n)$ is the value on site $n\in \mathbb{Z}$ at time $t$.

The configuration $\eta_t$ is Markov chain as $t$ changes. Given $\eta_t$, the next configuration $\eta_{t+1}$ is obtained by updating each site $n\in \mathbb{Z}$ independently. The update rule is local, random, and homogeneous. That is to say, $$\eta_{t+1}(n)=f(\eta_t(n-a),\eta_t(n-a+1),\cdots,\eta_t(n+a)),$$
where $f$ is a random function that follows a fixed distribution independent of $t$ and $n$, $a$ is a constant. Let $F$ denote the transition of the Markov chain.


The running example is the following. (A similar analysis could be applied to other PCA as well, but as discussed in Section~\ref{sec:applications}, we focus on the one elementary symmetric PCA which ergodicity is unknown.) 
The PCA has alphabet $\{0,1\}$ and noise parameter $p\in [0,1/2]$. The state is updated according to
\begin{equation*}
    \eta_{t+1}(n)=\text{BSC}_p(\NAND(\eta_t(n-1), \eta_t(n)))
\end{equation*}
where $BSC_p$ is binary symmetric channel with crossover probability $p$.
The Markov transition operator is denoted by $A_p$. The PCA can be view as a NAND function that has probability $2p$ of turning into a random bit.

The problem of interest is the ergodicity of this PCA, as defined in Definition~\ref{def:ergodicity}. The ergodicity is known to be equivalent to having positive probability of drawing in \emph{percolation game} \cite{holroyd2019percolation}. 
When $p=0$, $A_0$ is noiseless and it easy to see that $\eta_t$ may not converge. However, it is proven to be ergodic for any $p>0$ \cite{holroyd2019percolation}. 
An alternate version of noisy NAND function is NAND with edge noise, where the PCA is defined by
\begin{equation*}
    \eta_{t+1}(n)= \NAND\big(\text{BSC}_p(\eta_t(n-1)),  \text{BSC}_p(\eta_t(n))\big)
    .
\end{equation*}
The Markov transition operator is denoted by $A'_p$. 
It is shown in \cite{dobrushin1977lower} that edge noise is a more general model than vertex noise. 

We will prove the ergodicity of both PCA for small enough $p$ under the help of the algorithm for the local feasibility of PLP.
\begin{theorem}\label{thm:pca}
There exists $\epsilon>0$ such that $A_p$ is ergodic for any $p\in (0,\epsilon)$.
\end{theorem}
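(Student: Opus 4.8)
The plan is to follow the potential-function (Lyapunov) method of \cite{holroyd2019percolation} and to reduce the construction of a valid potential to a 1-PLP, which is then solved by the algorithm of Theorem~\ref{thm:1-dim-local}. First I would recall the coupling reformulation of ergodicity: run two copies of the chain $A_p$ from arbitrary initial conditions under the canonical monotone coupling and track the auxiliary $\{0,1\}$-valued ``disagreement'' process $\xi_t$ on $\mathbb{Z}$ recording the sites where the two copies differ, or equivalently work with the percolation game of \cite{holroyd2019percolation}, where ergodicity is equivalent to the drawing probability being strictly positive. The goal is to exhibit a potential $\Phi_p(\cdot)$, a finite $\mathbb{Q}(p)$-weighted sum over local patterns of the disagreement process inside a fixed window, such that (i) $\Phi_p$ is bounded, (ii) $\E[\Phi_p(\xi_{t+1})\mid \xi_t]\le \Phi_p(\xi_t)$ always, and (iii) $\E[\Phi_p(\xi_{t+1})\mid \xi_t]\le \Phi_p(\xi_t)-c$ whenever $\xi_t$ contains a designated ``bad'' pattern. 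The standard argument of \cite{holroyd2019percolation} then forces the density of disagreements to $0$, yielding ergodicity in the sense of Definition~\ref{def:ergodicity}.

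Second, following \cite{makur2020broadcasting}, I would observe that for a fixed window the conditions (ii)--(iii) form a finite system of linear inequalities in the unknown pattern weights $h=(h_S)_S$: the one-step evolution of the disagreement process is linear in the transition kernel of $A_p$, and the entries of that kernel are polynomials in $p$ (here in fact affine, coming from applying $\mathrm{BSC}_p$ to $\NAND$). Hence the set of potentials valid for all small $p>0$ is exactly the locally feasible region, on the positive side of $0$, of an explicit 1-PLP $(A(p),b(p))$ whose size is polynomial in the window size.

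Third, I would run the one-sided version of the algorithm (Lemma~\ref{lem:1-dim-local-one-side}, since only $p\in(0,\epsilon)$ is needed) on this 1-PLP. If it reports feasibility it returns a rational-function solution $h(p)$ together with an explicit $\epsilon>0$ (the smallest positive root of an explicit polynomial) on which $h(p)$ satisfies all constraints; clearing denominators gives a $\mathbb{Q}$-rational potential $\Phi_p$, whose validity can moreover be independently re-verified by hand by inspecting the first nonvanishing $p$-derivative at $0$ of each constraint polynomial. The potential obtained in this way is recorded in Appendix~\ref{sec:potential-pca}.

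The main obstacle, and the reason this is a computer-assisted rather than closed-form proof, is twofold. The method succeeds only once the window is taken large enough that the PLP is genuinely feasible, and there is no a priori bound on how large is needed, so one enlarges the pattern set until feasibility appears; the resulting PLP, though polynomial-size, is large enough that the naive ``evaluate at a tiny $\delta_0$'' approach of Lemma~\ref{lem:testing-small-delta} would demand arithmetic of astronomical precision ($\delta_0<2^{-20000}$), which is precisely why the exact rational-function algorithm of Theorem~\ref{thm:1-dim-local} is essential. Second, one must carefully argue that properties (i)--(iii) of the \emph{computed} potential really do imply ergodicity, i.e.\ correctly translate between PLP feasibility and the probabilistic conclusion, including the boundary behaviour of the auxiliary process and the percolation-game reformulation; this part is inherited from \cite{holroyd2019percolation} and is where most of the non-mechanical work lies.
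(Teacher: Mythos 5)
Your overall strategy --- couple two copies, pass to an auxiliary chain, look for a pattern-based potential whose admissibility conditions are linear in the unknown weights with coefficients polynomial in $p$, encode this as a one-sided 1-PLP, and run the algorithm of Theorem~\ref{thm:1-dim-local} to produce the potential recorded in Appendix~\ref{sec:potential-pca} --- is exactly the paper's route in Section~\ref{sec:PCA}. But two steps you treat as routine are where the content lies, and as written they have gaps. The smaller one: the auxiliary process cannot be the $\{0,1\}$-valued disagreement indicator you describe, because the NAND update depends on the actual bit values and not merely on whether the two coupled copies agree, so the indicator process alone is not Markov. The paper instead tracks the coupled pair over the alphabet $\{0,1,?\}$ (identifying $(0,1)$ and $(1,0)$ as $?$), obtains the Markov chain $F_p$ with the explicit rule of Figure~\ref{fig:chain-F-p}, and reduces ergodicity to showing every shift-invariant stationary law of $F_p$ gives the symbol $?$ probability zero (Lemma~\ref{lem:pca-aux-chain}). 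Your potential must be a function of $\{0,1,?\}$-patterns for the transition matrix $C_\ell$ of Lemma~\ref{lem:pca-transition-of-potential} to make sense.

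The more serious gap is your claim that conditions (ii)--(iii) ``form a finite system of linear inequalities in the unknown pattern weights.'' The requirement $w(F_p\mu)\le w(\mu)-\mu(?s_0)$ must hold for \emph{every} shift-invariant $\mu$, equivalently (after restricting to periodic configurations) the polynomial $\bigl(P_{\ell+1}(w)-C_\ell w-P_{\ell+1}(\{?s_0\})\bigr)(y)\ge 0$ must hold for every string $y$ --- an infinite family of constraints, one per cycle in a de Bruijn-type graph on length-$\ell$ patterns. The missing idea is the finite \emph{sufficient} condition of Lemma~\ref{lem:key-makur20} (Proposition 10 of \cite{makur2020broadcasting}): nonnegativity on all strings follows from the absence of negative cycles, which is certified by dual node potentials $z$ satisfying $A_{\ell+1}z\le B_{\ell+1}(\cdots)$. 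The PLP's variables are thus $(w,z)$ jointly, not the pattern weights alone, and it is only a sufficient condition, which is why one must search over the window length $\ell$ and the marker string $s_0$ until feasibility appears (the paper succeeds at $\ell=3$, $s_0=01$). With that step supplied, the rest of your outline matches the paper.
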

\begin{theorem}\label{thm:pca-edge}
There exists $\epsilon>0$ such that $A'_p$ is ergodic for any $p\in (0,\epsilon)$.
\end{theorem}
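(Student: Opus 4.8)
The plan is to prove Theorems~\ref{thm:pca} and~\ref{thm:pca-edge} by the potential-function (Lyapunov) method of \cite{holroyd2019percolation}, reducing the existence of a suitable potential to local feasibility of a one-dimensional PLP and then invoking Theorem~\ref{thm:1-dim-local}. Recall from \cite{holroyd2019percolation} that ergodicity of the NAND PCA at a given vertex noise level $p$ is equivalent to the associated percolation game having positive probability of a draw, and that a draw occurs with positive probability once one exhibits a bounded \emph{potential} $\phi$ --- a weighted count of certain local patterns tracked by the coupled/disagreement process --- whose one-step expected increment is strictly negative (bounded away from $0$) off the ``decided'' event. First I would fix a finite window size and take $\phi = \sum_{w} \phi_w\, \mathbb{I}[\text{pattern } w \text{ appears}]$ with the coefficients $(\phi_w)_w$ as unknowns. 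Because the transition operator $A_p$ acts linearly on measures and its entries are polynomials in $p$, the Holroyd--Martin drift and boundedness requirements become a finite system of linear inequalities in $(\phi_w)_w$ whose coefficients are polynomials in $p$; that is, ``there exists a good potential at noise level $p$'' is exactly a PLP constraint, as already observed in \cite{makur2020broadcasting}.

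Next I would phrase the target correctly: we need a good potential for \emph{all sufficiently small} $p>0$, which is precisely local feasibility of this 1-PLP on the positive side of $0$ (as noted after Theorem~\ref{thm:informal-pca}, the algorithm treats the two sides separately, so one-sided feasibility is covered). By Corollary~\ref{cor:feasible-infeasible-on-interval} the instance is, on some $(0,\epsilon)$, either feasible or infeasible, so running the algorithm of Theorem~\ref{thm:1-dim-local} decides which; in our case it returns a rational-function potential $\phi(p)$ valid on an explicit interval $(0,\epsilon)$, with $\epsilon$ the least positive root of an explicit polynomial. Since the output is an explicit $\mathbb{Q}$-rational function of $p$, its validity can be \emph{certified by hand}: one evaluates each constraint polynomial $A_i(p)\phi(p)-b_i(p)$ and its low-order derivatives at $p=0$ and checks the ``first non-zero derivative is positive (or the polynomial is identically zero)'' condition. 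This is the content of the computer-assisted proof, and the resulting potential is recorded in Appendix~\ref{sec:potential-pca}; feeding $\phi(p)$ back into the Holroyd--Martin argument yields the claimed convergence $\nu F^n \to \mu$ from every initial condition.

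For the edge-noise PCA $A'_p$ of Theorem~\ref{thm:pca-edge}, the operator $A'_p$ again has polynomial-in-$p$ entries, so the identical reduction produces a second 1-PLP. We cannot shortcut this via \cite{dobrushin1977lower}, since that result shows edge-noise mixing \emph{implies} vertex-noise mixing (the wrong direction), so edge noise must be handled on its own; running the algorithm on this second instance produces the potential for $A'_p$, with the same by-hand verification.

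I expect the main obstacle to be two-fold. The conceptual part is faithfully transcribing the Holroyd--Martin sufficient condition into a finite linear program: one must choose the family of patterns (the window) rich enough that a valid $p$-dependent potential exists, while keeping the number of variables and constraints polynomial so that Theorem~\ref{thm:1-dim-local} applies and the computation is actually feasible. The structural reason this is delicate is exactly that the PCA is \emph{not} ergodic at $p=0$, so no fixed potential works and the solution must genuinely vary with $p$; this is also why one cannot simply pick a tiny rational $p_0$ and solve an ordinary LP --- Lemma~\ref{lem:testing-small-delta} shows the required precision is astronomically small (below $2^{-20000}$ for these instances) --- and is precisely what forces the PLP machinery. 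The remaining work, verifying that the explicit $\phi(p)$ output by the algorithm satisfies every inequality and that the Holroyd--Martin framework then delivers ergodicity, is routine given \cite{holroyd2019percolation}.
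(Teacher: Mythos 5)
Your proposal follows essentially the same route as the paper: couple the two chains into a $\{0,1,?\}$-valued auxiliary chain, encode the Holroyd--Martin potential conditions (via the no-negative-cycle certificate of \cite{makur2020broadcasting}, Lemma~\ref{lem:key-makur20}) as a 1-PLP in the potential's coefficients, and run the local-feasibility algorithm of Theorem~\ref{thm:1-dim-local} on the positive side of $0$, verifying the returned rational-function potential by hand. The only piece you elide is that for edge noise the reduction to ``every stationary distribution assigns zero mass to $?$'' (Lemma~\ref{lem:pca-aux-chain-edge}) is not covered by \cite{holroyd2019percolation} and requires a separate stochastic-domination argument, which the paper supplies in Appendix~\ref{sec:pca-aux-chain-edge}.
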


Note that the ergodicity of $A'_p$ is not known prior to this work as we know, but Theorem~\ref{thm:pca} is a corollary of the result in \cite{holroyd2019percolation}. Our contribution is to propose a new procedure of automatically finding potential functions by reducing to it to feasibility of PLP.


\subsection{Coupling and Auxiliary Chain}

A typical approach to proving the ergodicity is by coupling. Let $\eta^+_t$ and $\eta^-_t$ be two versions of the Markov chain that start at different initial configurations. If we can show that there is a coupling so that for any two initial configuration the probability of having same configuration converges to 1 as $t$ goes to infinity, then by linearity any distribution on configuration would converge to the same stationary distribution. Thus ergodicity is proven by Definition~\ref{def:ergodicity}.


The coupled Markov chain $(\eta^+_t,\eta^-_t)$ is given by the following update rule. On any site $n$, $\eta^+$ and $\eta^-$ either both apply the NAND function with probability $1-2p$, or take the same random bit with probability $2p$. The coupled chain has alphabet $\{(0,0),(1,1),(0,1),(1,0)\}$. If we ignore the difference between $(1,0)$  and $(0,1)$ and denote them by $?$, we get a new chain $F_p$ with alphabet $\{0,1,?\}$. 
The update rule of $F_p$ is illustrated in Figure~\ref{fig:chain-F-p} and the update rule of $F_p'$ is shown in Table~\ref{tab:PCA-edge}. We can define $F_p'$ for $A_p'$ in the same way. Note that the chain $F_p$ and $F_p'$ are still Markov.
\begin{figure}
    \centering
    \begin{tikzpicture}
        \node at (0,0) {1};
        \node at (1,0) {1};
        \node at (3,0) {0};
        \node at (4,0) {*};
        \node at (3,0.5) {*};
        \node at (4,0.5) {0};
        \node at (6,0) {?};
        \node at (7,0) {?};
        \node at (6,0.5) {1};
        \node at (7,0.5) {?};
        \node at (6,1) {?};
        \node at (7,1) {1};
        
        \draw (0,-0.2)--(0.5,-1);
        \draw (1,-0.2)--(0.5,-1);
        \draw (3,-0.2)--(3.5,-1);
        \draw (4,-0.2)--(3.5,-1);
        \draw (6,-0.2)--(6.5,-1);
        \draw (7,-0.2)--(6.5,-1);
        
        \node[anchor=west] at (0.3,-1.2) {1($p$)};
        \node[anchor=west] at (0.3,-1.7) {0($1-p$)};
        \node[anchor=west] at (3.3,-1.2) {1($1-p$)};
        \node[anchor=west] at (3.3,-1.7) {0($p$)};
        \node[anchor=west] at (6.3,-1.2) {1($p$)};
        \node[anchor=west] at (6.3,-1.7) {0($p$)};
        \node[anchor=west] at (6.3,-2.2) {?($1-2p$)};
    \end{tikzpicture}
    \caption{Update rule of $F_p$. Here * means arbitrary symbol from $\{0,1,?\}$.  }
    \label{fig:chain-F-p}
\end{figure}
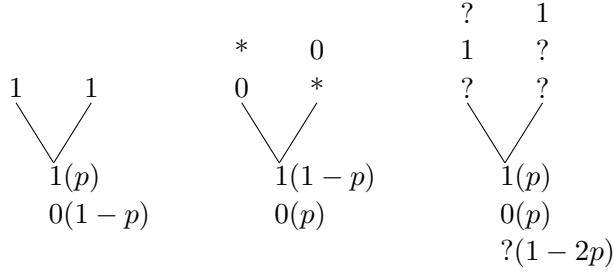

By definition, restricting $F_p$ to not contain $?$ is exactly $A_p$. So if $F_p$ is ergodic then $A_p$ is ergodic. In fact, we can further reduce the ergodicity of $F_p$ \cite{holroyd2019percolation}.
\begin{lemma}[\cite{holroyd2019percolation}]\label{lem:pca-aux-chain}
    $F_p$ is ergodic if
    any stationary distribution of $F_p$ has 0 probability of symbol $?$.
\end{lemma}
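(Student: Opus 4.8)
The plan is to equip the three-symbol alphabet with a partial order that makes $F_p$ an \emph{attractive} (monotone) PCA, and then run the classical ``convergence from the maximal state'' argument. Order $\{0,1,?\}$ by declaring $0\prec\,?$ and $1\prec\,?$ with $0,1$ incomparable, and extend it coordinatewise to $\{0,1,?\}^{\mathbb{Z}}$; then $?^{\mathbb{Z}}$ is the unique maximal configuration, and lowering a configuration means replacing some occurrences of $?$ by bits. The first and main step is to verify that $F_p$ is monotone for this order: for every pair of parent labels $u\preceq u'$ the two child laws read off from Figure~\ref{fig:chain-F-p} admit a coupling $(c,c')$ with $c\preceq c'$ almost surely; performing these couplings independently at each site then yields a monotone coupling of whole-configuration updates. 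This is a short case check on $u$. If $u$ contains a $0$, its child has law $1(1{-}p),0(p)$, and the child of $u'$ has either the same law (when $u'$ also contains a $0$, and we couple them equal) or the law $1(p),0(p),?(1{-}2p)$ (when $u'$ has a $?$ in place of that $0$), in which case $p\le 1/2$ lets us couple so that $\{c=b\}\subseteq\{c'=b\}\cup\{c'=?\}$ for each bit $b$, with all matched pairs satisfying $c\preceq c'$. If $u=(1,1)$, its child has law $1(p),0(1{-}p)$, and $u'$ is either $(1,1)$ (couple equal) or contains a $?$, in which case the same coupling works. Finally, if $u$ itself contains a $?$ (and hence so does $u'$, $?$ being maximal) both children have law $1(p),0(p),?(1{-}2p)$ and couple equal. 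These three cases are exhaustive.

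Given monotonicity, the point mass $\delta_{?^{\mathbb{Z}}}$ at the top configuration stochastically dominates every probability measure, so applying $F_p^{t}$ shows that $\delta_{?^{\mathbb{Z}}}F_p^{t}$ is $\preceq$-decreasing in $t$. By the standard monotone-convergence argument for attractive systems (using compactness of $\{0,1,?\}^{\mathbb{Z}}$ in the product topology and the fact that increasing cylinder indicators determine a measure), $\delta_{?^{\mathbb{Z}}}F_p^{t}$ converges weakly to a limit $\mu_*$; since $\delta_{?^{\mathbb{Z}}}$ and $F_p$ are shift-invariant and $F_p$ is Feller, $\mu_*$ is a shift-invariant stationary distribution of $F_p$. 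Now invoke the hypothesis applied to $\mu_*$: it assigns probability $0$ to the symbol $?$. Since the map $\mu\mapsto\mu[\omega(0)=?]$ is weakly continuous (the event is clopen) and increasing for $\preceq$, the quantity $p_t\defeq\Pr_{\delta_{?^{\mathbb{Z}}}F_p^{t}}[\omega(0)=?]$ decreases to $\mu_*[\omega(0)=?]=0$, and by shift-invariance the same limit holds at every site.

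Finally, let $\nu$ be an arbitrary initial distribution. Since $\nu\preceq\delta_{?^{\mathbb{Z}}}$ stochastically, the monotone coupling gives, for each $t$, a pair $(\omega_t,\widetilde\omega_t)$ with $\omega_t\sim\nu F_p^{t}$, $\widetilde\omega_t\sim\delta_{?^{\mathbb{Z}}}F_p^{t}$ and $\omega_t\preceq\widetilde\omega_t$ almost surely. The order relation forces $\omega_t$ and $\widetilde\omega_t$ to agree at every site where $\widetilde\omega_t$ is not $?$, so for any finite window $S$ one has $\Pr[\omega_t|_S\neq\widetilde\omega_t|_S]\le\sum_{n\in S}\Pr[\widetilde\omega_t(n)=?]=|S|\,p_t\to 0$. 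Hence $\nu F_p^{t}$ and $\delta_{?^{\mathbb{Z}}}F_p^{t}$ have the same limit on every cylinder event, so $\nu F_p^{t}\to\mu_*$ weakly. As $\nu$ was arbitrary and $\mu_*$ is stationary, $F_p$ is ergodic. (Combined with the remark preceding the lemma that $F_p$ restricted to the $?$-free configurations is exactly $A_p$, this also yields ergodicity of $A_p$.)

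The step I expect to be the crux is the monotonicity verification: the order must be set up with $?$ as the top element for the case analysis to close, and the couplings in the cases where $u'$ contains a $?$ genuinely use the numerology $p\le (1-2p)+p$ of the update rule, i.e.\ $p\le 1/2$. Everything afterward is routine attractive-PCA machinery --- a point mass at the maximum dominates everything, $\preceq$-monotone sequences of measures converge weakly, and weak limits of $(\nu F_p^{t})_t$ are stationary --- which I would cite rather than reprove.
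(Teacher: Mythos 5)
Your proof is correct and follows essentially the same strategy as the paper's own argument (given in Appendix~\ref{sec:pca-aux-chain-edge} for the edge-noise analogue, Lemma~\ref{lem:pca-aux-chain-edge}): a partial order with $?$ maximal and $0,1$ incomparable, monotonicity of the update, domination by the all-$?$ state whose forward limit is a stationary measure to which the hypothesis applies, and sandwiching to force every initial condition to the same limit. The only cosmetic difference is that the paper phrases the convergence from the top state via a backward-in-time (coupling-from-the-past style) construction rather than your forward monotone weak-convergence argument.
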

The same statement can be proven for $F_p'$, which can be found in Appendix~\ref{sec:pca-aux-chain-edge} for completeness.
\begin{lemma}\label{lem:pca-aux-chain-edge}
    $F_p'$ is ergodic if
    any stationary distribution of $F_p$ has 0 probability of symbol $?$.
\end{lemma}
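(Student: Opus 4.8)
The plan is to adapt, essentially verbatim, the coupling argument behind Lemma~\ref{lem:pca-aux-chain} from \cite{holroyd2019percolation}, the only new ingredient being a monotonicity check tailored to edge noise; the detailed write-up goes in Appendix~\ref{sec:pca-aux-chain-edge}. Recall that $F_p'$ is the projection of the \emph{diagonal coupling} of two copies $\eta^+,\eta^-$ of $A_p'$: on every edge and at every time step the two copies use the same realization of $\text{BSC}_p$ (with probability $1-2p$ the edge transmits faithfully in both copies, and with probability $2p$ both copies receive one common uniform bit). In the merged configuration $\zeta_t\in\{0,1,?\}^{\mathbb{Z}}$ a site carries $?$ exactly when $\eta^+_t$ and $\eta^-_t$ disagree there, and carries their common bit otherwise. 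Since the restriction of $F_p'$ to $\{0,1\}^{\mathbb{Z}}$ is precisely $A_p'$, it suffices to prove that $F_p'$ is ergodic under the hypothesis that every $F_p'$-stationary measure gives probability $0$ to the symbol $?$ at a (hence every) site, which is the edge-noise analogue of the hypothesis in Lemma~\ref{lem:pca-aux-chain}.

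The first step is the key monotonicity. Partially order $\{0,1,?\}$ by letting $?$ dominate both $0$ and $1$ (keeping $0$ and $1$ incomparable) and extend this coordinatewise to $\{0,1,?\}^{\mathbb{Z}}$; thus $\zeta\le\zeta'$ means that wherever $\zeta'$ carries a genuine bit, $\zeta$ carries the same bit. Under the diagonal coupling one checks that the one-step map of $F_p'$ is monotone for this order: if $\zeta\le\zeta'$ then $F_p'\zeta\le F_p'\zeta'$ almost surely, because upgrading a known input bit to $?$ can only enlarge the set of downstream sites where the two copies disagree, while an edge carrying a common bit (faithful or scrambled) feeds identical inputs to both copies. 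Starting from the top configuration $\zeta_0\equiv{?}$ (equivalently $\eta^+_0\equiv 1$, $\eta^-_0\equiv 0$) this yields two facts: $(F_p')^t\zeta_0$ stochastically dominates $(F_p')^t\zeta$ for every initial merged configuration $\zeta$; and, since $F_p'\zeta_0\le\zeta_0$ trivially, the law of $\zeta_t$ is stochastically non-increasing in $t$. By translation invariance of both the dynamics and $\zeta_0$, the number $p_t:=\Pr[\zeta_t(0)={?}]$ is independent of the site and non-increasing, so $p_t\downarrow p_\infty$ for some $p_\infty\ge 0$.

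The second step invokes compactness. The space of probability measures on $\{0,1,?\}^{\mathbb{Z}}$ in the product topology is compact and $F_p'$ is Feller, so a subsequence of the Ces\`aro averages $\frac1T\sum_{t<T}\mathrm{Law}(\zeta_t)$ converges weakly to an $F_p'$-stationary measure $\mu$. Since $\zeta\mapsto\mathbb{I}\{\zeta(0)={?}\}$ is a continuous cylinder function, $\mu(\zeta(0)={?})=\lim\frac1T\sum_{t<T}p_t=p_\infty$. The hypothesis forces this to be $0$, hence $p_\infty=0$: started from the top configuration the density of $?$ vanishes. By the stochastic domination the same holds from any pair of initial configurations, $\Pr[\eta^+_t(n)\ne\eta^-_t(n)]\le p_t\to 0$ for each site $n$, so the laws of $\eta^+_t$ and $\eta^-_t$ agree in the limit on every finite window, i.e.\ converge weakly to a common limit. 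The standard coupling-to-ergodicity argument then concludes: Ces\`aro averages of any trajectory subsequentially converge to a stationary $\pi$; two stationary measures, run from each and compared through the coupling, must coincide; and for any initial $\nu$ the laws $\nu(F_p')^t$ and $\pi=\pi(F_p')^t$ converge to each other, so $\nu(F_p')^t\to\pi$ weakly. Hence $F_p'$, and a fortiori $A_p'$, is ergodic.

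The main obstacle is precisely the monotonicity claim of the second paragraph, which is the only place where $A_p'$ genuinely departs from the vertex-noise chain $F_p$ of \cite{holroyd2019percolation}. One must write out the diagonal coupling explicitly, track how a $?$ at one parent propagates to the child — this now depends on the bit carried by the \emph{other} parent, so the case analysis differs from the vertex-noise one — and confirm both that a scrambled edge erases a discrepancy and that upgrading a $0$ or $1$ parent to $?$ never converts a $?$ child back into a genuine bit. Once this finite local computation is done, the compactness and weak-convergence steps of the last two paragraphs go through unchanged.
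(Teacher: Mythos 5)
Your proposal is correct and rests on exactly the same key ingredient as the paper's proof: the partial order on $\{0,1,?\}$ with $?$ maximal and the monotonicity of the dynamics under a shared realization of the edge noise (the paper's Lemma~\ref{lem:auxilary-monotone}), together with the all-$?$ configuration as the dominating initial condition. The only divergence is technical and immaterial: where you extract a stationary measure forward in time via monotone decrease of the law plus Ces\`aro averages and the Feller/compactness argument, the paper runs a coupling-from-the-past construction, defining $\eta^*$ as the pathwise monotone limit of evolutions started from all-$?$ at time $-t$; both yield the maximal invariant measure to which the hypothesis is then applied.
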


\subsection{Potential Function Method}
We say the distribution of a configuration $\eta(n)$ is \emph{shift-invariant} if $\cdots,\eta(-1),\eta(0),\eta(1),\cdots$ and $\cdots,\eta(-1+k),\eta(0+k),\eta(1+k),\cdots$ have the same distribution for any $k\in \mathbb{Z}$. It is without loss of generality to only consider stationary distributions that are shift-invariant.
For $s\in \{0,1,?\}^n$, and a shift-invariant distribution $\mu$ on configurations $\eta$, let $\mu(s)=\mu(\eta:(\eta(0),\eta(1),\cdots,\eta(n))=s)$. A potential function is a weighted sum of the probability of strings, defined formally as follows.
\begin{definition}[Potential function]\label{def:potential-function}
    A \emph{potential function} $w$ with length $\ell$ is a vector in $\mathbb{R}^{3^l}$, where the coordinates are labeled by $\{0,1,?\}^\ell$. We write $\{s\}$ for the $s$th standard basis vector, and
    \begin{equation*}
        w=\sum_{s\in S}c_s \{s\}\,.
    \end{equation*}
    where $S\subset \{0,1,?\}^\ell$.
    For a shift-invariant distribution $\mu$, define
    \begin{equation*}
        \{s\}(\mu) = \mu(s)\,,
    \end{equation*}
    and
    \begin{equation*}
        w(\mu) = \sum_{s\in S}c_s\mu(s)\,.
    \end{equation*}
\end{definition}
The potential function method can be summarised as follows:
\begin{lemma}[\cite{holroyd2019percolation}]\label{lem:pca-potential-function-method}
To show that stationary distribution of $F_p$ has 0 probability of symbol $?$, it suffices to design a \emph{potential function} $w$
such that for any shift-invariant distribution $\mu$,
\begin{equation}\label{eq:pca-weight-func-condition}
    w(F_p\mu) \le w(\mu) - \mu(?s_0)
\end{equation}
where $s_0$ is an arbitrary string with alphabet $\{0,1\}$.
\end{lemma}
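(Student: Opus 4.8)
The plan is to split the argument into a short ``Lyapunov'' step and a structural step that upgrades the conclusion from one pattern to the full $?$-density. For the first step, recall (as noted just before Definition~\ref{def:potential-function}) that it suffices to treat $\mu$ that is simultaneously $F_p$-stationary and shift-invariant, so that $F_p\mu=\mu$. Substituting this into the defining inequality of $w$ gives $w(\mu)\le w(\mu)-\mu(?s_0)$; since $w$ is a finite linear combination of coordinates $\mu(s)\in[0,1]$, the number $w(\mu)$ is finite and cancels, leaving $\mu(?s_0)\le 0$ and hence $\mu(?s_0)=0$. Applying this with the two one-symbol binary strings $s_0=0$ and $s_0=1$ (the hypothesis supplies such a potential for every binary $s_0$) yields $\mu\big(\eta(0)=?,\ \eta(1)\in\{0,1\}\big)=\mu(?0)+\mu(?1)=0$.

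The second step is to deduce $\mu(\eta(0)=?)=0$ from this. I would telescope: by the previous display, $\mu(\eta(0)=?)=\mu(\eta(0)=\eta(1)=?)$, and translating the forbidden patterns $?0,?1$ to the right by $k$ sites using shift-invariance gives $\mu(\eta(0)=\cdots=\eta(k)=?)=\mu(\eta(0)=\cdots=\eta(k+1)=?)$ for every $k\ge 0$. Inductively $\mu(\eta(0)=?)=\mu(\eta(j)=?\text{ for all }0\le j\le k)$ for all $k$, so as $k\to\infty$ we get $\mu(\eta(0)=?)=\mu(\eta(j)=?\ \forall j\ge 0)$. The events $A_n=\{\eta(j)=?\ \forall j\ge n\}$ are nested increasing in $n$ and, by shift-invariance, all of equal measure; their intersection over $n\in\mathbb{Z}$ is the single all-$?$ configuration $\mathbf{?}$, so $\mu(\mathbf{?})$ equals $\mu(\eta(0)=?)$ as well. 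Finally I would rule out $\mathbf{?}$: for any configuration $\eta$, each site outputs the symbol $?$ under one step of $F_p$ with probability at most $1-2p$ (the update rule in Figure~\ref{fig:chain-F-p} emits $?$ only with probability $1-2p$, and never in the remaining cases), and sites update independently given $\eta$, so the probability that sites $0,\dots,N$ all become $?$ is at most $(1-2p)^{N+1}\to 0$ when $p>0$; hence $(F_p\mu)(\mathbf{?})=0=\mu(\mathbf{?})$. Therefore $\mu(\eta(0)=?)=0$, and by shift-invariance $\mu(\eta(n)=?)=0$ for every $n$, which is exactly the claim.

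The main obstacle is the second step. The potential $w$ only ``sees'' the single pattern $?s_0$, so the easy Lyapunov part yields only that this one pattern has probability zero; converting that into vanishing of the $?$-density must contend with stationary configurations in which $?$'s occur in long or semi-infinite blocks that never match $s_0$, which is precisely why both the telescoping and the separate verification that $\mathbf{?}$ is incompatible with $F_p$-stationarity are needed. (If one instead reads the hypothesis as supplying the inequality for only one fixed binary $s_0$, an additional ingredient is needed to kill every possibility for $\eta(1)$ being binary; under the reading that a suitable $w$ exists for each binary $s_0$, the length-one strings $0$ and $1$ suffice.)
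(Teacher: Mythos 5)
There is a genuine gap, and you have already put your finger on it in your closing parenthetical. The lemma, as it is used in this paper, supplies the inequality \eqref{eq:pca-weight-func-condition} for \emph{one} fixed binary string $s_0$ chosen by the designer of the potential — in the actual applications the computer finds potentials only for $s_0=01$ (for $F_p$) and $s_0=10$ (for $F_p'$), not for the length-one strings $0$ and $1$. Your Lyapunov step is the same as the paper's and correctly yields $\mu(?s_0)=0$ for that one $s_0$; but your second step needs $\mu(?0)=\mu(?1)=0$ simultaneously in order to telescope, and knowing only $\mu(?01)=0$ does not rule out stationary measures in which a $?$ is never immediately followed by the specific pattern $01$. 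So the proof as written establishes a version of the lemma with a strictly stronger hypothesis than the one the paper can verify, and would not justify the application.

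The missing ingredient is a short dynamical step rather than a longer structural one. The paper observes that, conditionally on $\eta_t$, the sites of $\eta_{t+1}$ update independently and each site emits any prescribed \emph{binary} symbol with probability at least $p$ (inspect Figure~\ref{fig:chain-F-p}: every row puts mass at least $p$ on $0$ and at least $p$ on $1$). Hence for any configuration, the probability that $\eta_{t+1}$ shows a $?$ at site $n$ followed by the specific string $s_0$ is at least $p^{|s_0|}$ times the probability that it shows a $?$ at site $n$, i.e. $F_p\mu(?s_0)\ge p^{|s_0|}\,F_p\mu(?)$. Combining with stationarity, $0=\mu(?s_0)=F_p\mu(?s_0)\ge p^{|s_0|}\mu(?)$, so $\mu(?)=0$ directly, for any single $s_0$ and any $p>0$. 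This replaces your entire second step (the telescoping, the limit over nested events $A_n$, and the separate exclusion of the all-$?$ configuration), all of which are correct under your reading but are not needed once one uses the dynamics to manufacture $s_0$ after every $?$.
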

\begin{proof}
Because we can assume without loss of generality that the stationary distribution $\mu$ is shift-invariant, $w(\mu)=w(F_p\mu) \le w(\mu) - \mu(?s_0)$ means that $\mu(?s_0)=0$. Since $\mu(?s_0)=F_p\mu(?s_0)\ge p^{|s_0|}F_p\mu(?)=p^{|s_0|}\mu(?)$, we can conclude that $\mu(?)=0$.
\end{proof}


\subsection{Connection with PLP}
In \cite{holroyd2019percolation}, a carefully designed potential function was introduced. In contrast, we will show how to find a potential function that satisfies \eqref{eq:pca-weight-func-condition} with local feasibility of PLP. This connection between finding a potential function and PLP was introduced in \cite{makur2020broadcasting}, but \cite{makur2020broadcasting} only analyzed the PLP for specific choices of parameter $p$, which are normal LP problems.

First let us focus on the relation between $w(\mu)$ and $w(F_p\mu)$. Suppose $w=\sum_{s\in S}c_s\{s\}$ and without loss of generality any $s\in S$ has length $\ell$. For simplicity we say the \emph{length} of $w$ is $\ell$. As stated in Definition~\ref{def:potential-function}, $w$ can be viewed as a vector with coordinates labeled by $\{0,1,?\}^\ell$, also denoted by $w$. We will show $w(F_p\mu)=w'(\mu)$ where $w'$ is a potential function with length $\ell+1$. This can be described by the following lemma.
\begin{lemma}\label{lem:pca-transition-of-potential}
    For any potential function $w$ with length $\ell$, and any PCA, $F_p$, with alphabet $\mathcal{A}$ and the following form:
    \begin{equation*}
        \eta_{t+1}(n) = f (\eta_t(n-a),\eta_t(n-a+1),\cdots,\eta_t(n+b)),
    \end{equation*}
    there exists a potential function $w'$ such that $w(F_p\mu)=w'(\mu)$. Further, define matrix $C_\ell\in \mathbb{R}^{|\mathcal{A}|^{\ell+a+b}}\times \mathbb{R}^{|\mathcal{A}|^{\ell}}$ that for $s'\in \mathcal{A}^{\ell+a+b}$ and $s\in \mathcal{A}^{\ell}$,
    \begin{align*}
        C_\ell(s',s) &= \Pr[(\eta_{t+1}(1),\cdots, \eta_{t+1}(\ell))=s|(\eta_{t}(1-a),\cdots, \eta_{t}(\ell+b))=s']\\
        &= \prod_{n=1}^{\ell} \Pr[f(s'(n-a),s'(n-a+1),\cdots,s'(n+b))=s(n)],
    \end{align*}
    we have
    \begin{equation*}
        w' = C_\ell w.
    \end{equation*}
\end{lemma}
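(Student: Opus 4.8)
The plan is to unwind the definitions and compute $w(F_p\mu)$ directly, showing it is a linear functional of the length-$(\ell+a+b)$ window marginals of $\mu$, hence of the form $w'(\mu)$ for $w'=C_\ell w$. First I would recall that, by shift-invariance, it suffices to track the joint distribution of a fixed window of consecutive sites. Fix $s \in \mathcal A^\ell$; since $(\eta_{t+1}(1),\dots,\eta_{t+1}(\ell))$ depends on $\eta_t$ only through the window $(\eta_t(1-a),\dots,\eta_t(\ell+b))$, we may write
\begin{equation*}
(F_p\mu)(s) = \Pr\big[(\eta_{t+1}(1),\dots,\eta_{t+1}(\ell))=s\big] = \sum_{s'\in \mathcal A^{\ell+a+b}} C_\ell(s',s)\,\mu(s')\,,
\end{equation*}
where $C_\ell(s',s)$ is the stated conditional probability and $\mu(s')$ is the window marginal $\mu((\eta(1-a),\dots,\eta(\ell+b))=s')$. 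The factorization $C_\ell(s',s)=\prod_{n=1}^\ell \Pr[f(s'(n-a),\dots,s'(n+b))=s(n)]$ follows because, given $\eta_t$, the updates at distinct sites $n=1,\dots,\ell$ are performed independently, and the update at site $n$ reads exactly the coordinates $s'(n-a),\dots,s'(n+b)$ of the window.

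Next I would combine this with the definition of the potential evaluated on a distribution:
\begin{equation*}
w(F_p\mu) = \sum_{s\in S} c_s (F_p\mu)(s) = \sum_{s\in S} c_s \sum_{s'} C_\ell(s',s)\mu(s') = \sum_{s'} \Big(\sum_{s} C_\ell(s',s) c_s\Big)\mu(s') = \sum_{s'} (C_\ell w)(s')\,\mu(s')\,,
\end{equation*}
which is precisely $w'(\mu)$ for the length-$(\ell+a+b)$ potential $w' = C_\ell w$ (viewing $w$ as its coordinate vector in $\mathbb R^{|\mathcal A|^\ell}$ and $w'$ in $\mathbb R^{|\mathcal A|^{\ell+a+b}}$, with the convention that coordinates of $w$ outside $S$ are zero). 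This establishes both the existence claim and the explicit formula $w'=C_\ell w$.

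The only genuinely delicate points are bookkeeping rather than mathematical: making sure the index ranges on the window line up (the window $(\eta_t(1-a),\dots,\eta_t(\ell+b))$ has length $\ell+a+b$, and site $n+j$ for $j\in\{-a,\dots,b\}$ and $n\in\{1,\dots,\ell\}$ indeed stays inside it), and appealing to shift-invariance of $\mu$ so that the marginal of \emph{any} length-$(\ell+a+b)$ consecutive window is well-defined and is exactly what $C_\ell$ acts on. Conditional independence of the site updates given $\eta_t$ — which is part of the definition of a PCA — is what licenses the product formula for $C_\ell$; I would state this explicitly but not belabor it. No step here is a real obstacle; the lemma is essentially the observation that a PCA step is a linear map on window-marginal vectors, and transposing it gives the claimed action on potentials.
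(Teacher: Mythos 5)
Your proposal is correct and follows essentially the same route as the paper's own proof: expand $w(F_p\mu)=\sum_{s}c_s(F_p\mu)(s)$, condition on the length-$(\ell+a+b)$ window $(\eta_t(1-a),\dots,\eta_t(\ell+b))$, and swap the order of summation to read off $w'=C_\ell w$. You additionally spell out the product factorization of $C_\ell$ via conditional independence of the site updates, which the paper leaves implicit, but the argument is the same.
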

\begin{proof}
\begin{eqnarray*}
w(F_p\mu) &=& \sum_{s\in S }c_s (F_p\mu)(s)\\
&=&\sum_{s\in S }c_s\sum_{s\in \{0,1,?\}^{\ell+1}}\mu(s')\Pr[(\eta_{t+1}(1),\cdots, \eta_{t+1}(\ell))=s|(\eta_{t}(1-a),\cdots, \eta_{t}(\ell+b))=s'] \\
&=& \sum_{s\in S }\sum_{s\in \{0,1,?\}^{\ell+1}}C_\ell(s,s)c_s\mu(s') = C_\ell w(\mu)
\end{eqnarray*}
\end{proof}

Note that in the case of $F_p$, we have $a=1,b=0$ and $\mathcal{A}=\{0,1,?\}$. So the size of $C_l$ is $3^{\ell+1}\times 3^\ell$. Also, $C_\ell$ is a polynomial of $p$, see Appendix~\ref{app:matrix} for the exact expression of $C_\ell$ for $F_p$. 

For a periodic configuration $\eta$ that repeats string $y$ and a potential function $w=\sum_{s\in S}c_s\{s\}$, we use $w(y)$ to denote the value of the potential.
\begin{definition}\label{def:pca-weight-function-on-string}
    For a potential function $w=\sum_{s\in S}c_s\{s\}$ with length $l$ and a string $y\in \cup_{n\ge \ell}\{0,1,?\}^n$, define
    \begin{equation*}
        w(y)=\sum_{s\in S}c_s\sum_{i=0}^{|y|}\mathbf{I}\{(s_{i\mod |y|},s_{i+1\mod |y|}\cdots,s_{i+|s|\mod |y|})=s\}
    \end{equation*}
\end{definition}
\begin{lemma}\label{}
    A sufficient condition for \eqref{eq:pca-weight-func-condition} is that for any string $y$ with alphabet $\{0,1,?\}$,
\begin{equation}\label{eq:pca-vector-condition}
    C_\ell w(y) \le w(y) - \{?s_0\}(y).
\end{equation}
\end{lemma}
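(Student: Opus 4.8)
The quantifier in \eqref{eq:pca-weight-func-condition} ranges over all shift-invariant distributions $\mu$, an infinite-dimensional set, but both sides of the inequality depend on $\mu$ only through its marginal on a window of bounded length, and that marginal ranges over a polytope whose extreme points are exactly (orbit measures of) periodic configurations. The plan is therefore to (i) use Lemma~\ref{lem:pca-transition-of-potential} to rewrite \eqref{eq:pca-weight-func-condition} as a single linear inequality on finite window marginals, (ii) identify the extreme points of the relevant polytope with periodic strings, and (iii) observe that on a periodic string the inequality is exactly \eqref{eq:pca-vector-condition}.

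\textbf{Step 1 (reduction to a linear inequality on marginals).} By Lemma~\ref{lem:pca-transition-of-potential}, $w(F_p\mu)=w'(\mu)$ where $w'=C_\ell w$ is a potential of length $\ell+1$, so \eqref{eq:pca-weight-func-condition} is equivalent to $w'(\mu)\le w(\mu)-\mu(?s_0)$. Put $N=\max(\ell+1,\,1+|s_0|)$. Each of $w(\mu)$, $w'(\mu)$ and $\mu(?s_0)$ is a fixed linear functional of the marginal $\mu_N$ of $\mu$ on $N$ consecutive sites (for $w(\mu),w'(\mu)$ this is $\sum_s c_s\mu_N(s)$ after padding each $s$ to length $N$; for $\mu(?s_0)$ it is the mass $\mu_N$ assigns to cylinders beginning with $?s_0$). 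Hence the condition becomes $\Phi(\mu_N)\ge 0$ for one linear functional $\Phi$ on $\mathbb{R}^{\{0,1,?\}^N}$.

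\textbf{Step 2 (the polytope of shift-invariant marginals).} Let $\mathcal{Q}_N$ be the set of probability vectors $\nu$ on $\{0,1,?\}^N$ obeying the shift-consistency equations $\sum_a \nu(a,x_1,\dots,x_{N-1})=\sum_b \nu(x_1,\dots,x_{N-1},b)$ for all $(x_1,\dots,x_{N-1})$. This is a polytope, and the $N$-window marginal of every shift-invariant $\mu$ lies in $\mathcal{Q}_N$ since shift-invariance forces consistency. I claim each vertex of $\mathcal{Q}_N$ is the $N$-window marginal of a periodic configuration. View $\nu\in\mathcal{Q}_N$ as an edge-weighting of the de Bruijn graph on vertex set $\{0,1,?\}^{N-1}$ (one edge per length-$N$ word); the consistency equations are precisely flow conservation at every vertex, so $\nu$ is a nonnegative circulation of total weight $1$. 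By the standard flow-decomposition lemma, $\nu$ is a convex combination of uniformly weighted simple directed cycles; a simple cycle of length $L$ reads off a strictly period-$L$ bi-infinite word, and, after normalizing its edge weights to $1/L$, the resulting vector is exactly the $N$-window marginal of the orbit measure of that periodic configuration (each length-$N$ factor of a period-$L$ word occurs exactly once per period, matching the weight $1/L$). A vertex of $\mathcal{Q}_N$ cannot be a nontrivial convex combination of other points of $\mathcal{Q}_N$, so it must itself be one such periodic marginal.

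\textbf{Step 3 (conclusion) and the main obstacle.} Given any shift-invariant $\mu$, Step 2 writes $\mu_N=\sum_i\lambda_i\nu_i$ with the $\nu_i$ periodic marginals, so $\Phi(\mu_N)=\sum_i\lambda_i\Phi(\nu_i)$, and it suffices to verify $\Phi(\nu)\ge 0$ when $\nu$ is the $N$-window marginal of the configuration repeating a block $y$. For such $\nu$, Definition~\ref{def:pca-weight-function-on-string} gives $w(\mu)=w(y)/|y|$, $w'(\mu)=C_\ell w(y)/|y|$ and $\mu(?s_0)=\{?s_0\}(y)/|y|$, so $\Phi(\nu)\ge 0$ is exactly $\frac{1}{|y|}\big(w(y)-C_\ell w(y)-\{?s_0\}(y)\big)\ge 0$, i.e.\ \eqref{eq:pca-vector-condition}; (strings $y$ shorter than $N$ are handled by repeating $y$, which changes nothing). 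Steps 1 and 3 are routine bookkeeping of window lengths and of the $1/|y|$ normalization, and one should double-check the harmless off-by-one conventions in the definitions of $\mu(s)$ and $w(y)$. The real content is Step 2: proving that the extreme points of $\mathcal{Q}_N$ are periodic marginals via flow decomposition on the de Bruijn graph, and making the correspondence between simple cycles, periodic words, and normalized cycle flows match Definition~\ref{def:pca-weight-function-on-string} precisely.
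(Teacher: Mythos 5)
Your proof is correct, but it takes a genuinely different route from the paper's, and in fact a more robust one. The paper argues directly at the level of measures: it claims that any shift-invariant $\mu$ assigns zero mass to aperiodic configurations, hence is a convex combination of orbit measures $\mu_\eta=\frac1L\sum_i\delta_{\eta_{\cdot+i}}$ of periodic configurations, and then evaluates both sides of \eqref{eq:pca-weight-func-condition} on each $\mu_\eta$ to recover \eqref{eq:pca-vector-condition}. That reduction is shaky as written: shift-invariance only forces each \emph{individual} aperiodic configuration to be a null atom (its infinitely many distinct shifts all carry equal mass), not that the \emph{set} of aperiodic configurations is null --- an i.i.d.\ product measure is shift-invariant and gives full mass to aperiodic configurations. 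Your argument sidesteps this entirely by never decomposing $\mu$ itself: you observe that the inequality is a linear functional of the finite-window marginal $\mu_N$, that $\mu_N$ lies in the polytope $\mathcal{Q}_N$ of shift-consistent distributions, and that $\mathcal{Q}_N$ is the convex hull of periodic-orbit marginals via cycle decomposition of circulations on the de Bruijn graph; linearity then reduces the verification to periodic strings, where the inequality is exactly \eqref{eq:pca-vector-condition} after the $1/|y|$ normalization you correctly track. The two proofs meet at the same final computation on a periodic block $y$, but your Step 2 supplies the finite-dimensional convexity argument that the paper's measure-theoretic shortcut actually needs, so your version is the one I would keep.
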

\begin{proof}
First, let us show that for any shift-invariant measure $\mu$, the probability of any aperiodic configuration is 0. Suppose for contrary that this is not true. Because there are countably many aperiodic configurations, there must exists one with positive probability. Suppose $\mu(\eta(n))>0$ where $\eta$ is an aperiodic configuration. Since $\mu$ is shift-invariant, $\mu(\eta(n))=\mu(\eta(n+k))$ for any $k\in \mathbb{Z}$. This leads to a contradiction because $\mu(\{\eta(n+k):k\in \mathbb{Z}\})$ would be $+\infty$.

Therefore, it suffices to show that \eqref{eq:pca-vector-condition} leads to \eqref{eq:pca-weight-func-condition} on $\mu$ that has probability 1 on periodic configurations.  Suppose $L$ is the period of a periodic configuration $\eta$, so $\eta(n+L)=\eta(n)$, let $\mu_{\eta} = 1/L\sum_{i=1}^L\delta_{\eta_{n+i}}$. Here $\delta_{\eta_{n+i}}$ stands for the delta distribution on $\eta_{n+i}$. Any shift-invariant distribution would be a convex combination of $\mu_{\eta}$ for different $\eta$. Let $y=(\eta(1),\eta(2),\cdots,\eta(L))$. For any potential function $w=\sum_{s\in S}c_s\mu(s)$, we can conclude by Definition~\ref{def:pca-weight-function-on-string},
$$w(\mu)=1/L\sum_{i=1}^L\mathbb{I}[(\eta(i),\eta(1+i),\cdots,\eta(|s|+i))=s]=w(y).$$
Also, by Lemma~\ref{lem:pca-transition-of-potential} we have 
$$w(F_p\mu ) = C_\ell w(\mu)= C_\ell w(y).$$
So \eqref{eq:pca-weight-func-condition} is equivalent to 
$$C_\ell w(y)\le w(y)-\{?s_0\}(y).$$
\end{proof}

Note that $\{s\}=\{s0\}+\{s1\}+\{s?\}$. We can extend a potential function to any longer length. Use $P_\ell (w)$ to denote the length-$\ell$ version of $w$ if $w$ has length not greater than $\ell$. $P_\ell$ is a linear transformation. Specifically, when $w$ is expressed as a vector with length $\ell_0$, $P_\ell$ is matrix $P_\ell=(1,1,1)^{\top\otimes\ell-\ell_0}\otimes  I_{\ell_0}$, where $\otimes$ stands for  Kronecker product. Therefore, if $|s_0|\le \ell$ \eqref{eq:pca-vector-condition} is equivalent to that for any string $y$,
\begin{equation}\label{eq:pca-final-condition}
    \left(P_{\ell+1}(w)-C_\ell w -P_{\ell+1}(\{?s_0\})\right)(y)\ge 0.
\end{equation}

This leads to the problem of what potential function $w$ is non-negative over any string. Our goal is to introduce a sufficient condition for \eqref{eq:pca-vector-condition} expressed by a PLP . A trivial sufficient condition would be that for $w=\sum_{s\in S}c_s\{s\}$, every coefficient $c_s\ge 0$.  However, the condition is too strong that no potential would satisfy. To build intuition, let us construct a directed weighted graph $G$ with length $l$ strings as vertices and there is an edge between two vertices $s_0$ and $s_1$ if $(s_0(1),s_0(2),\cdots s_0(l-1))=(s_1(0),s_1(1),\cdots s_1(l-2))$. Each vertex $s$ has weight $c_s$ if $s\in S$ and 0 otherwise. Then by Definition~\ref{def:pca-weight-function-on-string}, $w(y)$ corresponds to the weight of a cycle on the graph. So $w(y)\ge 0$ for any $y$ is equivalent to having no negative cycles on $G$. 

\cite{makur2020broadcasting} proved a sufficient condition for this expressed by linear constraints.
\begin{lemma}[\cite{makur2020broadcasting}, Proposition 10]\label{lem:key-makur20}
    There exists $A_{\ell},B_\ell$ of size $3^{\ell+1}\times 3^{\ell}$ that only depends on $\ell$ such that
    for a potential function $w$ with length $\ell+1$, if there exists $z$ such that
    \begin{equation*}
        A_\ell z\le B_\ell w,
    \end{equation*}
    we have that for any string $y$,
    \begin{equation*}
        w(y)\ge 0.
    \end{equation*}
\end{lemma}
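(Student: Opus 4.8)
The plan is to identify the condition ``$w(y)\ge 0$ for every string $y$'' with the absence of negative-weight closed walks in the de Bruijn-type graph $G$ described just above the lemma, and then to note that a (one-sided) sufficient condition for this is the existence of a feasible node potential, which is precisely the easy direction of the Bellman--Ford / shortest-path LP-duality characterization. The statement ``a feasible node potential exists'' is a linear system in the potential and in the coefficients of $w$, so it packages as $A_\ell z\le B_\ell w$ with $A_\ell,B_\ell$ depending only on $\ell$ and the alphabet size $3$.

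In more detail, I would first pin down the graph: let $G$ have vertex set $\{0,1,?\}^\ell$, with a directed edge from $u$ to $v$ whenever the last $\ell-1$ symbols of $u$ agree with the first $\ell-1$ symbols of $v$; label such an edge by the length-$(\ell+1)$ string obtained by overlaying $u$ and $v$, and give it weight $w_{u\to v}:=c_{\mathrm{label}(u\to v)}$, where $c_s$ is the coefficient of $\{s\}$ in $w$ (and $c_s=0$ for $s\notin S$). By Definition~\ref{def:pca-weight-function-on-string}, for any finite string $y$ the value $w(y)$ is exactly the total edge weight of the closed walk in $G$ whose successive vertices are the cyclic length-$\ell$ windows of $y$; conversely every closed walk of $G$ arises from some such $y$. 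Now take $A_\ell$ to be the edge-versus-vertex signed incidence matrix of $G$ (the row of edge $u\to v$ has $+1$ in column $v$ and $-1$ in column $u$), and take $B_\ell$ to be the matrix whose row for an edge $e$ reads off from $w$ the coefficient of $\mathrm{label}(e)$ (so $B_\ell$ is essentially a permutation sending the coefficient vector of $w$ to its per-edge weights; if one instead feeds a $w$ specified at a shorter length it is precomposed with the extension $P_{\ell+1}$). Both $A_\ell$ and $B_\ell$ have entries in $\{0,\pm1\}$ and depend on $\ell$ alone.

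For the conclusion, suppose $z$ satisfies $A_\ell z\le B_\ell w$, i.e.\ $z_v-z_u\le w_{u\to v}$ for every edge $u\to v$ of $G$; given any string $y$, I would sum this inequality over the edges of the associated closed walk, so that the left-hand sides telescope to $0$ around the cycle while the right-hand sides sum to $w(y)$, yielding $w(y)\ge 0$. This uses only the ``easy'' implication that a feasible potential forbids negative cycles; the converse --- that every admissible $w$ admits such a $z$ --- is neither claimed nor needed, which is what keeps the argument short. There is thus no deep obstacle here: the only thing that takes care is the combinatorial bookkeeping that matches ``$w(y)$ for all $y$'' with ``all closed walks of $G$'' through the cyclic indexing of Definition~\ref{def:pca-weight-function-on-string} and that tracks the extension map $P_{\ell+1}$ --- and this bookkeeping is exactly what fixes the precise shapes of $A_\ell$ and $B_\ell$.
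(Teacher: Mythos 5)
Your proposal is correct and follows essentially the same route as the paper: the paper also identifies $w(y)$ with the weight of a closed walk in the de Bruijn-type graph on $\{0,1,?\}^\ell$, takes $A_\ell=(B_{out}-B_{in})^\top$ (the signed edge--vertex incidence matrix) and $B_\ell=B_{out}^\top$, and invokes the easy direction of the feasible-potential/no-negative-cycle equivalence, which is exactly your telescoping sum. The only (immaterial) difference is a sign convention on $z$ and that you place the weights on edges labeled by length-$(\ell+1)$ overlays, which is in fact cleaner and more consistent with the lemma's stated dimensions than the paper's own slightly muddled vertex-weight description.
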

The concrete expression of $A_l$ and $B_l$ is included in Appendix~\ref{app:matrix}. 

\begin{proof}[Proof of Theorem~\ref{thm:pca} and Theorem~\ref{thm:pca-edge}]
From Lemma~\ref{lem:pca-transition-of-potential}, the coefficients in \eqref{eq:pca-final-condition} are polynomials of the noise parameter $p$. So by Lemma~\ref{lem:key-makur20}, a sufficient condition for  Theorem~\ref{thm:pca} is to check the local feasibility of the following PLP for a certain length $\ell$ and a certain $s_0$ where $|s_0|\le  \ell$:
\begin{equation}\label{eq:PCA-final-PLP}
    A_{\ell+1} z \le B_{\ell+1} \left(P_{\ell+1}(w)-C_\ell w -P_{\ell+1}(\{?s_0\})\right).
\end{equation}
Here $z$ and $w$ are variable vectors. 

\paragraph{Finding a potential.} The rest of proof is carried out using a computer. To use local feasibility of PLP to find a potential funciton satisfying \eqref{eq:PCA-final-PLP}, we can search in the space of possible length $\ell$ and choice of $s_0$, using the algorithm in Section~\ref{thm:1-dim-local}. The local feasibility of \eqref{eq:PCA-final-PLP} for any choice of $\ell$ and $s_0$ would lead to ergodicity. Note in Lemma~\ref{lem:pca-transition-of-potential}, the size of $C_\ell$ grows exponentially with length of the potential function. So only small $\ell$ yields a PLP that is computationally efficient in practice. We tried $\ell=2$ or $3$ and all possible $s_0$ with length no more than $\ell$. In the end, we find that \eqref{eq:PCA-final-PLP} is locally feasible for $F_p$ with $\ell=3$ and $s_0=01$.  It is locally feasible for $F_p'$ with $\ell=3$ and $s_0=10$. This finishes the proof for Theorem~\ref{thm:pca} and Theorem~\ref{thm:pca-edge}. See the potential functions in Appendix~\ref{sec:potential-pca}.

\end{proof}


\subsection{Attempt on Soldier's Rule}\label{sec:soldiers-rule}
Let us first recall the definition of Soldier's Rule. In Soldier's Rule, the alphabet is $\{-1,1\}$, indicating the direction a soldier is facing. At each time step, each soldier sets their new direction based the majority vote of the direction of itself and the direction of the first neighbor and the third neighbor in their direction. The result then goes through a BSC channel. Thus,
$$\eta_{t+1}(n)= \text{BSC}_p\left(\text{Maj} (\eta_t(n),\eta_t(n+\eta_t(n),\eta_t(n+3\eta_t(n))\right).$$
The same argument can be applied to Soldier's Rule to show that a sufficient condition of ergodicity is \eqref{eq:PCA-final-PLP}.

However, by Lemma~\ref{lem:pca-transition-of-potential}, here the size of $C_\ell$ is $3^{r+6}\times 3^{r}$. Therefore, even for potential with length 3, the final PLP would have $3^{10}$ number of constraints, which makes the algorithm inefficient in practice. We leave it as an open question whether the ergodicity of soldier's rule under BSC noise can be solved efficiently with our approach. This is potentially possible either by improving the running time for Theorem~\ref{thm:1-dim-local} in general, or exploring special structure of \eqref{eq:PCA-final-PLP} so that the algorithm in Theorem~\ref{thm:1-dim-local} can be accelerated.

\section{Application in Broadcasting of Information on 2D Grid}\label{sec:broadcasting-on-grid}

In information theory, we are interested in how information broadcasts on a infinite DAG with noise. One important case is when the DAG is a 2D grid. To keep a consistent notation with section~\ref{sec:PCA}, the nodes on grid are indexed by $(t,i)$ where $t\ge0$, $0\le i\le t$. The bit at coordinate $(t,i)$ is denoted by $\eta_t(i)$. At time 0, there is one bit of information at coordinate $0,0$, either 0 or 1. At any time $t>0$, node at $(t-1,i)$ receives information sent by previous layer and pass a bit to $(t,i)$ and $(t,i+1)$. The edge between two nodes is noisy, specifically, a BSC channel with noise parameter $p$. 


For nodes on the boundary, i.e., nodes with coordinates $(t,t)$ and $(t,0)$, they only receive one bit of input. So we assume that they pass on whatever bit they receive. For other nodes, they receive two inputs and output to two other nodes. The output is a boolean function of the two-bit input, and different functions $f$ corresponds to different Markov chains.
\begin{equation}
    \eta_t(i) = f\left( BSC_p(\eta_{t-1}(i-1)), BSC_p(\eta_{t-1}(i)) \right)
\end{equation}
We will use $\eta_t^+$ and $\eta_t^-$ to denote the Markov chain that starts with bit 1 and bit 0 respectively. Our goal is to prove that for any function $f$, $\eta_t^+$ and $\eta_t^-$ converges to the same distribution, i.e., 
\begin{equation*}
    \lim_{t\rightarrow \infty} ||P_{\eta_t^+},P_{\eta_t^-}||_{TV}=0.
\end{equation*}
The information of the initial bit is lost at infinity if this is true, so we say the recovery is not possible. 

This problem can also be approached by potential function method similar to PCA and then be reduced to the local feasibility of a PLP. Results in this section are shown in \cite{makur2020broadcasting}. We include the results here to show its connection with our work.

The nontrivial choices for $f$ are AND, XOR, NAND and IMP where IMP is the following function. 
\begin{center}
    \begin{tabular}{|c|c|c|}
    \hline
    $x_1$    & $x_2$ & IMP$(x_1,x_2)$ \\
    \hline
     0    & 0& 1\\
     0&1&1\\
     1&0&0\\
     1&1&1\\
     \hline
    \end{tabular}
\end{center}
Other non-constant functions are equivalent to these four functions, or only depends on one of the inputs.

It is conjectured that it is impossible to recover for any function.
In \cite{makur2020broadcasting}, the impossibility of recovery was proven for AND and XOR for any $p$. They also proved it for NAND at many specific $p$, which corresponds to PLP at a fixed parameter. Here we will complete the conjecture for small error probability for NAND and IMP and prove the following theorem.

\begin{theorem}\label{thm:broadcasting-on-grid}
    There exists $\epsilon>0$ such that for any $p\in (0,\epsilon)$,
    the recovery is not possible for any binary function $f$.
\end{theorem}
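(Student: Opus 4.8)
\emph{Proof plan.}\ The theorem will be obtained by handling the function $f$ on a case-by-case basis and, in the two remaining hard cases, reducing impossibility of recovery to local feasibility of an explicit $1$-PLP solved by the algorithm of Theorem~\ref{thm:1-dim-local}. First, it suffices to treat finitely many $f$. If $f$ is constant then $\eta_t$ carries no information about the source and $\TV(P_{\eta_t^+},P_{\eta_t^-})=0$; if $f$ depends on only one of its two inputs then the grid degenerates into $\mathrm{BSC}_p$-chains of length $t$ rooted at the origin, so $\TV(P_{\eta_t^+},P_{\eta_t^-})\to 0$ at rate $(1-2p)^{\Theta(t)}$. Every other non-constant $f$ is, up to reflecting the grid (swapping the two parents) and globally relabeling bits (which commutes with $\mathrm{BSC}_p$), one of $\mathrm{AND}$, $\mathrm{XOR}$, $\NAND$, $\mathrm{IMP}$. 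For $\mathrm{AND}$ and $\mathrm{XOR}$, impossibility of recovery for all $p\in(0,1/2]$ is already established in~\cite{makur2020broadcasting}, so it remains to prove it for small $p$ when $f\in\{\NAND,\mathrm{IMP}\}$; the final $\epsilon$ is then the minimum of the two finite thresholds so obtained, while any $\epsilon$ works for the trivial $f$.

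For a fixed non-trivial $f$ I would follow the construction of Section~\ref{sec:PCA}, adapted to the grid as in~\cite{makur2020broadcasting}: couple $\eta^+_t$ and $\eta^-_t$ by the monotone coupling, merge the two disagreement states into a single symbol $?$, and observe that the boundary sites of each row simply copy their unique input. As in~\cite{makur2020broadcasting}, $\TV(P_{\eta_t^+},P_{\eta_t^-})$ is dominated by the expected number of $?$-patterns in the time-$t$ row, so it is enough to exhibit a length-$\ell$ potential $w\in\mathbb{R}^{3^\ell}$ (in the sense of Definition~\ref{def:potential-function}) that (i) is non-negative on every admissible row, (ii) is non-increasing in expectation under the coupled update, and (iii) strictly decreases whenever a designated pattern $?s_0$ appears; this is the grid analog of Lemma~\ref{lem:pca-potential-function-method}. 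Writing the one-step evolution via the transition matrix $C_\ell$ of Lemma~\ref{lem:pca-transition-of-potential} (with boundary-adjusted rows) and enforcing (i) through the certificate of Lemma~\ref{lem:key-makur20}, conditions (i)--(iii) turn into a finite list of linear inequalities in the entries of $w$ and an auxiliary vector $z$, whose coefficients are polynomials in $p$; that is, a $1$-PLP of exactly the shape of~\eqref{eq:PCA-final-PLP}.

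I would then run the algorithm of Theorem~\ref{thm:1-dim-local} on this PLP, searching over small lengths $\ell$ and over the finitely many witness patterns $s_0$ with $|s_0|\le\ell$, exactly as in the PCA proof. Since the algorithm is constructive, a positive answer returns a potential $w(p)$ with rational coefficients valid on an interval $(0,\epsilon)$ whose right endpoint is the least positive root of an explicit polynomial; carrying this out shows local feasibility for $\NAND$ and for $\mathrm{IMP}$, yielding the $\mathbb{Q}$-rational potentials displayed in~\eqref{eq:pot_nand} and~\eqref{eq:pot_imp}. Because the output is an explicit function of $p$, each solution can be re-verified independently by exact arithmetic: for every constraint polynomial one checks that it is identically satisfied or that its first non-vanishing derivative at $p=0$ is positive, which certifies the claimed $\epsilon$ and makes the computer-assisted step checkable by hand.

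The main obstacle is quantitative rather than conceptual: the size of $C_\ell$, and hence of the resulting PLP, grows like $3^\ell$, so only very small $\ell$ (here $\ell=3$) are tractable in practice, and it is not clear a priori that a potential of such small length certifies impossibility — most delicately for $\mathrm{IMP}$, the least symmetric of the four functions — so pinning down a working pair $(\ell,s_0)$ is exactly what the search must resolve. A secondary point requiring care is the boundary: unlike the PCA on $\mathbb{Z}$, the grid rows are finite and their extreme sites obey a different rule, so one must confirm that the potential inequality over all periodic $\{0,1,?\}$-strings genuinely forces $\TV(P_{\eta_t^+},P_{\eta_t^-})\to 0$ in the bounded-width setting — this is precisely the reduction supplied by~\cite{makur2020broadcasting}, which we invoke.
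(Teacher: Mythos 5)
Your proposal follows essentially the same route as the paper: reduce to the four nontrivial functions, cite \cite{makur2020broadcasting} for AND and XOR, and for NAND and IMP encode the potential-function conditions (via the coupled $\{0,1,?\}$ chain, the transition matrix $C_\ell$, and the no-negative-cycle certificate of Lemma~\ref{lem:key-makur20}) as a 1-PLP solved by the algorithm of Theorem~\ref{thm:1-dim-local}, yielding the length-3 potentials of \eqref{eq:pot_nand} and \eqref{eq:pot_imp}. The boundary issue you flag is exactly what the paper handles with $?$-only potentials and $\ell$-boundary-coupled strings (the primed Lemmas~\ref{lem:pca-transition-of-potential}$'$ and~\ref{lem:key-makur20}$'$), so the proposal is correct and matches the paper's argument.
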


Intuitively, if a Markov chain converges for small error probability, it should also converge for larger error probability. However, we are not able to prove the monotonicity of convergence with respect to $p$ formally.

\subsection{Auxiliary Chain and Potential Function}
Similar to PCA, we can define a coupled chain between $\eta_t^+$ and $\eta_t^-$ so that at any edge, the two chain becomes the same random bit with probability $2p$ and remains the initial bit with probability $1-2p$. The coupled chain can be simplified with alphabet $\{0,1,?\}$ where $?$ means that the two chain are not equal at that site, see Table~\ref{tab:PCA-edge}. We always denote the configuration by $\eta_t$ for simplicity because the same analysis holds for NAND and IMP.


Therefore, it suffices prove that as $t$ tends to infinity, $\eta_t$ consists of 0 and 1 with probability converging to 1. In other words,
\begin{equation}\label{eq:broadcasting-suf-condition}
    \lim_{t\rightarrow \infty}\E[\text{number of ? in }\eta_t] = 0
    .
\end{equation}

To this end, we introduce the potential function in the setting of broadcasting on grid.
\begin{definition}[potential function]\label{def:broadcasting-potential}
A potential function $w$ of length $\ell$ is written as a vector
\begin{equation*}
    w=\sum_{s\in \{0,1,?\}^\ell}c_s \{s\},
\end{equation*}
its value on a string $\eta$ with length at least $\ell$ is defined as
    \begin{equation*}
        w(\eta)=\sum_{s\in \{0,1,?\}^\ell}c_s\sum_{i=1}^{|\eta|-\ell+1}\mathbf{I}\{(\eta_{i},\eta_{i+1}\cdots,\eta_{i+|s|-1})=s\}
    \end{equation*} For a distribution $\mu$ on string $\eta$, 
\begin{equation*}
    w(\mu)=\E_{\eta\sim \mu}[w(\eta)].
\end{equation*}
\end{definition}

Therefore, the statement \eqref{eq:broadcasting-suf-condition} is equivalent to 
\begin{equation*}
    \lim_{t\rightarrow \infty}\{?\}(\eta_t) = 0.
\end{equation*}


The following lemma describes the potential function we need for the condition above to hold.
\begin{lemma}[Proposition 4, \cite{makur2020broadcasting}]\label{lem:broadcasting-condition-on-potential}
    A sufficient condition for $\lim_{t\rightarrow \infty}\{?\}(\eta_t) = 0$ is to find a function $w$ such that
\begin{enumerate}
    \item For any $\eta_t$, $w(\eta_{t+1})\le w(\eta_t)$
    \item For any string $y$, $w(y)\ge \{?\}(y)$.
\end{enumerate}
\end{lemma}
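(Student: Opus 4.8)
The idea is to use $w$ as a Lyapunov function for the expected evolution of the auxiliary $\{0,1,?\}$-chain and to show that the expected number of $?$'s is summable in time, hence tends to $0$. Write $\mu_t$ for the law of $\eta_t$, set $W_t := w(\mu_t)=\E[w(\eta_t)]$ and $N_t := \{?\}(\mu_t)=\E\big[\#\{i:\eta_t(i)=?\}\big]$; the goal is $N_t\to 0$.

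The first, routine, part is monotonicity. Taking expectations in condition~1 gives $W_{t+1}\le W_t$, so $(W_t)$ is non-increasing; condition~2 applied to each realization gives $w(\eta_t)\ge\{?\}(\eta_t)\ge 0$, hence $W_t\ge N_t\ge 0$. A non-increasing sequence bounded below converges, so $W_t\downarrow W_\infty\ge 0$ and the one-step drops telescope to a finite total: $\sum_{t\ge 0}(W_t-W_{t+1})=W_0-W_\infty\le W_0<\infty$. So far this only yields $\sup_t N_t\le W_0$ — the expected $?$-count stays bounded — which is not yet decay.

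The part that does the real work — and the step I expect to be the main obstacle — is extracting from condition~1 a drop proportional to $N_t$. The bare inequality $w(\eta_{t+1})\le w(\eta_t)$ together with condition~2 yields only the boundedness above; for decay one needs the \emph{strict} form of the drop, the analogue of the $-\mu(?s_0)$ term in the PCA potential lemma (Lemma~\ref{lem:pca-potential-function-method}), namely $w(\eta_{t+1})\le w(\eta_t)-\{?\}(\eta_t)$, i.e. $W_{t+1}\le W_t-N_t$; summing against the finite total drop then gives $\sum_{t}N_t\le W_0<\infty$ and hence $N_t\to 0$. If instead the available drop is only through a shifted pattern count $\{?s_0\}(\eta_t)$ for a fixed $\{0,1\}$-string $s_0$ (as in the PCA case), one additionally uses that $\mathrm{BSC}_p$ independently overwrites each wire's output by a fresh uniform bit with probability $2p$, so that conditionally on $\eta_t(i)=?$ the $|s_0|$ sites to the right of $i$ spell $s_0$ with probability $\ge p^{|s_0|}$ (disjoint nodes of the same layer, independent noise); this gives $\{?s_0\}(\mu_t)\ge p^{|s_0|}(N_t-|s_0|)_+$, hence $\limsup_t N_t\le|s_0|$, a crude bound that must then be bootstrapped down to $0$ — an extra complication, absent in the shift-invariant PCA analysis, forced by the finite (but growing) length of the grid configuration and its right boundary. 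Working with the unshifted drop $-\{?\}(\eta_t)$ avoids this boundary correction and gives the cleanest route; everything else (monotone convergence, a telescoping sum) is immediate.
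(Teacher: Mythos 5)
There is a genuine gap, and it is worth noting first that the paper does not actually prove this statement: it is imported verbatim as Proposition~4 of \cite{makur2020broadcasting}, so there is no in-paper argument to match your proposal against. Judged on its own terms, your write-up correctly carries out the easy half (monotonicity of $W_t=\E[w(\eta_t)]$, the bound $W_t\ge N_t\ge 0$, the telescoping of the drops) and correctly diagnoses that this yields only $\sup_t N_t\le W_0$, not decay. But the step you call ``the real work'' is resolved by replacing hypothesis~1 with the strict-drop condition $W_{t+1}\le W_t-N_t$, which is \emph{not} among the stated hypotheses. That strict drop is exactly the extra term $-\mu(?s_0)$ that appears in the PCA potential lemma (Lemma~\ref{lem:pca-potential-function-method}) and is conspicuously absent here: the LP the paper actually solves for the grid imposes only $(w-C_\ell w)(y)\ge 0$ and $(w-\{?\})(y)\ge 0$, with no decrease term. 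So you have proved a different (PCA-style) lemma, and the fallback branch of your argument --- recovering a drop via $\{?s_0\}$ and then ``bootstrapping'' the crude bound $\limsup_t N_t\le |s_0|$ down to zero --- is announced but not executed; that bootstrap is precisely where the content of the lemma lives.

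A second issue is that the hypotheses cannot be read literally, and any correct proof must engage with this. For a potential of length $\ell\ge 2$, condition~2 fails for every short string containing a $?$ (the pattern sum is empty, so $w(y)=0<\{?\}(y)$); and if condition~1 held for \emph{all} $\eta_t$ then, since $\eta_0$ is the single symbol $?$ and hence $w(\eta_0)=0$, one would get $N_t\le W_t\le W_0=0$ for all $t\ge\ell-1$, which is false for any gate with $p>0$. The conditions are therefore only meaningful when restricted to $\ell$-boundary-coupled strings, which is how the paper uses them in the proof of Theorem~\ref{thm:broadcasting-on-grid} (via the $?$-only restriction and Lemmas~\ref{lem:pca-transition-of-potential}$'$ and~\ref{lem:key-makur20}$'$). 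A proof of the lemma then has to control the initial phase before $\eta_t$ becomes boundary-coupled (using that each boundary symbol couples at geometric rate and stays coupled) and show that the resulting error terms in the supermartingale inequality are summable, before any convergence argument can start. Your proposal does not address the boundary-coupling restriction at all, so even the ``easy half'' is not quite available as written. I would recommend either locating and following the actual proof of Proposition~4 in \cite{makur2020broadcasting}, or restating the lemma with the boundary-coupled restriction and an explicit strict-drop hypothesis and proving that version instead.
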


\subsection{Connection with PLP}
The goal is to again reduce sufficient condition in Lemma~\ref{lem:broadcasting-condition-on-potential} to local feasibility of a PLP. 

Intuitively we want to use similar conclusion as Lemma~\ref{lem:pca-transition-of-potential} 
and Lemma~\ref{lem:key-makur20}. The main difference between PCA and broadcasting on grid is that grid has a boundary while PCA is infinite. To circumvent the problem, let us introduce the following concepts.
\begin{definition}
A potential function $w=\sum_{s\in S}c_s\{s\}\ (c_s\not=0)$ is said to be $?$-only if any $s\in S$ contains at least one $?$ symbol.
\end{definition}

\begin{definition}
An $l$-boundary-coupled string is a string with alphabet $0,1,?$ such that the first $l$ and last $l$ symbols do not contain $?$.
\end{definition}

It is observed in \cite{makur2020broadcasting} that we can only consider $l$-boundary-coupled strings as for any $l$, $\eta_t$ will become $l$-boundary-coupled with probability converging to 1 when $t$ goes to infinity.

The following lemma is proven in \cite{makur2020broadcasting}. The proof is similar to Lemma~\ref{lem:pca-transition-of-potential}.
\begin{lembis}{lem:pca-transition-of-potential}
For a $?$-only potential function $w$ with length $l$, there exists a potential function $w'$ such that $w(\eta_{t+1}) = w'(\eta_t)$.
Further, define matrix $C_\ell\in \mathbb{R}^{3^{\ell+1}}\times \mathbb{R}^{3^{\ell}}$ that for any $l$-boundary-coupled string $s\in \{0,1,?\}^{\ell+1}$ and $s'\in \{0,1,?\}^{\ell}$,
    \begin{equation*}
        C_\ell(s,s') = \Pr[(\eta_{t+1}(1),\cdots, \eta_{t+1}(\ell))=s'|(\eta_{t}(0),\cdots, \eta_{t}(\ell+1))=s],
    \end{equation*}
    we have
    \begin{equation*}
        w' = C_\ell w.
    \end{equation*}
\end{lembis}

Observe that for $?$-only potential functions, Definition~\ref{def:broadcasting-potential} and Definition~\ref{def:pca-weight-function-on-string} are equivalent on any $l$-boundary-coupled strings.
Therefore, we have the following corollary of Lemma~\ref{lem:key-makur20}:
\begin{lembis}{lem:key-makur20}
    There exists $A_{\ell},B_\ell$ of size $3^{\ell+1}\times 3^{\ell}$ that only depends on $\ell$ such that
    for a $?$-only potential function $w$ with length $\ell+1$, if there exists $z$ such that
    \begin{equation*}
        A_\ell z\le B_\ell w,
    \end{equation*}
    we have that for any $l$-boundary-coupled string string $y$,
    \begin{equation*}
        w(y)\ge 0.
    \end{equation*}
\end{lembis}

\begin{proof}[Proof of Theorem~\ref{thm:broadcasting-on-grid}]
Using Lemma~\ref*{lem:pca-transition-of-potential}'
we can rewrite the conditions in Lemma~\ref{lem:broadcasting-condition-on-potential} for a ?-only potential function.
\begin{equation}\label{eq:broadcasting-condition}
    \begin{aligned}
        &\text{For any $l$-boundary-coupled string $y$, we have}\\
        &(w-C_\ell w) (y)\ge 0, (w-\{?\})(y)\ge 0\\
    \end{aligned}
\end{equation}

Then we can use Lemma~\ref*{lem:key-makur20}' to get a sufficient condition of \eqref{eq:broadcasting-condition}. That is, \eqref{eq:broadcasting-condition} holds if the following linear program is feasible.
\begin{equation}
    \begin{aligned}
        A_{l+1}z \le B_{l+1} (P_{\ell+1}(w) - C_\ell w) \\
        A_{l+1} z \le B_{l+1} (P_{\ell+1}(w) - P_{\ell+1} (\{?\})) \\
        w \text{ is ?-only}\\
    \end{aligned}
\end{equation}


The explicit expression for $A_\ell$, $B_\ell$ and $C_\ell$ can be found in Appendix~\ref{app:matrix}.
Because the linear program is parameterized by polynomials of $p$, this can be solved by our algorithm. We find a potential function with length 3 for both IMP and NAND.

For NAND function, the potential function found by the algorithm is the following:
\begin{align}\label{eq:pot_nand}
\begin{split}
    w&=-\frac{\delta}{1+509\delta/216}\{00?\}+\frac{2+3\delta}{1+509\delta/216}\{01?\}+\frac{4\delta}{1+509\delta/216}\{0?0\}+\frac{2+2\delta}{1+509\delta/216}\{0?1\}\\
    &+\frac{2+2\delta}{1+509\delta/216}\{0??\}+\frac{\delta}{1+509\delta/216}\{10?\}+\frac{1+\delta}{1+509\delta/216}\{11?\}+\frac{1-\delta}{1+509\delta/216}\{1?0\}\\
    &+\frac{2}{1+509\delta/216}\{1?1\}+\frac{2+3\delta/2}{1+509\delta/216}\{1??\}+\frac{1+223\delta/108}{1+509\delta/216}\{?00\}+\frac{1+439\delta/108}{1+509\delta/216}\{?01\}\\
    &+\frac{1+2\delta}{1+509\delta/216}\{?0?\}+\frac{2-4\delta}{1+509\delta/216}\{?10\}+\frac{1+73\delta/432}{1+509\delta/216}\{?11\}+\frac{2-\delta}{1+509\delta/216}\{?1?\}\\
    &+\frac{1-4\delta}{1+509\delta/216}\{??0\}+\frac{2-29\delta/48}{1+509\delta/216}\{??1\}+\frac{2-\delta}{1+509\delta/216}\{???\}
\end{split}
\end{align}

For IMP, the potential function is 
\begin{align}\label{eq:pot_imp}
\begin{split}
    w&=(40+6175\delta/38)\{00?\}+(80+611\delta/19)\{01?\}+40\{0?0\}+80\{0?1\}\\
    &+(80+200\delta/19)\{0??\}+(80-1611\delta/19) \{10?\}+ (40-\delta) \{11?\}+80\{1?0\}\\ 
    &+(80+7521\delta/38)\{1?1\}+(120+11807\delta/38)\{1??\}+40\{?00\}+80\{?01\}+80\{?0?\}\\
    &+(-40+1481\delta/38)\{?10\}+(-40-2941\delta/38)\{?11\}+(-20-747\delta/19)\{?1?\}\\ &-19\{??0\}+(10+2373\delta/19)\{??1\}+(40+5138\delta/19)\{???\}
\end{split}
\end{align}

\end{proof}

\section{Efficient Solution with Only Equality Constraints}\label{sec:equality}
In this section we will prove Theorem~\ref{thm:equality}.
Although we showed that the problem of local feasibility of 2-PLP is not tractable, it turns out that the equality version is solvable. For a linear system
\begin{equation}\label{eq:equality}
    A(\delta) x = b(\delta),
\end{equation}
where $A$ and $b$ are polynomials parameterized by $\delta\in \mathbb{R}^d$,
we are interested in whether there exists $x\in \mathbb{R}^n$ satisfying the equalities when $\delta$ is around the origin.

\begin{theorem}
    The following decision problem can be solved in polynomial time assuming $d$ is constant. Input an $m\times n$ matrix $A$ and an $m$-dimensional vector $b$ as polynomials of $\delta\in \mathbb{R}^d$. Is there an $\epsilon>0$ such that for all $ \delta\in B_\epsilon(0)$, there exists $x$ satisfying $A(\delta) x = b(\delta)$?
\end{theorem}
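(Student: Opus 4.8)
The plan is to reduce the question to a statement about $\delta$ alone — a vector with only constantly many coordinates — and then to settle that statement with a decision procedure for semialgebraic sets, which runs in polynomial time once the number of variables is fixed. Throughout, $D$ is the degree bound and $L$ the bit bound of the input, and we will keep all auxiliary polynomials of degree $\mathrm{poly}(n,D)$ and of $\mathrm{poly}$-size coefficients.

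First I would use the pseudoinverse to eliminate the quantifier ``$\exists x$''. By Lemma~\ref{lem:pseudo_inverse}, for a fixed $\delta$ the system $A(\delta)x=b(\delta)$ is solvable iff $A(\delta)A^+(\delta)b(\delta)=b(\delta)$. The map $\delta\mapsto A^+(\delta)$ is not a rational function globally — it jumps where the rank of $A(\delta)$ changes — but I claim it is a single rational function on each rank stratum, and the strata have a polynomial-size description. Since $M(\delta):=A(\delta)^\top A(\delta)$ is positive semidefinite, its $r$-th characteristic coefficient $p_r(\delta)$ (the sum of the $r\times r$ principal minors of $M$, a polynomial in $\delta$ of degree $\le 2rD$, with $p_0\equiv 1$) is the $r$-th elementary symmetric function of the eigenvalues, which are all $\ge 0$; hence $p_r(\delta)=0$ iff fewer than $r$ eigenvalues are positive, i.e. iff $\mathrm{rank}\,A(\delta)<r$. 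Therefore, with $r_0$ the largest $r$ with $p_r\not\equiv 0$ (the generic rank of $A$), the sets $V_r:=\{\delta: p_r(\delta)\neq 0,\ p_{r+1}(\delta)=0\}$ for $0\le r\le r_0$ are exactly $\{\delta:\mathrm{rank}\,A(\delta)=r\}$; they are at most $\min(m,n)+1$ semialgebraic sets, each cut out by $O(1)$ polynomials of degree $O(nD)$, and they cover $\mathbb{R}^d$.

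Next I would show $A^+$ is rational on each $V_r$ and convert feasibility into polynomial equations. Using Lemma~\ref{lem:pseudo-inverse-limit} and Cramer's rule, $A^+(\delta)=\lim_{y\to 0}(M+yI)^{-1}A^\top=\lim_{y\to 0}\frac{\mathrm{adj}(M+yI)\,A^\top}{\det(M+yI)}$. On $V_r$ the denominator equals $y^{\,n-r}\big(p_r(\delta)+O(y)\big)$ with $p_r(\delta)\neq 0$, so its $y$-adic order is the constant $n-r$ throughout $V_r$. Finiteness of the limit (Lemma~\ref{lem:pseudo-inverse-limit}) then forces the numerator $N(y,\delta):=\mathrm{adj}(M+yI)A^\top=\sum_k N_k(\delta)y^k$ to satisfy $N_k(\delta)=0$ for all $k<n-r$ and all $\delta\in V_r$, whence $A^+(\delta)=N_{n-r}(\delta)/p_r(\delta)$ for every $\delta\in V_r$, where $N_{n-r}$ is a fixed polynomial matrix of degree $O(nD)$ — exactly the extraction behind Eq.~\eqref{eq:entry-of-pseudo-inverse}, now in $d$ variables with the rank held fixed. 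Consequently, on $V_r$ the condition $A(\delta)A^+(\delta)b(\delta)=b(\delta)$ is equivalent (multiplying through by $p_r(\delta)\neq 0$) to the vanishing of the polynomial vector $g^{(r)}(\delta):=A(\delta)N_{n-r}(\delta)b(\delta)-p_r(\delta)b(\delta)$, i.e. to $m$ polynomial equations of degree $O(nD)$. Hence the feasibility locus $S:=\{\delta:\ A(\delta)x=b(\delta)\text{ is solvable}\}=\bigcup_{r=0}^{r_0}\big(V_r\cap\{g^{(r)}_1=\dots=g^{(r)}_m=0\}\big)$ is a semialgebraic subset of $\mathbb{R}^d$ given by a quantifier-free formula built from $O(nm)$ polynomials of degree $O(nD)$ and $\mathrm{poly}$-size coefficients; all of this data is computable in polynomial time (symbolic linear algebra over $\mathbb{Q}(\delta)$, with polynomials having $\binom{O(nD)+d}{d}=\mathrm{poly}$ coefficients since $d$ is constant). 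The answer to the theorem is ``yes'' precisely when $0\in\mathrm{int}(S)$, equivalently when $0\notin\overline{\mathbb{R}^d\setminus S}$; since $d$ is a constant this is decided by cylindrical algebraic decomposition (or any semialgebraic decision procedure), which on a fixed number of variables with $s$ polynomials of degree $\le D'$ runs in time $(sD')^{2^{O(d)}}=\mathrm{poly}(sD')$. Algorithm~\ref{alg:equality-case} is the concrete implementation of these steps.

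The main obstacle is the structural claim of the third paragraph: that $A^+(\delta)$ — a map that is not even continuous across rank changes — restricts to a \emph{single} rational function with polynomially bounded numerator and denominator on each stratum $V_r$, and that the $V_r$ are describable with polynomially many low-degree polynomials. The two delicate points are (i) showing the $y$-adic order of $\det(M+yI)$ is constant on $V_r$ while that of $\mathrm{adj}(M+yI)A^\top$ is $\ge n-r$ there, so that dividing yields a genuine rational function valid on \emph{all} of $V_r$ rather than only its generic part — this is where positive semidefiniteness of $M=A^\top A$ is essential, since for a general matrix the sum of $r$-minors can vanish without the rank dropping — and (ii) the degree/size bookkeeping for $p_r$ and $N_{n-r}$. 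The other place the hypothesis ``$d$ constant'' is crucial is the final interiority test, which is what turns the (in general doubly-exponential) semialgebraic decision problem into a polynomial-time one; everything else — minors, adjugates, clearing denominators, comparing coefficients — is routine.
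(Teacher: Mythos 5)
Your proof is correct, and its first half coincides with the paper's: both use Lemmas~\ref{lem:pseudo_inverse} and~\ref{lem:pseudo-inverse-limit} to eliminate the quantifier over $x$, and both extract the leading $y$-coefficients of $\mathrm{adj}(A^\top A+yI)A^\top$ and $\det(A^\top A+yI)$ to write $A^+$ as a single rational function on each stratum of $\delta$-space (this is exactly the case analysis~\eqref{eq:equality-pseudo-inverse-with-region}). You diverge in two places. First, you index the strata by $\mathrm{rank}\,A(\delta)$ and use positive semidefiniteness of $A^\top A$ to cut each stratum out with only the two sign conditions $p_r\neq 0$, $p_{r+1}=0$; the paper instead stratifies by the full vanishing pattern of the coefficients $q_0,\dots,q_k$ and then refines further by factoring them into irreducibles and intersecting their varieties (Lemma~\ref{lem:equality-intersection-of-varieties}). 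Your description is more economical and makes the ``single rational function per stratum'' claim transparent. Second, for the endgame the paper stays elementary: on each explicit region it tests whether the numerator of $AA^+b-b$ vanishes identically via divisibility by the irreducible factors, which is what becomes Algorithm~\ref{alg:equality-case}; you instead package the feasibility locus $S$ as a quantifier-free semialgebraic set in the $d$ variables $\delta$ alone and hand the interiority test $0\in\mathrm{int}(S)$ to CAD, which is polynomial precisely because $d$ is constant. Your route is shorter and more robust (it would handle any polynomially described semialgebraic feasibility locus), at the price of a heavy black box whose fixed-dimension polynomial running time you are entitled to but should cite explicitly; the paper's route is self-contained, needing only polynomial factorization over $\mathbb{Q}[\delta]$ and the dimension lemma, and it produces an explicit certificate. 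Both arguments use the hypothesis that $d$ is constant in the right place: yours in the CAD call, the paper's in bounding the number of regions and the number of monomials.
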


Note that this is different than the problem of solving a linear system in the quotient field of $\mathbb{Q}[\delta]$. A general solution $x(\delta)$ in the field that satisfies \eqref{eq:equality} may not be defined on some value of $\delta$. The most simple example would be $\delta x = 1$ where $\delta$ is 1-dimensional. A solution exists in the field $x=1/\delta$ but the equation is still infeasible when $\delta=0$. Also, a general solution not being defined at $\delta_0$ does not mean that the system is infeasible at $\delta_0$, with $\delta^2x=\delta$ as an example. 

The idea is to divide the space of $\delta$ into polynomially many regions such that at each region $x$ can be expressed as a rational function of $\delta$. The key is to find regions where the psudoinverse 
of $A$ is continuous. 

\begin{definition}
    A \emph{real algebraic set} in $\mathbb{R}^d$ is a set of form
    \begin{equation*}
        V(F)=\{\delta\in \mathbb{R}^d:p(\delta)=0,\ \forall p\in F\}
    \end{equation*}
    where $F$ is a set of polynomials. This is called the variety of $F$. Use $V(p)$ to denote the variety of a single polynomial, $\{\delta\in \mathbb{R}^d:p(\delta)=0\}$.
\end{definition}

\subsection{Preliminaries in Algebraic Geometry}

\begin{definition}
    An \emph{irreducible algebraic set} $S$ is an algebraic set that cannot be the union of two algebraic sets $S_1$ and $S_2$ such that both of them is neither $\emptyset$ nor $S$.
    
    An \emph{irreducible component} of an algebraic set $S$ is a subset of $S$ that is an irreducible algebraic set.
\end{definition}



\begin{lemma}\label{lem:equality-intersection-of-varieties}
    For any $d$ irreducible polynomials $p_1,p_2,\cdots,p_d$ in ring $\mathbb{Q}[\delta_1,\delta_2,\cdots,\delta_d]$, any irreducible component of $\cap_{i=1}^dV(p_i)=V(\{p_i:i\in [d]\})$ is $0$-dimensional, i.e., a single point on $\mathbb{R}^d$. 
\end{lemma}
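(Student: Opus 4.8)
The plan is to reduce the statement to a dimension count and invoke Krull's Hauptidealsatz as the only nontrivial ingredient. It is cleanest to work first over $\mathbb{C}$ (equivalently $\overline{\mathbb{Q}}$): write $V_{\mathbb{C}}(\cdot)$ for complex zero sets in $\mathbb{A}^d_{\mathbb{C}}$, and recall two standard facts about $R=\mathbb{Q}[\delta_1,\dots,\delta_d]$ and its domain quotients (finitely generated domains over a field are catenary and equidimensional): (i) for a prime $\mathfrak{p}$ one has $\dim R/\mathfrak{p}=d-\mathrm{ht}\,\mathfrak{p}$; and (ii) if $f$ is a nonzero non-unit of a finitely generated domain over a field, every minimal prime over $(f)$ has height exactly $1$. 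Combining (i) and (ii): if $Z$ is an irreducible variety and the hypersurface $V(f)$ does not contain $Z$, then every irreducible component of $Z\cap V(f)$ has dimension exactly $\dim Z-1$.

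The core argument is then an induction on $i$ showing that every irreducible component of $W_i\defeq V_{\mathbb{C}}(p_1,\dots,p_i)$ has dimension at most $d-i$. The base case $W_0=\mathbb{A}^d_{\mathbb{C}}$ is immediate. For the inductive step, note $W_{i+1}=W_i\cap V_{\mathbb{C}}(p_{i+1})$, so each component of $W_{i+1}$ lies inside some component $Z$ of $W_i$ and is a component of $Z\cap V_{\mathbb{C}}(p_{i+1})$; provided $p_{i+1}$ does not vanish identically on $Z$, the consequence above gives that this component has dimension $\dim Z-1\le d-(i+1)$. Taking $i=d$ shows every component of $W_d$ has dimension $\le 0$, hence is a single point. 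To descend to $\mathbb{R}^d$: the real set $V(\{p_i\})\subseteq\mathbb{R}^d$ has Krull dimension at most $\dim\mathbb{R}[\delta]/(p_1,\dots,p_d)=\dim\mathbb{C}[\delta]/(p_1,\dots,p_d)=\dim W_d\le 0$ (dimension of a finitely generated algebra is unchanged under the base field extension $\mathbb{R}\subseteq\mathbb{C}$), so $V(\{p_i\})$ is a finite set of points and its irreducible components (as a real algebraic set) are exactly those points.

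The step I expect to be the real obstacle is justifying the clause ``$p_{i+1}$ does not vanish identically on any component of $W_i$,'' i.e.\ that $V(p_1),\dots,V(p_d)$ meet in the expected codimension $d$. For completely arbitrary irreducible $p_1,\dots,p_d$ this can fail (for instance $p_1=\delta_1$, $p_2=\delta_2$, $p_3=\delta_1+\delta_2\delta_3$ are all irreducible, yet their common zero set contains the entire $\delta_3$-axis), so the statement genuinely depends on the provenance of the $p_i$ in the equality-PLP elimination. My plan would be to make this properness explicit — either by using that the $p_i$ emerge in triangular form with $p_i$ of positive degree in $\delta_i$ (which rules out the bad case directly by successively projecting onto $\delta_1,\delta_2,\dots$ and noting that a $\mathbb{Q}$-irreducible polynomial of positive $\delta_i$-degree has no non-unit factor lying in $\mathbb{Q}[\delta_1,\dots,\delta_{i-1}]$), or by first applying a generic linear change of coordinates and citing a Bertini-type general-position statement — and then the dimension count above closes the proof with only standard citations (Hauptidealsatz; invariance of Krull dimension under field extension).
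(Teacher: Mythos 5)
Your diagnosis is correct, and it exposes a real problem: the lemma as stated is false, and your counterexample $p_1=\delta_1$, $p_2=\delta_2$, $p_3=\delta_1+\delta_2\delta_3$ (all irreducible in $\mathbb{Q}[\delta_1,\delta_2,\delta_3]$, with common zero set equal to the entire $\delta_3$-axis) refutes it already for $d=3$. The paper's own proof does not get around the obstacle you name; it hides it. The paper writes down the chain $(0)\subsetneq(p_1)\subsetneq(p_1,p_2)\subsetneq\cdots\subseteq I$ and concludes $\mathrm{ht}(I)=d$, but this fails twice over: first, the inclusions need not be strict (in your example $(p_1,p_2,p_3)=(p_1,p_2)$); second, even a strictly increasing chain of non-prime ideals gives no lower bound on the height of $I$ --- height is computed along chains of \emph{primes}, and Krull's height theorem gives only the upper bound $\mathrm{ht}(\mathfrak{p})\le d$ for a minimal prime $\mathfrak{p}$ over a $d$-generated ideal. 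For an instance where all the inclusions are strict and the conclusion still fails, take $p_1=x$, $p_2=x+y^2$, $p_3=x+y$ in $\mathbb{Q}[x,y,z]$: the chain $(0)\subsetneq(x)\subsetneq(x,y^2)\subsetneq(x,y)$ is strict, yet the common zero set is the $z$-axis. The missing ingredient is exactly the properness condition you isolate --- that $p_{i+1}$ not vanish identically on any component of $V(p_1,\dots,p_i)$ --- which is what your Hauptidealsatz induction formalizes and what the paper's chain of ideals was presumably meant to encode.

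Your proposed repair points in the right direction but cannot be extracted from the paper as written: in the one place the lemma is invoked (Lemma~\ref{lem:equality-division-of-space}), the $p_i$ range over arbitrary $d$-subsets of the distinct irreducible factors of the coefficient polynomials $q_0,\dots,q_k$ of $\det(A^\top A+yI)$, and no triangular form or general-position property is established for them, so the properness hypothesis would have to be proved there (or the partition in \eqref{eq:equality-regions} reworked to tolerate positive-dimensional intersections). Your dimension bookkeeping over $\mathbb{C}$ and the descent to $\mathbb{R}$ are fine. The substantive conclusion is that the obstacle you flagged is not a technicality you failed to see around; it is a genuine error in the lemma and in the paper's proof of it.
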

\begin{proof}
First choose any irreducible branch of $V(\{p_i:i\in [d]\})$ and suppose its ideal is $I$.
From Theorem 1.8A in \cite{hartshorne2013algebraic}, $\text{dim}(\mathbb{Q}[\delta_1,\delta_2,\cdots,\delta_d]/I) = d - \text{ht}(I) $ where $\text{ht}$ stands for height of an ideal. We have $(0)\subsetneq (p_1)\subsetneq (p_1,p_2)\subsetneq\cdots(p_1,\cdots,p_d)\subseteq I$. Also, $\text{ht}(I)\le d$. So the height of $I$ is $d$, which means $$\text{dim}(\mathbb{Q}[\delta_1,\delta_2,\cdots,\delta_d]/I)=0.$$
The extension from $\mathbb{Q}[\delta_1,\delta_2,\cdots,\delta_d]$ to $\mathbb{C}[\delta_1,\delta_2,\cdots,\delta_d]$ preserves dimension, so $\mathbb{C}[\delta]/I$ is 0-dimensional. This means $\mathbb{R}[\delta]/I$ can only be 0-dimensional
\end{proof}

\subsection{Algorithm}
Recall that $Ax=b$ is feasible if and only if $AA^+b-b=0$ (Lemma~\ref{lem:pseudo_inverse}). So a direct approach is to calculate $A^+$.
The main issue is that $A^+_J(\delta_1,\delta_2,\cdots ,\delta_d)$ is not necessarily a continuous function on $\mathbb{R}^d$. But we will show that this is true if we further divide $\mathbb{R}^d$ into polynomially many regions where $A^+_J$ is a rational function on each of them. 

\begin{lemma}\label{lem:equality-division-of-space}
    There exists an $\epsilon>0$ such that $B_\epsilon(0)$ can be divided into polynomially many regions and on each region $A^+(\delta)$ is a continuous rational function. $A^+$ on each region can be computed in polynomial time (given as a rational function).
\end{lemma}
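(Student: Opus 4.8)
The plan is to reduce the computation of $A^+(\delta)$ to a case analysis on the rank of $A(\delta)$ together with which $n\times n$ (or more precisely $r\times r$) minors vanish. Recall from Lemma~\ref{lem:pseudo-inverse-limit} that $A^+ = \lim_{y\to 0}(A^\top A + yI)^{-1}A^\top$, and from the proof of Lemma~\ref{lem:algo-rational-solution} that each entry of this limit can be written as $\mathbb{I}\{a = a_{ij}\}\,p_0^{(ij)}(\delta)/q_0(\delta)$, where $q_0$ is the lowest-order (in $y$) coefficient of $\det(A^\top A + yI)$ and $a$, $a_{ij}$ are the orders of vanishing (in $y$) of $\det(A^\top A+yI)$ and of the relevant cofactor. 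The key point is that the order of vanishing $a$ in $y$ of $\det(A_J^\top A_J + yI)$ equals $n - \mathrm{rank}(A(\delta))$, and similarly $a_{ij}$ is governed by the rank; so once we fix the rank of $A(\delta)$ (equivalently, fix which minors of $A^\top A$ are zero), the polynomials $p_0^{(ij)}$ and $q_0$ are fixed polynomials in $\delta$ and the indicator $\mathbb{I}\{a = a_{ij}\}$ is a fixed bit. Hence $A^+(\delta)$ agrees with a single fixed rational function on the locus where a prescribed collection of minors of $A^\top A$ vanishes and a complementary one does not.

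First I would enumerate, for each $r$ from $0$ to $n$, the event $\mathrm{rank}(A(\delta)) = r$. Naively this is exponentially many minor conditions, so the refinement is: we only need the value of $r$ and the value of $q_0$, which is (up to sign) the coefficient of $y^{n-r}$ in $\det(A^\top A + yI)$; this coefficient is the sum of the $r\times r$ principal minors of $A^\top A$, a single explicit polynomial $g_r(\delta)$. On the region $\{g_r(\delta)\neq 0\}\cap\{g_{r+1}(\delta)=\cdots=g_n(\delta)=0\}$ we have $\mathrm{rank}(A)=r$ and $q_0 = g_r$, and the numerators $p_0^{(ij)}$ are the corresponding coefficients of the cofactors, again explicit polynomials. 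Thus $B_\epsilon(0)$ is partitioned (after intersecting with a small ball) into at most $n+1$ regions, one for each possible rank, each defined by a polynomial number of polynomial equalities/inequalities, and on each region $A^+$ equals an explicitly computable rational function. The "$\epsilon$ small enough" is used exactly as in Lemma~\ref{lem:algo-rational-solution}: we shrink the ball so that the relevant denominators $g_r$ do not vanish on the region unless they are identically zero there, and we invoke Lemma~\ref{lem:equality-intersection-of-varieties} to control the geometry of the equality loci (they are cut out by at most $d$ independent polynomials locally, so their irreducible components near $0$ are points, ensuring the partition into finitely many well-behaved pieces).

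The main obstacle I anticipate is bookkeeping the non-continuity carefully: $A^+(\delta)$ really does jump where the rank drops, so one must verify that on the relatively open stratum $\{g_r\neq 0, g_{r+1}=\cdots=g_n=0\}$ the formula $\mathbb{I}\{a=a_{ij}\}p_0^{(ij)}/g_r$ is genuinely continuous (the indicator is constant there because the rank is constant), and that finitely many such strata cover a punctured neighborhood of $0$. Since $d$ is constant, each stratum is a semialgebraic set described by polynomially many polynomials of polynomially bounded degree and coefficient size, so membership and the rational-function formula are computable in polynomial time. I would close by noting the resulting partition has a polynomial number of regions because there are only $n+1$ possible ranks, and combine this with Lemma~\ref{lem:pseudo_inverse} to read off feasibility of $A(\delta)x = b(\delta)$ on each region.
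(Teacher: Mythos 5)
Your proof is correct for the lemma as stated, but it takes a genuinely different decomposition from the paper's. You stratify $B_\epsilon(0)$ by $\mathrm{rank}(A(\delta))$, observing that the order of vanishing in $y$ of $\det(A^\top A+yI)$ is $n-r$ with leading coefficient $e_r(\delta)$ (the sum of $r\times r$ principal minors, which is nonzero precisely when the rank is at least $r$ because $A^\top A$ is PSD), so that on the stratum $\{e_r\neq 0,\ e_{r+1}=0\}$ the pseudoinverse is the single rational function $p_{n-r}/e_r$ with nowhere-vanishing denominator; this yields only $n+1$ regions and is arguably cleaner. The paper instead expands $\det(A^\top A+yI)=\sum_i y^i q_i(\delta)$, factors the nonzero $q_i$ into irreducibles $f_1,\dots,f_r$ (with $r\le m^2d$), and partitions $B_\epsilon(0)$ into $\{0\}$, the full-dimensional complement, and the intersections $V_S$ of at most $d-1$ of the varieties $V(f_i)$, using Lemma~\ref{lem:equality-intersection-of-varieties} to shrink $\epsilon$ so that $d$-fold intersections collapse to $\{0\}$. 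What the paper's heavier partition buys is the downstream step in Theorem~\ref{thm:equality}: deciding whether $AA^+b-b$ vanishes identically on a region reduces there to divisibility of a numerator by the irreducible $f_{k_i}$, whereas deciding identical vanishing on your rank stratum $\{e_{r+1}=0,\ e_r\neq 0\}$ would require an irreducible (or primary) decomposition of $V(e_{r+1})$ and is not obviously polynomial-time; so your proof of the lemma is fine, but it would force a reworking of Algorithm~\ref{alg:equality-case}. Two small points in your writeup: your appeal to Lemma~\ref{lem:equality-intersection-of-varieties} is superfluous and slightly misapplied (that lemma concerns exactly $d$ irreducible polynomials, and your rank strata partition the ball for every $\epsilon$ without it), and correspondingly the role of ``$\epsilon$ small enough'' is essentially vacuous in your version since $e_r\neq 0$ is built into the definition of the stratum.
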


\begin{proof}

To this end, we use the following expression for $A^+_J$ again.
\begin{equation*}
\begin{aligned}
    A^+_J &= \lim_{y\rightarrow 0}(A^TA+y I)^{-1}A^T_J\\
    &=\lim_{y\rightarrow 0} \frac{1}{\mathrm{det}(A^TA+y I)}\mathrm{adj}(A^TA+y I)A^T_J\,,
\end{aligned}
\end{equation*}
where $\mathrm{adj}(B)$ denotes the adjugate of a matrix $B$, equal to the transpose of the cofactor matrix. 

Note that $\text{det}(A^TA+y I)$ and  $\text{adj}(A^TA+y I)A^T_J$ are polynomials in $\delta_1,\delta_2$ and $y$. Now assume 
\begin{equation*}
\begin{aligned}
    &\text{det}(A^TA+y I) = q_0(\delta)+yq_1(\delta)+y^2q_2(\delta)+\cdots y^kq_k(\delta)\\
    &\text{adj}(A^TA+y I)A^T_J = p_0(\delta)+yp_1(\delta)+y^2p_2(\delta)+\cdots y^kp_k(\delta)\,,
\end{aligned}
\end{equation*}
where $p_i$ are matrices and $k\le m$. An alternative expression for $A^+_J$ is the following:
\begin{equation}\label{eq:equality-pseudo-inverse-with-region}
    A^+_J(\delta) = 
    \begin{cases}
        \frac{p_0(\delta)}{q_0(\delta)},&\text{when }q_0(\delta)\not=0\\
        \frac{p_1(\delta)}{q_1(\delta)},&\text{when }q_0(\delta)=0,q_1(\delta)\not=0\\
        \cdots\\
        \frac{p_k(\delta)}{q_k(\delta)},&\text{when }q_0(\delta)=q_1(\delta)=\cdots=q_{k-1}(\delta)=0\\
    \end{cases}
\end{equation}
An important observation is that Suppose on a certain region $R\subset \mathbb{R}^d$, for any $1\le i\le k$, whether $q_i=0$ is fixed. Then $A^+(\delta)$ can be expressed as a rational function on $R$, given by the expression above. 

Now take all irreducible factors of non-zero polynomials from $q_0,q_1,\cdots,q_k$, suppose they are $f_1,\cdots,f_r$. The degree of $q_i$ is bounded by $md$, so $r\le kmd\le m^2d$. There is a line of research for polynomial-time algorithms for factorizing in $\mathbb{Q}[\delta_1,\delta_2,\cdots \delta_d]$, a survey can be found in \cite{kaltofen1992polynomial}. So the running time of this step is bounded by a polynomial.

Note that if $f_i(0)\not=0$, $f_i$ would be non-zero on $B_\epsilon(0)$ for a small enough $\epsilon$ since it is continuous. So we only care about varieties of all $f_i$ where $f_i(0)=0$. Suppose the varieties of these $f_i$'s are $V_1,\cdots ,V_{r'}$. By Lemma~\ref{lem:equality-intersection-of-varieties}, the intersection of any $d$ of the $V_i$'s is either empty or finite number of points on $\mathbb{R}^2$. 
Therefore, we can choose $\epsilon$ small enough such that the intersection of any $d$-tuple of $V_i$ on $B_\epsilon(0)$ is either empty or equal to $\{0\}$. Let $V_S$ where $S\subset [r]$ denote $\cap_{i\in S}V_i$. Now we can partition $B_\epsilon(0)$ into the following regions:
\begin{equation}\label{eq:equality-regions}
\begin{aligned}
    &\{0\},B_\epsilon(0)\backslash (\cup_{|S|\le d,S\subset [r]}V_S),\\
    &B_\epsilon(0)\cap V_S,\ \forall |S|\le d-1,S\subset [r]
    .
\end{aligned}
\end{equation}
$f_1,\cdots,f_r$ are either always zero or always non-zero on any region, so the same holds for $q_1,\cdots,q_k$. So on each region $A^+$ is given by \eqref{eq:equality-pseudo-inverse-with-region}.
\end{proof}

Now we are ready to prove Theorem~\ref{thm:equality}.
With Lemma~\ref{lem:equality-division-of-space}, $B_\epsilon(0)$ is partitioned into polynomially many regions in \eqref{eq:equality-regions}. On each region $A^+$ is a continuous rational function and so is $AA^+b-b$.
We can check the feasibility of $Ax=b$ by checking whether the following rational function $AA^+b-b$ is zero on every region. There are three cases:
\begin{enumerate}
    \item The region is $\{0\}$, $A$ and $b$ are fixed. So the system becomes a linear system without parameter.
    \item The region is $B_\epsilon(0)\backslash (\cup_{|S|\le d,S\subset [r']}V_S)$, assume $AA^+b-b=g(\delta)$. The region is $d$-dimensional, so $g(\delta)=0$ on this region if and only if $g=0$.
    \item The region is $B_\epsilon(0)\cap V_S$ for $|S|\le d-1$ where $V_S$ is the intersection of varieties of  $V_i$ for all $i\in S$, let $f_{k_i}$ be the corresponding irreducible polynomial. So we only need to check whether the numerator of $AA^+b-b$, denoted by $g(\delta)$, is zero on $V_S$. Since $f_{k_i}$ are irreducible, this is equivalent to whether there exists $i\in S$ such that $f_{k_i}$ divides $g(\delta)$.
\end{enumerate}

The overall algorithm can be expressed as the following. 
\begin{algorithm}
\caption{The local feasibility of $A(\delta)x=b(\delta)$}\label{alg:equality-case}
\begin{algorithmic}[1]
\State Calculate $\text{det}(A^TA+y I)=\sum_{i=0}^k y^iq_i(\delta)$
\State Factorize every $q_i$ and get a set of irreducible factors in $\mathbb{Q}[\delta_1,\delta_2,\cdots,\delta_d]$, $f_1,f_2,\cdots f_r$
\State Pick factors that are 0 when $\delta=0$, suppose they are $f_{k_1},\cdots,f_{k_{r'}}$. Denote their varieties by $V_1,\cdots,V_{r'}$
\State Calculate the numerator of $AA^+b-b$ on $B_\epsilon(0)\cap V_S$ for $|S|\le d-1$, on $B_\epsilon(0)\backslash (\cup_{|S|\le d,S\subset [r]}V_S)$, and on 0 by \eqref{eq:equality-pseudo-inverse-with-region}. Denote it by $g(\delta)$.
\State Check the following three conditions:
\begin{itemize}
    \item $g(0)=0$
    \item $g(\delta)=0$ on $B_\epsilon(0)\backslash (\cup_{|S|\le d,S\subset [r]}V_S)$
    \item $\forall S\subset [r']$, $\exists i\in S$ that $ f_{k_i}$ divides  $g(\delta)$ on $V_S$
\end{itemize}
\If{all conditions are met}
\State \textbf{return} \textbf{Feasible}
\Else
\State \textbf{return} \textbf{Not feasible}
\EndIf
\end{algorithmic}
\end{algorithm}

\newpage

\appendix

\section{Details of Lemma~\ref{lem:pca-transition-of-potential} and Lemma~\ref{lem:key-makur20}}\label{app:matrix}

In this section we describe the exact expression of matrices used in Lemma~\ref{lem:pca-transition-of-potential} and Lemma~\ref{lem:key-makur20} in both application to PCA and application to broadcasting on grid.

In Lemma~\ref{lem:pca-transition-of-potential}, the matrix $C_\ell$ is defined to be 
    \begin{align*}
        C_\ell(s,s') &= \Pr[(\eta_{t+1}(1),\cdots, \eta_{t+1}(\ell))=s'|(\eta_{t}(1-a),\cdots, \eta_{t}(\ell+b))=s]\\
        &= \prod_{n=a+1}^{\ell+a+1} \Pr[f(s(n-a),s(n-a+1),\cdots,s(n+b))=s'(n-a)],
    \end{align*}
In all our applications, $a=1$ and $b=0$, so $c_\ell$ is a $3^{\ell+1}\times 3^\ell$ matrix. $C_1$ is simply the transition matrix. For PCA with NAND function and vertex noise, we can get from the definition that $C_1$ is Table~\ref{tab:PCA-vertex}.
\begin{table}[]
\begin{center}
\begin{tabular}{ |m{1cm}||m{1cm}|m{1cm}|m{1cm}| } 
 \hline
  & 0 & 1 &?  \\ 
  \hline
 00& $p$ & $1-p $& 0\\
 01 & $p$ & $1-p$ & 0\\
 0? & $p$ & 1-$p$ & 0\\
 10 & $p$ & $1-p$ & 0\\
 11 &$1-p$ & $p$ & 0\\
 1? & $p$ & $p$ & $1-2p$\\
 ?0& $p$ & $1-p$ & $0$\\
 ?1 & $p$ & $p$ & $1-2p$\\
 ?? & $p $& $p$ & $1-2p$\\
 \hline
\end{tabular}
\end{center}
\caption{Transition matrix for PCA with NAND function and vertex noise}\label{tab:PCA-vertex}
\end{table}
For PCA with NAND function and edge noise, or broadcasting of in formation on grid with NAND function, $C_1$ is Table~\ref{tab:PCA-edge}.

For broadcasting of information on the grid with IMP function, $C_1$ is given by Table~\ref{tab:broadcasting-IMP}.

\begin{center}
\begin{table}[]
    \centering
\begin{tabular}{| m{1cm}||m{2cm}|m{2cm}|m{2cm}| } 
 \hline
  & 0 & 1 &?  \\ 
  \hline
 00 & $p^2$ & $1-p^2$ & 0\\
  01 & $p-p^2$ & $1-p+p^2$ & 0\\
 0? & $p^2$ & $1-p+p^2$ & $p-2p^2$\\
 10 & $p-p^2$ & $1-p+p^2$ & 0\\
 11 &$1-2p+p^2$ & $2p-p^2$ & 0\\
 1? & $p-p^2$ & $2p-p^2$ & $1-3p+2p^2$\\
 ?0& $p^2$ & $1-p+p^2$ & $p-2p^2$\\
 ?1 & $p-p^2$ & $2p-p^2$ & $1-3p+2p^2$\\
 ?? & $p^2$ & $2p-p^2$ & $1-2p$\\
 \hline
\end{tabular}
    \caption{Transition matrix for PCA with NAND function and vertex noise, and for broadcasting on grid with NAND function}
    \label{tab:PCA-edge}
\end{table}
\end{center}

\begin{table}
\begin{center}
\begin{tabular}{| m{1cm}||m{2cm}|m{2cm}|m{2cm}| } 
 \hline
  & 0 & 1 &?  \\ 
  \hline
 00 & $p-p^2$ & $1-p+p^2$ & 0\\
  01 & $p^2$ & $1-p^2$ & 0\\
 0? & $p^2$ & $1-p+p^2$ & $p-2p^2$\\
 10 & $1-2p+p^2$ & $2p-p^2$ & 0\\
 11 &$p-p^2$ & $1-p+p^2$ & 0\\
 1? & $p-p^2$ & $2p-p^2$ & $1-3p+2p^2$\\
 ?0& $p-p^2$ & $2p-p^2$ & $1-3p+2p^2$\\
 ?1 & $p^2$ & $1-p+p^2$ & $p-2p^2$\\
 ?? & $p^2$ & $2p-p^2$ & $1-2p$\\
 \hline
\end{tabular}
\end{center}
\caption{tab:Transition matrix for broadcasting on grid with IMP function}
\label{tab:broadcasting-IMP}
\end{table}

For arbitrary $\ell$, we can define 
\begin{equation*}
    C_\ell(s,s')= \prod_{i=1}^\ell C_1((s_i,s_{i+1}),(s'_i)).
\end{equation*}
Note that $C_\ell$ is a matrix polynomial of $p$ in all the settings we discuss.

For Lemma~\ref{lem:key-makur20}, the matrices are defined through a graph that connects patterns that appear consecutively in a string. Let $G$ be a directed weighted graph  with length $l$ strings as vertices and there is an edge between two vertices $s_0$ and $s_1$ if $(s_0(1),s_0(2),\cdots s_0(l-1))=(s_1(0),s_1(1),\cdots s_1(l-2))$, where $s(i)$ stands for the $i$th symbol of $s$. For a potential functions $w=\sum_{s\in S}c_s\{s\}$ with length $\ell$, each edge $(s,s')$ has weight $c_s$ if $s\in S$ and 0 otherwise. This way, any string $s$ would corresponds to a circle on the graph with $i$th node being $(s_{i\mod \ell },s_{i+1\mod \ell },\cdots ,s_{i+\ell\mod \ell })$. $w(s)$ is equal to weight of the cycle defined by the sum of weight of all nodes on the cycle. $w(s)\ge 0$ for any $s$ would be equivalent to having no negative cycles.

Let $B_{in}$ and $B_{out}$ be matrices of size $3^\ell\times 3^{\ell+1}$ with vertices of $G$ being rows and edges being columns. Entry of $B_{in}$ at row $v$ and column $e$ is 1 if the corresponding $e$ points to $v$ and 0 otherwise. Similarly, $B_{out}$ has entry 1 if $e$ points out from $v$. 

We can show that to show a directed graph having no negative cycle, it is sufficient to show that 
$$(B_{out}-B_{in})^Tz\le B_{out}^Tc$$
where $c$ is the vector of weights of all edges. So $A_\ell$ in Lemma~\ref{lem:key-makur20} is $(B_{out}-B_{in})^T$ and $B_\ell$ is $B_{out}^T$.

\section{Proof of of Lemma~\ref{lem:pca-aux-chain-edge}}\label{sec:pca-aux-chain-edge}

To simplify the discussion, we extend the definition of $NAND$ and BSC channel on alphabet $\{0,1,?\}$ as follows.
\begin{center}
\begin{tabular}{|c|c| } 
 \hline
 NAND & result  \\ 
  \hline
 0* & 1\\
 11 &0\\
 1? & ?\\
 ?1 & ?\\
 ?? & ?\\
 \hline
\end{tabular}
\end{center}

\begin{center}
\begin{tabular}{| m{0.7cm}||m{1.5cm}|m{1.5cm}|m{1.5cm}| } 
 \hline
  BSC& 0 & 1 &?  \\ 
  \hline
 0 & $1-p$ & $p$ & 0\\
  1 & $p$ & $1-p$ & 0\\
 ? & $p$ & $p$ & $1-2p$\\
 \hline
\end{tabular}
\end{center}
So the chain of $F_p'$ can still be written as 
\begin{equation*}
    \eta_{t+1}(n)=NAND(BSC(\eta_t(n)),BSC(\eta_t(n-1)))
\end{equation*}
Here the only randomness comes from the extended BSC, which we refer to as edge noise.

\begin{definition}
    We define a partial order between symbols $0< ?> 1$. The order between configurations are done by entry-wise comparison, and the order between distribution of configurations are defined by stochastic domination.
\end{definition}

\begin{lemma}\label{lem:auxilary-monotone}
    For two configurations $\eta$ and $\eta'$ such that $\eta\ge \eta'$, the order remains after evolution under the same realization of edge noises.

    For two distributions $\mu$ and $\nu$ on configurations, if $\mu\ge \nu$, then $F_p'\mu\ge F_p'\nu$.
\end{lemma}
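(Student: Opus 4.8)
The plan is to prove both assertions by exhibiting a \emph{monotone coupling} of the edge noise: I would first show that, under a suitable common realization of the edge noise, one step of $F_p'$ is an order-preserving map on configurations, and then deduce the statement about distributions from the standard equivalence (Strassen's theorem) between stochastic domination and the existence of a monotone coupling.

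First I would pin down the right realization of a single BSC edge. The extended BSC table never sends $0$ or $1$ to $?$, and sends $?$ to $0,1,?$ with probabilities $p,p,1-2p$; matching the three rows, one checks that this edge noise is realized by a random map $\phi\colon\{0,1,?\}\to\{0,1,?\}$ equal to the constant map $\equiv 0$ with probability $p$, the constant map $\equiv 1$ with probability $p$, and the identity with probability $1-2p$ --- equivalently, pass the symbol through with probability $1-2p$ and otherwise overwrite it with an independent uniform bit in $\{0,1\}$. A one-line computation confirms that, restricted to any input symbol, $\phi$ has exactly the law given by the corresponding row of the table. The decisive feature is that every map $\phi$ can equal (the two constants and the identity) is order-preserving for the order with $0,1\le\,?$. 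I would attach one independent copy $\phi_e$ to each directed edge of the space-time graph, and couple the two chains by feeding both of them the same family $\{\phi_e\}_e$ --- this is what ``the same realization of edge noises'' means.

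Next I would verify that the extended $\NAND$ is order-preserving: $a\le a'$ and $b\le b'$ imply $\NAND(a,b)\le\NAND(a',b')$. Since $?$ is the top element and $0,1$ are minimal and incomparable, it suffices to check the comparisons obtained by raising a single $0$ or $1$ to $?$ in one argument with the other fixed, and these reduce to $\NAND(?,0)=1\ge\NAND(0,0)=\NAND(1,0)$, then $\NAND(?,1)=\,?\ge\NAND(0,1),\NAND(1,1)$, and $\NAND(?,?)=\,?\ge\NAND(0,?),\NAND(1,?)$, together with their mirror images ($\NAND$ being symmetric); all hold because $?$ dominates $0$ and $1$. Since $\eta_{t+1}(n)=\NAND\!\bigl(\phi_{e}(\eta_t(n)),\,\phi_{e'}(\eta_t(n-1))\bigr)$ is a composition of order-preserving maps, having $\eta_t\ge\eta'_t$ pointwise together with a common family $\{\phi_e\}_e$ forces $\eta_{t+1}\ge\eta'_{t+1}$ pointwise, which is the first assertion.

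For the second assertion, given $\mu\ge\nu$ I would use Strassen to pick a coupling $(\eta_0,\eta'_0)$ with $\eta_0\sim\mu$, $\eta'_0\sim\nu$ and $\eta_0\ge\eta'_0$ almost surely, advance both chains one step under a common edge-noise family, and invoke the first assertion to get $\eta_1\ge\eta'_1$ almost surely; since $\eta_1\sim F_p'\mu$ and $\eta'_1\sim F_p'\nu$, this monotone coupling witnesses $F_p'\mu\ge F_p'\nu$. The steps carrying the actual content --- and hence the closest thing to an obstacle --- are (i) identifying the order-preserving realization of the edge noise and checking it reproduces the BSC marginals, and (ii) the small finite case analysis for monotonicity of the extended $\NAND$; the appeal to Strassen on the compact metrizable space $\{0,1,?\}^{\mathbb Z}$ with its closed coordinatewise order is routine.
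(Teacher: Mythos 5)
Your proposal is correct and follows essentially the same route as the paper: the paper's proof checks monotonicity of the extended $\NAND$ and BSC directly (your realization of the edge noise as a random choice among the two constant maps and the identity is exactly the intended meaning of ``same realization of edge noises''), and handles the distributional statement via a monotone coupling guaranteed by stochastic domination, i.e.\ Strassen's theorem. You have simply filled in the finite case analysis that the paper declares ``easy to check.''
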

\begin{proof}
The first statement is easy to check based on the definition of $NAND$ and $BSC$.

For the second statment,
since $\mu\ge \nu$, there exists a coupling $(\eta,\eta')$ between the distributions such that $\eta\ge \eta'$ always holds. We can couple them using same realization of edge noise, we have $F_p'\delta_{\eta}\ge F_p'\delta_{\eta'}$. So $F_p'\mu\ge F_p'\nu$.
\end{proof}

\begin{proof}[Proof of Lemma~\ref{lem:pca-aux-chain-edge}]
Let us first extend the time range of the PCA to $\mathbb
{Z}$. 
Each BSC has probability $p$ being 0, probability $p$ being 1 and probability $1-2p$ being the original symbol.
Let $\tau$ be the random outcome of extended BSCs at all time and location, and define a random configuration $\eta^*$ as follows. $\eta^*(i)$ is 1 (or 0) if $\exists t>0$ such that starting with arbitrary configuration on $\eta_{-t}$ and let the PCA evolve based on $\tau$, $\eta_0(i)$ would be 1 (or 0). Otherwise $\eta^*(i)$ is ?.

Note by Lemma~\ref{lem:auxilary-monotone}, the order between configurations remains under same edge noise. Let $\eta^?$ be the result of $\eta_0$ with $\eta_{-t}$ being all ?, and $\eta^a$ be the result of $\eta_0$ with $\eta_{-t}$ being an arbitrary configuration. We have $\eta^?\ge \eta^a$. So $\eta^?(i)$ is 1 (or 0) imply that $\eta^a(i)$ being 1 (or 0).

Therefore, $\eta^*$ can be equivalently defined by the following limit. Let $\eta^t$ be that $\eta^t(i)$ is 1 (or 0) if starting with a configuration with all ? on $\eta_{-t}$ and let the PCA evolve based on $\tau$, $\eta_0(i)$ would be 1 (or 0). Otherwise $\eta^t(i)$ is ?. And $\eta^*=\lim_{t\rightarrow \infty}\eta^t$. This means starting with all ?, the PCA converges to $\eta^*$. So $\eta^*$ is a stationary distribution and has 0 probability of containing ?. 

Suppose we start with arbitrary configuration, by Lemma~\ref{lem:auxilary-monotone}, $\lim_{t\rightarrow \infty}\eta_t\le \eta^*$. When $\eta^*$ have no ?, the only configuration $\eta$ satisfying $\eta\le \eta^*$ is itself. So we proved that arbitrary configuration would converge to $\eta^*$.
\end{proof}

\section{Potential Functions for Ergodicity of PCA}\label{sec:potential-pca}
We give an explicit solution of the potential function for Theorem~\ref{thm:pca} and Theorem~\ref{thm:pca-edge}. The following is a feasible potential function that satisfies \eqref{eq:PCA-final-PLP} and proves Theorem~\ref{thm:pca}.
\begin{align}\label{eq:pot_pca}
\begin{split}
    w&=(-4+4\delta)\{000\}+(-8+4\delta)\{001\}+(-1-2\delta)\{00?\}+(2+4\delta)\{010\}\\
    &+(-2+4\delta)\{011\}+(1+5\delta/2)\{01?\}+(-2+4\delta)\{0?0\}+(-3-13\delta)\{0?1\}\\
    &+(-2-6\delta)\{0??\}+(-6+4\delta)\{100\}+(-10+4\delta)\{101\}+3\{10?\}\\
    &+4\delta\{110\}+(-4+4\delta)\{111\}+(-1-2\delta)\{11?\}+(-3-21\delta)\{1?0\}\\
    &-12\{1?1\}+(-5+7\delta/2)\{1??\}+(8-8\delta)\{?00\}+13\{?01\}\\
    &+21\{?0?\}+(8+3\delta)\{?10\}+(4-4\delta)\{?11\}+(7-16\delta)\{?1?\}\\
    &+(6-26\delta)\{??0\}+(4-15\delta)\{??1\}+(6-43\delta)\{???\}\\
\end{split}
\end{align}
The following is a feasible potential function that proves Theorem~\ref{thm:pca-edge}.
\begin{align}\label{eq:pot_pca-edge}
\begin{split}
    w&=(-2+\delta)\{000\}+(-2+\delta)\{001\}+(-4-\delta)\{00?\}+(-2+\delta)\{010\}\\
    &+(-2+\delta)\{011\}+(-9-25\delta)\{01?\}-2\{0?0\}+(5-52\delta)\{0?1\}\\
    &+(3-28\delta)\{0??\}+(-2+\delta)\{100\}+(-2+\delta)\{101\}+(-4+\delta)\{10?\}\\
    &+(-2+\delta)\{110\}+(-2+\delta)\{111\}+(-10 -45055\delta/2496)\{11?\}+(6 -4275059\delta/87360)\{1?0\}\\
    &-(12 -88969\delta/2496)\{1?1\}+(10 -65239\delta/5824)\{1??\}\\
    &+(2 +374207\delta/33280)\{?00\}+(2+ 440767\delta/33280)\{?01\}\\
    &+284767\delta\{?0?\}/33280+(-2 -1877\delta/210)\{?10\}\\
    &+(-3 -28031\delta/8736)\{?11\}+(-12+ 242057\delta/8736)\{?1?\}\\
    &+(-4 -9596961\delta/232960)\{??0\}+(2 -786109\delta/29120)\{??1\}-15311\delta\{???\}/4368\\
\end{split}
\end{align}

\section*{Acknowledgements}
GB was supported in part by NSF CAREER award CCF-1940205. This material is based upon work supported by the National Science Foundation under Grants No CCF-1717842, CCF-2131115.

 \newpage
\addcontentsline{toc}{section}{References}

\bibliographystyle{alpha}
 \bibliography{ref}

\end{document}